\newcommand{\cA}{\mathcal{A}}
\newcommand{\cD}{\mathcal{D}}
\newcommand{\cE}{\mathcal{E}}
\newcommand{\cO}{\mathcal{O}}
\newcommand{\cH}{\mathcal{H}}
\newcommand{\cB}{\mathcal{B}}
\newcommand{\cU}{\mathcal{U}}
\newcommand{\cK}{\mathcal{K}}
\newcommand{\cN}{\mathcal{N}}
\newcommand{\cC}{\mathcal{C}}
\newcommand{\cF}{\mathcal{F}}
\newcommand{\cI}{\mathcal{I}}
\newcommand{\cS}{\mathcal{S}}
\newcommand{\tv}{\tilde{v}}
\newcommand{\gothh}{\mathfrak{h}}
\newcommand{\tcO}{\widetilde{\cO}}
     \newcommand{\supp}{\operatorname{supp}}
\newcommand{\dbar}{{\bar\partial}}
     \newcommand{\ad}{{\operatorname{ad}}}
     \newcommand{\N}{{\mathbb{N}}}
     \newcommand{\NN}{{\mathbb{N}}}
     \newcommand{\R}{{\mathbb{R}}}
     \newcommand{\RR}{{\mathbb{R}}}
     \newcommand{\C}{{\mathbb{C}}}
     \newcommand{\CC}{{\mathbb{C}}}
\newcommand{\hM}{\widehat{M}}
\newcommand{\hH}{\widehat{H}}
\newcommand{\tH}{\widetilde{H}}
\newcommand{\tP}{\widetilde{P}}
\newcommand{\tA}{\widetilde{A}}
\newcommand{\tC}{\widetilde{C}}
\newcommand{\tN}{\widetilde{N}}
\newcommand{\tT}{\widetilde{T}}
\newcommand{\hcH}{\widehat{\mathcal{H}}}
\newcommand{\tcH}{\widetilde{\mathcal{H}}}
\newcommand{\slim}{{\rm s-}\lim}
\newcommand{\e}{{\rm e}}
\renewcommand{\i}{{\rm i}}
\renewcommand{\d}{{\rm d}}
\newcommand{\aux}{{\rm aux}}
\renewcommand{\Re}{{\rm Re}\,}
\renewcommand{\Im}{{\rm Im}\,}
\newcommand\inp[2][]{#1 \langle #2#1\rangle}
\newcommand\parb[2][]{#1 \big ( #2#1\big )}
\newcommand\bin[2]{\begin{pmatrix} #1 \\ #2 \end{pmatrix}}
\newcommand{\phy}{{\rm phy}}
\newcommand{\pp}{{\rm pp}}
\renewcommand{\exp}{{\rm exp}}
\newcommand{\mand}{\text{ and }}
\newcommand{\mfor}{\text{ for }}
\newcommand{\rome}{\mathrm{e}}
\newcommand{\fin}{\mathrm{fin}}
\newcommand{\roml}{\mathrm{l}}
\newcommand{\romr}{\mathrm{r}}
\newcommand{\uell}{{\underline{\ell}}}
\newcommand{\Nel}{\mathrm{N}}
\newcommand{\PF}{\mathrm{PF}}
\newcommand{\Mo}{\mathrm{Mo}}
\newcommand{\mo}{\mathrm{Mo}}
\newcommand{\la}{\langle}
\newcommand{\ra}{\rangle}
\newcommand{\dprime}{{\prime\prime}}
\def\bbbone{{\mathchoice {\rm 1\mskip-4mu l} {\rm 1\mskip-4mu l}
{\rm 1\mskip-4.5mu l} {\rm 1\mskip-5mu l}}}
\newcommand{\one}{\bbbone}
\renewcommand{\thesection}
{\arabic{section}}                     
\renewcommand{\theequation}
{\thesection.\arabic{equation}}        
     \theoremstyle{plain}
     \newtheorem{thm}{Theorem}[section]
     \newtheorem{prop}[thm]{Proposition}
     \newtheorem{lemma}[thm]{Lemma}
      \newtheorem{cor}[thm]{Corollary}
     \theoremstyle{definition}
     \newtheorem{defn}[thm]{Definition}
     \newtheorem{example}[thm]{Example}
     \newtheorem{cond}[thm]{Condition}
     \newtheorem{conds}[thm]{Conditions}
     \newtheorem{remark}[thm]{Remark}
     \newtheorem{remarks}[thm]{Remarks}
\newtheorem*{remarks*}{Remarks}
\newtheorem*{remark*}{Remark}
     \numberwithin{equation}{section}
\begin{document}

\bibliographystyle{amsplain}

\setcounter{page}{0}

\title{Regularity of Bound States}

\author{Jeremy Faupin\thanks{Partially Supported by Center for Theory in Natural Sciences, Aarhus University}
\footnote{email: jeremy.faupin@math.u-bordeaux1.fr}
\\  \hspace{0,2cm} Institut de Math{\'e}matiques de Bordeaux \hspace{0,2cm} \\
    Universit{\'e} de Bordeaux 1\\
    France \\
\and Jacob Schach M{\o}ller\footnote{email: jacob@imf.au.dk}
 \ \   Erik Skibsted\footnote{email: skibsted@imf.au.dk} \\
\hspace{0,2cm} Department of Mathematical Sciences \hspace{0,2cm} \\ Aarhus University \\ Denmark
}
\date{\today}

\maketitle

\begin{abstract} We study regularity of bound states pertaining to embedded eigenvalues
of a self-adjoint operator $H$, with respect to an auxiliary operator $A$ that is conjugate to $H$
in the sense of Mourre. We work within the framework of singular Mourre theory which
enables us to deal with confined massless Pauli-Fierz models, our primary example, and many-body AC-Stark Hamiltonians.
In the simpler context of regular Mourre theory our results boils down to an improvement of
results obtained recently in \cite{CGH}.
\end{abstract}

\thispagestyle{empty}
\newpage
\tableofcontents
\thispagestyle{empty}
\newpage

\section{Introduction}

This paper is the first in a series of two dealing with embedded eigenvalues and their bound states,
in the context of local commutator methods.

 In this paper we study regularity of  bound states with respect to a conjugate operator,
in the context of singular Mourre theory. In the second paper \cite{FMS} we use the results obtained here
to do second order perturbation theory of embedded eigenvalues, in particular
we establish the validity of Fermi's golden rule for an abstract class of Hamiltonians.
We remark that by singular Mourre theory we refer to the situation where the first commutator
is not controlled by the Hamiltonian itself, as in \cite{DJ,Go,GGM1,GGM,MS,Sk}. Regular Mourre theory
refers to the setup considered in \cite{ABG}. See also \cite{AHS,BFSS,DG,FGS,GJ,HS,Mo}.

 Our main motivation is applications to massless models from quantum field theory.
In particular our results apply to the massless confined Nelson model
at arbitrary coupling strength. We can deal with infrared singularities
that are slightly weaker than the physical one, that is we can handle
singularities of the form $|k|^{-\frac12 + \epsilon}$, for some $\epsilon>0$.
As a by-product of our methods we also establish that all bound states are in the domain
of the number operator.

In Section~\ref{sec_models} we in fact deal with a larger class of quantum field theory models, sometimes
called Pauli-Fierz models,
which includes the Nelson model. For simplicity we present our results here in the context of the Nelson model,
which we introduce in Subsection~\ref{subsec-NelsonIntro} below.
The reader can also consult \cite[Subsection~2.3]{GGM} for a discussion of the models
considered in this paper and its sequel.

In Section~\ref{AC--Stark model} we apply the results of this paper to many-body AC-Stark Hamiltonians
where we obtain new regularity results. See Subsection~\ref{The AC--Stark model} below for a formulation of the model
and the result.

\subsection{The Nelson Model}\label{subsec-NelsonIntro}

The model describes a confined atomic system coupled to a massless scalar quantum field.
The Hamiltonian $K$ of the atomic system is
\begin{equation}\label{K}
K = \sum_{i=1}^P \frac1{2m_i}\Delta_i + \sum_{i<j} V_{ij}(x_i-x_j) + W(x_1,\dots,x_P)
\end{equation}
acting on $\cK = L^2(\RR^{3 P})$. Here $m_i>0$ denotes the mass of the $i$'th particle located at $x_i\in \RR^3$.
We write $x = (x_1,\dots,x_P)\in \RR^{3P}$.
The external potential $W$ is the confinement and must satisfy
\begin{itemize}
\item[({\bf W0})] $W\in L^2_{\mathrm{loc}}(\RR^{3P})$ and there exist positive constants $c_0,c_1$ and $\alpha>2$
such that  $W(x)\geq c_0 |x|^{2\alpha} - c_1$.
\end{itemize}
As for the pair potentials $V_{ij}$, they should satisfy
\begin{itemize}
\item[({\bf V0})] The $V_{ij}$'s are $\Delta$-bounded with relative bound $0$.
\end{itemize}

The Hilbert space for the scalar bosons is the symmetric Fock-space $\cF = \Gamma(L^2(\RR^3))$
and the kinetic energy for the massless bosons is $\d \Gamma(|k|)$, the second quantization of
the operator of multiplication with the massless dispersion relation $|k|$. The
uncoupled Hamiltonian, describing the atomic system and the scalar field is
$K\otimes\one_{\cF} + \one_\cK\otimes \d\Gamma(|k|)$, as an operator on the full Hilbert space
\[
\cH = \cK\otimes\cF.
\]

Our next task is to introduce a coupling of the form
\begin{equation}\label{Irho}
I_\rho(x) = \sum_{i=1}^P \phi_\rho(x_i),
\end{equation}
where $\phi_\rho(y)$ is an ultraviolet and infrared regularized field operator
\[
\phi_{\rho}(y) = \frac1{\sqrt{2}}
\int_{\RR^3} \big(\rho(k) \e^{-\i k\cdot y} a^*(k) + \overline{\rho(k)}\e^{\i k\cdot y} a(k)\big) \d k.
\]
We assume purely for simplicity that $\rho$ only depends on $k$ through its modulus.
To conform with the notation used
in \cite{GGM}, we introduce
\[
\tilde{\rho}(r) = r \rho(r,0,0), \ \ \textup{such that} \ \  |k|\rho(k) = \tilde{\rho}(|k|).
\]
For the  interacting Hamiltonian, indexed by the coupling function $\rho$,
\begin{equation}\label{HrhoNel}
H_\rho^\Nel = K\otimes \one_{\Gamma(\gothh)} + \one_\cK \otimes \d\Gamma(|k|) + I_\rho(x)
\end{equation}
to be essentially self-adjoint on $\cD(K)\otimes\Gamma_\fin(C_0^\infty(\RR^3))$,
we need the following basic assumption on $\rho$.
\begin{itemize}
\item[({\bf $\rho$1})] $\int_0^\infty (1+r^{-1})|\tilde{\rho}(r)|^2 \d r <\infty$.
\end{itemize}
Here $\Gamma_\fin(V)$ denotes the subspace of $\cF$
consisting of elements $\eta$ with only finitely many $n$-particle components $\eta^{(n)}$ nonzero,
and those that are nonzero lie in the $n$-fold algebraic tensor product of the subspace $V\subseteq L^2(\RR^3)$.
Note that $\Gamma_\fin(V)$ is dense in $\cF$ if $V$ is dense in $L^2(\RR^3)$.

In order to formulate the remaining assumption on $\rho$ we introduce a function
$d\in C^\infty((0,\infty))$,
which measures the amount of infrared regularization carried by $\rho$.
It should, for some $C_d>0$, satisfy
\begin{equation}\label{d1}
d(r) = 1, \ \textup{for} \ r\geq 1, \ \  -C_d\frac{d(r)}{r} \leq d^\prime(r) < 0 , \ \  \lim_{r\to 0+} d(r) = +\infty.
\end{equation}
Note that the conditions above imply that $1\leq d(r)\leq r^{-C_d}$, for $r\in (0,1]$.
In order to simplify some expressions below we make the additional assumption that
\begin{equation}\label{d2}
\forall r\in (0,1]: \ \ d(r) \leq C_d^\prime r^{-\frac12},
\end{equation}
for some $C_d'>0$. In practice we want to construct
a $d$ with as weak a singularity as possible, so this extra assumption is no restriction.
We formulate the remaining conditions on $\rho$, of which the two first also appeared in \cite{GGM}.
\begin{itemize}
\item[({\bf $\rho$2})] $\int_0^\infty (1+r^{-1}) d(r)^2 [r^{-2}|\tilde{\rho}(r)|^2 +
|\frac{\d \tilde{\rho}}{\d r}(r)|^2]
\d r <\infty$.
\item[({\bf $\rho$3})] $\int_0^\infty |\frac{\d^2 \tilde{\rho}}{\d r^2}(r)|^2 \d r <\infty$.
\item[({\bf $\rho$4})] $\int_0^\infty r^4 |\tilde{\rho}(r)|^2 dr <\infty$.
\end{itemize}
We remark that {\bf ($\rho$2)} and {\bf ($\rho$4)} implies  {\bf ($\rho$1)}.
A typical form of $\rho$, and hence $\tilde{\rho}$, would be
\begin{equation}\label{Nelson-rho}
\rho(k) = \e^{-\frac{|k|^2}{2\Lambda^2}} |k|^{-\frac12 + \epsilon}, \ \ \tilde{\rho}(r)
= \e^{-\frac{r^2}{2\Lambda^2}} r^{\frac12 + \epsilon}.
\end{equation}
One can construct a $d$ by gluing together the functions $1$ and $r^{-\epsilon'}$, with
$0<\epsilon' < \min\{\epsilon,1/2\}$.
The parameters $\Lambda$ and $\epsilon$ are the ultraviolet respectively
infrared regularization parameters.
Ideally we would like to have $\Lambda = \infty$ and $\epsilon=0$. For the conditions
({\bf $\rho$1})--({\bf $\rho$4})
to be satisfied we must have $0< \Lambda<\infty$ and $\epsilon> 1$.
Observe that it is the condition {\bf ($\rho$3)} on the second
derivative of $\tilde{\rho}$ that causes the strongest restriction on $\epsilon$.

Observe that the set of $\rho$'s satisfying  ({\bf $\rho$1}) -- ({\bf $\rho$4})
is a complex vector space $\cI_\Nel(d)\subseteq L^2(\RR^3)$, which can be equipped with a norm
matching the four conditions. That is
\begin{equation}\label{normNel}
\|\rho\|^2_\Nel := \int_0^\infty \Big\{(r^4+d(r)^2 r^{-3})|\tilde{\rho}(r)|^2
+(1+r^{-1})d(r)^2\big|\frac{\d \tilde{\rho}}{\d r}(r)\big|^2 +
\big| \frac{\d^2 \tilde{\rho}}{\d r^2}(r)\big|^2
\Big\} \d r.
\end{equation}

In order to formulate our main theorem, we need to introduce an operator conjugate to $H_\rho^\Nel$.
We use the one constructed in \cite{GGM}, for which a Mourre estimate has been established
under the assumptions above. Let $\chi\in C_0^\infty(\RR)$, with $0\leq \chi\leq 1$,
$\chi(r) = 1$ for  $|r|<1/2$, and $\chi(r)=0$ for $|r|>1$. For $0<\delta\leq 1/2$ we define
a function on $(0,\infty)$ by
\[
s_\delta(r) = \chi(r/\delta) d(\delta)r^{-1} + (1-\chi)(r/\delta)d(r)r^{-1}.
\]
Using this function we construct a vector-field by $\vec{s}_\delta(k) = s_\delta(|k|)k$,
which equals $k/|k|$ for $|k|>1$ and $d(\delta)k/|k|$ for $|k|<\delta/2$.
The conjugate operator on the one-particle sector is
\begin{equation}\label{Eq-adeltaPhys}
a_\delta = \frac12(\vec{s}_\delta \cdot \i \nabla_k + \i\nabla_k \cdot \vec{s}_\delta).
\end{equation}
The operator is symmetric and closable on $\{f\in C_0^\infty(\RR^3) | f(0)=0\}$.
We denote again by $a_\delta$ its closure which is a maximally symmetric operator, but not self-adjoint.
It is a modification, near $k=0$, of the generator of radial translations
$a=(\frac{k}{|k|}\cdot \i\nabla_k +
\i \nabla_k \cdot\frac{k}{|k|})/2$. The conjugate operator is now the maximally symmetric operator
\[
A_\delta = \one_{\cK}\otimes \d\Gamma(a_\delta).
\]
The second quantization $\d\Gamma(a)$ of the generator of radial
translations works as conjugate operator if one stays close to the uncoupled system.
See \cite{DJ,Go,Sk}. It is not known
if one really needs the modified generator of radial translations
$A_\delta$ in order to get a Mourre estimate at arbitrary coupling.

For an eigenvalue $E\in\sigma_\pp(H_{\rho}^\Nel)$ we write $P_{\rho}$ for the associated
eigenprojection. It is known from \cite{GGM} that $P_{\rho}$ has finite dimensional range.
Finally we need the number operator
\[
\cN = \one_{\cK}\otimes \d\Gamma(\one_{L^2(\RR^3)}).
\]
We will make use of the same notation for the (usual) number operator on $\cF$.
Our main result of this paper, formulated in terms of the Nelson model, is

\begin{thm}\label{Main-Nelson} Suppose {\rm (}{\bf W0}{\rm )} and {\rm (}{\bf V0}{\rm)}.
Let $E_0\in\RR$ and $\rho_0\in \cI_\Nel(d)$ be given. There exist $0<\delta\leq 1/2$, $r>0$ and $C>0$
such that for any $\rho\in \cI_\Nel(d)$, with $\|\rho-\rho_0\|_\Nel\leq r$,
and $E\in \sigma_{\pp}(H^\Nel_{\rho})\cap (-\infty,E_0]$ we have
\[
P_{\rho}:\cH\to \cD\big(\cN^\frac12 A_\delta\big)\cap \cD\big(A_\delta\cN^\frac12\big)\cap
\cD\big(\cN\big)
\]
and
\[
\big\|\cN^\frac12 A_\delta P_{\rho}\big\| + \big\|A_\delta\cN^\frac12 P_{\rho}\big\| +
\big\|\cN P_{\rho}\big\| \leq C.
\]
\end{thm}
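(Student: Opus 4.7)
The plan is to reduce Theorem~\ref{Main-Nelson} to the abstract regularity statements proved in the main body of the paper within the singular Mourre framework, by verifying the structural hypotheses for the pair $(H_\rho^\Nel, A_\delta)$ and tracking how all constants depend on $\|\rho\|_\Nel$. The conditions ({\bf $\rho$1})--({\bf $\rho$4}) are designed precisely to supply, respectively: self-adjointness of $H_\rho^\Nel$ on a fixed core; control of the first commutator $[H_\rho^\Nel, \i A_\delta]$ relative to $\cN^{1/2}(H_\rho^\Nel + c)^{1/2}$ (the singular Mourre bound, the weight $d(r)$ compensating the infrared singularity of the vector field $\vec{s}_\delta$); control of the second commutator $[[H_\rho^\Nel, \i A_\delta], \i A_\delta]$ in terms of the second derivative of $\tilde\rho$; and a $k$-moment bound that will be needed to pull out a full $\cN$ rather than only $\cN^{1/2}$. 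The Mourre estimate on $(-\infty, E_0]$ and the finite-dimensionality of eigenprojections are already provided by \cite{GGM}.

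With these inputs in place, I would run the virial bootstrap on a bound state $\psi = P_\rho\psi$. Because $A_\delta$ is only maximally symmetric, direct functional calculus is unavailable, and one instead works with the bounded regularizers $(1 \pm \i A_\delta/R)^{-1}$ available for the deficient sign, or with smooth cutoffs built from the contractive semigroup generated by $\pm\i A_\delta$. Inserting such a cutoff into the virial identity and combining with the Mourre estimate yields first $\|A_\delta P_\rho\| \leq C$. To upgrade to $\|\cN^{1/2} A_\delta P_\rho\|$ one commutes $\cN^{1/2}$ through, exploiting that $\d\Gamma(a_\delta)$ preserves the boson number so $[\cN, A_\delta]=0$ formally, while the singular Mourre commutator bound controls the crossed error terms. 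The companion bound $\|A_\delta \cN^{1/2} P_\rho\| \leq C$ follows from a parallel argument once one has verified that $\Ran\, P_\rho \subseteq \cD(\cN^{1/2}) \cap \cD(A_\delta^*)$; the two norms are not equal because $A_\delta$ is not self-adjoint.

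The last bound $\|\cN P_\rho\| \leq C$ is the by-product advertised in the introduction, and would be obtained by iterating the pull-through formula: from $(H_\rho^\Nel - E)\psi = 0$ one derives
\[
a(k)\psi = -(H_\rho^\Nel - E + |k|)^{-1}[a(k), I_\rho]\psi,
\]
whence an integral representation for $\|\cN^{1/2}\psi\|^2 = \int \|a(k)\psi\|^2\,\d k$ and, iterating once more, for $\|\cN\psi\|^2$. The main obstacle is controlling the infrared singularity at small $|k|$: the bare integrand carries a non-integrable factor $|k|^{-1}$, but the $A_\delta$-regularity obtained in the previous step supplies the missing $|k|^{1/2}$-type decay in an integrated sense, which combined with the $d$-weighted condition ({\bf $\rho$2}) restores integrability. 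Uniformity of $C$ in the $\|\rho-\rho_0\|_\Nel$-ball then reduces to tracking constants: at each step the bounds can be phrased as polynomials in $\|\rho\|_\Nel$, so after fixing $\delta$ and the Mourre neighborhood once and for all in terms of $E_0$ and $\rho_0$, any sufficiently small perturbation in $\cI_\Nel(d)$-norm leaves every estimate intact.
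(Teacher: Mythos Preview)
Your overall architecture---verify structural hypotheses, invoke the abstract regularity machine, track constants in $\|\rho\|_\Nel$---matches the paper's. But there are two genuine gaps.

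\textbf{Non-self-adjointness of $A_\delta$.} You write that since $A_\delta$ is only maximally symmetric one ``works with the bounded regularizers $(1\pm\i A_\delta/R)^{-1}$ available for the deficient sign, or with smooth cutoffs built from the contractive semigroup.'' This is precisely the obstacle the paper declares it \emph{cannot} overcome directly: the abstract theory (Conditions~\ref{cond:condition}--\ref{cond:vir21}, and the entire proof in Section~\ref{Proof of Theorem}) requires $A$ self-adjoint, because it relies on two-sided functional calculus in $A$---the functions $g_m$, $p_m$, the almost-analytic extensions, the representation formulas \eqref{eq:rep1}--\eqref{eq:rep20}, and the key domain-preservation Lemma~\ref{dominv}. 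One-sided resolvents do not give you the smooth compactly supported cutoffs $\hat g_m$ needed in Proposition~\ref{input1}. The paper's solution is to pass to the \emph{expanded} Hamiltonian $H_v^\rome$ on $\cK\otimes\Gamma(L^2(\RR)\otimes L^2(S^{d-1}))$, where the conjugate operator $A_\delta^\rome$ becomes genuinely self-adjoint; Theorem~\ref{Thm-MainPFE} is proved for the expanded model and pulled back via Lemma~\ref{lm:Adelta-Adelta-e}. You do not mention this step, and without it the abstract machinery cannot be invoked.

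\textbf{The $\cN$-bound.} Your pull-through argument is not how the paper proceeds, and the sketch has a hole. Iterating $a(k)\psi = -(H-E+|k|)^{-1}[a(k),I_\rho]\psi$ gives $\int\|a(k)\psi\|^2\,\d k<\infty$, i.e.\ $\psi\in\cD(\cN^{1/2})$, which is already known from \cite{GGM}. Iterating once more would require controlling $\int\|a(k_1)a(k_2)\psi\|^2\,\d k_1\d k_2$, and your claim that ``the $A_\delta$-regularity obtained in the previous step supplies the missing $|k|^{1/2}$-type decay'' is not substantiated: $A_\delta$-regularity is a statement about $\d\Gamma(a_\delta)\psi$, not about pointwise-in-$k$ decay of $a(k)\psi$. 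The paper instead obtains $\psi\in\cD(\cN)$ from the abstract Theorem~\ref{thm:mainresultk=1}, whose proof is a virial argument applied to the state $N^{1/2}\psi$ (with $N$ built from $K^\rho$ and $\d\Gamma(h')$ in the expanded picture), using Condition~\ref{cond:virial} and the additional hypothesis \eqref{eq:5oiooa}. No pull-through is used.
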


We remark that for any $\delta>0$ small enough, one can find $r$ and $C$ such that the conclusion of the theorem holds. See Theorem~\ref{Thm-MainPF}. The above suffices for our purpose and is a cleaner statement.

We can implement a unitary transformation, the so-called Pauli-Fierz transform, 
which has the effect of smoothening the infrared
singularity. Let $U_{\rho} = \exp(-\i P\phi_{\i\rho/|k|}(0))$
be the unitary transformation with
\[
 U_{\rho} a(k) U_\rho^* = a(k) - \frac{P\rho(k)}{\sqrt{2}|k|} \ \textup{and} \ 
  U_\rho a^*(k) U_\rho^* = a^*(k) - \frac{P\overline{\rho(k)}}{\sqrt{2}|k|}.
\]
For the transformation $U_\rho$ to be well-defined we must require that $\int_{\RR^3} |k|^{-2}|\rho(k)|^2\d k<\infty$.
To achieve this we strengthen ({\bf$\rho$1}) to read
\begin{itemize}
\item[({\bf $\rho$1'})] $\int_0^\infty (1+r^{-2})|\tilde{\rho}(r)|^2\d r <\infty$.
\end{itemize}

We then get
\begin{equation}\label{HrhoNelPrime}
H^{\Nel'}_\rho = (\one_\cK\otimes U_\rho) H^\Nel_\rho (\one_\cK\otimes  U_\rho)^*  =
K_\rho\otimes\one_\cF + \one_\cK\otimes \d\Gamma(|k|)
+I_\rho(x)-I_\rho(0),
\end{equation}
where
\begin{equation}\label{Krho}
K_\rho = K - \sum_{i=1}^P v_\rho(x_i) + \frac{P^2}2\int_0^\infty r^{-1}|\tilde{\rho}(r)|^2 \d r \one_\cK
\end{equation}
and
\begin{equation}\label{vrho}
v_\rho(y) = P \int_{\RR^3} \frac{|\rho(k)|^2}{|k|}\cos(k\cdot y)\d k.
\end{equation}
Observe that
\[
\phi_\rho(y) - \phi_\rho(0) = \frac1{\sqrt{2}} \int_{\RR^3} \big(\rho(k) (\e^{-\i k\cdot y}-1) a^*(k) +
\overline{\rho(k)}(\e^{\i k\cdot y}-1) a(k)\big) \d k.
\]
The estimate
\begin{equation}\label{PFtransbnd}
|\e^{\pm \i k\cdot y}-1| \leq \max\{2,|k||y|\} \leq 2\frac{|k|}{\la k \ra}\la y\ra,
\end{equation}
with $\la \eta\ra = (1+|\eta|^2)^{1/2}$, enables us to
extract an extra infrared regularization  using the decay in $x$ supplied by the confinement condition ({\bf W0}).
Keeping \eqref{d2} and ({\bf $\rho$1'}) in mind, the remaining two assumptions on $\rho$ now weaken to
\begin{itemize}
\item[({\bf $\rho$2'})]   $\int_0^\infty  |\frac{\d \tilde{\rho}}{\d r}(r)|^2\d r <\infty$.
\item[({\bf $\rho$3'})] $\int_0^\infty r^2
|\frac{\d^2 \tilde{\rho}}{\d r^2}(r)|^2/(1+r^2)\d r<\infty$.
\end{itemize}
The condition {\bf ($\rho$4)}, being an ultraviolet condition, is unchanged.
For the choice \eqref{Nelson-rho} to satisfy  ({\bf $\rho$1'})--({\bf $\rho$3'}) and {\bf ($\rho$4)}
we must have $0< \Lambda<\infty$ and $\epsilon > 0$. Here the first three conditions on $\rho$
all require $\epsilon>0$.

Observe again that the set of $\rho$ satisfying  ({\bf $\rho$1'})--({\bf $\rho$3'}) and {\bf ($\rho$4)}
is a complex vector space $\cI_\Nel^\prime(d)$.
We introduce the natural norm
\[
\|\rho\|^2_{\Nel'} :=
\int_0^\infty \Big\{(r^4+r^{-2})|\tilde{\rho}(r)|^2
+\big|\frac{\d \tilde{\rho}}{\d r}(r)\big|^2 +
\frac{r^2}{1+r^2}
\big| \frac{\d^2 \tilde{\rho}}{\d r^2}(r)\big|^2
\Big\} \d r.
\]
Fix a $\rho_0\in \cI_\Nel^\prime(d)$.
There are now two avenues one can follow. Either one can continue as above,
and for each $\rho$ in a $\|\cdot\|_{\Nel'}$-ball around $\rho_0$
we apply the transformation $U_\rho$ to arrive at the more regular Hamiltonian $H^{\Nel'}_\rho$
that we can fit into our class of Pauli-Fierz models. A second option
would be to apply the same transformation $U_{\rho_0}$ regardless
of $\rho$ chosen near $\rho_0$. The advantage of this is two-fold:
Firstly, we would be working in the same coordinate system for all $\rho$'s,
which in the context of perturbation theory, cf. \cite{FMS}, is the most natural.
Secondly, in this way the Hamiltonian will have a linear dependence on the 'perturbation' $\rho-\rho_0$,
which is a requirement in \cite{FMS}. The drawback is that
$\rho-\rho_0$ has  to
be an element of $\cI_\Nel(d)$, and for example cannot be a small multiple of $\rho_0$.

To implement the latter approach, we now let $\rho = \rho_0+\rho_1$, with $\rho_1\in \cI_\Nel(d)$, the space of regular interactions. We then employ the transformation $U_{\rho_0}$ which yields
the transformed Hamiltonian
\begin{equation}\label{HrhoNelDPrime}
H_\rho^{\Nel''} = (\one_\cK\otimes U_{\rho_0}) H^\Nel_\rho (\one_\cK\otimes  U_{\rho_0})^*
= H^{\Nel'}_{\rho_0} + I_{\rho_1}(x) - \sum_{i=1}^P v_{\rho_0,\rho_1}(x_i), 
\end{equation}
where
\begin{equation}\label{vrho0rho1}
v_{\rho_0,\rho_1}(y) = P\int_{\RR^3}\Re\left\{ \frac{\rho_1(k) \overline{\rho_0(k)}}{|k|} e^{-\i k\cdot y}\right\} \d k.
\end{equation}

For an eigenvalue $E\in\sigma_\pp(H^{\Nel}_{\rho})$ we write 
$P'_{\rho} = (\one_\cK\otimes U_\rho) P_\rho (\one_\cK\otimes U_\rho)^*$ for the associated
eigenprojection for $H_\rho^{\Nel'}$, and 
$P_\rho^{\prime\prime} =(\one_\cK\otimes U_{\rho_0})P_\rho(\one_\cK\otimes  U_{\rho_0})^*$ for the associated 
eigenprojection for $H_\rho^{\Nel''}$. Again $P'_{\rho}$ and $P_\rho^{\prime\prime}$ have finite dimensional ranges.
Theorem~\ref{Thm-MainPF} can be applied to the transformed Hamiltonian and we arrive at
the following theorem.

\begin{thm}\label{Main-NelsonPrime} 
Suppose {\rm (}{\bf W0}{\rm )} and {\rm (}{\bf V0}{\rm)}. Let $E_0\in\RR$ and
 $\rho_0\in\cI_\Nel^\prime(d)$ be given. There exist $0<\delta\leq 1/2$, $r>0$ and $C>0$ such that
\begin{enumerate}[1)]
\item for any $\rho\in \cI_\Nel^\prime(d)$ with $\|\rho-\rho_0\|_{\Nel^\prime}\leq r$
and $E\in \sigma_\pp(H_\rho^{\Nel})\cap (-\infty, E_0]$ we have
\[
P'_{\rho}:\cH\to \cD\big(\cN^\frac12 A_\delta\big)\cap \cD\big(A_\delta\cN^\frac12\big)\cap
\cD\big(\cN\big)
\]
and
\[
\big\|\cN^\frac12 A_\delta P'_{\rho}\big\| + \big\|A_\delta\cN^\frac12 P'_{\rho}\big\| +
\big\|\cN P'_{\rho}\big\| \leq C.
\]
\item for any $\rho_1\in\cI_\Nel(d)$ with $\|\rho_1\|_\Nel \leq r$ and $E\in \sigma_\pp(H_{\rho}^{\Nel})\cap (-\infty,E_0]$, where $\rho = \rho_0+\rho_1$, we have
\[
P''_{\rho}:\cH\to \cD\big(\cN^\frac12 A_\delta\big)\cap \cD\big(A_\delta\cN^\frac12\big)\cap
\cD\big(\cN\big)
\]
and
\[
\big\|\cN^\frac12 A_\delta P''_{\rho}\big\| + \big\|A_\delta\cN^\frac12 P''_{\rho}\big\| +
\big\|\cN P''_{\rho}\big\| \leq C.
\]
\end{enumerate}
\end{thm}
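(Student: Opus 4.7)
The plan is to reduce both parts directly to Theorem~\ref{Thm-MainPF} by verifying that each of the transformed Hamiltonians $H_\rho^{\Nel'}$ and $H_\rho^{\Nel''}$ sits inside the abstract Pauli--Fierz class treated in Section~\ref{sec_models}. Since $U_\rho$ and $U_{\rho_0}$ are unitary on $\cF$ and $A_\delta$ acts only on the Fock side, the eigenprojections transform by conjugation as indicated just before the statement of the theorem, so any bound the abstract theorem provides for the eigenprojection of $H_\rho^{\Nel'}$ (respectively $H_\rho^{\Nel''}$) is exactly a bound of the type claimed for $P'_\rho$ (resp.\ $P''_\rho$).

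For part 1) I would check that $H_\rho^{\Nel'}$ lies in the abstract Pauli--Fierz class uniformly for $\rho$ in a $\|\cdot\|_{\Nel'}$-ball around $\rho_0$. The crucial observation is that in $\phi_\rho(x_i) - \phi_\rho(0)$ the effective form factor is $(\e^{-\i k\cdot x_i}-1)\rho(k)$, which by \eqref{PFtransbnd} carries an extra factor $|k|/\la k\ra$ compensated by $\la x\ra$. This extra infrared smoothing is precisely what allows the weakened conditions ({\bf $\rho$1'})--({\bf $\rho$3'}) together with ({\bf $\rho$4}) to produce all the integrated bounds that ({\bf $\rho$1})--({\bf $\rho$4}) would produce for the untransformed interaction, the decay in $x$ being supplied by (W0). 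The modified atomic Hamiltonian $K_\rho$ in \eqref{Krho} differs from $K$ by a constant and by $-\sum_i v_\rho(x_i)$; under ({\bf $\rho$1'}) the functions $v_\rho$ in \eqref{vrho} are smooth and bounded, so (W0) and (V0) transfer to $K_\rho$ uniformly for $\rho$ close to $\rho_0$. Theorem~\ref{Thm-MainPF} then yields the claimed estimate on $P'_\rho$.

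For part 2) I would use \eqref{HrhoNelDPrime} to read $H_\rho^{\Nel''}$ as the fixed reference Hamiltonian $H_{\rho_0}^{\Nel'}$, already covered by part 1), plus the linear perturbation $I_{\rho_1}(x) - \sum_i v_{\rho_0,\rho_1}(x_i)$. Here the first term is the \emph{untransformed} coupling with form factor $\rho_1$, which does not enjoy the $\e^{-\i k\cdot x_i}-1$ smoothing; consequently the full strength of ({\bf $\rho$1})--({\bf $\rho$4}) on $\rho_1$, i.e.\ $\rho_1\in\cI_\Nel(d)$, is exactly what is required to place it in the Pauli--Fierz framework with the same conjugate operator $A_\delta$. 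The second term is a smooth bounded multiplier in $x$ under ({\bf $\rho$1'}) for $\rho_0$ and ({\bf $\rho$1}) for $\rho_1$, and contributes harmlessly to the atomic part. Invoking Theorem~\ref{Thm-MainPF} for the family $\{H_\rho^{\Nel''}\}$ with the same $\delta$ then yields the bound on $P''_\rho$.

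The main obstacle is bookkeeping: one must verify that each of the structural conditions in the abstract Pauli--Fierz class---the form-factor norms appearing in the first commutator $[H,A_\delta]$, the iterated commutator $[[H,A_\delta],A_\delta]$, the $\cN$-bounds, and the Mourre estimate---is controlled uniformly in terms of $\|\rho\|_{\Nel'}$, respectively of $\|\rho_0\|_{\Nel'}$ together with $\|\rho_1\|_\Nel$, and that the infrared weight $d$ chosen at the outset is simultaneously compatible with the integrals involving $\tilde\rho$, $\d\tilde\rho/\d r$ and $\d^2\tilde\rho/\d r^2$. This is precisely where the asymmetric choice of the spaces $\cI_\Nel'(d)$ and $\cI_\Nel(d)$ in the two parts becomes visible and must be traced carefully through the commutator calculations underlying Theorem~\ref{Thm-MainPF}.
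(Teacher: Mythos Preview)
Your proposal is correct and follows essentially the same route as the paper: one shows that the transformed Hamiltonians $H_\rho^{\Nel'}$ and $H_\rho^{\Nel''}$ are Pauli--Fierz Hamiltonians in the sense of Section~\ref{Sec-PF} by checking that the maps $\rho\mapsto \Psi_\Nel'(\rho)=\sum_\ell(\e^{-\i k\cdot x_\ell}-1)\rho$ and $\rho_1\mapsto \Psi_\Nel(\rho_1)$ are bounded into $\cI_\PF(d)$ (using \eqref{PFtransbnd} for the first), and that the modified atomic Hamiltonians $K_\rho$, respectively $K_{\rho_0}-\sum_\ell v_{\rho_0,\rho_1}(x_\ell)$, differ from $K$ by bounded operators controlled locally uniformly in $\rho$; Theorem~\ref{Thm-MainPF} then gives the result. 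The paper carries this out in Subsection~\ref{subsec-NelsonApp}, including the interpolation argument showing that the $\|\cdot\|_\PF$ norms defined with the different atomic Hamiltonians are locally uniformly equivalent.
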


Unfortunately the transformation $U_\rho$, with $\rho\in\cI_\Nel^\prime(d)$, is too singular to allow for a recovery of the full set of regularity
results for the original Hamiltonian $H^\Nel_{\rho}$, as in Theorem~\ref{Main-Nelson}.
 The only thing that remains after undoing the transformation
is the following corollary to Theorem~\ref{Main-NelsonPrime}~1). The same argument using Theorem~\ref{Main-NelsonPrime}~2) would give a weaker result. Theorem~\ref{Main-NelsonPrime}~2)
will however play a role in \cite{FMS}.

\begin{cor} 
Suppose {\rm (}{\bf W0}{\rm )} and {\rm (}{\bf V0}{\rm)}. Let $E_0\in\RR$ and
$\rho_0\in\cI_\Nel^\prime(d)$ be given. There exist $0<\delta\leq 1/2$, $r>0$ and $C>0$ such that
for any $\rho\in \cI_\Nel^\prime(d)$ with $\|\rho-\rho_0\|_{\Nel^\prime}\leq r$
and $E\in \sigma_\pp(H_\rho^{\Nel})\cap (-\infty, E_0]$ we have
\[
P_{\rho}:\cH\to \cD\big(\cN\big) \ \textup{and} \ 
\big\|\cN P_{\rho}\big\| \leq C.
\]
\end{cor}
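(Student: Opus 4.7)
The plan is to deduce the corollary directly from Theorem~\ref{Main-NelsonPrime}~1) by undoing the Pauli--Fierz transformation. Writing $\tilde U_\rho := \one_\cK\otimes U_\rho$, the defining relation $P'_\rho = \tilde U_\rho P_\rho \tilde U_\rho^*$ gives $P_\rho = \tilde U_\rho^* P'_\rho \tilde U_\rho$, so that
\[
\cN P_\rho = \tilde U_\rho^* \big(\tilde U_\rho \cN \tilde U_\rho^*\big)\, P'_\rho\, \tilde U_\rho.
\]
Hence everything reduces to computing $\tilde U_\rho \cN \tilde U_\rho^*$ and checking that the deviation from $\cN$ is already controlled by Theorem~\ref{Main-NelsonPrime}~1).

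Using the intertwining identity $U_\rho a(k) U_\rho^* = a(k) - P\rho(k)/(\sqrt{2}|k|)$ (and its adjoint) on the dense invariant core $\cK\otimes\Gamma_\fin(C_0^\infty(\RR^3))$ and expanding $a^*(k)a(k)$, I obtain the operator identity
\[
\tilde U_\rho \cN \tilde U_\rho^* = \cN - P\,\one_\cK\otimes\phi_{\rho/|k|}(0) + \frac{P^2}{2}\,\|\rho/|k|\|_{L^2}^2.
\]
Here ({\bf$\rho$1'}) guarantees $\rho/|k|\in L^2(\RR^3)$, and the $r^{-2}|\tilde\rho|^2$-term in $\|\cdot\|_{\Nel'}$ shows that $\|\rho/|k|\|_{L^2}$ is dominated by $\|\rho\|_{\Nel'}$ and hence uniformly bounded on the $\|\cdot\|_{\Nel'}$-ball around $\rho_0$. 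The classical field-operator bound $\|\phi_g(0)\eta\|\leq \|g\|_{L^2}\|(\cN+1)^{1/2}\eta\|$ combined with the elementary interpolation $\|\cN^{1/2} P'_\rho\|\leq \|\cN P'_\rho\|^{1/2}$ (immediate from $\|\cN^{1/2}P'_\rho\psi\|^2=\langle\psi,P'_\rho \cN P'_\rho\psi\rangle$ and Cauchy--Schwarz) then yields
\[
\|\cN P_\rho\|\leq \|\cN P'_\rho\| + P\,\|\rho/|k|\|_{L^2}\big(1+\|\cN P'_\rho\|^{1/2}\big)+ \tfrac{P^2}{2}\|\rho/|k|\|_{L^2}^2,
\]
which is uniformly bounded by Theorem~\ref{Main-NelsonPrime}~1).

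The complementary claim $P_\rho \cH\subseteq \cD(\cN)$ is automatic once the above identity is justified: conjugation by $\tilde U_\rho$ alters $\cN$ only by a $\cN^{1/2}$-bounded perturbation plus a constant, so both $\tilde U_\rho$ and $\tilde U_\rho^*$ preserve $\cD(\cN)$, and for $\psi\in\cH$ the vector $P'_\rho\tilde U_\rho\psi$ lies in $\cD(\cN)$ by Theorem~\ref{Main-NelsonPrime}~1), whence $\tilde U_\rho^*$ carries it back into $\cD(\cN)$. The only technicality is the passage from the algebraic identity for $\tilde U_\rho\cN\tilde U_\rho^*$ to the full operator level, which is handled by a routine closure argument on $\Gamma_\fin(C_0^\infty(\RR^3))$ and poses no genuine obstacle.
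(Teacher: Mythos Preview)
Your proof is correct and follows exactly the route the paper indicates: the corollary is stated without detailed proof, merely as ``the only thing that remains after undoing the transformation'' from Theorem~\ref{Main-NelsonPrime}~1). Your explicit computation of $\tilde U_\rho\cN\tilde U_\rho^*$, the use of the standard $\cN^{1/2}$-bound on the field operator $\phi_{\rho/|k|}(0)$, and the observation that $\|\rho/|k|\|_{L^2}$ is controlled by $\|\rho\|_{\Nel'}$ (via the $r^{-2}|\tilde\rho|^2$-term) are precisely what is needed to carry this out.
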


 We make a number of remarks concerning the results above.

 The domain of $a_\delta$ is independent of $\delta$, and in fact equals the domain of the generator
of radial translations. The same is (presumably) false for the second quantized versions.
This is the reason for
the somewhat unpleasant formulation of the theorems in terms of $A_\delta$. It should be read in the
context of Mourre's commutator method, and in \cite{FMS} we need the regularity
formulated in terms of $A_\delta$.

 The statement that bound states are in the domain of the number operator is new.
Previously it was only known that bound states are in the domain of $\cN^{1/2}$.
See \cite{GGM}.

 The reader should first and foremost read the results above with $\rho=\rho_0$. In the sequel
\cite{FMS} we need the locally uniform version to deduce a Fermi golden rule under minimal assumptions.
In traditional approaches to Fermi's golden rule, one typically require
unperturbed bound states to be in the domain of the square of the conjugate operator.
See \cite{AHS,HS,MS}. In \cite{FMS} we reduce the requirement to bound states $\psi$ being in the domain
of the conjugate operator itself, at the expense of a need for the norm $\|A_\delta\psi\|$
to be bounded uniformly in $\rho$ in a ball around the unperturbed
coupling function $\rho_0$ 
and uniformly in $E$ running  over  eigenvalues of $H_\rho$ in a fixed compact interval. This motivates
the somewhat unorthodox formulation in Theorem~\ref{Main-Nelson}.

 The conditions ({\bf $\rho$3}) and ({\bf $\rho$3'}) come from a need
of handling the double commutator $[[H_\rho,A_\delta],A_\delta]$. It is not a priori obvious that
we should be able to place bound states in the domain of $A_\delta$ with control of just two commutators.
In the context of regular Mourre theory the question is addressed in \cite{CGH} where the authors need
three commutators to conclude a result of this type. In view of the infrared singularity, it is crucial
to minimize
the number of commutators needed. The following example illustrates that
if one desires bound states to be in the domain of the $k$'th power of a conjugate operator,
one needs at least control of $k+1$ commutators.

\begin{example}\label{RegExample}
 Consider the one-dimensional Schr{\"o}dinger
  operator $H=-\triangle +V$ on $\mathcal{H}=L^2(\R)$ given by a
  rank-one potential $V=|\phi\rangle \langle \phi|$ where $\phi \in
  \mathcal{H}$ obeys the following properties: Suppose that  (in momentum
  space) $\hat \phi=\hat \phi_1+\hat \phi_2$, where  $\hat
  \phi_1\in C^\infty_c(]-1,1[)$, $\hat\phi_2\in C^\infty(]1,\infty[)$
  and for some $R>1$ the support $\supp \hat\phi_2\subseteq [1, R]$. Suppose there exists $\epsilon\in
    ]0,\tfrac12[$ such that for all $k\in
    \N\cup\{0\}$
  \begin{equation*}
  \frac{\d^k}{\d \xi^k}\Big\{\hat{\phi}_2(\xi)-(\xi^2-1)^{k_0+\tfrac12+\epsilon}\Big\}
=O\Big((\xi^2-1)^{k_0-k+\tfrac32+\epsilon}\Big)\textup{ as }|\xi|\searrow 1.
  \end{equation*} Finally suppose
  \begin{equation*}
    \int_\RR |\hat \phi(\xi)|^2(\xi^2-1)^{-1}\d \xi=-1.
  \end{equation*}
Then $\psi=(-\triangle-1)^{-1}\phi$ is a
  bound state (with eigenvalue $\lambda=1$). Let $A=\tfrac12(p\cdot
  x+x\cdot p)$ be the generator of dilations. We have  $\phi\in
  \cD(A^{k_0+1})$ and $\psi\in
  \cD(A^{k_0})$,  while indeed  $\psi\notin \cD(A^{k_0+1})$
(intuitively it should be  expected  that in fact $\psi\notin
  \cD(\langle A\rangle ^{k_0+ \epsilon'})$  for
  $\epsilon'\geq\epsilon$).
\end{example}

In Section~\ref{Sec-AbstractResults} our abstract regularity results are formulated.
In the context of regular Mourre theory,
as considered in \cite{CGH}, we need control of one less commutator,
 which given the example above is optimal.

\subsection{Singular Mourre Theory}

 Consider the operator $M_\omega$ of multiplication in the momentum space
 $L^2(\RR^d)$ by  a dispersion relation
$\omega$ assumed to be locally Lipschitz. The connection between dynamics and structure of the spectrum of a self-adjoint
operator is fairly well understood, starting from Kato-smoothness and the RAGE theorem \cite{RS}.
When looking for a conjugate operator, one should study
the dynamics of the operator $M_\omega$. It is natural to
identify what states have (at least) ballistic motion, that is
find states $\psi_0$ satisfying
\[
\la x^2\ra_{\psi_t} \geq c t^2,
\]
for some  $c>0$. Here $\psi_t = \exp(-\i t M_\omega)\psi_0$. The position operator
$x$ is equal to $\i \nabla_k$. We can compute this quantity explicitly and we get
\begin{align*}
\la x^2\ra_{\psi_t} & =  \la x^2\ra_{\psi_0} + \int_0^t
\la x\cdot\nabla\omega + \nabla\omega\cdot x \ra_{\psi_s}\d s\\
& =  \la x^2\ra_{\psi_0} +t\la x\cdot\nabla\omega + \nabla\omega\cdot x \ra_{\psi_0}
+ t^2 \la |\nabla\omega|^2\ra_{\psi_0}.
\end{align*}
We observe that if $\psi_0$ has support away from zeroes of $\nabla\omega$, then the motion is
at least ballistic. More precisely this is the case  if $\mathrm{essinf}_{k\in\supp{\psi_0}} |\nabla \omega(k)|\geq c>0$.

If $\omega = k^2$, the standard non-relativistic dispersion relation, we find that
$\psi_0$ should be localized away from $0$ in momentum space. Since $|\nabla\omega|^2 = 4\omega$,
the requirement on $\psi_0$ can also be expressed as $\psi_0\in E_{M_{\omega}}([c/4,\infty))L^2(\RR^d)$,
where $E_{M_\omega}$ denotes the spectral projections associated with the self-adjoint operator
$M_\omega$.
We observe that the energy $0$ has a special significance for the case $\omega= k^2$
and is called a threshold, in the sense that states localized in energy near a
threshold may not have strict ballistic motion.

A second example is $\omega =|k|$. Here we observe that $|\nabla\omega| = 1$, and hence all states
$\psi_0$ will exhibit ballistic motion. In other words this dispersion relation does not have thresholds.
This of course reflects the constant (momentum independent) speed of light.
See \cite[Subsection~1.2]{GGM1} for a discussion of general dispersion relations.

When picking a conjugate operator in Mourre theory, one is precisely looking for
an observable $a$ with at least ballistic growth. The choice often  used is the Heisenberg
derivative of $x^2$, where $x$ is some suitably chosen position observable.
That is, one would naturally be lead to consider
\[
a = \frac12( x\cdot\nabla\omega + \nabla\omega\cdot x).
\]
This is for example the case for the $N$-body problem,
see e.g. \cite{AHS,Ca,CGH,HS}, and in the case of field theory see \cite{DG,DJ,FGSch1,FGSch2,GGM,Sk},
where the position is the Newton-Wigner position $\d\Gamma(x)$. The free energy is $\d\Gamma(M_\omega)$,
and we get as conjugate operator  $A = \d\Gamma(a)$, where $a$ is as above.

It is often advantageous to modify the so obtained conjugate operator, to simplify proofs,
or circumvent some technical issues. In this paper we need the modified generator of translations
$A_\delta$ from \cite{GGM} in order to deal with the confined massless Nelson model,
and more generally confined massless Pauli-Fierz models.

There are two issues that come up naturally when following the above guidelines
for massless field theory models, like the Nelson model. One is already apparent in the
one-particle setup discussed above. If $\omega(k) = |k|$, the resulting conjugate operator
$a$, the generator of radial translations, does not have a self-adjoint realization.
This appears to be a purely technical complication, that becomes a serious issue
when one is in need of localizations in the operator $a$. The operator is not normal,
so we do not have spectral calculus at hand, only resolvents. This has so far not been a serious issue
when dealing with the limiting absorption principle \cite{DJ,GGM,HuSp,HS,Sk}, and perturbation theory around an
uncoupled system \cite{DJ,Go}. It does however become an obstacle when one tries to apply
the conjugate operator $a$ in the context of  scattering  theory \cite{G}.

In the present paper, non-self-adjointness of $a$ is also
a serious obstacle,
which we overcome, as in \cite{G}, by passing to
a so called expanded Hamiltonian. The idea is to write $L^2(\RR^d) \sim L^2(\RR_+)\otimes L^2(S^{d-1})$
and double the Hilbert space to $L^2(\RR)\otimes L^2(S^{d-1})$. The dispersion relation
in polar coordinates is just multiplication by $r$, which when extended linearly to negative $r$
gives rise to the self-adjoint conjugate operator $\i \partial/\partial r\otimes \one$.
We thus work with an expanded Hamiltonian, and in the end pull our results back to the physical
Hamiltonian.
The reader should keep this in mind when going through the abstract conditions in the following section.

However passing to an expanded Hamiltonian is not a silver bullet, it comes with a price.
The operator of multiplication by $r$ is no longer bounded from below, making it hard to utilize
energy localizations. For this reason we have to develop an abstract theory which does not
demand that any naturally occurring object can be controlled by the (expanded) Hamiltonian.

The second feature we want to discuss does not occur on the one-particle level,
but only after second quantization. The free commutator becomes
\[
\i [\d\Gamma(|k|),A] = \cN,
\]
where $\cN$ is the number operator. In the standard (regular)  commutator based methods,
one typically has the commutator bounded at least as a form on $\cD(H)$.
(This is for example a consequence of a $C^1(A)$ assumption.) This is not the case here and
we call such a situation \emph{singular}. One could of course avoid this issue by observing
that the operators involved conserve particle number, and then rescale $A$ by $1/n$ on the
$n$-particle sector. However, perturbations are typically expressed in terms of field operators,
and straying from second quantized conjugate operators give rise to  terms
from the commutator with the perturbation, that have so far not been controllable.

The $\d\Gamma(|k|)$-unboundedness of the number operator, has led authors to use
a different conjugate operator instead, namely the second quantized generator of dilation
given by $\d\Gamma((x\cdot k+ k\cdot x)/2)$,
normally associated with the dispersion relation $k^2$. Here the commutator with $\d\Gamma(|k|)$
is $\d\Gamma(|k|)$ itself, so the issue disappears. However, this choice induces
an artificial threshold at photon energy $0$, which for a coupled system turns all eigenvalues
of the atomic system into artificial thresholds. In order to circumvent this problem one can
modify the generator of dilation by building the
level shift from Fermi's golden rule into the conjugate operator. This was done in \cite{BFSS}
and gives rise to positive relatively bounded commutators, at weak coupling. There are however
disadvantages to this approach. It does not cover situations where symmetries may
cause embedded eigenvalues to persist to second order in perturbation theory. For the $N$-body problem
in quantum mechanics  one can for example
show that the underlying spectrum is absolutely continuous without a priori imposing Fermi's golden rule,
which can then subsequently be established \cite{AHS,HS}.
Works employing this choice of conjugate operator has, so far, not been able to
address what happens outside the regime of weak coupling, which may be an issue
since coupling constants typically  are explicitly given numbers.
In electron-photon models, the coupling constant involve the feinstructure constant $1/137$ and
in electron-phonon models from solid state physics, the coupling constants occurring
may even be of the order $1$. Effective coupling constants may also depend
on an ultraviolet cutoff, thus imposing apparently artificial limitations on the size of the cutoff.
Finally the restriction on the size of the coupling constant is always locally uniform in energy.
That is, all statements of this type holds only below a fixed $E_0$. Papers employing the generator of
dilation include \cite{BFS,BFSS,FGS}.

We remark that in \cite{Go}, the author modifies the generator of radial translation,
as it was done in \cite{BFSS} for the generator of dilations, in order to establish Fermi's golden rule.
We have no need for this construction since we follow the strategy of \cite{AHS,HS,MS}.

Instead of viewing the unboundedness of the first commutator with respect to $\d\Gamma(|k|)$
as a technical problem, one can also adopt the point of view that it is a feature of the model which can be
exploited. This is most obviously done for small coupling constants, where one gets a  positive
commutator globally in energy, modulo a compact error.
This was done in \cite{DJ,FGSch2,Go,Sk}. In \cite{GGM}
the extra positivity of the commutator is directly utilized to prove a Mourre estimate at
arbitrary coupling constant, the first (and so far only) such result for massless models.
 Another piece of information one can extract is that the number operator
has finite expectation in bound states. This was done in \cite{Sk} for small coupling constants and
generally in \cite{GGM}. A more subtle property is that one can obtain a stronger limiting absorption
principle, see \cite{GGM1,MS}, which has so far not found an application.
Here we prove in particular that bound states are in the domain
of the number operator, not just in its form domain.

We have not discussed  positive temperature models, where one has a
similar situation, except that so far no positive commutator
estimates at arbitrary coupling has been proven, regardless of choice of conjugate operator.
See e.g. \cite{FM} and references therein.



\subsection{The AC--Stark model}\label{The AC--Stark model}

The model describes a system of $N$  charged particles in a nonzero time-periodic
Stark-field with zero mean (AC-Stark field). The particles are here
taken three-dimensional and  we assume that the field is
$1$-periodic  and, for simplicity, that it is continuous i.e. that
$\tilde \cE\in C([0,1];\R^3)$. The Hamiltonian is of the form
\begin{equation}
  \label{eq:57}
  \tilde h(t)=\sum_{i=1}^{N }\parb{ \tfrac{
    p_i^2}{2m_i}-q_i\tilde \cE(t)\cdot x_i}+V;
\end{equation}
here $x_i$, $m_i$ and $q_i$ are the position, the mass and the charge of
the {\it i}'th particle, respectively,  and $p_i=-\i \nabla_{x_i}$ is its momentum.
The potential is of the form
\begin{equation}\label{eq:58}
  V=\sum_{1\leq i<j\leq N}v_{ij}(x_i-x_j),
\end{equation} where the pair-potentials obey

\begin{conds}\label{cond:ac-stark-model23} Let $k_0\in \N$ be given.
  For each pair $(i,j)$
the pair-potential $
\R^3 \ni y\to v_{ij}(y) \in \R$ splits into a sum $v_{ij}=v_{ij}^1+v_{ij}^2$ where
\begin{enumerate}[\quad\normalfont (1)]
\item \label{item:4ib}Differentiability: $v_{ij}^1\in C^{k_0+1}(\R^3)$ and $v_{ij}^2\in C^{k_0+1}(\R^3\setminus\{0\})$.
\item \label{item:5ib} Global bounds: For all  $\alpha$ with  $|\alpha|\leq k_0+1$ there are  bounds
  $|y|^{|\alpha|}\,|\partial^\alpha_yv_{ij}^1(y)|\leq C$.
\item \label{item:6ib} Decay at infinity: $|v_{ij}^1(y)| + |y\cdot\nabla_y
  v_{ij}^1(y)|=o(1)$.
\item \label{item:7ib}Local singularity: $v_{ij}^2$ is compactly supported
  and for all $\alpha$ with
  $|\alpha|\leq k_0+1$ there are  bounds
  $|y|^{|\alpha|+1}\,|\partial^\alpha_yv_{ij}^2(y)|\leq C$; $y\neq 0$.
\end{enumerate}
\end{conds}
In the above conditions, the letter $\alpha$ denotes multiindices.
Note that  \eqref{eq:57} and  \eqref{eq:58} with $v_{ij}(y)=q_iq_j|y|^{-1}$
conform with Condition \ref{cond:ac-stark-model23} for any $k_0$.

Introducing the inner product $x\cdot y=\sum_i2m_ix_i\cdot y_i$ for
$x=(x_1,\dots,x_N)$, $y=(y_1,\dots,y_N)\in \R^{3N}$ we can split
\begin{equation*}
\R^{3N}=X_{\rm CM}\oplus X;\; X_{\rm CM}=\big\{x\in\RR^{3N}\big| \,x_1=\cdots =x_N\big\}.
\end{equation*}

There is a  corresponding  splitting
$$
\tilde{h}(t) = h_{\text{CM}}(t)\otimes I + I\otimes h(t),
\qquad\text{on }L^2(X_{\rm CM})\otimes
L^2(X),
$$
where
$$
h_{\text{CM}}(t) = p^2_{\rm CM} - \cE_{\text{CM}}(t)\cdot x,\mand
h(t) = p^2 - \cE(t)\cdot x + V.
$$
Here
$$
\cE_{\text{CM}} =\frac{Q}{2M}\big(\tilde{\cE},\dots,\tilde{\cE} \big) \mand \cE =
\left(\Big(\frac{q_1}{2m_1}-\frac{Q}{2M}\Big)\tilde{\cE},\dots, \Big(\frac{q_N}{2m_N}-\frac{Q}{2M}\Big)\tilde{\cE}\right),
$$
where $Q=q_1+\cdots+q_N$ and $M=m_1+\cdots + m_N$ are the total
charge and mass of the system, respectively. In the special case where all the
particles have identical charge to mass ratio, we see that the
center of mass Hamiltonian is just an ordinary time-independent
$N$-body Hamiltonian. Otherwise the Hamiltonian $h(t)$ depends
non-trivially on the time-variable $t$. We denote by $\tilde U(t,s)$,
$U_{\text{CM}}(t,s)$ and $U(t,s)$ the dynamics generated by $\tilde
h(t)$, $
h_{\text{CM}}(t)$ and $
h(t)$, respectively, and observe that
$$
\tilde{U}(t,s) = U_{\text{CM}}(t,s)\otimes U(t,s).
$$

We shall address spectral properties of  the {\it monodromy
  operator} U(1,0). Note that this is a unitary operator on $L^2(X)$.
Let $\cA$ be  the set of all cluster partitions $a =
\{C_1,\dots, C_{\#a}\}$, $1\leq \#a\leq N$, each given by
splitting the set of particles $\{1,\dots, N\}$ into non-empty
disjoint clusters $C_i$. The spaces $X_a$, $a\in\cA$, are the
spaces of configurations of the $\#a$ centers of mass of the
clusters $C_i$ (in the center of mass frame). The complement
$$
X^a = X^{C_1}\oplus\cdots\oplus X^{C_{\#a}}
$$
is the space of relative configurations within each of the
clusters $C_i$. More precisely
$$
X^{C_i} = \big\{x\in X \big| x_j=0, j\notin C_i\big\}\mand
X_a = \big\{x\in X \big| k,l\in C_i \Rightarrow  x_k=x_l\big\}.
$$
We will write $x^a$ and $x_a$ for the orthogonal projection of a vector
$x$ onto the subspace $X^a$ and its orthogonal complement respectively.
Notice the natural ordering on $\cA$: $a\subset b$ if and only if any cluster $C\in a$ is contained
in some cluster $C^\prime\in b$.
 Clearly the minimal and maximal elements are $ a_{\rm min}= \{(1),\dots,(N)\}$ and
$ a_{\rm max}=\{(1,\dots,N)\}$, respectively. Any pair $(i,j)$ defines
an $N-1$ cluster decomposition $(ij)\in  \cA$ by letting $C=\{i,j\}$
constitute a cluster and all others being one-particle clusters.

 For each $a\neq a_{\rm max}$ the
  sub-Hamiltonian  monodromy operator is $U^a(1,0)$; it  is defined as the
  monodromy operator on $\cH^a=L^2(X^a)$ constructed for $a\neq
    a_{\rm min}$
  from $h^a=(p^a)^2- \cE(t)^a\cdot x^a +V^a$, $V^a=\sum_{(ij)\subset a}
  v_{ij}(x_i-x_j)$. If $a=
    a_{\mathrm{min}}$ we define $U^a(1,0)=\one$ (implying  $\sigma_{\mathrm{pp}}(U^{a_{\mathrm{min}}}(1,0)) = \{1\}$).
The condition
$\int_0^1\cE(t)\d t=0$ leads to the
existence of a unique $1$-periodic function $b$  such that
\begin{equation*}
  \tfrac{\d}{\d t}b(t)=\cE(t)\mand \int_0^1\,b(t)\d t=0.
\end{equation*}

The set of {\it thresholds} is
\begin{equation}
  \label{eq:52la}
  \cF(U(1,0)) = \bigcup_{a\neq  a_{\rm max}}
\e^{-\i \alpha_a}\sigma_{\rm pp}(U^a(1,0));\; \alpha_a=\int_0^1\,|b(t)_a|^2\d t.
\end{equation}

We recall from \cite{MS} that the set of thresholds  is closed and countable, and
non-threshold eigenvalues, i.e. points  in $\sigma_{\rm pp}(U(1,0))\backslash \cF(U(1,0))$, have finite multiplicity and can only accumulate at
the set of thresholds. Moreover any corresponding bound state is
exponentially decaying, the singular continuous spectrum $\sigma_{\rm
  sc}(U(1,0)) = \emptyset$ and there are integral propagation
estimates for states localized away from the set of eigenvalues  and away from
$\cF(U(1,0))$. These properties  are known under Condition
\ref{cond:ac-stark-model23}  with $k_0=1$. For completeness of
presentation we mention that some of the results of \cite{MS} hold
under more general conditions, in particular the exponential decay
result
does not require that the Coulomb singularity of each pair-potential
(if present) is located at the origin
(this applies to Born-Oppenheimer molecules in an AC-Stark field).

Letting
\begin{equation}\label{eq:59}
  A(t)=\tfrac12 \parb{x\cdot (p-b(t))+(p-b(t))\cdot x},
\end{equation} and using a different frame,
we prove in Section~\ref{AC--Stark model}

\begin{thm}\label{thm:ac-stark-type2dd} Suppose Conditions~\ref{cond:ac-stark-model23},
for some $k_0\in \N$. Let $\phi$ be a bound state for $U(1,0)$
pertaining to an eigenvalue  $\e^{-\i\lambda}\notin \cF(U(1,0))$. Then
  \begin{enumerate}[\quad\normalfont (1)]
  \item\label{item:8} $\phi\in \cD(A(1)^{k_0})$  where $A(t)$ is given by \eqref{eq:59}.
  \item\label{item:9} If for all pairs $(i,j)$ the term $v_{ij}^2=0$ then $\phi\in \cD(|p|^{k_0+1})$.
  \end{enumerate}
\end{thm}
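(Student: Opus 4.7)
The natural plan is to lift the problem to the Howland--Floquet extended space $\cH^\# = L^2(\T; L^2(X))$ with Hamiltonian $K^\# = -\i\partial_t + h(t)$, so that the bound state $\phi$ of $U(1,0)$ at $\e^{-\i\lambda}$ lifts to a bound state $\Phi(t) = \e^{\i\lambda t} U(t,0)\phi$ of $K^\#$ at energy $\lambda$, and the hypothesis $\e^{-\i\lambda}\notin \cF(U(1,0))$ translates to $\lambda$ being non-threshold in the sense of \cite{MS}. To produce a time-independent conjugate operator, perform the gauge transformation $W(t) = \e^{-\i b(t)\cdot x}$; a direct computation using $W(t) p W(t)^* = p + b(t)$ and $-\i W(t)\partial_t W(t)^* = \cE(t)\cdot x$ gives
$$\tilde h(t) := W(t)h(t)W(t)^* - \i W(t)\partial_t W(t)^* = (p+b(t))^2 + V,$$
so the Stark term is absorbed, and correspondingly $W(t) A(t) W(t)^* = A$ where $A = \tfrac12(x\cdot p + p\cdot x)$ is the standard dilation generator. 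In this gauge the problem reads $\tilde K \tilde\Phi = \lambda\tilde\Phi$ with $\tilde K = -\i\partial_t + \tilde h(t)$, and the natural conjugate operator on $\cH^\#$ is $\tilde A = \one \otimes A$.

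A Mourre estimate for $\tilde K$ at $\lambda$ with respect to $\tilde A$ is already available from \cite{MS}, essentially as what underlies the spectral results recalled after \eqref{eq:52la}. Since $\tilde K$ is unbounded both above and below because of the $-\i\partial_t$ factor, the framework fits precisely the expanded Hamiltonian setup motivating the abstract singular Mourre theory of Section~\ref{Sec-AbstractResults}. To invoke the regularity theorem and conclude that $\tilde\Phi \in \cD(\tilde A^{k_0})$ it is enough to control the iterated commutators $\ad_{\tilde A}^{(j)}(\tilde K)$ for $j\leq k_0+1$. Using $\ad_A(p^2) = -2\i p^2$ and $\ad_A(V)(x) = -\i\, x\cdot\nabla V(x)$, these reduce to dilation derivatives $(x\cdot\nabla)^j V$ and to time-derivatives of $b(t)$. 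Condition~\ref{cond:ac-stark-model23}(\ref{item:5ib})--(\ref{item:6ib}) controls the regular part $v_{ij}^1$ up to $k_0+1$ derivatives, while~(\ref{item:7ib}) supplies precisely the extra factor $|y|$ needed to absorb $\ad_A^{(j)}(v_{ij}^2)$ against $p^2$ via the Hardy inequality in $\R^3$. The unitary equivalence $A(1) = W(1)^* A W(1)$ then pulls the conclusion back to $\phi \in \cD(A(1)^{k_0})$, giving (\ref{item:8}).

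For (\ref{item:9}), when all $v_{ij}^2 = 0$ the potential $V$ is $C^{k_0+1}$ with $(x\cdot\nabla)^j V$ uniformly bounded for $j\leq k_0+1$. The Floquet eigenvalue equation rewrites as $(p+b(t))^2 \tilde\Phi = (\lambda + \i\partial_t - V)\tilde\Phi$, and one bootstraps the $\tilde A^{k_0}$-regularity already established into $|p|^{k_0+1}$-regularity by iterated application of this identity, using that each application of $\tilde A$ generates an additional $p$-factor while the smooth $V$ causes no loss. The main obstacle in the whole argument is the verification of the iterated commutator bounds in the presence of the local Coulomb singularity $v_{ij}^2$; this is exactly where the extra power of $|y|$ in Condition~\ref{cond:ac-stark-model23}(\ref{item:7ib}) is decisive. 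A secondary difficulty, and the reason one cannot simply invoke the regular Mourre theory of \cite{ABG}, is that $\tilde K$ is unbounded from below because of the $-\i\partial_t$ factor, which prevents ordinary energy localizations --- precisely the situation for which the expanded/singular framework of Section~\ref{Sec-AbstractResults} is designed.
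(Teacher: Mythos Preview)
Your overall strategy---lift to the Floquet space, gauge away the Stark term, and apply the abstract singular Mourre regularity theory of Section~\ref{Sec-AbstractResults}---matches the paper's. Your gauge $W(t)=\e^{-\i b(t)\cdot x}$ leads to $\tilde h(t)=(p+b(t))^2+V$, whereas the paper instead conjugates by $S_1(t)=\e^{\i c(t)\cdot p}$ to land in the frame $h(t)=p^2+V_{\phy}(\cdot+c(t))$; these are unitarily equivalent (your frame is essentially $h_{\aux}+|b|^2$ in the paper's notation, cf.~\eqref{eq:47}--\eqref{eq:49}), and the commutator computations \eqref{eq:35k}--\eqref{eq:39k} go through in either. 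The paper's choice has the minor advantage that $N=p^2+1$ commutes with the free part.

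There are, however, two genuine gaps.

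\emph{Part~(\ref{item:8}) with Coulomb singularities.} You assert that controlling $\ad_A^j(\tilde K)$ for $j\le k_0+1$ suffices, but in the singular framework one must also verify Condition~\ref{cond:condition}~(\ref{item:2}), i.e.\ that $\i[N,H]^0$ is a good form. With $N=p^2+1$ and a Coulomb-type $v_{ij}^2$ this \emph{fails}: $[p^2,V^2]$ is too singular. The paper does not repair this condition; instead (see the end of the proof of Proposition~\ref{prop:ac-stark-type2}) it observes that Condition~\ref{cond:condition}~(\ref{item:2}) is only used through the boundedness of $N^{1/2}BN^{-1/2}$ for the $B$ coming from the Mourre estimate, and establishes the needed bound \eqref{eq:55} directly, exploiting that the stronger Mourre estimate \eqref{eq:MourreStarkk} (without the $\langle H\rangle$ weight) is available here. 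Your Hardy-inequality remark handles the iterated $A$-commutators of $V^2$ but does not address this separate obstruction.

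\emph{Part~(\ref{item:9}).} Your bootstrap ``$(p+b)^2\tilde\Phi=(\lambda+\i\partial_t-V)\tilde\Phi$ gives $|p|$-regularity iteratively'' does not work: the Floquet Hamiltonian is not elliptic in $p$ because $|p|(H_0-\i)^{-1}$ is unbounded, so the right-hand side carries no a priori $p$-control. The paper makes exactly this point in the remark following Proposition~\ref{prop:ac-stark-type}. Instead, the paper obtains $\psi\in\cD(|p|^{k_0+1})$ directly from the stronger $N$-regularity result Corollary~\ref{cor:more-n-regularity}, which requires verifying the extra Condition~\ref{cond:more-n-regularity} (iterated commutators of $M_1,M_2,M_3$ with mixed factors of $A$ and $N^{1/2}$); see \eqref{eq:2baa}--\eqref{eq:2bx}. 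Finally, both parts need a separate argument to pass from $L^2$-in-$t$ regularity of the Floquet eigenstate $\psi$ to pointwise regularity of $\phi=\psi(1)$; the paper does this via the propagation computations \eqref{eq:44}--\eqref{eq:46} and \eqref{eq:44m}--\eqref{eq:46m}, which you omit.
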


The result (\ref{item:8}) is new for $k_0>1$ while it is
essentially contained in \cite{MS} for $k_0=1$, see \cite[Proposition~8.7~(ii)]{MS}.
We remark that the highest degree of smoothness known in general
in the case
$v_{ij}^2\neq 0$ is $\phi\in\cD (|p|)$, cf.
\cite[Theorem~1.8]{MS}. This  holds without the non-threshold
condition.  The result (\ref{item:9})
overlaps with \cite[Theorem~1.2]{KY}. This is for $N=2$ and
``$k_0=\infty$''.

\section{Assumptions and Statement of Regularity Results} \label{Sec-AbstractResults}

For a self-adjoint operator $A$ on a Hilbert space $\cH$, we will make use of
the $C^1(A)$ class of operators. This class consists a priori of bounded
operators $B$ with the property that
$[B,A]$ extends from a form on $\cD(A)$ to a bounded form on $\cH$.
The class is (consistently) extended to self-adjoint operators $H$,
by requiring that $(H-z)^{-1}$ is of class $C^1(A)$, for some (and hence all)
$z\in\rho(H)$, the resolvent set of $H$. We will use the notation
$H\in C^1(A)$ to indicate that an operator $H$ is of class $C^1(A)$.

If $H$ is of class $C^1(A)$ then $\cD(H)\cap\cD(A)$ is dense in $\cD(H)$
and the form $[H,A]$ extends by continuity from the form domain $\cD(H)\cap\cD(A)$
to a bounded form on $\cD(H)$. The extension is denoted by $[H,A]^0$, and is also interpreted
as an element of $\cB(\cD(H),\cD(H)^*)$.
If in addition $[H,A]^0$ extends by continuity to an element of $\cB(\cD(H),\cH)$,
then we say it is of class $C^1_\Mo(A)$. Note that being of class $C^1_\Mo(A)$
is equivalent to having the conditions of Mourre \cite{Mo} satisfied for the first commutator.
See \cite{GG}.

\begin{conds}\label{cond:condition} Let
  $\cH$ be a complex Hilbert
  space. Suppose  there are given some
  self-adjoint operators $H, A$ and $N$ as well as a symmetric
  operator $H'$ with $\cD(H')=\cD(N)$. Suppose
  $N\geq \one$. Let $R(\eta)=(A-\eta)^{-1}$ for $\eta\in \C\setminus\R$.
  \begin{enumerate}[\quad\normalfont (1)]
\item \label{item:2i0a} The operator $N$ is of class $C^1_\Mo(A)$. We abbreviate $N^\prime=\i [N,A]^0$.
\item \label{item:2} The operator $N$ is of class $C^1(H)$, and there exists
  $0<\kappa\leq \frac12$ such that the commutator obeys
\begin{equation}
  \label{eq:20}
 \i[N,H]^0\in \cB(N^{-\frac12+\kappa}\cH,N^{\frac12-\kappa}\cH).
\end{equation}
\item \label{item:2i0ue}There exists a (large) $\sigma> 0$ such that for all
  $\eta\in \C$ with $|\Im \eta|\geq \sigma$ we have as a form on $\cD(H)\cap \cD(N^{1/2})$
  \begin{equation}
    \label{eq:1commu}
  \i[H, R(\eta)]=-R(\eta)H'R(\eta).
  \end{equation} (Here it should be noticed that
  $N^{-1/2}H'N^{-1/2}$ and  $N^{\mp1/2}R(\eta)N^{\pm1/2}$ are
  bounded if $\sigma$ is large enough,
  cf. Remark~\ref{RemarkTo1}~1).)
\item \label{item:1} The
 commutator form $\i[H',A]$ defined on $\cD(A)\cap\cD(N)$  extends to a bounded operator
\begin{equation}
  \label{eq:5o}
 H'':=\i[H',A]^0\in \cB(N^{-\frac12}\cH,N^{\frac12}\cH).
\end{equation}
  \end{enumerate}
\end{conds}

\begin{cond} \label{cond:virial}
There are constants $C_1,C_2,C_3\in \R$ such that as a form on $\cD(H)\cap \cD(N^{1/2})$
\begin{equation}\label{eq:virial}
 N\leq C_1H+C_2H'+C_3\one.
\end{equation}
\end{cond}

\begin{cond}\label{cond:mourre} For a given
$\lambda\in \R$ there  exist
  $c_0>0$, $C_4\in \R$, $f_\lambda\in C_c^\infty(\R)$ with
$0\leq f_\lambda\leq 1$ and $f_\lambda=1$ in
  a neighborhood of  $\lambda$,
 and a compact operator $K_0$  on
  $\cH$ such that as  a form on  $\cD(H)\cap \cD(N^{1/2})$
\begin{equation} \label{eq:Mourre}
  H'\geq c_0\one - C_4f^{\bot}_\lambda(H)^2\la H\ra - K_0.
\end{equation}
Here $f^{\bot}_\lambda:=1-f_\lambda$.
\end{cond}

\begin{remarks}\label{RemarkTo1}
\begin{enumerate}[1)]
\item It follows from
Condition~\ref{cond:condition}~(\ref{item:2i0a}) and an argument of Mourre \cite[Proposition~II.3]{Mo}, that
there exists $\sigma>0$ such that for $|\Im\eta|\geq \sigma$ we have
$(A-\eta)^{-1}:\cD(N)\subseteq\cD(N)$ and $(A-\eta)^{-1}\cD(N)$ is dense in $\cD(N)$. 
By interpolation the same holds with
$N$ replaced by $N^\alpha$,  $0<\alpha<1$, cf. Lemma
\ref{Lemma-StrongLimits} below.
\item From Condition~\ref{cond:condition}~(\ref{item:2}) and Lemma \ref{HN1} it follows that
$N^{1/2}$ is of class $C^1_\Mo(H)$. In particular
$\cD(H)\cap \cD(N^{1/2})$  is dense in $\cD(N^{1/2})$.
\item Combining the above two remarks with Condition~\ref{cond:condition}~(\ref{item:2i0ue}) and \eqref{strongI3},
we find that given $H$, $A$ and $N$, there can at most be one $H'$ such that
Condition~\ref{cond:condition}~(\ref{item:2i0a}), (\ref{item:2}), and~(\ref{item:2i0ue})
are satisfied.
\item We remark that in practice we work with the weaker commutator estimate
\begin{equation}\label{Eq-WeakMourre}
 H'\geq c_0\one -\Re\{B(H-\lambda)\} - K_0,
\end{equation}
where $B=B(\lambda)$ is a bounded operator, with $B\cD(N^{1/2})\cup B^*\cD(N^{1/2})\subseteq\cD(N^{1/2})$.
The one in Condition~\ref{cond:mourre} is however more standard.
To see that Condition~\ref{cond:mourre} implies the above bound
choose $B = C_4 f_\lambda^\perp(H)^2\langle H \rangle(H-\lambda)^{-1}$ which under our Condition~\ref{cond:condition}
satisfies the requirements on $B$  by Lemma~\ref{lemma:HN}.
\end{enumerate}
\end{remarks}

 We call $H'$ the {\it first
derivative} of $H$. Similarly  $H''$ is the {\it second
derivative}  of $H$. The estimate (\ref{eq:virial}) is called the {\it
virial estimate}, while (\ref{eq:Mourre}) is the  {\it
Mourre estimate} at $\lambda$.

\begin{thm} \label{thm:mainresult} Suppose Conditions~\ref{cond:condition},~\ref{cond:virial}
and~\ref{cond:mourre}, and let $\psi$ be a bound state,
  $(H-\lambda)\psi=0$ (with $\lambda$ as in Condition~\ref{cond:mourre}), obeying
  \begin{equation}  \label{eq:34}
    \psi \in \cD(N^{\frac12}).
  \end{equation}
 Then $\psi\in \cD(A)$ and $A\psi\in \cD( N^{1/2})$.
\end{thm}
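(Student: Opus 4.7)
The plan is to prove both conclusions by exhibiting a family of bounded regularizations $A_\epsilon$ of $A$, for instance $A_\epsilon:=A(1+\epsilon^2 A^2)^{-1}$, and showing that $\|A_\epsilon\psi\|$ and $\|N^{1/2}A_\epsilon\psi\|$ stay uniformly bounded as $\epsilon\to 0^+$. A standard weak-compactness argument then yields $\psi\in\cD(A)$ and $A\psi\in\cD(N^{1/2})$. The key advantage of this choice is that $A_\epsilon$ (and $A_\epsilon^2$) can be written as a linear combination of products of the resolvents $R(\pm\i/\epsilon)$, valid for $\epsilon$ small by Remark~\ref{RemarkTo1}~1), so that every commutator with $H$ is directly accessible through Condition~\ref{cond:condition}(\ref{item:2i0ue}). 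Since $N\geq\one$, the main target is the second bound, which by the virial estimate~(\ref{eq:virial}) reduces to controlling $\langle A_\epsilon\psi,HA_\epsilon\psi\rangle$, $\langle A_\epsilon\psi,H'A_\epsilon\psi\rangle$, and $\|A_\epsilon\psi\|^2$.

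The first ingredient is the eigenstate virial identity: since $H\psi=\lambda\psi$ and $\psi\in\cD(N^{1/2})$, Condition~\ref{cond:condition}(\ref{item:2i0ue}) yields $\langle\psi,R(\eta)H'R(\eta)\psi\rangle=-\langle\psi,\i[H,R(\eta)]\psi\rangle=0$ for $|\Im\eta|\geq\sigma$. Expanding $A_\epsilon^2$ in the partial-fraction form in $R(\pm\i/\epsilon)$, this transfers to an identity of the shape
\begin{equation*}
\tfrac12\langle\psi,(A_\epsilon^2 H'+H'A_\epsilon^2)\psi\rangle = \text{(double-commutator remainder)},
\end{equation*}
where the remainder is, after reorganization, controlled by $H''=\i[H',A]^0$. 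By Condition~\ref{cond:condition}(\ref{item:1}), $H''\in\cB(N^{-1/2}\cH,N^{1/2}\cH)$, so $\psi\in\cD(N^{1/2})$ keeps this term bounded uniformly in~$\epsilon$.

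The second ingredient is the Mourre estimate from Condition~\ref{cond:mourre} sandwiched by $A_\epsilon$:
\begin{equation*}
\langle A_\epsilon\psi,H'A_\epsilon\psi\rangle \geq c_0\|A_\epsilon\psi\|^2 - C_4\|\langle H\rangle^{1/2}f_\lambda^\perp(H)A_\epsilon\psi\|^2 - \langle A_\epsilon\psi,K_0A_\epsilon\psi\rangle.
\end{equation*}
The $f_\lambda^\perp(H)$-term is treated by writing $f_\lambda^\perp(H)A_\epsilon\psi=[f_\lambda^\perp(H),A_\epsilon]\psi$ (since $f_\lambda^\perp(H)\psi=0$) and bounding the commutator through a Helffer--Sj\"ostrand representation of $f_\lambda^\perp(H)$ combined with Condition~\ref{cond:condition}(\ref{item:2i0ue}); the $K_0$-term is absorbed by compactness together with the weak convergence of $A_\epsilon\psi$. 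Decomposing the left-hand side as
\begin{equation*}
\langle A_\epsilon\psi,H'A_\epsilon\psi\rangle = \tfrac12\langle\psi,(A_\epsilon^2H'+H'A_\epsilon^2)\psi\rangle + \tfrac12\langle\psi,[A_\epsilon,[H',A_\epsilon]]\psi\rangle
\end{equation*}
and inserting the virial identity of the previous paragraph bounds the left-hand side uniformly in~$\epsilon$; in combination with the Mourre inequality this gives $\sup_\epsilon\|A_\epsilon\psi\|<\infty$.

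Finally, for the full virial decomposition, the term $\langle A_\epsilon\psi,HA_\epsilon\psi\rangle$ is rewritten as $\lambda\|A_\epsilon\psi\|^2+\langle A_\epsilon\psi,[H,A_\epsilon]\psi\rangle$, where the commutator $[H,A_\epsilon]$ is controlled in terms of $N^{1/2}$ via Condition~\ref{cond:condition}(\ref{item:2}) and the bound \eqref{eq:20}. Combining the three estimates yields $\sup_\epsilon\|N^{1/2}A_\epsilon\psi\|<\infty$, hence $A\psi\in\cD(N^{1/2})$. The hardest step I expect is the technical bookkeeping in the singular setting: since $\psi$ is not a priori in $\cD(A)$, every manipulation must proceed through the bounded approximants $A_\epsilon$ and the resolvent-form commutator in Condition~\ref{cond:condition}(\ref{item:2i0ue}); the double commutator $[A_\epsilon,[H',A_\epsilon]]$ must be shown to extend as a form on $\cD(N^{1/2})$ with norm uniform in~$\epsilon$, and the compactness/weak-limit argument for the $K_0$-term must be coordinated with the weak convergence of $A_\epsilon\psi$ to $A\psi$.
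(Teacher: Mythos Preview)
Your overall strategy---regularize $A$, combine a virial identity with the Mourre estimate, and absorb the compact error---is the right one and is exactly what the paper does. The gap is in your control of the ``double-commutator remainder''. You claim that both $\tfrac12\langle\psi,[A_\epsilon,[H',A_\epsilon]]\psi\rangle$ and the remainder left over after applying the virial to $\tfrac12\langle\psi,(A_\epsilon^2H'+H'A_\epsilon^2)\psi\rangle$ are controlled by $H''$ alone and are bounded \emph{uniformly} in $\epsilon$. Neither is true as stated. Writing $A_\epsilon=\tfrac{1}{2\epsilon^2}(R_++R_-)$ with $R_\pm=R(\pm\i/\epsilon)$, one has $[A_\epsilon,H']=\tfrac{\i}{2\epsilon^2}(R_+H''R_++R_-H''R_-)$, and commuting once more forces $[R_\pm,H'']=-R_\pm[A,H'']R_\pm$ to appear, i.e.\ a \emph{third} derivative $\i[H'',A]$, which is not assumed in Conditions~\ref{cond:condition}. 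If instead you avoid computing that commutator and simply estimate $[A_\epsilon,[A_\epsilon,H']]=A_\epsilon[A_\epsilon,H']-[A_\epsilon,H']A_\epsilon$ as a form on $\cD(N^{1/2})$, the best you get is a bound by $C\|N^{1/2}A_\epsilon\psi\|\,\|N^{1/2}\psi\|$, which is not uniform in $\epsilon$---it is only what the paper calls a $k$-remainder. The same obstruction hits the anticommutator: the virial gives $\langle\psi,R(\eta)H'R(\eta)\psi\rangle=0$, but $A_\epsilon^2H'$ contains terms like $R_\pm^2H'$ and $R_+R_-H'$ that are \emph{not} of this form; reducing them to the virial pattern again costs either an $H'''$ or a factor $\|N^{1/2}A_\epsilon\psi\|$. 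Since your scheme needs the anticommutator part to be independently small, the bootstrap does not close.

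This is precisely the point where the paper's argument differs, and it is the reason (cf.\ Remark~\ref{remark:assumpt-stat-regul}~\ref{item:1b})) that the result holds with one fewer commutator than the approach of \cite{CGH}, which your sketch essentially follows. The paper takes the virial not on a quadratic object but on the \emph{cubic} one $g_{1m}(A)\,A\,g_{1m}(A)$ (Lemma~\ref{symm2}); its commutator with $H$ produces, after symmetrization (Lemma~\ref{lastsymm}), the identity of Proposition~\ref{input1}:
\[
\langle\psi_m,H'\psi_m\rangle+2\langle\hat\psi_m,H'\hat\psi_m\rangle=\textup{($k$-remainder)},
\]
where $\psi_m=g_{1m}(A)\psi$ and $\hat\psi_m=p_{1m}(A)\psi_m$ with $p_{1m}$ a smooth square root of $t g'(t)/g(t)$. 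The crucial gain is the extra term $2\langle\hat\psi_m,H'\hat\psi_m\rangle$: \emph{both} summands on the left are bounded below via Mourre (Proposition~\ref{input3}), so the fact that their sum is a $k$-remainder forces each to be controlled. The reorganization leading to this identity requires commuting $H'$ past only \emph{one} factor of $g_{1m}(A)$ or $g'_{1m}(A)$, so only $H''$ enters. Your quadratic decomposition lacks this additional positive term, which is why you are pushed to a third commutator.
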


By imposing assumptions on higher-order commutators between $H$ and $A$ we obtain
a higher-order regularity result. For this we need
the following condition, which  coincides with   Condition~\ref{cond:condition}
(\ref{item:1}) if $k_0=1$, but  for $k_0\geq 2$ it is  stronger.

\begin{cond}\label{cond:conditioni}
  There exists $k_0\in \N$ such that the
 commutator forms  $\i^\ell\ad _A ^\ell(H')$ defined on $ \cD(A)\cap\cD(N)$, $\ell=0,\dots, k_0$,
 extend to  bounded operators
 \begin{align}
  \label{eq:5oi}
  &\i^\ell\ad_A^\ell(H')\in \cB(N^{-1}\cH,\cH);\;\ell=0,\dots,k_0-1.\\
 &\i^{k_0}\ad_A^{k_0}(H')\in \cB(N^{-\frac12}\cH,N^\frac12\cH). \label{eq:5oim}
 \end{align}
\end{cond}

We have the following extension of Theorem~\ref{thm:mainresult} to include higher orders

\begin{thm}
  \label{thm:mainresulti} Suppose Conditions~\ref{cond:condition}--\ref{cond:mourre} and
Condition~\ref{cond:conditioni}, and let $\psi$ be a bound state,
  $(H-\lambda)\psi=0$ (with $\lambda$ as in Condition~\ref{cond:mourre}), obeying \eqref{eq:34}.  Let $k_0$ be given as
  in Condition~\ref{cond:conditioni}. Then $\psi\in
  \cD(A^{k_0})$, and  for $k=1,\dots,k_0$ the  states $A^k\psi\in \cD(N^{1/2})$.
\end{thm}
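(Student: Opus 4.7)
The plan is to prove Theorem~\ref{thm:mainresulti} by induction on $k_0$, with the base case $k_0 = 1$ being Theorem~\ref{thm:mainresult}. Suppose $k_0 \ge 2$ and that the conclusion is known at order $k_0 - 1$, so that $A^j \psi \in \cD(N^{1/2})$ for $j = 0, \dots, k_0 - 1$. Set $\phi := A^{k_0-1}\psi \in \cD(N^{1/2})$; the inductive step then amounts to showing $\phi \in \cD(A)$ and $A\phi \in \cD(N^{1/2})$.

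First I would derive the inhomogeneous equation satisfied by $\phi$. Iterating $\i[H, A] = H'$, interpreted via the resolvent formula \eqref{eq:1commu}, together with the Leibniz rule, yields an expansion
\begin{equation*}
  [H, A^{k_0-1}] = \sum_{\ell=0}^{k_0-2} \sum_{m=0}^{k_0-2-\ell} c_{\ell, m}\, A^m \, \bigl(\i^\ell\ad_A^\ell(H')\bigr) \, A^{k_0-2-\ell-m}
\end{equation*}
with suitable combinatorial constants $c_{\ell, m}$. By \eqref{eq:5oi} each commutator $\ad_A^\ell(H')$ with $\ell \le k_0 - 2$ belongs to $\cB(N^{-1}\cH, \cH)$, and symmetry of $\i^\ell\ad_A^\ell(H')$ together with interpolation extends it to a bounded form on $\cD(N^{1/2})$. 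Combined with the induction hypothesis $A^j \psi \in \cD(N^{1/2})$ for $j \le k_0 - 1$ and with Condition~\ref{cond:condition}~(\ref{item:2i0a}), which controls how powers of $A$ interact with powers of $N$, this produces a well-defined element $g \in \cD(N^{1/2})^*$ with $(H - \lambda)\phi = g$ as a form identity on $\cD(H) \cap \cD(N^{1/2})$. The formal manipulations above would be justified by means of the regularization $A_\epsilon := A(1+\i\epsilon A)^{-1}$, the resolvent identity \eqref{eq:1commu}, and the strong-limit statements recorded in Lemma~\ref{Lemma-StrongLimits}.

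Second, I would replay the proof of Theorem~\ref{thm:mainresult} with $\phi$ playing the role of the bound state and the fixed vector $g$ in place of zero on the right-hand side. That argument rests on a regularized quadratic form in $A_\epsilon \phi$, evaluated in two ways: on one side the equation $(H - \lambda)\phi = g$ gives a pairing with $g$, bounded uniformly in $\epsilon$ because both $\phi \in \cD(N^{1/2})$ and $g \in \cD(N^{1/2})^*$ are fixed; on the other side the Mourre estimate \eqref{eq:Mourre} and the virial estimate \eqref{eq:virial}, together with the commutators of $H$ with $A_\epsilon$, produce a lower bound from which $\|A_\epsilon \phi\|$ and $\|N^{1/2}A_\epsilon \phi\|$ are controlled uniformly in $\epsilon$. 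The crucial new input at the induction step is the top-order bound \eqref{eq:5oim}: after the $k_0 - 1$ factors of $A$ inside $\phi$ have been commuted across $H$, the deepest commutator appearing in the Mourre-type computation is precisely $\ad_A^{k_0}(H') \in \cB(N^{-1/2}\cH, N^{1/2}\cH)$, which plays the role that $H'' = \i\ad_A^1(H')$ played in the base case. Passing to the limit $\epsilon \downarrow 0$ then yields $\phi \in \cD(A)$ and $A\phi \in \cD(N^{1/2})$, closing the induction.

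The main obstacle I anticipate is the combinatorial and analytic bookkeeping of the iterated commutator expansion in the singular framework. Since $H'$ is not $H$-bounded, every rearrangement of $A$ across $H$ must be carried out at the resolvent level through \eqref{eq:1commu}, while at each stage one must verify that the intermediate products fall in the mapping classes $\cB(N^{-1}\cH, \cH)$ or $\cB(N^{-1/2}\cH, N^{1/2}\cH)$ furnished by Condition~\ref{cond:conditioni}. Once this algebraic structure is secured, the analytic core of the argument reduces to a direct extension of the base case, with \eqref{eq:5oim} supplying precisely the one additional commutator needed to close the induction at order $k_0$, consistently with the optimality suggested by Example~\ref{RegExample}.
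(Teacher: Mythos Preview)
Your strategy differs from the paper's, and the step you flag as routine is where the argument actually breaks.

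The paper does not pass to $\phi=A^{k_0-1}\psi$ and replay the base case. It runs the entire virial/Mourre machinery directly at level $k$ for the \emph{original} bound state $\psi$, working with the bounded approximants $\psi_m=g_{1m}(A)^k\psi$ (here $g_{1m}$ is a carefully chosen bounded smooth truncation of the identity built from an almost analytic extension). Because $\psi$ is a genuine eigenvector, the expectation in $\psi$ of the commutator $\i[H_n,\,g_{1m}(A)^kA_jg_{1m}(A)^k]$ vanishes in the limit; the whole argument then reduces to showing that this commutator equals $\langle\psi_m,H'\psi_m\rangle+2k\langle\hat\psi_m,H'\hat\psi_m\rangle$ up to a ``$k$-remainder'' controlled by $\epsilon\|N^{1/2}\psi_m\|^2+C_\epsilon\|N^{1/2}(A-\i\sigma)^{k-1}\psi\|^2$. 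No inhomogeneous source term ever appears.

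In your scheme the inhomogeneous term is the whole difficulty, and your justification for its uniform boundedness is incorrect. The pairing produced by the virial computation is not $\langle g,\phi\rangle$; it is of the form $\langle g,\,F_\epsilon(A)\phi\rangle$, where $F_\epsilon(A)$ is the regularized operator coming from the commutator identity (in the paper's language $g_{1m}(A)A_jg_{1m}(A)$, of order roughly $m^2$ in norm). Knowing only $g\in\cD(N^{1/2})^*$ and $\phi\in\cD(N^{1/2})$ gives no uniform bound on this, since $F_\epsilon(A)$ is not uniformly bounded from $\cD(N^{1/2})$ to $\cD(N^{1/2})$. To control it you must unpack $g=\sum_p c_p\,\ad_A^p(H')A^{k_0-2-p}\psi$, move $\ad_A^p(H')$ across $F_\epsilon(A)$, and track all the commutators that arise; at that point you have reproduced the paper's level-$k$ expansion while also needing to justify manipulations with the unbounded powers $A^{k_0-1}$ that the paper avoids by using $g_{1m}(A)^{k_0-1}$ from the start. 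This reduction-to-base-case route is essentially the scheme of \cite{CGH}, which the paper notes requires one additional commutator beyond \eqref{eq:5oim}; your claim that $\ad_A^{k_0}(H')$ is ``precisely'' the deepest commutator needed is therefore not supported by the argument as written.
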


It should be noted that under the assumptions imposed in Theorem~\ref{thm:mainresult} and Theorem~\ref{thm:mainresulti},
it is crucial that $N^{1/2}$ is applied \emph{after} the powers of $A$. The following result
requires an additional assumption, and allows for arbitrary placement of $N^{1/2}$ amongst the
at most $k_0$ powers of $A$. The new condition \eqref{eq:5oioo} below is a generalization
of Condition~\ref{cond:condition}~(\ref{item:2i0a}).

\begin{cond}\label{cond:vir21} Let $N'$ be given as in Condition~\ref{cond:condition}~(\ref{item:2i0a}).
There exists $k_0\in \N$ such that the
 commutator forms  $\i^\ell\ad _A ^\ell(N')$ defined on $ \cD(A)\cap\cD(N)$, $\ell=0,\dots, k_0-1$,
 extend to  bounded operators
 \begin{equation}
  \label{eq:5oioo}
  \i^\ell \ad_A^\ell(N') \in \cB(N^{-1}\cH,\cH);\;\ell=0,\dots,k_0-1.
 \end{equation}  Moreover there exists $\kappa_1>0$ such that the commutators (initially defined as  forms on $\cD(N)$)
\begin{equation}
  \label{eq:5oiooa}
  \i\,\ad_N \big( \i^\ell \ad_A^\ell (N')\big)  \in \cB(N^{-1}\cH,N^{1-\kappa_1}\cH );\;\ell=0,\dots,k_0-1.
\end{equation}
\end{cond}

We have

\begin{cor}
  \label{cor:assumpt-stat-regul}Suppose Conditions~\ref{cond:condition}--\ref{cond:mourre},
  \ref{cond:conditioni}~and~\ref{cond:vir21} (with the same $k_0$ in
  Conditions~\ref{cond:conditioni} and~\ref{cond:vir21}). Let $\psi\in\cD(N^{1/2})$ be a bound state,
  $(H-\lambda)\psi=0$ (with $\lambda$ as in Condition~\ref{cond:mourre}).
  For any $k,\ell\geq0$, with $k+\ell \leq  k_0$,
  we have   $\psi\in \cD(A^k N^{1/2} A^\ell)$.
\end{cor}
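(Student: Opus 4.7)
Proof plan for Corollary~\ref{cor:assumpt-stat-regul}. The case $k=0$ is immediate from Theorem~\ref{thm:mainresulti}, which gives $A^\ell\psi\in\cD(N^{1/2})$ for $\ell\leq k_0$. For $k\geq 1$ the strategy is to reduce to the base case by moving $N^{1/2}$ past $A^k$ via the Leibniz-type identity
\[
A^k N^{1/2}A^\ell\psi=\sum_{j=0}^{k}\binom{k}{j}\bigl(\ad_A^j N^{1/2}\bigr)A^{k+\ell-j}\psi .
\]
Since $k+\ell-j\leq k_0$ for $0\leq j\leq k$, Theorem~\ref{thm:mainresulti} already places $A^{k+\ell-j}\psi$ in $\cD(N^{1/2})$, so the corollary reduces to showing that for $1\leq j\leq k_0$ the iterated commutator form $\ad_A^j(N^{1/2})$ extends to a bounded operator from $\cD(N^{1/2})$ to $\cH$; equivalently, that $\ad_A^j(N^{1/2})N^{-1/2}$ is bounded on $\cH$.

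For the boundedness of $\ad_A^j(N^{1/2})N^{-1/2}$, the plan is to use the resolvent representation
\[
N^{1/2}=c\int_0^\infty \lambda^{-1/2}\,\frac{N}{N+\lambda}\,\d\lambda
\]
together with the identity $\ad_A\bigl((N+\lambda)^{-1}\bigr)=-\i(N+\lambda)^{-1}N'(N+\lambda)^{-1}$, which is available as a form on $\cD(N)$ by Condition~\ref{cond:condition}~(\ref{item:2i0a}). Iterating this identity expands $\ad_A^j(N^{1/2})$ into a $\lambda$-integral of a finite sum of resolvent sandwich expressions
\[
(N+\lambda)^{-1}X_1(N+\lambda)^{-1}X_2\cdots X_m(N+\lambda)^{-1} ,
\]
where each $X_r$ is either an iterated commutator $\ad_A^{p_r}(N')$ with $p_r\leq k_0-1$, or is produced by commuting such a factor past a resolvent of $N$ and therefore equals $\ad_N\bigl(\ad_A^{p_r}(N')\bigr)$ up to a lower-order resolvent factor. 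The hypotheses \eqref{eq:5oioo} and \eqref{eq:5oiooa} of Condition~\ref{cond:vir21} are precisely calibrated to control the operator norms of these two types of factors, giving boundedness $\cB(\cD(N),\cH)$ in the first case and $\cB(\cD(N),N^{1-\kappa_1}\cH)$ in the second. The small loss $\kappa_1>0$ in \eqref{eq:5oiooa} is exactly what is needed to gain integrability of the resulting $\lambda$-integrals at both endpoints $\lambda\searrow 0$ and $\lambda\to\infty$, after the standard resolvent manipulation $N(N+\lambda)^{-1}=\one-\lambda(N+\lambda)^{-1}$ has been used to absorb the $N^{-1/2}$ on the right.

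The main obstacle is rigor: the commutator identities above are initially only forms on the small subspace $\cD(A^{k_0})\cap\cD(N)$, whereas the final statement involves $A^{k+\ell-j}\psi$ which Theorem~\ref{thm:mainresulti} only places in $\cD(N^{1/2})$, not in $\cD(N)$. The proof must therefore be carried out with $N^{1/2}$ replaced by a smoothed version such as $N^{1/2}(1+\epsilon N)^{-1}$, or equivalently with the $\lambda$-integral cut off at both small and large scales, and the limit $\epsilon\searrow 0$ taken only after the uniform-in-$\epsilon$ norm bounds of the second paragraph have been established. Substituting the resulting bounded extensions into the commutator expansion of the first paragraph and using $A^{k+\ell-j}\psi\in\cD(N^{1/2})$ then yields $\psi\in\cD(A^k N^{1/2}A^\ell)$, completing the proof.
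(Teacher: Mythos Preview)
Your proposal is correct and follows essentially the same route as the paper. The paper packages your second and third paragraphs into a separate preparatory result (Proposition~\ref{Prop-CommN12}, supported by Lemma~\ref{Lemma-N12Int}) establishing that $\i^p\ad_A^p(N^{1/2})$ extends from $\cD(A)\cap\cD(N^{1/2})$ to an $N^{1/2}$-bounded operator for $p\leq k_0$; the integral representation, the resolvent-sandwich expansion, and the use of the $\kappa_1$-gain from \eqref{eq:5oiooa} to secure integrability are exactly as you describe. The only notable difference is in the final step: rather than writing the formal identity $A^kN^{1/2}A^\ell\psi=\sum_j\binom{k}{j}(\ad_A^jN^{1/2})A^{k+\ell-j}\psi$ and regularizing, the paper works in the dual pairing $\langle A^p\varphi,N^{1/2}A^k\psi\rangle$ for $\varphi\in\cD(A^p)$, commutes the $A$'s across using the already-established $N^{1/2}$-boundedness of $\ad_A^{p-q}(N^{1/2})$, and bounds the result by $C\|\varphi\|$, which directly yields $N^{1/2}A^k\psi\in\cD(A^p)$; this sidesteps the domain issue you flag in your third paragraph without an explicit $\epsilon$-regularization at that stage. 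Your regularization works too, but the duality formulation is slightly cleaner.
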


We end with the following improvement of Theorem~\ref{thm:mainresult}, which concludes in addition
that bound states are in the domain of $N$. It requires the added assumption \eqref{eq:5oiooa}, with $k_0=1$.

\begin{thm} \label{thm:mainresultk=1} Suppose Conditions~\ref{cond:condition}--\ref{cond:mourre}
and \eqref{eq:5oiooa} for $k_0=1$, and let $\psi\in\cD(N^{1/2})$ be a bound state
$(H-\lambda)\psi=0$ (with $\lambda$ as in Condition~\ref{cond:mourre}).  Then $\psi\in \cD(N)$,
the states $\psi, N^{1/2}\psi\in \cD(A)$ and  $A\psi\in \cD( N^{1/2})$.
\end{thm}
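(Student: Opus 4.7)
The plan is to split the proof into two parts: (i) invoking Corollary~\ref{cor:assumpt-stat-regul} for the claims involving $A$ and $N^{1/2}$, and (ii) deducing the remaining claim $\psi\in\cD(N)$ from a regularization argument that rests on the Corollary's output together with the gain $\kappa_1>0$ supplied by \eqref{eq:5oiooa}.

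For (i), I would verify that, when specialized to $k_0=1$, Conditions~\ref{cond:conditioni} and~\ref{cond:vir21} are already implied by the hypotheses of Theorem~\ref{thm:mainresultk=1}. Indeed, Condition~\ref{cond:conditioni} with $k_0=1$ demands only $H'\in\cB(N^{-1}\cH,\cH)$ (automatic from $\cD(H')=\cD(N)$ and symmetry) and $H''\in\cB(N^{-1/2}\cH,N^{1/2}\cH)$ (Condition~\ref{cond:condition}~(\ref{item:1})); Condition~\ref{cond:vir21} with $k_0=1$ demands $N'\in\cB(N^{-1}\cH,\cH)$ (from $N\in C^1_\Mo(A)$, Condition~\ref{cond:condition}~(\ref{item:2i0a})) and \eqref{eq:5oiooa} for $\ell=0$, which is our hypothesis. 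Corollary~\ref{cor:assumpt-stat-regul} then provides $\psi\in\cD(A^kN^{1/2}A^\ell)$ for $k+\ell\leq 1$, covering $\psi\in\cD(A)$, $N^{1/2}\psi\in\cD(A)$ and $A\psi\in\cD(N^{1/2})$.

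For (ii), fix $\chi\in C_c^\infty(\R)$ with $\chi=1$ near zero and set $\chi_R(t)=\chi(t/R)$. It suffices to establish a uniform bound $\|N\chi_R(N)\psi\|\leq C$ for $R\geq 1$, as monotone convergence then delivers $\psi\in\cD(N)$. Since $N\in C^1(H)$ makes $\chi_R(N)$ preserve $\cD(H)$, the vector $\phi_R:=N^{1/2}\chi_R(N)\psi$ lies in $\cD(H)\cap\cD(N)$, and testing the virial estimate \eqref{eq:virial} on $\phi_R$ gives
\[
\|N\chi_R(N)\psi\|^2=\inp{\phi_R,N\phi_R}\leq C_1\inp{\phi_R,H\phi_R}+C_2\inp{\phi_R,H'\phi_R}+C_3\|N^{1/2}\psi\|^2.
\]
The $H$-term reduces, via the eigenvalue equation $H\phi_R=\lambda\phi_R+[H,N^{1/2}\chi_R(N)]\psi$ and a Helffer--Sj\"ostrand representation whose commutator is bounded by estimate~\eqref{eq:20}, to a quantity of the form $C\|N^{1/2}\psi\|^2$ uniform in $R$.

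The main obstacle is the $H'$-term: since $H'$ is only $N$-bounded, a naive estimate would yield $C\|N\chi_R(N)\psi\|\cdot\|\phi_R\|$ on the right and be circular. To bypass this I would exploit that $\phi_R\in\cD(A)\cap\cD(H)$ (the former because $\chi_R(N)$ preserves $\cD(A)$ as $N\in C^1_\Mo(A)$), together with the symmetry of $A$ and $H'=\i[H,A]^0$, to obtain
\[
|\inp{\phi_R,H'\phi_R}|\leq 2\|[H,N^{1/2}\chi_R(N)]\psi\|\cdot\|A\phi_R\|.
\]
It then remains to bound $\|A\phi_R\|$ uniformly in $R$. Splitting $A\phi_R=\chi_R(N)AN^{1/2}\psi+[A,\chi_R(N)]N^{1/2}\psi$ (using $N^{1/2}\chi_R(N)=\chi_R(N)N^{1/2}$), the first term is bounded by $\|AN^{1/2}\psi\|$, finite by the Corollary; the second is controlled uniformly by a Helffer--Sj\"ostrand expansion of $[A,\chi_R(N)]$ based on $[A,N]^0=-\i N'$, and it is precisely at this step that the gain supplied by \eqref{eq:5oiooa} provides enough $N$-regularity for the resulting integral to converge uniformly in $R$. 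Combining all estimates yields $\|N\chi_R(N)\psi\|\leq C$ uniformly in $R$, and passing to $R\to\infty$ completes the proof.
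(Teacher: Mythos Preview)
Your proof is correct and follows essentially the same three-step virial argument as the paper: invoke Corollary~\ref{cor:assumpt-stat-regul} for the $A$-regularity claims, then test the virial estimate~\eqref{eq:virial} on the regularized state $N^{1/2}\chi_R(N)\psi$ (the paper uses $N^{1/2}I_{\i n}(N)\psi$), bounding separately the $H$-contribution via~\eqref{eq:20} and the $H'$-contribution via the identity $\langle\phi_R,H'\phi_R\rangle=-2\,\Im\langle H\phi_R,A\phi_R\rangle$ combined with a uniform bound on $\|A\phi_R\|$. One small misattribution: the hypothesis~\eqref{eq:5oiooa} is \emph{not} what makes the bound on $[A,\chi_R(N)]N^{1/2}\psi$ work --- that estimate follows already from $N\in C^1_\Mo(A)$ (just as the paper's Step~II uses only $[A,I_{\i n}(N)]=-\i(N+n)^{-1}N'I_{\i n}(N)$); the place where~\eqref{eq:5oiooa} is actually consumed is in part~(i), inside the proof of Corollary~\ref{cor:assumpt-stat-regul} (via Proposition~\ref{Prop-CommN12} and Lemma~\ref{Lemma-N12Int}) to obtain $N^{1/2}\psi\in\cD(A)$.
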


In Subsection~\ref{subsec-MoreNReg} we in fact prove an extension of the above theorem, to include
higher order estimates in $N$. These are applied in Section~\ref{AC--Stark model} to
many-body AC-Stark Hamiltonians.

\begin{remarks}\label{remark:assumpt-stat-regul}\begin{enumerate}[1)]
\item \label{item:1a0} The condition that $N\geq \one$ is imposed partly
  for convenience of formulation. Obviously one can obtain a
  version of the above results upon imposing only that $N$
  is bounded from below (upon ``translating'' $N\to N+C\geq \one$ at
  various points in the above conditions).
\item \label{item:1a} The `standard' or 'regular'  Mourre theory, considered for
  example in
  \cite{CGH}, fits in the semi-bounded case into the above scheme so
  that Theorem \ref{thm:mainresulti} holds. In fact (assuming here for
  simplicity that $H$ is
  bounded from below) we have  $N:=H+C\geq \one$ for a sufficiently large
  constant $C$. Use this $N$ and the same 'conjugate operator' $A$
  in Conditions \ref{cond:condition} -- \ref{cond:mourre}, \ref{cond:conditioni} and \ref{cond:vir21}. Note also
  that the standard Mourre
estimate at energy $\lambda$ reads
\begin{equation}\label{eq:Mourrestand}
  f_\lambda(H)\i[H,A]^0 f_\lambda(H)\geq c_0'f_\lambda^2(H)
  - K_0';\;c_0'>0, \;K_0'\textup{ compact}.
\end{equation} 
From \eqref{eq:Mourrestand} we readily conclude
\eqref{eq:Mourre} with $c_0= c_0'/2$,   $K_0=K_0'$ an a suitable
constant $C_4\geq 0$.

Although we shall not elaborate we also remark  that the method of
proof of Theorem \ref{thm:mainresulti} essentially can be adapted under the
conditions of  the standard
Mourre theory, in fact only  a simplified version is  needed.
 Whence although we can not literately conclude from Theorem
 \ref{thm:mainresulti} in the general non-semi-bounded case the result
 $\psi\in  \cD(A^{k_0})$
 is still valid given standard conditions on repeated  commutators
 $\i^k \ad_A^k(H)$ for  $k\leq {k_0+1}$.
\item \label{item:1b} Theorem~\ref{thm:mainresulti} does {\it not } hold  with one less commutator
 in Condition \ref{cond:conditioni}. Alternatively, under the conditions of Theorem
  \ref{thm:mainresulti} it is in general {\it false } that the bound
  state  $\psi\in\cD(A^{k_0+1})$. Based on considerations for discrete eigenvalues this
  statement may at a first thought appear surprising. See Example~\ref{RegExample}.
  Compared to \cite{CGH} our method works with one less commutator,
  cf. \ref{item:1a}),  although the
  overall scheme of ours and the one of  \cite{CGH} are similar.
\item \label{item:1bll} The proofs of Theorems
  \ref{thm:mainresult} and  \ref{thm:mainresulti}, Corollary
  \ref{cor:assumpt-stat-regul} and Theorem  \ref{thm:mainresultk=1} are constructive in that they yield explicit bounds. Precisely, if we
  have a positive lower bound of  the constant $c_0$ in
  \eqref{eq:Mourre} that is uniform in  $\lambda$ belonging to some
  fixed compact  interval $I$ as well as uniform
  bounds of the  absolute value of the constants
  $C_1,\dots,C_4$ of \eqref{eq:virial} and \eqref{eq:Mourre} (uniform
  in  the same sense) and similarly for all possible operator norms
  related to Conditions~\ref{cond:condition},~\ref{cond:conditioni}
   and~\ref{cond:vir21} (and the $B(\lambda)$ in Remark~\ref{RemarkTo1} if it is used) then
  there are bounds of the form, for example,
  \begin{equation*}
   \|N^{\frac12}A^k\psi\|\leq C \|N^{\frac12}\psi\|;\;C=C(k,I, K_0);
  \end{equation*} here $K_0=K_0(\lambda)$ is the compact operator of
  \eqref{eq:Mourre} and $k\leq k_0$.  Similar bounds are valid for the
  states $A^k N^{1/2}A^\ell\psi$ of Corollary
  \ref{cor:assumpt-stat-regul} and for the state   $N\psi$ of  Theorem
  \ref{thm:mainresultk=1}. In
  the context of perturbation theory  typically $I$ will be a small interval
  centered at some (unperturbed) embedded eigenvalue $\lambda_0$ and
  $K_0=K_0(\lambda_0)$. Whence the constant will depend only on the
  interval. For various models one can verify the condition
  \eqref{eq:34} for all bound states $\psi$ by a `virial argument',
 cf. \cite{GGM,MS,Sk}, along with a similar bound
 \begin{equation*}
   \|N^{\frac12}\psi\|\leq C (I)\|\psi\|.
  \end{equation*}
  This virial argument is  in a concrete situation
  related to the virial estimate (\ref{eq:virial}). Clearly the above  bounds can be used in
  combination, and this is precisely how we in Section~\ref{sec_models} arrive at the
Theorems~\ref{Main-Nelson} and~\ref{Main-NelsonPrime}.
In \cite{MW} the case of regular Mourre theory is considered where the derivation of
the bounds is simpler, and care is taken to derive good explicit bounds, which in particular are independent
of any proof technical constructions. The bounds are good enough to formulate a reasonable condition on
the growth of norms of multiple commutators which ensures that bound states are analytic vectors
with respect to $A$.
\end{enumerate}
\end{remarks}


\section{Preliminaries}\label{Preliminaries}


In this section we establish basic consequences of Conditions~\ref{cond:condition},
and introduce a calculus of almost analytic extensions taylored to avoid
issues with $(A-\eta)^{-1}$, when $|\Im\eta|$ is small.

\subsection{Improved Smoothness for Operators of Class $C^1(A)$}

For an operator $N$ of class $C^1(A)$ not much in the way of regularity
can be expected, beyond the $C^1(A)$ property itself, and its equivalent
formulations. See \cite{ABG,GGM1}. Often one requires some additional
smoothness properties to manipulate and estimate expressions in the two operators.
The typical way of achieving improved smoothness is to impose
conditions on $\i[N,A]^0$ stronger than what is implied by the $C^1(A)$ property itself.
This is what is done in Condition~\ref{cond:condition}~(\ref{item:2i0a}) and~(\ref{item:2}).

This subsection is devoted primarily to the extraction of improved smooth\-ness properties of the
pair of operators $N,H$, afforded to us by  Conditions~\ref{cond:condition}.

\begin{lemma}\label{HN0} Let $N\geq \one$ be of class $C^1(H)$
with
\[
[N,H]^0\in \cB(N^{-1/2}\cH,N^{1/2}\cH).
\]
For any $\alpha\in ]0,1[$, the operator $N^\alpha$ is of class $C^1(H)$.
\end{lemma}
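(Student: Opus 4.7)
The plan is to exploit the Laplace-type integral representation
\[
N^\alpha \,=\, \frac{\sin(\pi\alpha)}{\pi}\int_0^\infty \lambda^{\alpha-1}\,N(N+\lambda)^{-1}\,\d\lambda,
\]
which (thanks to $N\geq\one$) converges as a strong integral on $\cD(N^\alpha)$, and to differentiate formally under the integral sign. Using $N(N+\lambda)^{-1}=\one-\lambda(N+\lambda)^{-1}$ together with the resolvent commutator identity $[(N+\lambda)^{-1},H]=-(N+\lambda)^{-1}[N,H]^0(N+\lambda)^{-1}$ (valid because $N\in C^1(H)$), one is led to
\[
[N^\alpha,H] \,=\, \frac{\sin(\pi\alpha)}{\pi}\int_0^\infty \lambda^{\alpha}\,(N+\lambda)^{-1}[N,H]^0(N+\lambda)^{-1}\,\d\lambda,
\]
and the task is to make sense of this identity in the correct functional-analytic scale.

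The naive operator norm of each integrand is only $O((1+\lambda)^{-1})$, so the right-hand side does not converge in $\cB(\cH)$; the correct target is $\cB(N^{-\alpha}\cH,N^{\alpha}\cH)$. Writing $[N,H]^0 = N^{1/2}K N^{1/2}$ with $K:=N^{-1/2}[N,H]^0 N^{-1/2}\in\cB(\cH)$ (this is exactly the hypothesis $[N,H]^0\in\cB(N^{-1/2}\cH,N^{1/2}\cH)$) and sandwiching the integral by $N^{-\alpha}$ one obtains
\[
N^{-\alpha}[N^\alpha,H]N^{-\alpha} \,=\, \frac{\sin(\pi\alpha)}{\pi}\int_0^\infty \lambda^{\alpha}\,\bigl(N^{-\alpha}(N+\lambda)^{-1}N^{1/2}\bigr)\,K\,\bigl(N^{1/2}(N+\lambda)^{-1}N^{-\alpha}\bigr)\,\d\lambda.
\]
By the spectral theorem and $N\geq\one$,
\[
\bigl\|N^{-\alpha}(N+\lambda)^{-1}N^{1/2}\bigr\| \,=\, \sup_{\mu\geq 1}\frac{\mu^{1/2-\alpha}}{\mu+\lambda} \,\leq\, C_\alpha(1+\lambda)^{-\gamma_\alpha},
\]
with $\gamma_\alpha = 1$ when $\alpha\geq 1/2$ (since $\mu^{1/2-\alpha}\leq 1$) and $\gamma_\alpha = 1/2+\alpha$ when $\alpha < 1/2$ (using $\mu^{1/2-\alpha}\leq (\mu+\lambda)^{1/2-\alpha}$ on $[1,\infty)$). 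In both cases $\gamma_\alpha > (\alpha+1)/2$ for $\alpha\in{]0,1[}$, so $\int_0^\infty \lambda^\alpha(1+\lambda)^{-2\gamma_\alpha}\,\d\lambda<\infty$, and the integral converges in operator norm. Thus $[N^\alpha,H]^0$ extends to a bounded map $N^{-\alpha}\cH\to N^\alpha\cH$.

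To translate this into $N^\alpha\in C^1(H)$: for $w<0$, spectral calculus gives $\|N^\alpha(N^\alpha-w)^{-1}\|\leq 1$, so $(N^\alpha-w)^{-1}$ maps $\cH$ continuously into $N^{-\alpha}\cH=\cD(N^\alpha)$ and, dually, $N^\alpha\cH$ continuously into $\cH$, both with unit norm. The standard resolvent formula
\[
[(N^\alpha-w)^{-1},H]^0 \,=\, -(N^\alpha-w)^{-1}\,[N^\alpha,H]^0\,(N^\alpha-w)^{-1}
\]
then yields a bounded operator on $\cH$, hence $(N^\alpha-w)^{-1}\in C^1(H)$, which is by definition the statement $N^\alpha\in C^1(H)$. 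The main delicacy will be the rigorous justification of the formal commutator manipulations: one truncates the $\lambda$-integral to $[\epsilon,T]$ where all objects are bounded, verifies the form identity on a common core such as $\cD(H)\cap\cD(N)$ (dense in $\cD(N)$ by $N\in C^1(H)$), and passes to the limits $\epsilon\to 0$ and $T\to\infty$ using the norm-integrability above together with dominated convergence.
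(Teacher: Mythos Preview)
Your integral computation is correct and in fact proves more than the lemma asks: the bound $[N^\alpha,H]^0\in\cB(\cD(N^\alpha),\cD(N^\alpha)^*)$ is essentially the content of the paper's \emph{next} result, Lemma~\ref{HN1}. The gap is in your final step. The ``standard resolvent formula''
\[
[(N^\alpha-w)^{-1},H]^0=-(N^\alpha-w)^{-1}[N^\alpha,H]^0(N^\alpha-w)^{-1}
\]
as an identity of forms on $\cD(H)$ presupposes that $(N^\alpha-w)^{-1}$ maps $\cD(H)$ into $\cD(H)\cap\cD(N^\alpha)$; but $(N^\alpha-w)^{-1}$ preserving $\cD(H)$ is precisely the $C^1(H)$ property you are trying to establish. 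Knowing only that the commutator form extends continuously to $\cD(N^\alpha)\times\cD(N^\alpha)$ does not by itself yield this---one needs an additional domain condition of the type appearing in \cite[Theorem~6.2.10]{ABG}, which you have not verified. Your truncation remark addresses the justification of the formula for $[N^\alpha,H]^0$, not this last passage.

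The paper's proof sidesteps the circularity by going straight to the resolvent: since $0\in\rho(N^\alpha)$, it suffices to show $N^{-\alpha}\in C^1(H)$ as a bounded operator. Using the representation
\[
N^{-\alpha}=c_\alpha\int_0^\infty t^{-\alpha}(N+t)^{-1}\,\d t
\]
and the fact (from $N\in C^1(H)$) that each $(N+t)^{-1}$ \emph{does} preserve $\cD(H)$ with $[H,(N+t)^{-1}]=(N+t)^{-1}[N,H]^0(N+t)^{-1}$, one computes $[N^{-\alpha},H]$ directly on $\cD(H)$ as a norm-convergent integral (the integrand is $O(t^{-\alpha}(1+t)^{-1})$). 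This is shorter and needs no scale argument. Your route can be repaired by simply taking $w=0$, but then the natural tool is exactly this representation of $N^{-\alpha}$, so you end up with the paper's proof.
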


\begin{proof} Let $0<\alpha<1$.
It suffices to check for one $\eta\in \rho(N^\alpha)$
that $(N^\alpha-\eta)^{-1}$ is of class $C^1(H)$. To this end we pick $\eta=0$,
and use the representation formula
\begin{equation}\label{eq:26}
N^{-\alpha} = c_\alpha\int_0^\infty t^{-\alpha} (N+t)^{-1}\d t, \ \ c_\alpha = \frac{\sin(\alpha \pi)}{\pi}.
\end{equation}
Since $N\in C^1(H)$ we  have for all
$t>0$ that  the operator $(N+t)^{-1}$ preserves $\cD(H)$. In fact
\begin{equation}
  \label{eq:29}
  [H,(N+t)^{-1}]\phi=(N+t)^{-1} [N,H]^0(N+t)^{-1}\phi;\;\phi\in \cD(H).
\end{equation}

By combining \eqref{eq:26} and \eqref{eq:29} we can compute
$[N^{-\alpha},H]$  considered as a form on $\cD(H)$ as
\begin{equation}
  \label{eq:30}
  [N^{-\alpha},H]=c_\alpha \int_0^\infty t^{-\alpha}( N
  +t)^{-1}[N,H]^0(N+t)^{-1}\d t.
\end{equation}
Notice that the integral is absolutely convergent for any $0<\alpha<1$.
This completes the proof.
\end{proof}

\begin{lemma}\label{HN1} Assume $N\geq \one$ and $H$ satisfy  Condition~\ref{cond:condition}~(\ref{item:2})
and let $\alpha\in ]0,1[$.
Then $N^\alpha\in C^1(H)$ and for $\tau_1,\tau_2\geq 0$,
with
\[
\max\{0,\tfrac12-\kappa-\tau_1\} + \max\{0,\tfrac12-\kappa-\tau_2\} < 1-\alpha,
\]
we have $[N^\alpha,H]^0\in \cB(N^{-\tau_1}\cH,N^{\tau_2}\cH)$. In particular
$N^{1/2}$ is of class $C^1_\Mo(H)$.
\end{lemma}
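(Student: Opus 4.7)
My plan is to run the same kind of resolvent-integral argument used in Lemma~\ref{HN0}, but starting from the Balakrishnan-type representation of $N^{\alpha}$ rather than $N^{-\alpha}$, and then estimate the resulting integrand in operator norm using the sandwiching provided by Condition~\ref{cond:condition}~(\ref{item:2}).

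First, I observe that Condition~\ref{cond:condition}~(\ref{item:2}) implies $[N,H]^0\in\cB(N^{-1/2}\cH,N^{1/2}\cH)$ (since $N\geq\one$ makes the inclusions $N^{-1/2}\cH\subseteq N^{-1/2+\kappa}\cH$ and $N^{1/2-\kappa}\cH\subseteq N^{1/2}\cH$ trivial), so Lemma~\ref{HN0} already yields $N^{\alpha}\in C^1(H)$ for every $\alpha\in(0,1)$. It remains to prove the quantitative bound on $[N^\alpha,H]^0$. For this I would use the scalar identity $\lambda^{\alpha}=\tfrac{\sin(\alpha\pi)}{\pi}\int_0^\infty t^{\alpha-1}\lambda(\lambda+t)^{-1}\d t$ to write, on $\cD(N)$,
\[
N^{\alpha}=\tfrac{\sin(\alpha\pi)}{\pi}\int_0^\infty t^{\alpha-1}N(N+t)^{-1}\d t.
\]
Since $[N(N+t)^{-1},H]=[\one-t(N+t)^{-1},H]=t(N+t)^{-1}[N,H]^0(N+t)^{-1}$ by \eqref{eq:29}, commuting with $H$ inside the integral yields, as a form on $\cD(H)$,
\[
[N^{\alpha},H]=\tfrac{\sin(\alpha\pi)}{\pi}\int_0^\infty t^{\alpha}(N+t)^{-1}[N,H]^0(N+t)^{-1}\d t.
\]

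Next, set $\sigma=\tfrac12-\kappa$. Condition~\ref{cond:condition}~(\ref{item:2}) means exactly that $N^{-\sigma}[N,H]^0N^{-\sigma}$ is bounded on $\cH$, so I would split the integrand as
\[
N^{-\tau_2}(N+t)^{-1}[N,H]^0(N+t)^{-1}N^{-\tau_1}=N^{\sigma-\tau_2}(N+t)^{-1}\bigl(N^{-\sigma}[N,H]^0N^{-\sigma}\bigr)N^{\sigma-\tau_1}(N+t)^{-1},
\]
and use the spectral-theoretic estimate (valid for $N\geq\one$ and $s\leq 1$)
\[
\|N^{s}(N+t)^{-1}\|_{\cB(\cH)}\leq C_s(1+t)^{-1+\max\{s,0\}},
\]
which I would verify by a two-line calculus of $\lambda\mapsto\lambda^s/(\lambda+t)$ on $[1,\infty)$. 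Applied with $s=\sigma-\tau_j\leq\sigma<1$, this gives
\[
\bigl\|N^{-\tau_2}(N+t)^{-1}[N,H]^0(N+t)^{-1}N^{-\tau_1}\bigr\|\leq C\,(1+t)^{-2+\max\{\sigma-\tau_1,0\}+\max\{\sigma-\tau_2,0\}}.
\]

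Integrating against $t^{\alpha}$, integrability at $t=0$ is automatic from $\alpha>0$, while integrability at $t=\infty$ demands
\[
\alpha-2+\max\{\sigma-\tau_1,0\}+\max\{\sigma-\tau_2,0\}<-1,
\]
which is precisely the assumed condition. This yields $\|N^{-\tau_2}[N^{\alpha},H]^0 N^{-\tau_1}\|_{\cB(\cH)}<\infty$, first on the dense subset $\cD(N)\cap\cD(H)$ and then extended by density. For the final assertion, I would specialise to $\alpha=\tau_1=1/2$, $\tau_2=0$: the hypothesis reduces to $0+(\tfrac12-\kappa)<\tfrac12$, which holds for every $\kappa>0$, giving $[N^{1/2},H]^0\in\cB(\cD(N^{1/2}),\cH)$, i.e.\ $N^{1/2}\in C^1_\Mo(H)$. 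The only real bookkeeping is getting the two exponents $\max\{\sigma-\tau_j,0\}$ right in the two regimes $\sigma\lessgtr\tau_j$; there is no deeper obstacle.
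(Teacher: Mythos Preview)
Your argument is correct and is essentially the paper's own proof: you arrive at the same integral representation $[N^{\alpha},H]=c_{\alpha}\int_0^\infty t^{\alpha}(N+t)^{-1}[N,H]^0(N+t)^{-1}\d t$ (the paper derives it from the equivalent formula $N^{\alpha}=c_{\alpha}\int_0^\infty t^{\alpha}(t^{-1}-(N+t)^{-1})\d t$), and then bound the integrand via the same spectral estimate on $\|N^{s}(N+t)^{-1}\|$ to obtain the identical integrability condition $\theta<1-\alpha$. Your explicit check of the $C^1_{\Mo}$ specialisation is a helpful addition the paper leaves implicit.
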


\begin{proof} That $N^\alpha\in C^1(H)$ follows from Lemma~\ref{HN0}.
We compute as a form on $\cD(N^\alpha)\cap\cD(H)$
\begin{equation}\label{eq:30ny}
[N^\alpha,H] = c_{\alpha} \int_0^\infty t^\alpha(N + t)^{-1}
[N,H]^0 (N+t)^{-1}\d t,
\end{equation}
where we have used the strongly convergent integral representation formula
\begin{equation}\label{Eq-Thomas}
N^\alpha = c_{\alpha} \int_0^\infty t^\alpha\big(t^{-1} - (N+t)^{-1}\big)\d t,
\end{equation}
which follows from (\ref{eq:26}).
We thus get for $\tau_1,\tau_2\geq 0$
\begin{eqnarray*}
|\langle\psi, [N^\alpha,H]\varphi\rangle|& \leq  & C \int_0^\infty t^{\alpha}
\|(N+t)^{-1} N^{\frac12-\kappa-\tau_1}\|\|(N+t)^{-1}
N^{\frac12-\kappa-\tau_2}\|\d t\\
 & &  \ \ \times \|N^{\tau_1}\psi\|
\|N^{\tau_2}\varphi\|.
\end{eqnarray*}
The integrand is of the order $O(t^{\alpha -2 + \theta})$,
with $\theta = \max\{0,\frac12-\kappa-\tau_1\}+\max\{0,\frac12-\kappa-\tau_2\}$.
It   is integrable provided $\theta < 1-\alpha$, which proves the lemma.
\end{proof}

We shall need a  boundedness  result:
\begin{lemma}
   \label{lemma:HN} Assume $N\geq \one$ and $H$ satisfy Condition~\ref{cond:condition}~(\ref{item:2}) and let $\alpha\in ]0,1/2+\kappa[$.
   Suppose  $f\in C^\infty (\R)$ is given such that
   \begin{equation*}
     {\frac{\d^k}{ \d t^k}}f(t)=O\big (\inp{t}^{-k}\big );\;k=0,1,\dots
   \end{equation*}
Then
\begin{equation}
  \label{eq:19}
  N^\alpha f(H) N^{-\alpha}\in\cB(\cH).
\end{equation}
\end{lemma}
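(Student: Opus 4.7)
The plan is to combine the Helffer--Sjöstrand functional calculus with the $C^1(H)$ property of $N^\alpha$ established in Lemma~\ref{HN1}. First, I would fix an almost analytic extension $\tilde f\in C_c^\infty(\C)$ of $f$ satisfying
\[
|\bar\partial \tilde f(z)|\leq C_M\la z\ra^{-M-2}|\Im z|^M \text{ for every integer }M\geq 0,
\]
so that
\[
f(H)=\frac{1}{\pi}\int_{\C}\bar\partial \tilde f(z)(z-H)^{-1}\,dL(z).
\]
Then $N^\alpha f(H)N^{-\alpha}$ is bounded provided one can control $\|N^\alpha (z-H)^{-1}N^{-\alpha}\|$ in terms of an inverse power of $|\Im z|$ that is compensated for by the decay of $\bar\partial\tilde f$.

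The key identity I would establish is
\[
N^\alpha(z-H)^{-1}N^{-\alpha}=(z-H)^{-1}-(z-H)^{-1}[N^\alpha,H]^0(z-H)^{-1}N^{-\alpha},
\]
which follows from the fact, proved in Lemma~\ref{HN1}, that $N^\alpha$ is of class $C^1(H)$. Applying the second part of Lemma~\ref{HN1} with $\tau_1=\alpha$ and $\tau_2=0$, which is admissible precisely because $\alpha<1/2+\kappa$ (the two cases $\alpha\geq 1/2-\kappa$ and $\alpha<1/2-\kappa$ both yield the condition $\theta<1-\alpha$), we obtain $[N^\alpha,H]^0\in\cB(N^{-\alpha}\cH,\cH)$, i.e.\ $[N^\alpha,H]^0N^{-\alpha}\in\cB(\cH)$. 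Combined with $\|(z-H)^{-1}\|\leq|\Im z|^{-1}$, this gives
\[
\|N^\alpha(z-H)^{-1}N^{-\alpha}\|\leq\frac{1}{|\Im z|}+\frac{C}{|\Im z|^2}.
\]
Integrating against $\bar\partial \tilde f$ with $M=2$ yields an absolutely convergent integral, and hence a bounded operator on $\cH$.

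The main technical obstacle is the rigorous justification of the resolvent commutator identity as an equality of bounded operators, since $a~priori$ $(z-H)^{-1}N^{-\alpha}\psi$ need not lie in $\cD(N^\alpha)$, so that the right-hand side is not manifestly well defined. I would address this by regularising: introduce the bounded approximants $N^\alpha_\epsilon:=N^\alpha(\one+\epsilon N^\alpha)^{-1}$, for which the commutator identity
\[
N^\alpha_\epsilon(z-H)^{-1}-(z-H)^{-1}N^\alpha_\epsilon=(z-H)^{-1}[H,N^\alpha_\epsilon]^0(z-H)^{-1}
\]
is immediate from the $C^1(H)$ property of $N^\alpha$ and standard calculus for $C^1(H)$ operators. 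Expressing $[H,N^\alpha_\epsilon]^0$ via the integral representation \eqref{eq:30ny} (with an additional factor $(\one+\epsilon N^\alpha)^{-1}$ on both sides), the bound of Lemma~\ref{HN1} yields a uniform-in-$\epsilon$ estimate for $[N^\alpha_\epsilon,H]^0N^{-\alpha}$. Passing to the limit $\epsilon\downarrow 0$ by dominated convergence, first on matrix elements in $\cD(N^\alpha)$ and then extending by the uniform bound, establishes both the identity and the desired estimate, completing the proof.
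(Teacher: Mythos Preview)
Your overall strategy is sound, but the derivation of the key resolvent bound
\[
\|N^\alpha(z-H)^{-1}N^{-\alpha}\|\le \frac{1}{|\Im z|}+\frac{C}{|\Im z|^{2}}
\]
is circular. From the commutator identity you obtain
\[
N^\alpha(z-H)^{-1}N^{-\alpha}=(z-H)^{-1}+(z-H)^{-1}[N^\alpha,H]^0\,(z-H)^{-1}N^{-\alpha}.
\]
To bound the second term you invoke boundedness of $[N^\alpha,H]^0N^{-\alpha}$, but in the expression the factor $N^{-\alpha}$ sits on the \emph{wrong side} of $(z-H)^{-1}$. Writing $[N^\alpha,H]^0(z-H)^{-1}N^{-\alpha}=[N^\alpha,H]^0N^{-\alpha}\cdot N^{\alpha}(z-H)^{-1}N^{-\alpha}$ brings back precisely the operator you are trying to estimate. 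The resulting equation $T(z)=(z-H)^{-1}+(z-H)^{-1}B\,T(z)$, with $B=[N^\alpha,H]^0N^{-\alpha}$, is solvable by Neumann series only when $|\Im z|>\|B\|$; it yields no polynomial bound in $1/|\Im z|$ for small $|\Im z|$, which is what the Helffer--Sj\"ostrand integral needs. Your regularisation with $N^\alpha_\epsilon$ does not remove the difficulty: the identity for $N^\alpha_\epsilon$ still carries the factor $(z-H)^{-1}N^{-\alpha}$ on the right, and a uniform bound on $[N^\alpha_\epsilon,H]^0N^{-\alpha}$ does not by itself control this product.

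The paper circumvents the circularity by \emph{iterating}: instead of moving $N^\alpha$ across $(H-\eta)^{-1}$ in one step, it uses the sharper consequence of Lemma~\ref{HN1}, namely $[N^\rho,H]^0\in\cB(N^{-\max\{0,\rho-\kappa\}}\cH,\cH)$, to peel off a factor $N^\kappa$ at each commutation. After $n$ steps with $\alpha-n\kappa\le 0$ one arrives at the closed formula \eqref{eq:33}, valid first for large $|\Im\eta|$ and then for all $\eta\in\C\setminus\R$ by analytic continuation; the right-hand side is manifestly bounded by $\sum_{j=1}^{n+2}C_j|\Im\eta|^{-j}$. There is also a secondary issue: for $f$ merely bounded (order-zero symbol) the correct almost-analytic bound is $|\bar\partial\tilde f(z)|\le C_M\la z\ra^{-M-1}|\Im z|^M$, not $\la z\ra^{-M-2}$, and the Helffer--Sj\"ostrand integral of $(z-H)^{-1}$ is then not absolutely convergent. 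The paper handles this by writing $h(t)=f(t)(t+\i)^{-1}$ and using the representation \eqref{eq:13}; equivalently you could separate $N^\alpha f(H)N^{-\alpha}=f(H)+[N^\alpha,f(H)]N^{-\alpha}$ and apply the integral only to the commutator, where the $|\Im z|^{-1}$ term is absent.
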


\begin{proof} Let $\rho\in ]0,1/2+\kappa[$, where $0<\kappa\leq 1/2$ comes from
Condition~\ref{cond:condition}~(\ref{item:2}).
From Lemma~\ref{HN1} applied with $\tau_1=\max\{0,\rho-\kappa\}$ and $\tau_2 = 0$, we get
\begin{equation}\label{Nalphacomm}
[N^{\rho},H]^0\in \cB(N^{-\max\{0,\rho-\kappa\} }\cH,\cH).
\end{equation}

We recall from \cite[Proposition II.3]{Mo}  that if an operator $\tN$ is of class  $C^1_\Mo(H)$, then
   \begin{align}
    &\exists \sigma>0: \ |\Im \eta|\geq \sigma\Rightarrow
     (H-\eta)^{-1}\textup{ preserves }
     \cD(\tN) \mand\nonumber \\  & \tN (H-\eta)^{-1}\psi=(H-\eta)^{-1}\tN
     \psi\label{eq:32}\\
    &+\i(H-\eta)^{-1}\i[\tN, H]^0
     (H-\eta)^{-1}\psi\textup{ for all }\psi\in \cD(\tN).\nonumber
   \end{align}
We apply this to $\tN= N^\rho$, $0<\rho<1/2+\kappa$.
The assumption is satisfied by (\ref{Nalphacomm}).

We shall show a  representation
formula for the special  case $f(x) =f_\eta(x) =(x-\eta)^{-1}$
with $v=\Im \eta\neq 0$. Now fix $\alpha\in ]0,1/2+\kappa[$.
Using (\ref{Nalphacomm}) and (\ref{eq:32}), multiple times with $\rho=\alpha-j\kappa$, we
obtain for $|\Im \eta|$ sufficiently large and for all $\psi\in \cD(N^{\alpha})$
\begin{align}
 \nonumber  &N^{\alpha}(H-\eta)^{-1}\psi-(H-\eta)^{-1}N^{\alpha}\psi\label{eq:33}\\
  &=\sum^{n}_{j=1} \big((H-\eta)^{-1}B_1\big)\cdots \big( (H-\eta)^{-1}B_j\big)(H-\eta)^{-1}N^{\alpha-j\kappa}\psi\\
 \nonumber &  \ \ +  \big((H-\eta)^{-1}B_1\big)\cdots \big((H-\eta)^{-1}B_n\big)\big((H-\eta)^{-1} B_{n+1}\big)(H-\eta)^{-1}\psi,
\end{align}
where $n$ is the biggest natural number  for which
$\alpha-n\kappa>0$ and the $B_j$'s are  bounded and independent
of $\eta$. Next by analytic continuation we conclude that
 (\ref{eq:33}) is valid for all $\eta\in \C\setminus
\R$. Hence we have verified the adjoint version of (\ref{eq:19})  for
$f =f_\eta$; $v\neq 0$.

 We shall now show \eqref{eq:19} in general. Define a new function by
$h(t) = f(t)(t+\i)^{-1}$, and let $\tilde{h}$ denote an almost analytic extension of $h$
such that (using the notation $\eta = u+ \i v$)
\begin{equation*}
\forall n\in \N:\ |\bar{\partial}\tilde{h}(\eta)|  \leq C_n \inp{\eta}^{-n-2} |v|^n.
\end{equation*}
We shall use the representation
\begin{align}\label{eq:13}
\nonumber  f(H) & = \frac1{\pi}\int_{\C}
(\dbar \tilde{h})(\eta)(H-\eta)^{-1}(H+\i)\d u\, \d v\\
& = \frac1{\pi}\int_{\C}
(\dbar \tilde{h})(\eta)\big(\one +(\eta+\i)(H-\eta)^{-1}\big)\d u\, \d v,
\end{align}
which should be read as a strong integral on $\cD(H)$.
We multiply by $N^{\alpha}$ and  $N^{-\alpha}$
from the left and from the right, respectively. Inserting
(\ref{eq:33}) we conclude the lemma. Observe that $N^{-\alpha}$ being $C^1(H)$ preserves $\cD(H)$.
\end{proof}

It will be important  to work with the following `regularization'
operators, cf. \cite{Mo}: Let  for any given self-adjoint operator
$\tilde A$ and  any positive operator $\tilde N$
\begin{equation}
  \label{eq:5}
  I_n(\tilde A)=-\i n (\tilde A-\i n)^{-1}\mand I_{\i n}(\tilde N)= n
  (\tilde N+n)^{-1};\;n\in \N.
\end{equation} In particular we shall use $I_n(A)$ in conjunction with
\eqref{eq:1commu},  $I_n(H)$ in conjunction with
(\ref{eq:1commu}), (\ref{eq:virial}) and (\ref{eq:Mourre}), while
$I_{\i n}(N)$ will be used in
conjunction with (\ref{eq:20}).

\begin{lemma}\label{Lemma-StrongLimits} Assume  the pairs $N,A$ and $N,H$ satisfy Conditions~\ref{cond:condition}~(\ref{item:2i0a}) and~(\ref{item:2})
respectively. Then
\begin{align}\label{strongI1}
& \slim_{n\to\infty} N^{\frac12} I_n(H) N^{-\frac12}  = \one
\\ \label{strongI2}
& \slim_{n\to\infty} N I_n(A)N^{-1} = \one
\\ \label{strongI3}
& \slim_{n\to\infty} N^\frac12 I_n(A)N^{-\frac12} = \one.
\end{align}
\end{lemma}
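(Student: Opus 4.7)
The plan is to treat all three limits by a common Mourre-type scheme: apply the resolvent commutator identity \eqref{eq:32}, specialize $\eta=\i n$, conjugate by the appropriate power of $N$, and deduce a self-consistent estimate that forces the error to zero.

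For \eqref{strongI2}, Condition~\ref{cond:condition}~(\ref{item:2i0a}) gives $N\in C^1_\Mo(A)$, so \eqref{eq:32} applied with $\tN=N$ and $\phi=N^{-1}\psi$ and then multiplied by $-\i n$ yields
\[
NI_n(A)N^{-1}\psi-I_n(A)\psi=-I_n(A)[N,A]^0(A-\i n)^{-1}N^{-1}\psi.
\]
Using $\|[N,A]^0\xi\|\leq C\|N\xi\|$ (from $[N,A]^0\in\cB(\cD(N),\cH)$), $\|I_n(A)\|\leq 1$, and the identity $(A-\i n)^{-1}=\i I_n(A)/n$, the right-hand side is bounded by $(C/n)\|NI_n(A)N^{-1}\psi\|$. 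Setting $x_n:=\|NI_n(A)N^{-1}\psi-I_n(A)\psi\|$, the triangle inequality gives the self-consistent bound $x_n\leq(C/n)(x_n+\|\psi\|)$, hence $x_n=O(1/n)\|\psi\|\to 0$, and together with $I_n(A)\psi\to\psi$ this proves \eqref{strongI2}. The argument for \eqref{strongI1} is formally identical, with $N^{1/2}$ in place of $N$ and $H$ in place of $A$: Lemma~\ref{HN1} applied with $\alpha=\tfrac12$, $\tau_1=\tfrac12$, $\tau_2=0$ yields $[N^{1/2},H]^0\in\cB(\cD(N^{1/2}),\cH)$, i.e.\ $N^{1/2}\in C^1_\Mo(H)$, so the same manipulation and self-consistent estimate apply.

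For \eqref{strongI3} the analogue $N^{1/2}\in C^1_\Mo(A)$ is not immediate, because running the integral formula $[N^{1/2},A]^0=c_{1/2}\int_0^\infty t^{1/2}(N+t)^{-1}[N,A]^0(N+t)^{-1}\,\d t$ produces an integrand that decays only like $(1+t)^{-1}$ at infinity, which is not integrable against $t^{1/2}$. I would therefore split the argument into uniform boundedness and strong convergence on a dense set. Uniform boundedness of $T_n:=N^{1/2}I_n(A)N^{-1/2}$ comes from Stein--Hadamard interpolation applied to the analytic family $z\mapsto N^z I_n(A)N^{-z}$ on $\{0\leq\Re z\leq 1\}$: the norm is bounded by $\|I_n(A)\|\leq 1$ on $\Re z=0$ and uniformly in $n$ on $\Re z=1$ by \eqref{strongI2}. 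For strong convergence on the dense subset $\cD(N)\subseteq\cH$, take $\psi\in\cD(N)$ and let $\phi:=N^{-1/2}\psi\in\cD(N)$; part \eqref{strongI2} applied to $N\phi\in\cH$ gives $NI_n(A)\phi=NI_n(A)N^{-1}(N\phi)\to N\phi$, and obviously $I_n(A)\phi\to\phi$, so $I_n(A)\phi\to\phi$ in the graph norm of $N$; continuity of $N^{1/2}$ for that graph norm then gives $N^{1/2}I_n(A)\phi\to N^{1/2}\phi=\psi$. The uniform bound on $T_n$ extends strong convergence from $\cD(N)$ to all of $\cH$.

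The main technical obstacle is the asymmetry of \eqref{strongI3}: the self-consistent scheme that works so cleanly for \eqref{strongI1} and \eqref{strongI2} cannot be applied directly because $[N^{1/2},A]^0$ is only form-bounded on $\cD(N^{1/2})\times\cD(N^{1/2})$, not operator-bounded from $\cD(N^{1/2})$ to $\cH$, and the Stein--Hadamard detour is what crucially reduces the task to the already-proved limit \eqref{strongI2}.
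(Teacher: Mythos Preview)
Your proof is correct and follows essentially the same route as the paper: for \eqref{strongI1} and \eqref{strongI2} you use the resolvent commutator identity \eqref{eq:32} (this is the argument the paper attributes to Mourre), and for \eqref{strongI3} you obtain uniform boundedness by interpolating between $z=0$ and $z=1$ and then verify strong convergence on a dense subspace via \eqref{strongI2}. The only cosmetic differences are that you spell out the self-consistent bound $x_n\le (C/n)(x_n+\|\psi\|)$ explicitly (the paper just cites Mourre), and you check strong convergence on $\cD(N)$ whereas the paper uses the slightly larger dense set $\cD(N^{1/2})$; neither affects the argument.
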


\begin{proof} Observe first that $\slim I_n(A) = \one$ and $\slim A(A-\i n)^{-1}=0$,
and similarly with $A$ replaced by $H$.
The statements \eqref{strongI1} and \eqref{strongI2} now follows from (\ref{eq:32})
and boundedness of the operators
$[N^{1/2},H]^0 N^{-1/2}$ and $[N,A]^0 N^{-1}$. This argument appears also in \cite{Mo}.

As for \eqref{strongI3} we observe first that $N(I_n(A)-\one)N^{-1}$ is bounded
uniformly in $n$. By interpolation the same holds true for
$N^{1/2}(I_n(A)-\one)N^{-1/2}$. The result now follows from observing that the result holds true
strongly on the dense set $\cD(N^{1/2})$ by \eqref{strongI2}.
\end{proof}

We end with a small technical remark

\begin{remark}\label{IntersectionDense} Suppose $N$ and $H$ are as in Lemma~\ref{HN0} and $0\leq \alpha<1$.
Then $\cD(H)\cap \cD(N)$ is dense in $\cD(H)\cap \cD(N^\alpha)$ in the intersection topology.

To see this let $\psi\in\cD(H)\cap \cD(N^\alpha)$.
Then $\psi_n= I_{\i n}(N)\psi\in\cD(H)\cap \cD(N)$ since $N$ is of class $C^1(H)$.
We claim that $\psi_n\to \psi$ in $\cD(H)\cap \cD(N^\alpha)$.
Obviously $N^\alpha\psi_n\to N^\alpha\psi$, so it remains
to consider
\[
H\psi_n = I_{\i n}(N) H\psi + \sqrt{\frac{N}{n}}I_{\i n}(N)\big(N^{-\frac12}[N,H]^0N^{-\frac12}\big) \sqrt{\frac{N}{n}} I_{\i n}(N) \psi.
\]
As in the proof above, the last term goes to zero and the first term converges to $H\psi$ proving the claim.
\end{remark}

\subsection{Iterated commutators with $N^{1/2}$}

We address here the following question. Supposing Condition~\ref{cond:condition}~\eqref{item:2i0a}
and \eqref{eq:5oioo} is satisfied for some $k_0\geq 1$. One could reasonably assume that
$N^{1/2}$ is also of class $C^1_\mo(A)$ and admits $k_0$ iterated
$N^{1/2}$-bounded commutators. We have however not been able to establish this,
but making the additional assumption \eqref{eq:5oiooa} we answer the question in the affirmative
below. This permits us to deduce Corollary~\ref{cor:assumpt-stat-regul} from Theorem~\ref{thm:mainresulti}.
The reader primarily interested in Theorem~\ref{thm:mainresulti} may skip this subsection.

We begin with a technical lemma. Let $q\in\NN$ and $\uell\in (\NN\cap\{0\})^q$, with
$0\leq \ell_j < k_0$ for all $j=1,\dots,q$. We abbreviate
$N^\prime_m = \i^{m}\ad_A^{m}(N^\prime)$, which is the iteratively defined $N$-bounded
operator from \eqref{eq:5oioo}. Let for $t\geq 0$ and $q, \uell$ as above
\begin{equation}\label{Bqldef}4
B_q^\uell(t) = t^\frac12\Big(\prod_{j=1}^q (N+t)^{-1} N^\prime_{\ell_j}\Big)(N+t)^{-1}.
\end{equation}
Observe that $B_q^\uell(t)$ is bounded for all $t$. Indeed it satisfies the bound
$B_q^\uell(t) = O(t^{-1/2})$ and is thus not norm integrable.
However if $\varphi\in \cD(N)$ we have $B_q^\uell(t)\varphi = O(t^{-3/2})$.
The extra assumption \eqref{eq:5oiooa} allows us to prove

\begin{lemma}\label{Lemma-N12Int} Suppose Condition~\ref{cond:condition}~\eqref{item:2i0a} and Condition~\ref{cond:vir21}.
For any $q\in \NN$, $\uell\in(\NN\cup\{0\})^q$ (with $0\leq \ell_j <k_0$ as above) and $\varphi\in\cD(N)$ the map
$t\to B_q^\uell(t)\varphi$ is integrable  and
there exist constants $C_q^\uell$ such that
\[
\Big\|\int_0^\infty B_q^\uell(t)\varphi\, \d t \Big\|\leq C_q^\uell \|N^{\frac12}\varphi\|.
\]
\end{lemma}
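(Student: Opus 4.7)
The plan is to induct on $q\geq 1$, the engine being the resolvent commutator identity
\[
(N+t)^{-1}N'_{\ell} = N'_{\ell}(N+t)^{-1} - (N+t)^{-1}[N,N'_{\ell}](N+t)^{-1}.
\]
Two quantitative inputs recur throughout: (a) the functional-calculus identity $\int_0^\infty t^{1/2}(N+t)^{-2}\,\d t = \tfrac{\pi}{2}\,N^{-1/2}$ together with $\|N^{1/2}(N+t)^{-2}\|\leq C(1+t)^{-3/2}$; and (b) the commutator improvement $\|(N+t)^{-1}[N,N'_{\ell}](N+t)^{-1}\|_{\cB(\cH)} \leq C(1+t)^{-\kappa_1}$, obtained by combining \eqref{eq:5oiooa} with the elementary bound $\|(N+t)^{-1}N^{1-\kappa_1}\|\leq C(1+t)^{-\kappa_1}$. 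Note that a crude pointwise estimate already gives $\|B_q^\uell(t)\varphi\|\leq Ct^{1/2}(1+t)^{-2}\|N\varphi\|$ (use $\|N(N+t)^{-1}\varphi\|\leq(1+t)^{-1}\|N\varphi\|$ at the rightmost resolvent and the $N$-boundedness of each $N'_{\ell_j}$), confirming the remark about $B_q^\uell(t)\varphi = O(t^{-3/2})$ on $\cD(N)$; the trouble is that this produces the \emph{wrong} right-hand side $\|N\varphi\|$, and it is precisely input (b) that allows the upgrade to $\|N^{1/2}\varphi\|$.

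For the base case $q=1$, I would apply the commutator identity once to split
\[
B_1^\uell(t)=t^{1/2}N'_{\ell_1}(N+t)^{-2}-t^{1/2}(N+t)^{-1}[N,N'_{\ell_1}](N+t)^{-2}.
\]
The first summand integrates by (a) to $\tfrac{\pi}{2}N'_{\ell_1}N^{-1/2}$, and $\|N'_{\ell_1}N^{-1/2}\varphi\|\leq C\|N\cdot N^{-1/2}\varphi\|=C\|N^{1/2}\varphi\|$ since $N'_{\ell_1}\in\cB(N^{-1}\cH,\cH)$. The second has integrand bounded by $Ct^{1/2}(1+t)^{-3/2-\kappa_1}\|N^{1/2}\varphi\|$, using (b) and $\|N(N+t)^{-2}\varphi\|\leq C(1+t)^{-3/2}\|N^{1/2}\varphi\|$ (from $\|N^{1/2}(N+t)^{-2}\|\leq C(1+t)^{-3/2}$); this is $t$-integrable.

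For the inductive step $q\geq 2$, decompose $B_q^\uell(t) = B_{q-1}^{\uell''}(t)\cdot N'_{\ell_q}(N+t)^{-1}$ with $\uell''=(\ell_1,\dots,\ell_{q-1})$, and commute the rightmost pair:
\[
B_q^\uell(t) = B_{q-1}^{\uell''}(t)(N+t)^{-1}N'_{\ell_q} + B_{q-1}^{\uell''}(t)(N+t)^{-1}[N,N'_{\ell_q}](N+t)^{-1}.
\]
The commutator (error) term is bounded pointwise by $Ct^{1/2}(1+t)^{-3/2-\kappa_1}\|N^{1/2}\varphi\|$, combining $\|B_{q-1}^{\uell''}(t)\|_{\cB(\cH)}\leq Ct^{1/2}(1+t)^{-1}$, (b), and the improved bound $\|N^{1/2}(N+t)^{-1}\|\leq C(1+t)^{-1/2}$; hence it is $t$-integrable. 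For the principal term, I would pull the fixed vector $\psi:=N'_{\ell_q}\varphi\in\cH$ outside the integral, reducing matters to $\int_0^\infty B_{q-1}^{\uell''}(t)(N+t)^{-1}\psi\,\d t$, then split $(N+t)^{-1}=N^{-1}-tN^{-1}(N+t)^{-1}$ so that the inductive hypothesis applies to $\int B_{q-1}^{\uell''}(t)\psi\,\d t$, using the form-symmetry of $N'_{\ell_q}$ (each $N'_\ell$ is symmetric since $\ad_A^\ell$ preserves symmetry of $N'$) to replace the naive $\|\psi\|\leq C\|N\varphi\|$ by the form-type bound $\|N^{-1/2}\psi\|\leq C\|N^{1/2}\varphi\|$.

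The main obstacle is precisely this last reduction: the argument must exploit form-symmetry of $N'_{\ell_q}$ to avoid picking up an $\|N\varphi\|$ in the final bound, and careful bookkeeping is required so that every commutator correction produced along the way is accompanied by an extra $(1+t)^{-\kappa_1}$ from (b) sufficient to ensure $t$-integrability — the delicate point being that \eqref{eq:5oiooa} provides only a \emph{single} $\ad_N$-commutator estimate, so each factor $N'_{\ell_j}$ may be commuted past a resolvent at most once.
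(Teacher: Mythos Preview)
Your base case $q=1$ is fine and essentially coincides with the paper's treatment. The inductive step, however, has a genuine gap. After commuting $N'_{\ell_q}$ past the rightmost resolvent, your principal term is
\[
\int_0^\infty B_{q-1}^{\uell''}(t)\,(N+t)^{-1}\psi\,\d t,\qquad \psi=N'_{\ell_q}\varphi.
\]
You propose the splitting $(N+t)^{-1}=N^{-1}-tN^{-1}(N+t)^{-1}$ so that the inductive hypothesis handles the first piece $\int B_{q-1}^{\uell''}(t)N^{-1}\psi\,\d t$. But the second piece $-\int B_{q-1}^{\uell''}(t)\,tN^{-1}(N+t)^{-1}\psi\,\d t$ is not addressed, and it cannot be: writing $t(N+t)^{-1}=\one-N(N+t)^{-1}$ turns it back into $-\int B_{q-1}^{\uell''}(t)N^{-1}\psi\,\d t + \int B_{q-1}^{\uell''}(t)(N+t)^{-1}\psi\,\d t$, and the two computations together collapse to the tautology $I=I$. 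The extra resolvent $(N+t)^{-1}$ simply does not decouple from the $t$-integration, so the inductive hypothesis (which requires a \emph{fixed} input vector) is inapplicable here.

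The paper sidesteps this by working from the \emph{left} and not inducting at all. Using
\[
(N+t)^{-1}N'_{\ell_1}=(N'_{\ell_1}N^{-1})\,N(N+t)^{-1}+O(t^{-\kappa_1}),
\]
one pulls $N'_{\ell_1}N^{-1}$ out front and then writes $N(N+t)^{-1}=\one-t(N+t)^{-1}$; iterating through $\ell_2,\dots,\ell_q$ yields, up to norm-integrable errors of size $O(t^{-1-\kappa_1})N^{1/2}$,
\[
B_q^{\uell}(t)=\Big(\prod_{j=1}^q N'_{\ell_j}N^{-1}\Big)\,t^{1/2}\big(N(N+t)^{-1}\big)^{q-1}(N+t)^{-2}N.
\]
The remaining $t$-integral is a pure function of $N$ and evaluates in closed form to $c'N^{-1/2}$ by the substitution $t\mapsto Ns$; applied to $\varphi$ this gives $c'\prod(N'_{\ell_j}N^{-1})N^{1/2}\varphi$, which is bounded by $C\|N^{1/2}\varphi\|$. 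The key difference is that all $q$ commutations are performed \emph{before} integrating, so one never has to feed a $t$-dependent vector into a lower-order estimate.
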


\begin{proof} 
We only have to prove the bound on the strong integral, since we already discussed strong integrability.
We begin by analyzing the leftmost factors in $B_q^\uell(t)$,
namely the $N$-bounded operator $(N+t)^{-1}N_{\ell_1}^\prime$.

We compute strongly on $\cD(N)$
\begin{align}\label{LeftFactor}
\nonumber (N+t)^{-1}N_{\ell_1}^\prime & = \big(N_{\ell_ 1}^\prime N^{-1}\big) N (N+t)^{-1}\\
\nonumber & \ \ \ -(N+t)^{-1}N\big(N^{-1}[N,N_{\ell_1}^\prime]N^{-1+\kappa_1}\big) N^{1-\kappa_1}(N+t)^{-1}\\
&  = \big(N_{\ell_1}^\prime N^{-1}\big) N (N+t)^{-1} + O(t^{-\kappa_1}).
\end{align}
The contribution to the integral $\int_0^\infty B_q^\uell(t)N^{-1/2}\d t$ coming from  the last term is $O(t^{-1-\kappa_1})$
and hence norm-integrable.

If $q=1$ we can now finish the argument because the contribution to the integral coming from the
first term on the right-hand side of  \eqref{LeftFactor} is
\[
\big(N_{\ell_1}^\prime N^{-1}\big) t^\frac12(N+t)^{-2} N,
\]
which on the domain of $N$ integrates to the $N^{1/2}$-bounded operator $ c N_{\ell_1}^\prime N^{-1/2}$, for some $c\in\RR$.

If $q>1$ we write $N(N+t)^{-1} = \one-t(N+t)^{-1}$. We can now bring out the
next term $N_{\ell_2}^\prime$, and again the commutators with $(N+t)^{-1}$ give norm-integrable contributions. Repeating this procedure
successively until all the terms $N_{\ell_ j}^\prime$ are brought out to the left
yields the formula
\[
B_q^\uell(t) = \Big(\prod_{j=1}^q N_{\ell_j}^\prime N^{-1}\Big) t^\frac12(\one-t(N+t)^{-1})^{q-1}(N+t)^{-2} N + O(t^{-1-\kappa_1})N^\frac12.
\]
We compute, by a  change of variables,
\[
\int_0^\infty t^\frac12(\one-t(N+t)^{-1})^{q-1}(N+t)^{-2}\, \d t = c^\prime N^{-\frac12},
\]
for some $c^\prime\in \RR$. This implies the lemma.
\end{proof}

\begin{prop}\label{Prop-CommN12} Assume Condition~\ref{cond:condition}~\eqref{item:2i0a} and Condition~\ref{cond:vir21}. 
Then $N^{1/2}$ is of class $C^1_\mo(A)$ and
the iterated commutators $\i^p \ad_A^p (N^{1/2})$, $p\leq k_0$,  extends from $\cD(A)\cap \cD(N^{1/2})$ to $N^{1/2}$-bounded operators.
\end{prop}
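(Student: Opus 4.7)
The plan is to work from the representation formula \eqref{Eq-Thomas} applied with $\alpha = 1/2$, namely
\[
 N^{1/2} = c_{1/2}\int_0^\infty t^{1/2}\bigl(t^{-1} - (N+t)^{-1}\bigr)\d t,
\]
and to compute iterated commutators with $A$ by differentiating under the integral sign. Writing $F(t) = (N+t)^{-1}$, the identity $(N+t)F(t) = \one$ and Condition~\ref{cond:condition}~(\ref{item:2i0a}) give $\i[F(t),A] = -F(t) N' F(t)$ as a form on $\cD(A)\cap \cD(N)$. Therefore as a form on $\cD(A)\cap\cD(N)$ one has
\[
 \i[N^{1/2},A] = c_{1/2}\int_0^\infty t^{1/2}\, F(t) N' F(t)\, \d t = c_{1/2}\int_0^\infty B_1^{(0)}(t)\, \d t,
\]
where $B_q^\uell(t)$ is the family introduced in \eqref{Bqldef}. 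Lemma~\ref{Lemma-N12Int} with $q=1$, $\uell = (0)$ shows the right-hand side extends to an operator in $\cB(\cD(N^{1/2}),\cH)$. Since $\cD(A)\cap\cD(N)$ is dense in $\cD(A)\cap\cD(N^{1/2})$ (by the argument used in Remark~\ref{IntersectionDense}, applied with the roles of $H$ and $N^{1/2}$ now filled by $A$ and $N^{1/2}$, invoking the quasi-invariance of $\cD(N^\alpha)$ under the regularizer $I_n(A)$ granted by Remark~\ref{RemarkTo1}~1)), this shows $N^{1/2}\in C^1_\Mo(A)$.

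For the iterated case I would proceed by induction on $p$, proving that the form $\i^p\ad_A^p(N^{1/2})$ on $\cD(A)\cap\cD(N)$ is a finite linear combination of strong integrals $\int_0^\infty B_q^\uell(t)\, \d t$ with $q\leq p$ and each $\ell_j\leq p-1$. The base case $p=1$ has just been checked. For the induction step one commutes $A$ past a single representative term
\[
 B_q^\uell(t) = t^{1/2} F(t) N^\prime_{\ell_1}F(t) N^\prime_{\ell_2}\cdots F(t) N^\prime_{\ell_q}F(t).
\]
Using Condition~\ref{cond:vir21}, the identity $\i[F(t),A]=-F(t)N'F(t)$, and $\i[N^\prime_m,A] = N^\prime_{m+1}$, the Leibniz rule produces a finite sum of terms of two kinds: one replaces some $N^\prime_{\ell_j}$ by $N^\prime_{\ell_j+1}$ (which keeps $q$ unchanged and increases one entry of $\uell$ by $1$), the other splits some $F(t)$ into $F(t)N' F(t)$ (which increases $q$ by $1$ and inserts a new entry $0$ into $\uell$). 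In both cases the new indices respect $q'\leq p+1$ and $\ell'_j \leq p$, so as long as $p+1\leq k_0$ we have $\ell_j'<k_0$ and Lemma~\ref{Lemma-N12Int} applies to each summand, giving the asserted $N^{1/2}$-bound. One then extends from $\cD(A)\cap\cD(N)$ to $\cD(A)\cap\cD(N^{1/2})$ as above.

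The step I expect to be the most delicate is the rigorous justification that differentiating the representation integral under the commutator is legitimate and that the resulting strong integrals converge on $\cD(N)$ in the sense required by Lemma~\ref{Lemma-N12Int}; in particular the individual factors $F(t) N'_{\ell_j}$ are only $O(1)$ in norm, so one has to exploit the telescoping argument that was already used in the proof of Lemma~\ref{Lemma-N12Int}, commuting each $N'_{\ell_j}N^{-1}$ past the remaining resolvents at the cost of norm-integrable error terms controlled by \eqref{eq:5oiooa}. Once this is in place the combinatorial expansion of $\ad_A^p$ is routine, and the main proposition follows by summing the $N^{1/2}$-bounds over the finitely many terms produced at order $p\leq k_0$.
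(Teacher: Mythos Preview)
Your overall strategy coincides with the paper's: represent $N^{1/2}$ via \eqref{Eq-Thomas}, expand iterated commutators as linear combinations of the integrals $\int_0^\infty B_q^\uell(t)\,\d t$, and invoke Lemma~\ref{Lemma-N12Int} for the $N^{1/2}$-bound. The combinatorics you describe (Leibniz rule producing either $\ell_j\to\ell_j+1$ or insertion of a new factor $N'$) is exactly the recursion underlying the paper's formula \eqref{N12commrep}.

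The substantive difference is in rigor, and here there is a genuine gap. You compute $\i[B_q^\uell(t),A]$ by a formal Leibniz rule, but the intermediate factors $N'_{\ell_j}$ are only $N$-bounded operators; there is no reason for $B_q^\uell(t)\varphi$ to lie in $\cD(A)$ even when $\varphi\in\cD(A)\cap\cD(N)$, so the commutator form $\langle A\psi,B_q^\uell(t)\varphi\rangle-\langle B_q^\uell(t)\psi,A\varphi\rangle$ cannot be rewritten as a Leibniz expansion without further argument. The paper circumvents this entirely by replacing $A$ with the bounded regularization $A_n=AI_n(A)$: one computes $\i[A_n,\i^{p-1}\ad_A^{p-1}(N^{1/2})]$ strongly on $\cD(N)$ using the bounded-operator identities $\i[A_n,N'_\ell]=-I_n(A)N'_{\ell+1}I_n(A)$ and $\i[A_n,(N+t)^{-1}]$, then passes to the limit $n\to\infty$ by dominated convergence and \eqref{strongI2}. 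This is the device that makes the induction step honest.

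Your closing paragraph misidentifies the delicate point. The convergence and $N^{1/2}$-boundedness of $\int_0^\infty B_q^\uell(t)\,\d t$ is already the content of Lemma~\ref{Lemma-N12Int}; the telescoping with \eqref{eq:5oiooa} lives inside that lemma and need not be repeated. What you are missing is the justification that the formal commutator expansion is valid in the first place, and for that you should introduce $A_n$ as the paper does.
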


\begin{proof}
 We already know from Lemma~\ref{HN0} that $N^{1/2}$ is of class $C^1(A)$. Hence we only need
to establish that the iterated commutator forms extend to $N^{1/2}$-bounded operators.
Recall also that $\cD(A)\cap\cD(N)$ is dense in $\cD(A)\cap\cD(N^{1/2})$, cf. Remark~\ref{IntersectionDense}, 
which implies that it suffices to show that the iterated
commutator forms extend from $\cD(A)\cap\cD(N)$ to $N^{1/2}$-bounded operators.

By Lemma~\ref{Lemma-N12Int} and the above remark it suffices to prove, iteratively,  the following representation formula
\begin{equation}\label{N12commrep}
\i^p\ad_A^p(N^{1/2})\varphi = \sum_{q=1}^p \sum_{\ell_1+\cdots +\ell_q = p-q}\alpha^{p,q}_{\uell}\int_0^\infty B_{q}^\uell(t)\varphi\, \d t,
\end{equation}
for $\varphi\in \cD(N)$. Note that the integrals are absolutely convergent. Here $B_q^\uell(t)$ are defined in \eqref{Bqldef}.

For $p=1$ we compute using \eqref{Eq-Thomas}
\[
\i [A_n,N^{\frac12}]\varphi = c_\frac12 \int_0^\infty B_{1,n}^{\underline{0}}(t)\varphi\,\d t,
\]
where the extra subscript $n$ indicates that $N_{0}^\prime = N^\prime$ has been replaced by $I_n(A) N^\prime I_n(A)$.
By \eqref{eq:10} the integrand is $O(t^{-3/2})$ uniformly in large $n$,
and by \eqref{strongI2} and  Lebesgue's  theorem on dominated convergence we can thus compute
\[
\lim_{n\to\infty}\i[A_n,N^{\frac12}]\varphi = c_\frac12 \int_0^\infty B_{1}^{\underline{0}}(t)\varphi\,\d t.
\]
Obviously this together with Lemma~\ref{Lemma-N12Int} implies that the form $\i\, \ad_A(N^{1/2})$ extends from $\cD(A)\cap \cD(N)$ to
an $N^{1/2}$-bounded operator represented on $\cD(N)$ by the strongly convergent integral above.

We can now proceed by induction, assuming that the iterated commutator  $\i^{p-1}\ad_A^{p-1}(N^{1/2})$ exists as
an $N^{1/2}$-bounded operator and is represented on $\cD(N)$  by \eqref{N12commrep}.
Compute first the commutator $\i[A_n,\i^{p-1}\ad_A^{p-1}(N^{1/2})]$ strongly on $\cD(N)$ using that
\[
\i[A_n,N_\ell^\prime] = -I_n(A)N_{\ell+1}^\prime I_n(A) \mand \i[A_n,(N+t)^{-1}] = (N+t)^{-1}N^\prime(N+t)^{-1}.
\]
Subsequently take the limit $n\to\infty$ as above and appeal to Lemma~\ref{Lemma-N12Int} to conclude that
the so computed limit in fact is an $N^{1/2}$-bounded extension of the form $\i[A,\i^{p-1}\ad_A^{p-1}(N^{1/2})]$
from  $\cD(A)\cap \cD(N)$ and represented on $\cD(N)$ as in \eqref{N12commrep}.
\end{proof}

\begin{proof}[Proof of Corollary~\ref{cor:assumpt-stat-regul}:]
We can now argue that Corollary~\ref{cor:assumpt-stat-regul} is
indeed a direct corollary of Theorem~\ref{thm:mainresulti}.

Note that $\psi\in \cD(N^{1/2}A^{k})$ for all $k\leq k_0$
due to  Theorem \ref{thm:mainresulti}.
We can now repeatedly use the fact that $\cD(A)\cap\cD(N^{1/2})$ is dense in $\cD(A)$  and 
Proposition~\ref{Prop-CommN12} to compute for $\varphi\in\cD(A^p)$, with $p+k\leq k_0$,
\[
\big\la A^p\varphi,N^\frac12 A^k \psi\big\ra = 
\sum_{q=0}^p \beta_q\big\la\varphi,\big(\ad_A^{p-q}(N^{\frac12})N^{-\frac12}\big) N^\frac12A^{q+k}\psi\big\ra,
\]
with $\beta_q$ some real combinatorial factors. This completes the proof since the norm of the right-hand side
is bounded by $C\|\varphi\|$.
\end{proof}


\subsection{Approximating $A$ by Regular Bounded Operators}


We recall now a  construction from \cite{MS} (see \cite[p. 203]{MS}).
Consider an odd real-valued function $g\in C^\infty(\R)$ obeying
$g^\prime \geq 0$,  that the function $\R\ni t\to tg^\prime(t)/g(t)$
has a smooth square root, that the function $]0,\infty[\ni t \to g(t)$ is concave   and the properties
\begin{equation*}g(t) =
\begin{cases}
2\qquad & \text{for } t>3\\
t\qquad & \text{for } |t|<1\\
-2\qquad & \text{for } t< -3
\end{cases}\;.
 \end{equation*}

Let $h(t) = g(t)/t$. We pick an almost analytic extension of $h$,
denoted by $\tilde{h}$, such that for some $\rho>0$ (and  using
again the notation $\eta = u+ \i v$)
\begin{align}\label{hbound}
  &\forall N:\ |\bar{\partial}\tilde{h}(\eta)|  \leq C_N
\inp{\eta}^{-N-2} |v|^N,\\
\nonumber &\tilde{h}(\eta) = \begin{cases}2/\eta \qquad
&\text{for } u >6,\ |v|< \rho(u -6)\\
\nonumber -2/\eta \qquad &\text{for } u <-6,\ |v|< \rho(6-u)\end{cases}\;.
\end{align}
We can  choose $\tilde{h}$ such that
$\overline{\tilde{h}(\eta)} = \tilde{h}(\bar{\eta})$.

This gives
the representation
\begin{equation}
  \label{eq:15}
  g(t) = \frac1{\pi}\int_{\C}(\dbar\tilde{h})(\eta)t(t-\eta)^{-1}\d u\,\d v.
\end{equation}

Let $g_m(t) = m g(t/m)$, for $m\geq 1$. Using  the properties of $g$ one
 verifies that for all $t\in \R$ the function
\begin{equation}
  \label{eq:9}
 m\to g_m(t)^2\text { is increasing}.
\end{equation}

We recall that there
exists $\sigma>0$ such that for $|v|\geq\sigma/m$ the operator
\begin{equation}\label{resA}
R_m(\eta):=\left(\frac{A}{m}-\eta\right)^{-1}
\end{equation}
preserves $\cD(N)$. See \eqref{eq:32}. Moreover we have uniformly in
$\alpha\in [0,1]$, $m\in\NN$ and
$\eta$ that
\begin{equation}\label{eq:10}
  \|N^\alpha R_m(\eta)N^{-\alpha}\|
\leq C |v|^{-1};\; \eta\in V_m^>,
\end{equation}
where
\[
V_m^> := \{u+\i v\in \C \, : \,   |v|\geq \sigma/m\} \ \ \mathrm{and}
\ \ V_m^< :=\{u+\i v\in\C \,  :\,  |v|<\sigma/m\}.
\]
 This
motivates the decomposition into smooth bounded real-valued
functions $g_m = g_{1m} + g_{2m}$, where
\begin{align}\label{eq:rep1}
g_{1m}(t) &= \frac{m}{\pi}\int_{V_m^>}
(\dbar\tilde{h})(\eta)\left( 1 +
\eta\left(\frac{t}{m}-\eta\right)^{-1}\right)\d u\, \d v+C_m,\\
 g_{2m}(t) &= \frac{m}{\pi}\int_{V_m^<}
(\dbar\tilde{h})(\eta)\eta\left(\frac{t}{m}-\eta\right)^{-1}\d u\,\d v;\label{eq:rep2}\\
C_m&=\frac m{\pi}\int_{V_m^<}\dbar\tilde{h}(\eta)\, \d u\,\d v.\nonumber
\end{align}

Note that the integral in the expression
for $g_{2m}$ is over a compact set (decreasing with $m$). This implies
the property
\begin{equation}
  \label{eq:21}
  \sup_{m\in\NN,t\in\R}m^n\inp {t}^{k+1}|g^{(k)}_{2m}(t)| \leq C_{n,k} <\infty\mfor
n,k\in \N \cup\{0\}.
\end{equation}
Since $g_m$ and $g_{2m}$ are bounded functions, we conclude the same for $g_{1m}$.

At a key point in the proof we will need a smooth square root
of the function $tg'g$. We pick
\begin{equation}\label{ghat}
\hat{g} = p g \in C_0^\infty(\R),
\end{equation}
where $p(t) = \sqrt{tg'(t)/g(t)}$, which was assumed smooth.
Clearly $\hat{g}^2 = tg'g$.
Let $\tilde{p}\in C_0^\infty(\C)$ be an almost analytic extension
of $p$. It satisfies
\begin{equation}
\forall N:\ |\bar{\partial}\tilde{p}(\eta)|  \leq C_N |v|^N.
\end{equation}
 As above we put $p_m(t) = p(t/m)$ and make the splitting $p_m=p_{1m}+p_{2m}$, where
\begin{align}\label{eq:rep1o}
p_{1m}(t) &= \frac{1}{\pi}\int_{V_m^>}
(\dbar\tilde{p})(\eta)
\left(\frac{t}{m}-\eta\right)^{-1}\d u\, \d v,\\
 p_{2m}(t) &= \frac{1}{\pi}\int_{V_m^<}
(\dbar\tilde{p})(\eta)\left(\frac{t}{m}-\eta\right)^{-1}\d u\, \d v\label{eq:rep20}.
\end{align}
Let $\hat{g}_m = p_m g_m$
and split $\hat{g}_m = \hat{g}_{1m}+\hat{g}_{2m}$ by
\begin{equation}
\hat{g}_{1m} = p_{1m}g_{1m} \ \ \mathrm{and} \ \ \hat{g}_{2m} = p_m g_{2m} + p_{2m}g_{1m}.
\end{equation}
Clearly we can choose $C_{n,k}$ in \eqref{eq:21} possibly larger such that $\hat{g}_{2m}$ satisfies the same estimates. Since $p_m$ and $p_{2m}$
are uniformly bounded in $m$ we get
\begin{equation}\label{punif}
P := \sup_{m\in\N}\sup_{t\in \R}|p_{1m}(t)| < \infty.
\end{equation}

We observe that the operators $g_{2m}(A)$ and $p_{1m}(A)$, $p_{2m}(A)$ are
given by norm convergent integrals, whereas $g_m(A)$ and $g_{1m}(A)$
are given on the domain of  $\langle A\rangle^{s}$, for any $s>0$, as
strongly convergent integrals.

From (\ref{eq:9}) and Lebesgue's theorem on monotone
convergence, we observe that $\psi\in\cD(A^k)$ is equivalent to
$\sup_{m}\|g_m(A)^k\psi\| <\infty$.
Combining this with (\ref{eq:21}) we find that for $k\geq 1$
\begin{equation}\label{domainchar}
\psi \in \cD(A^k) \  \Leftrightarrow \ \psi\in \cD(A^{k-1}) \ \ \mathrm{and} \ \ \sup_{m}\|g_{1m}(A)^k\psi\| <\infty.
\end{equation}

It will be convenient in the following when dealing with $g_{1m}$ to
abbreviate
\[
\d\lambda(\eta) = \frac1{\pi} (\bar{\partial}\tilde{h})(\eta) \, \d u\, \d v.
\]
This is however not a complex measure, just a notation. Similarly we will on one occasion
write $\d\lambda_p(\eta) = \frac1{\pi}(\bar{\partial}\tilde{p})(\eta)\d u \d v$,
which is in fact a complex measure.

We have the following

\begin{lemma}\label{dominv} As a result of the above constructions we have for any $m\geq 1$
and $0\leq \alpha\leq 1$
that the bounded operators $g_{1m}(A)$, $g_{1m}'(A)$, $p_{1m}(A)$ and $Ag_{1m}'(A)$
preserve $\mathcal{D}(N^\alpha)$.
\end{lemma}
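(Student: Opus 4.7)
The strategy is to show that, for each of the four operators in question, conjugation by $N^\alpha$ on the left and $N^{-\alpha}$ on the right yields a bounded operator on $\cH$, which immediately implies that $\cD(N^\alpha)$ is preserved. The key inputs are the integral representations \eqref{eq:rep1}, \eqref{eq:rep1o}, and the bound \eqref{eq:10} on conjugated resolvents.

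First I would write each of the four operators as an operator-valued integral over $V_m^>$ involving the resolvents $R_m(\eta)$. The cases of $g_{1m}(A)$ and $p_{1m}(A)$ are given directly by \eqref{eq:rep1} and \eqref{eq:rep1o}. Differentiating \eqref{eq:rep1} produces
\[
g_{1m}'(A) = -\int_{V_m^>}\eta\,R_m(\eta)^2\,\d\lambda(\eta),
\]
and combining this with the elementary identity $AR_m(\eta) = m\one + m\eta R_m(\eta)$ (valid on all of $\cH$ since $R_m(\eta)\cH\subseteq\cD(A)$) yields
\[
Ag_{1m}'(A) = -m\int_{V_m^>}\bigl(\eta R_m(\eta) + \eta^2 R_m(\eta)^2\bigr)\,\d\lambda(\eta).
\]

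Next, by Remark~\ref{RemarkTo1}~1) (which extends \eqref{eq:32} to $N^\alpha$ by interpolation), the resolvent $R_m(\eta)$ preserves $\cD(N^\alpha)$ for every $\eta\in V_m^>$ and every $0\leq \alpha\leq 1$, while \eqref{eq:10} supplies the uniform estimates $\|N^\alpha R_m(\eta)N^{-\alpha}\|\leq C|v|^{-1}$ and $\|N^\alpha R_m(\eta)^2 N^{-\alpha}\|\leq C^2|v|^{-2}$. For $\phi\in\cD(N^\alpha)$, applying $N^\alpha$ termwise to the integrand of $g_{1m}(A)\phi$ produces a function of $\eta$ whose $\cH$-norm is bounded by $C|\dbar\tilde h(\eta)|(1+|\eta||v|^{-1})\|N^\alpha\phi\|$; invoking \eqref{hbound} with $N$ chosen large makes this absolutely integrable over $V_m^>$. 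Closedness of $N^\alpha$ lets us move it past the integral, so $g_{1m}(A)\phi\in\cD(N^\alpha)$. The same scheme handles $g_{1m}'(A)$, $p_{1m}(A)$ and $Ag_{1m}'(A)$, with an extra factor $|v|^{-1}$ wherever $R_m(\eta)^2$ appears, and with $\dbar\tilde p$ (compactly supported, hence of arbitrarily rapid decay) playing the role of $\dbar\tilde h$ in the case of $p_{1m}(A)$.

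No serious obstacle is present; the only point requiring minor bookkeeping is verifying absolute operator-norm integrability in each of the four cases. Since any polynomial loss in $|\eta|$ or $|v|^{-1}$ incurred by the $N^\alpha$-conjugation is absorbed by the arbitrary polynomial decay of $\dbar\tilde h$ from \eqref{hbound} (or by the compact support of $\tilde p$), this is routine once the four representation formulas above are in hand.
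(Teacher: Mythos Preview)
Your integrability claim for $g_{1m}(A)$ is false, and this is exactly why the paper takes a more indirect route. The function $h=g/t$ decays only like $\langle t\rangle^{-1}$, so on the part of the support of $\dbar\tilde h$ with $|u|>6$ one has $|v|\gtrsim\langle\eta\rangle$ (by the explicit form of $\tilde h$ recorded just after \eqref{hbound}); there the bound \eqref{hbound} delivers at best $|\dbar\tilde h(\eta)|\lesssim\langle\eta\rangle^{-2}$, because the gain $\langle\eta\rangle^{-N}$ from ``choosing $N$ large'' is exactly cancelled by the cost $|v|^N$. Since $\langle\eta\rangle^{-2}$ is not integrable over a two-dimensional cone, $\int_{V_m^>}|\dbar\tilde h(\eta)|(1+|\eta||v|^{-1})\,\d u\,\d v=\infty$. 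This is precisely why the paper notes, just before the lemma, that the integral for $g_{1m}(A)$ converges only strongly on $\cD(\langle A\rangle^s)$, not in norm. The same failure hits your treatment of $Ag_{1m}'(A)$ through the term $-m\eta R_m(\eta)$.

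The paper's remedy is not to bound $Ng_{1m}(A)N^{-1}$ termwise but to compute the form $\langle N\psi, g_{1m}(A)N^{-1}\varphi\rangle$ for $\psi\in\cD(N)$ and $\varphi\in\cD(A)$. Since $N^{-1}\varphi\in\cD(A)$, the strong integral for $g_{1m}(A)N^{-1}\varphi$ is now legitimate; commuting $N$ through produces $\langle\psi,g_{1m}(A)\varphi\rangle$ (bounded because $g_{1m}$ is bounded, no integral needed) plus a commutator integral with kernel $R_m(\eta)N'R_m(\eta)N^{-1}$. The \emph{extra} resolvent, worth an additional $|v|^{-1}$, is what restores absolute integrability. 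For $Ag_{1m}'(A)$ a further step is required, namely showing $Ng_{1m}'(A)\cD(N)\subseteq\cD(A)$ before letting the regularization $A_j\to A$. Your direct approach does go through unchanged for $g_{1m}'(A)$ and $p_{1m}(A)$, where the extra resolvent (respectively the compact support of $\tilde p$) already supplies the missing decay.
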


\begin{proof}
Let $\psi\in\mathcal{D}(N)$ and $\varphi\in\mathcal{D}(A)$.
 Observe that $N^{-1}\varphi\in\mathcal{D}(A)$,
by the $C^1(A)$ property of $N$, cf. Condition~\ref{cond:condition}~(\ref{item:2i0a}).
 We can thus compute using the strongly convergent
integral representation for $g_{1m}(A)$, and the notation introduced in (\ref{resA}),
\begin{eqnarray}\label{inlem24}
\lefteqn{\langle N\psi, g_{1m}(A) N^{-1}\varphi\rangle}\\
\nonumber & = & m\int_{V_m^>}\big\langle N\psi,\left( 1 +
\eta R_m(\eta)\right)N^{-1}\varphi\big\rangle \d\lambda(\eta) + C_m\langle\psi,\varphi\rangle\\
\nonumber& = & \langle \psi,g_{1m}(A)\varphi\rangle + \i  \int_{V_m^>}
\eta \big\langle \psi, R_m(\eta) N' R_m(\eta) N^{-1}\varphi\big\rangle\, \d\lambda(\eta).
\end{eqnarray}
By Condition~\ref{cond:condition}~(\ref{item:2i0a}), (\ref{hbound}) and~(\ref{eq:10})
 we find that for some constant $K_m$ we have
\begin{equation}\label{inlem24a}
|\langle N\psi, g_{1m}(A) N^{-1}\varphi\rangle| \leq K_m\|\psi\|\|\varphi\|.
\end{equation}
This together with an interpolation argument concludes the proof.

The cases $g_{1m}'(A)$ and  $p_{1m}(A)$ are done the same way. As for $Ag_{1m}'(A)$ we write
$A_j = A I_j(A)$ and compute
\[
NA_jg_{1m}'(A)N^{-1} = A_j N g_{1m}'(A) N^{-1}- \i I_j(A) N'N^{-1} N I_j(A) g_{1m}'(A)N^{-1}.
\]
To complete the proof by taking $j\to\infty$ we need to argue
that
\[
N g_{1m}'(A)\cD(N)\subseteq \cD(A).
\]
 To achieve this we repeat the computation
(\ref{inlem24}), with $\psi$ replaced by $A\psi$, $\psi\in \cD(A)$, and $g_{1m}$ replaced
by $g_{1m}'$. We get
\begin{eqnarray*}
\lefteqn{\langle A\psi, N g_{1m}'(A) N^{-1}\varphi\rangle
= \langle \psi, A g_{1m}'(A)\varphi\rangle}\\
& &  +\int_{V_m^>}\eta
\Big\langle \frac{A}{m}\psi, \big\{R_m(\eta)N'R_m(\eta)^2
+R_m(\eta)^2 N' R_m(\eta)\big\}N^{-1}\varphi\Big\rangle\, \d\lambda(\eta).
\end{eqnarray*}
The result now follows from writing
$\frac{A}{m}R_m(\eta) = \one+\eta R_m(\eta)$and appealing to
(\ref{hbound}) and (\ref{eq:10}) as above.
\end{proof}


\section{Proof of the Abstract Results}\label{Proof of Theorem}


In this section we prove the abstract theorems formulated in
Section~\ref{Sec-AbstractResults} as well as an extended version of
Theorem \ref{thm:mainresultk=1}.
The proofs are given in  separate subsections.


\subsection{Proof of Theorem \ref{thm:mainresulti}}


Let
\[
\cD_k = \{\varphi\in\cD(A^{k})| \forall 0\leq j \leq k: A^j\varphi\in \cD(N^\frac12)\}.
\]
Using Conditions \ref{cond:condition} -- \ref{cond:mourre} and 
  \ref{cond:conditioni} we shall prove Theorem~\ref{thm:mainresulti} by
induction in $ k=0,\dots,k_0$ that $\psi\in\cD_k$. We can assume without
loss of generality that $\lambda=0$.

The proof relies on three estimates which we state first in the form of
three propositions. After giving the proof of Theorem~\ref{thm:mainresulti},
we then proceed to verify the propositions.

We begin with some abbreviations and a definition.
For a state $\psi$ we introduce the notation
\[
\psi_m = g_{1m}(A)^k\psi, \ \ \mathrm{and} \ \
\hat{\psi}_m = \hat{g}_{1m}(A)g_{1m}(A)^{k-1}\psi=p_{1m}(A)\psi_m.
\]
Let $\sigma>0$ be fixed as in Remark~\ref{RemarkTo1}~1), applied with $N^{1/2}$
in place of $N$.

\begin{defn}\label{Def-k-remainder} Let $k\geq 1$.
A family of forms $\{R_m\}_{m=1}^\infty$ on $\cD_{k-1}$ will be called a
$k$-remainder if for all $\epsilon>0$
there exists $C_\epsilon>0$ such that
\begin{equation}\label{kremainder}
|\langle\psi,R_m\psi\rangle| \leq
\epsilon \| N^\frac12 \psi_m\|^2
+ C_\epsilon\| N^\frac12 (A -\i\sigma)^{k-1}\psi\|^2,
\end{equation}
for any $\psi\in\cD_{k-1}$ and $m\in\N$.
\end{defn}

Lemma~\ref{dominv} is repeatedly used below, mostly without comment, to justify
manipulations.
The first proposition is a virial result, to be proved by  a symmetrization
of a commutator between $H$ and a regularized version of $A^{2k+1}$.

\begin{prop}\label{input1} Let $0<k\leq k_0$ and
$\psi\in\cD_{k-1}$ be a bound state for $H$.
There exists a $k$-remainder $R_m$, such that
\[
\langle\psi_m,H'\psi_m\rangle +2k\langle \hat{\psi}_m,H'\hat{\psi}_m\rangle
=\langle\psi,R_m\psi\rangle.
\]
\end{prop}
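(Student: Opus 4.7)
The plan is to derive the identity from the vanishing of $\langle\psi,\i[H,B_m]\psi\rangle$ for a suitably symmetrized regularization $B_m$ of the formal symbol $A^{2k+1}$. The natural choice is
\[
B_m = \tfrac{1}{2}\bigl(g_{1m}(A)^{k}\, A\, g_{1m}(A)^{k} + \text{adjoint}\bigr),
\]
where the unbounded $A$ is initially replaced by the bounded $A_j = A I_j(A)$ to make the whole expression literally bounded and self-adjoint, and the limit $j\to\infty$ is taken at the end using Lemma~\ref{Lemma-StrongLimits}. By Lemma~\ref{dominv} the factors $g_{1m}(A)$ preserve $\cD(N^\alpha)$, and together with Condition~\ref{cond:condition}~(\ref{item:2}) this ensures that $B_m$ maps $\cD(H)\cap\cD(A)$ into $\cD(H)$, so the bound-state identity $\langle\psi,\i[H,B_m]\psi\rangle = 0$ is legitimate on the dense subset of vectors $\psi\in\cD_{k-1}$.

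Expanding the commutator by Leibniz gives three groups. The middle one, $g_{1m}(A)^k H' g_{1m}(A)^k$, contributes $\langle\psi_m,H'\psi_m\rangle$ on the nose. For the two outer groups I would use the Helffer--Sj\"ostrand representation of $g_{1m}$ and the identity $\i[H,R_m(\eta)] = -R_m(\eta) H' R_m(\eta)/m$ from Condition~\ref{cond:condition}~(\ref{item:2i0ue}) to obtain
\[
\i[H,g_{1m}(A)^k] = k\, g_{1m}(A)^{k-1} g_{1m}'(A)\, H' + R_m^{(1)},
\]
with $R_m^{(1)}$ collecting $H''$-type remainders. The leading outer term $k\, g_{1m}(A)^{k-1} g_{1m}'(A)\, H'\, A\, g_{1m}(A)^k$ is then transformed by (i) commuting $H'$ past $A g_{1m}(A)^k$ (generating another $H''$-remainder) and (ii) invoking the functional-calculus identity $A g_{1m}'(A) g_{1m}(A) = \hat{g}_{1m}(A)^2$, which is precisely the purpose of the square-root construction $\hat g^2 = t g' g$ in \eqref{ghat} (modulo $g_{2m}$-corrections estimated via \eqref{eq:21}). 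Because the second outer group is the Hermitian adjoint of the first, their $\psi$-expectations sum to twice the real part, and after one final commutation of $H'$ across $\hat g_{1m}(A) g_{1m}(A)^{k-1}$ we recover exactly $2k\langle\hat\psi_m, H'\hat\psi_m\rangle$.

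The residual terms must be shown to define a $k$-remainder in the sense of Definition~\ref{Def-k-remainder}. They are of two types: (a) expressions in which $H''\in\cB(N^{-1/2}\cH, N^{1/2}\cH)$ appears, controlled by Cauchy--Schwarz and Lemma~\ref{dominv} in terms of $\|N^{1/2}\psi_m\|$ and $\|N^{1/2}\hat\psi_m\|$, with the latter absorbed into an $\epsilon$-term via the uniform bound $\sup_{m,t}|p_{1m}(t)|\leq P$ from \eqref{punif}; and (b) expressions where a resolvent-pair commutator in the Helffer--Sj\"ostrand calculus consumes one factor of $g_{1m}(A)$, leaving a vector controlled (after using Lemma~\ref{lemma:HN}) by $\|N^{1/2}(A-\i\sigma)^{k-1}\psi\|$, which is finite by the inductive hypothesis $\psi\in\cD_{k-1}$. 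The main obstacle will be the careful bookkeeping through this iterated almost-analytic calculus: the decay $|\bar\partial\tilde h(\eta)|\leq C_N\langle\eta\rangle^{-N-2}|v|^N$ must be balanced against $\|N^\alpha R_m(\eta) N^{-\alpha}\|\leq C|v|^{-1}$ from \eqref{eq:10} to secure absolute integrability of every remainder, and the regularization $A_j\to A$ can only be removed after domain preservation and strong convergence have been secured via Lemmas~\ref{Lemma-StrongLimits} and~\ref{lemma:HN}.
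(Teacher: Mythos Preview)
Your overall plan is exactly the paper's strategy: compute a regularized virial identity for the symbol $g_{1m}(A)^k A\, g_{1m}(A)^k$, use Leibniz plus Helffer--Sj\"ostrand to extract $\langle\psi_m, H'\psi_m\rangle$ and the cross-terms, and convert the latter into $2k\langle\hat\psi_m, H'\hat\psi_m\rangle$ via the square-root identity $tg'g = \hat g^2$. The paper packages this into Lemmas~\ref{symm1}, \ref{symm2} and~\ref{lastsymm}, but the content is the same.

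There is one genuine gap. You write that ``$B_m$ maps $\cD(H)\cap\cD(A)$ into $\cD(H)$, so the bound-state identity $\langle\psi,\i[H,B_m]\psi\rangle = 0$ is legitimate''. Neither half of this is justified in the present framework. Nowhere is it asserted that $g_{1m}(A)$ preserves $\cD(H)$ (only $\cD(N^\alpha)$, via Lemma~\ref{dominv}), and Condition~\ref{cond:condition}~(\ref{item:2}) does not give this. More importantly, even with $A\to A_j$ so that $B_m^j$ is bounded and the form $\langle\psi, [H, B_m^j]\psi\rangle$ vanishes trivially on a bound state, the Leibniz expansion $[H,g_{1m}(A)^k]A_jg_{1m}(A)^k + \cdots$ requires each intermediate factor to land in $\cD(H)$, which is not available. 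The paper resolves this by replacing $H$ with the \emph{bounded} $H_n = H I_n(H)$; then every Leibniz step is between bounded operators, and the passage $H_n\to H$ is effected term by term via Lemma~\ref{Lemma-RemoveHcutoff}, using Condition~\ref{cond:condition}~(\ref{item:2i0ue}) only as a form identity on $\cD(H)\cap\cD(N^{1/2})$. The double regularization $(H\to H_n,\ A\to A_j)$ and the order of limits $\lim_{j}\lim_{n}$ are essential here, not cosmetic.

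A minor point: your steps (i)--(ii) are slightly out of order compared to what actually closes. The paper first replaces $Ag_{1m}'(A)$ by $p_{1m}(A)^2 g_{1m}(A)$ (Step~II of Lemma~\ref{lastsymm}), obtaining $g_{1m}(A)^k \Re\{p_{1m}(A)^2 H'\}g_{1m}(A)^k$, and \emph{then} commutes a single $p_{1m}(A)$ across $H'$ (Step~III) to symmetrize. Your version commutes $H'$ past the full $A\,g_{1m}(A)^k$ first, which produces an extra factor of $A$ in the $H''$-remainder and makes it harder to recognize as a $k$-remainder.
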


The second result is an implementation of the virial bound (\ref{eq:virial}) in
Condition~\ref{cond:virial}, which together
with Proposition~\ref{input1} makes it possible to deal with
$N^{1/2}\psi_m$. This is reminiscent of what was done in the proof of \cite[Proposition 8.2]{MS}.
The constant $C_2$ appearing in the proposition comes from Condition~\ref{cond:virial}.

\begin{prop}\label{input2} Let $\psi\in\cD_{k-1}$ be a bound state. There exists $C$
independent of $m$ such that
\[
\|N^\frac12 \psi_m\|^2 \leq 2 C_2 \langle \psi_m, H' \psi_m\rangle
+ C\big(\|\psi_m\|^2 + \|N^\frac12(A-\i\sigma)^{k-1}\psi\|^2\big)
\]
and
\[
\|N^\frac12 \hat{\psi}_m\|^2 \leq
2 C_2 \langle \hat{\psi}_m,H' \hat{\psi}_m\rangle
+ C\big(\|\hat{\psi}_m\|^2 + \|N^\frac12(A-\i\sigma)^{k-1}\psi\|^2\big).
\]
\end{prop}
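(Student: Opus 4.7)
The plan is to apply the virial estimate \eqref{eq:virial} to $\psi_m$ (and analogously to $\hat\psi_m$), and then to reduce the resulting $\langle \psi_m, H\psi_m\rangle$ term to another multiple of $\langle\psi_m, H'\psi_m\rangle$ plus admissible remainders, using crucially that $\psi$ is a bound state with $H\psi=0$ (since we have reduced to $\lambda=0$).

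First I would verify that $\psi_m,\hat\psi_m\in\cD(H)\cap\cD(N^{1/2})$ so that \eqref{eq:virial} is applicable. Membership in $\cD(N^{1/2})$ follows from $\psi\in\cD_{k-1}\subseteq\cD(N^{1/2})$ together with Lemma~\ref{dominv}. For the $\cD(H)$ part I would use the Helffer--Sj\"ostrand representation \eqref{eq:rep1} of $g_{1m}(A)$ together with the form identity \eqref{eq:1commu}, which ensures that the resolvents $R_m(\eta)$, for $|\Im\eta|\geq\sigma/m$, preserve $\cD(H)$. Applying \eqref{eq:virial} to $\psi_m$ gives
\[
\|N^{1/2}\psi_m\|^2\leq C_1\langle\psi_m,H\psi_m\rangle+C_2\langle\psi_m,H'\psi_m\rangle+C_3\|\psi_m\|^2,
\]
and similarly for $\hat\psi_m$.

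The core step is to show that
\[
C_1\langle\psi_m,H\psi_m\rangle\leq C_2\langle\psi_m,H'\psi_m\rangle+C\bigl(\|\psi_m\|^2+\|N^{1/2}(A-\i\sigma)^{k-1}\psi\|^2\bigr),
\]
with the analogous bound for $\hat\psi_m$; combined with the virial estimate, the factor $2C_2$ in the proposition is then the sum of the $C_2$ produced here and the $C_2$ already in the virial. Using that $g_{1m}(A)^k$ is self-adjoint and $H\psi=0$, I would rewrite $\langle\psi_m,H\psi_m\rangle=\tfrac{1}{2}\langle\psi,[g_{1m}(A)^k,[H,g_{1m}(A)^k]]\psi\rangle$ by the elementary identity $B^{*}AB=\tfrac12[B^{*},[A,B]]+\tfrac12(B^{*}BA+AB^{*}B)$. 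Expanding the inner commutator by Leibniz and inserting the identity $[H,g_{1m}(A)]=\frac{\i}{\pi}\int_{V_m^>}(\dbar\tilde h)(\eta)\,\eta\,R_m(\eta)H'R_m(\eta)\,\d u\,\d v$, derived from \eqref{eq:rep1} and \eqref{eq:1commu}, I would then symmetrize around the single occurrence of $H'$ to collect the two $g_{1m}(A)^k$ factors on either side. This produces a leading piece of the form $c\langle \psi_m,H'\psi_m\rangle$ with $c$ bounded in terms of $C_1,C_2$. The discrepancies split into three types: commutator errors from moving $g_{1m}(A)$ through $R_m(\eta)$, Helffer--Sj\"ostrand derivative terms controlled uniformly in $m$ by \eqref{hbound} and \eqref{eq:10}, and lower-order contributions in which at most $A^{k-1}\psi$ appears. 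The first two are absorbed into the $\|\psi_m\|^2$ term using Lemmas~\ref{lemma:HN} and~\ref{dominv}, while the third is bounded by $\|N^{1/2}(A-\i\sigma)^{k-1}\psi\|^2$.

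The argument for $\hat\psi_m$ runs in parallel, with $p_{1m}(A)g_{1m}(A)^{k-1}$ in place of $g_{1m}(A)^k$; this operator is still self-adjoint (since $p_{1m}$ and $g_{1m}$ are real-valued and their functional calculi commute) and the preservation properties on $\cD(N^{1/2})\cap\cD(H)$ are supplied by Lemma~\ref{dominv}. The main obstacle I anticipate is the bookkeeping of the double commutator expansion: one must track precisely the numerical prefactor arising from the symmetrization of $[g_{1m}(A)^k,[H,g_{1m}(A)^k]]$, and show that each of the remainders produced by moving $g_{1m}(A)$ across $R_m(\eta)$ and by the weight $\dbar\tilde h$ either is uniform in $m$ up to a factor controlled by $\|\psi_m\|^2$, or drops one power of $g_{1m}(A)$ so as to yield the required $\|N^{1/2}(A-\i\sigma)^{k-1}\psi\|^2$ estimate.
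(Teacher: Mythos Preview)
Your proposal has a genuine gap at the very first step: the claim that $\psi_m\in\cD(H)$ is not justified, and in the singular Mourre framework of this paper there is no reason for it to hold. The form identity \eqref{eq:1commu} is only an identity of \emph{forms} on $\cD(H)\cap\cD(N^{1/2})$; it does not assert that $R(\eta)$, let alone $g_{1m}(A)$, preserves $\cD(H)$. Lemma~\ref{dominv} only gives preservation of $\cD(N^\alpha)$. Indeed the whole point of the singular setup is that $H$ is \emph{not} assumed to be of class $C^1(A)$, so the standard domain-preservation arguments are unavailable. This is exactly why the paper inserts the regularization $I_n(H)$: one works with $I_n(H)\psi_m\in\cD(H)\cap\cD(N^{1/2})$, applies \eqref{eq:virial} there, and then passes to the limit $n\to\infty$ using \eqref{strongI1}.

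Your treatment of the $C_1H$ term is also off target. In the paper the term $\langle I_n(H)\psi_m,\,H I_n(H)\psi_m\rangle$ is handled simply by writing $HI_n(H)\psi_m=H_n\psi_m=[H_n,g_{1m}(A)^k]\psi$ (since $H\psi=0$) and invoking Lemma~\ref{symm1}~(1), which says this limit is a right-error $E_m^{\romr}$. No double commutator enters, and no additional $\langle\psi_m,H'\psi_m\rangle$ contribution is extracted; the right-error is bounded by $C\|N^{1/2}\psi_m\|\,\|N^{1/2}(A-\i\sigma)^{k-1}\psi\|$, and the factor $2C_2$ arises from absorbing $\tfrac12\|N^{1/2}\psi_m\|^2$ into the left-hand side via Young's inequality. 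Your double-commutator symmetrization would at best produce $H''$-type terms, not the $\langle\psi_m,H'\psi_m\rangle$ you claim, and in any case is unnecessary.
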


The third and final input is an implementation of the positive
commutator estimate in Condition~\ref{cond:mourre}.
The constant $c_0$ and the compact operator $K_0$ appearing in the
proposition come from Condition~\ref{cond:mourre}.

\begin{prop}\label{input3}
Let $\psi\in\cD_{k-1}$ be a bound state. There exist constants  $C,\tC>0$
independent of $m$ such that
\[
\langle \psi_m, H' \psi_m\rangle \geq \frac{c_0}2\|\psi_m\|^2
-\tC \langle\psi_m,K_0\psi_m\rangle -
C\|N^\frac12(A-\i\sigma)^{k-1}\psi\|^2
\]
and
\[
\langle \hat{\psi}_m, H' \hat{\psi}_m\rangle
\geq \frac{c_0}2\|\hat{\psi}_m\|^2
- \tC \langle \hat{\psi}_m,K_0\hat{\psi}_m\rangle
- C\|N^\frac12(A-\i\sigma)^{k-1}\psi\|^2.
\]
\end{prop}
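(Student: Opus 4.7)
The plan is to apply the Mourre estimate \eqref{eq:Mourre} to the vectors $\psi_m$ and $\hat\psi_m$, and to absorb the negative contribution arising from $f_\lambda^\perp(H)^2\langle H\rangle$ into the remainder by exploiting the eigenvalue equation $(H-\lambda)\psi=0$. Iterative application of Lemma~\ref{dominv} together with Condition~\ref{cond:condition}\,(\ref{item:2i0a}) gives $\psi_m,\hat\psi_m\in\cD(N^{1/2})$, so the form inequality \eqref{eq:Mourre} is applicable via Remark~\ref{RemarkTo1}\,2). For cleaner manipulation we shall use the equivalent formulation of Remark~\ref{RemarkTo1}\,4),
\[
\langle\psi_m,H'\psi_m\rangle \geq c_0\|\psi_m\|^2 - \Re\langle\psi_m,B(H-\lambda)\psi_m\rangle - \langle\psi_m,K_0\psi_m\rangle,
\]
where $B$ is bounded and preserves $\cD(N^{1/2})$, and similarly for $\hat\psi_m$.

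Assume without loss of generality $\lambda=0$. The central identity is
\[
(H-\lambda)\psi_m = Hg_{1m}(A)^k\psi = [H,g_{1m}(A)^k]\psi,
\]
interpreted as a form and valid because $H\psi=0$. Expanding the commutator as the telescoping sum $\sum_{j=0}^{k-1}g_{1m}(A)^j[H,g_{1m}(A)]g_{1m}(A)^{k-1-j}$, each inner factor $[H,g_{1m}(A)]$ unfolds via \eqref{eq:rep1} and Condition~\ref{cond:condition}\,(\ref{item:2i0ue}) into a strongly convergent integral over $V_m^>$ involving $R_m(\eta)H'R_m(\eta)$. Inserting $N^{\mp1/2}$ between the resolvents and $H'$ (the latter being $N$-form bounded by Condition~\ref{cond:condition}\,(\ref{item:2}) and interpolation) and exploiting the uniform bounds \eqref{eq:10} and \eqref{hbound}, each such integral is controlled uniformly in $m$. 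Combining this with Lemma~\ref{dominv} (to keep $\cD(N^{1/2})$ invariant through the bounded factors of $g_{1m}(A)$) and Lemma~\ref{HN1} (to commute $N^{1/2}$ through $H$-resolvents), we obtain $\|B[H,g_{1m}(A)^k]\psi\|\leq C\|N^{1/2}(A-\i\sigma)^{k-1}\psi\|$ uniformly in $m$. A Cauchy--Schwarz and AM--GM argument then yields
\[
|\Re\langle\psi_m,B(H-\lambda)\psi_m\rangle| \leq \tfrac{c_0}{2}\|\psi_m\|^2 + C\|N^{1/2}(A-\i\sigma)^{k-1}\psi\|^2,
\]
and inserting this in the displayed Mourre bound above delivers the first inequality.

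The second inequality, for $\hat\psi_m = p_{1m}(A)g_{1m}(A)^{k-1}\psi$, is derived by the same scheme with one occurrence of $g_{1m}(A)$ replaced by $p_{1m}(A)$; its uniform boundedness in $m$ from \eqref{punif}, together with its commutator and domain-preservation properties (Lemma~\ref{dominv}), play precisely the same role as those of $g_{1m}(A)$. The main obstacle in carrying out this plan will be the careful bookkeeping of uniform-in-$m$ bounds when commuting $N^{1/2}$ through successive bounded functions of $A$ and through the inner resolvents $R_m(\eta)$ over the shrinking region $V_m^>$. This is handled by pairing each factor of $\eta$ from $d\lambda(\eta)$ against one resolvent $R_m(\eta)$ (so that \eqref{eq:10} and \eqref{hbound} give an integrable $\eta$-dependence) and by using Lemmas~\ref{HN1} and~\ref{Lemma-StrongLimits} to move $N^{1/2}$ through these resolvents and through the bounded functions of $A$ without losing any $N$-regularity.
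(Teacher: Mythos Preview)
Your overall strategy---apply the Mourre estimate to $\psi_m$ and absorb the $(H-\lambda)$ term via the commutator $[H,g_{1m}(A)^k]$ and the eigenvalue equation---is the paper's strategy. But two genuine technical steps are missing, and without them the argument does not close.

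\textbf{First}, the Mourre estimate \eqref{Eq-WeakMourre} is a form inequality on $\cD(H)\cap\cD(N^{1/2})$, and you apply it directly to $\psi_m=g_{1m}(A)^k\psi$. Nothing in the abstract conditions guarantees $\psi_m\in\cD(H)$: Condition~\ref{cond:condition}~(\ref{item:2i0ue}) gives information about $[H,R(\eta)]$ only as a \emph{form}, not that bounded functions of $A$ preserve $\cD(H)$. Remark~\ref{RemarkTo1}~2) provides density of $\cD(H)\cap\cD(N^{1/2})$ in $\cD(N^{1/2})$, but the term $\Re\{B(H-\lambda)\}$ does not extend continuously to $\cD(N^{1/2})$, so density does not let you evaluate the inequality at $\psi_m$. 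The paper handles this by inserting the regularizer $I_n(H)$, working with $I_n(H)\psi_m\in\cD(H)\cap\cD(N^{1/2})$, and passing to the limit $n\to\infty$ using Lemma~\ref{symm1}~(1) (via \eqref{virhelp}).

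\textbf{Second}, your claimed operator bound $\|B[H,g_{1m}(A)^k]\psi\|\leq C\|N^{1/2}(A-\i\sigma)^{k-1}\psi\|$ is stronger than what the machinery actually yields. The commutator is computed only as a form limit (Lemma~\ref{Lemma-RemoveHcutoff}, Lemma~\ref{symm1}), and its structure is that of a \emph{right-error}: the pairing with a test vector $\varphi$ produces a factor $\|N^{1/2}\varphi\|$, not $\|\varphi\|$, because the expansion carries an unremovable $N^{1/2}$ on the left (see \eqref{symm1-Step1} and Remark~\ref{Rem-RightError}). Consequently the Cauchy--Schwarz step gives $\epsilon\|N^{1/2}\psi_m\|^2+C_\epsilon\|N^{1/2}(A-\i\sigma)^{k-1}\psi\|^2$, not $\tfrac{c_0}{2}\|\psi_m\|^2+\cdots$. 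To finish, the paper invokes Proposition~\ref{input2} to convert $\|N^{1/2}\psi_m\|^2$ back into $\langle\psi_m,H'\psi_m\rangle$ plus controlled terms and then absorbs into the left-hand side; this step is absent from your proposal. (Incidentally, your reference to Lemma~\ref{HN1} ``to commute $N^{1/2}$ through $H$-resolvents'' is misplaced: the resolvents appearing in the commutator expansion are $R_m(\eta)=(A/m-\eta)^{-1}$, not resolvents of $H$.)
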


\noindent\emph{Proof of Theorem~\ref{thm:mainresulti}:}
Let $\psi$ be the bound state, which we take to be normalized.
By assumption $\psi\in\cD_0$. Assume by induction that $\psi\in\cD_{k-1}$,
for some $k\leq k_0$. We proceed to show that $\psi\in\cD_k$:

From Proposition~\ref{input1} we get the existence of a
$k$-remainder $R_m$ such that
\[
\langle \psi_m H' \psi_m\rangle + 2k\langle \hat{\psi}_m,H'\hat{\psi}_m\rangle
=\langle\psi,R_m\psi\rangle.
\]

Estimating the right-hand side using (\ref{kremainder}) and
Proposition~\ref{input2} we find a $C>0$ such that
\[
\langle \psi_m, H' \psi_m\rangle + 2k\langle \hat{\psi}_m,H'\hat{\psi}_m\rangle
\leq \frac{c_0}4 \|\psi_m\|^2
+ C \|N^\frac12 (A-\i\sigma)^{k-1}\psi\|^2.
\]

Finally, we appeal to Proposition~\ref{input3} to derive the bound
\begin{equation}\label{bndonpsim}
\frac{c_0}4\|\psi_m\|^2 \leq  C  \|N^\frac12 (A-\i\sigma)^{k-1}\psi\|^2
+\tC\langle \psi_m, K_0 \psi_m\rangle
+ 2k\tC\langle \hat{\psi}_m, K_0 \hat{\psi}_m\rangle.
\end{equation}
Pick $\Lambda>0$ large enough such that
\[
2\tC\|K_0\one_{[|A|>\Lambda]}\|\leq \frac{c_0}{12(1+2kP^2)},
\]
where $P$ is given by (\ref{punif}).
Write $\one_{[|A|\leq \Lambda]} \psi_m = [\one_{[|A|\leq \Lambda]}(g_m(A)-g_{2m}(A))]^k\psi$ and estimate using \eqref{eq:21}
\begin{align*}
2\tC |\langle \one_{[|A|\leq \Lambda]} \psi_m,K_0\psi_m\rangle|
& \leq 2\tC(\Lambda+C_{0,0})^k \|K_0\|\|\psi\|\|\psi_m\|\\
& \leq \frac{c_0}{12}\|\psi_m\|^2 + \frac{12\tC^2(\Lambda+C_{0,0})^{2k}\|K_0\|^2}{c_0}\|\psi\|^2
\end{align*}
and similarly
\[
2\tC|\langle \one_{[|A|\leq \Lambda]} \hat{\psi}_m,K_0\hat{\psi}_m\rangle|
\leq  \frac{c_0}{24 k}\|\psi_m\|^2 + \frac{24 k \tC^2  (\Lambda+C_{0,0})^{2k}\|K_0\|^2 P^4}{c_0}\|\psi\|^2.
\]
Inserting $\one=\one_{[|A|\leq \lambda]} + \one_{[|A|>\lambda]}$ ahead of the $K_0$'s in
(\ref{bndonpsim}) and appealing to the bounds above we get
\[
\frac{c_0}8\|\psi_m\|^2\leq  C \big(\|N^\frac12 (A-\i\sigma)^{k-1}\psi\|^2 + \|\psi\|^2\big),
\]
for a suitable $m$-independent $C$.
Recalling \eqref{domainchar} we conclude that $\psi\in\cD(A^k)$.

It remains to prove that $A^k\psi\in \cD(N^{1/2})$.

Note that what we just established implies that
$\psi_m\to A^k\psi$ in norm, cf. \eqref{eq:9} and \eqref{eq:21}.
We can now compute
\[
\langle A^k\psi, NI_{\i n}(N) A^k\psi\rangle =
\lim_{m\to\infty}\langle \psi_m, N I_{\i n}(N)\psi_m\rangle.
\]
But by Propositions~\ref{input1} and~\ref{input2} we have
\begin{eqnarray*}
\langle \psi_m, N I_{\i n}(N)\psi_m\rangle &\leq & \|N^\frac12\psi_m\|^2 \\
& \leq & \|N^\frac12\psi_m\|^2 + 2k\|N^\frac12\hat{\psi}_m\|^2\\
& \leq & 2 C_2\big(\langle \psi_m, H'\psi_m\rangle +
2k\langle \hat{\psi}_m,H'\hat{\psi}_m\rangle \big) + C\\
& = & \langle\psi, R_m\psi\rangle + C,
\end{eqnarray*}
where $C>0$ is constant independent of $m$. The result now follows from
(\ref{kremainder}) by first taking the limit $m\to\infty$, and subsequently $n\to\infty$.
Notice that Lebesgue's
theorem on monotone convergence applies, since $I_{\i n}(N) = n(N+n)^{-1} \to \one$
monotonously.
$\hfill\square$

The rest of the section is devoted to establishing
Propositions~\ref{input1}--\ref{input3}.

We begin with a definition and a series of lemmata. The $\sigma$ in the definition below is the same $\sigma$
that entered into Definition~\ref{Def-k-remainder}.

\begin{defn}\label{lrerror} Let $E_m^\roml$ and $E^\romr_m$ be families of forms on $\cD_{k-1}\times\cD(N^{1/2})$
 and $\cD(N^{1/2})\times \cD_{k-1}$ respectively. We say that $E^\roml_m$ is a left-error
if
\[
|\langle\psi, E_m^{\roml}\varphi\rangle| \leq  C\|N^\frac12(A-\i \sigma)^{k-1}\psi\|\|N^\frac12\varphi\|.
\]
We say that $E^\romr_m$ is a right-error if
\[
|\langle\psi, E_m^{\romr}\varphi\rangle| \leq  C\|N^\frac12\psi\|\|N^\frac12(A-\i \sigma)^{k-1}\varphi\|.
\]
\end{defn}

\begin{remark}\label{Rem-RightError} An example of a right-error that we will encounter below are
forms  
\[
N^{1/2}B_m N^{1/2}g_{1m}(A)^\ell (A-\i \sigma)^{-j},
\]
with $\ell-j\leq k-1$ and $\sup_{m}\|B_m\|<\infty$.
To see that this is a right-error 
observe that it suffices to prove that $N g_{1m}(A)^\ell (A-\i \sigma)^{-j-k+1} N^{-1}$ is
uniformly bounded in $m$.
The result then follows from interpolation.
Since $j+k-1\geq \ell$, recalling that $\sigma$ was chosen according to  \eqref{eq:32}, we reduce the problem to showing that
$Ng_{1m}(A)(A-\i\sigma)^{-1}N^{-1}$ is bounded uniformly in $m$.
But this follows by a computation similar to \eqref{inlem24}, where the extra resolvent
produces a bound which is uniform in $m$ compared with the point wise bound \eqref{inlem24a}.
\end{remark}

We introduce the notation
\begin{equation}\label{Hn}
H_n := H I_n(H) =\i n(I_n(H)-\one),
\end{equation}
which plays the role of a regularized Hamiltonian. See \eqref{eq:5} for the definition of $I_n(H)$.

\begin{lemma}\label{Lemma-RemoveHcutoff} We have the following limit in the sense of forms on $\cD(N^{1/2})$
\[
\lim_{n\to \infty} \i[H_n,g_{1m}(A)] = -\int_{V_m^>}
\eta R_m(\eta)H' R_m(\eta)\, \d\lambda(\eta).
\]
\end{lemma}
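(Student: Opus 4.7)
The plan is to reduce the claim to an explicit commutator identity for $H_n$ with $R_m(\eta)$, integrate it against the representation~\eqref{eq:rep1} of $g_{1m}(A)$, and pass to the limit $n\to\infty$ by dominated convergence. The main obstacle is securing bounds on the integrand that are uniform in $n$; this rests on the $C^1_{\mathrm{Mo}}(H)$-property of $N^{1/2}$ (Lemma~\ref{HN1}) and the resulting strong-limit statement~\eqref{strongI1}, which ensures that $I_n(H)$ behaves well relative to $N^{1/2}$ uniformly in $n$.

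\textbf{Step 1.} For $\eta\in V_m^>$ we have $|\Im(m\eta)|\geq\sigma$, so substituting $\eta\to m\eta$ in~\eqref{eq:1commu} gives, as a form on $\cD(H)\cap\cD(N^{1/2})$,
\[
\i[H,R_m(\eta)]=-\tfrac{1}{m}R_m(\eta)H'R_m(\eta).
\]
Using $H_n=n^2(H-\i n)^{-1}-\i n$, the elementary identity $[B^{-1},C]=-B^{-1}[B,C]B^{-1}$ with $B=H-\i n$, $C=R_m(\eta)$, and the cancellation $n^2(H-\i n)^{-1}X(H-\i n)^{-1}=-I_n(H)XI_n(H)$ (which follows from $n(H-\i n)^{-1}=\i I_n(H)$), one obtains
\[
\i[H_n,R_m(\eta)]=-\tfrac{1}{m}\,I_n(H)R_m(\eta)H'R_m(\eta)I_n(H),
\]
initially as a form on $\cD(H)\cap\cD(N^{1/2})$ and then extended to $\cD(N^{1/2})$ by density via Lemma~\ref{HN1}.

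\textbf{Step 2.} Fix $\phi,\psi\in\cD(A)\cap\cD(N^{1/2})$, which is dense in $\cD(N^{1/2})$ via the approximation $I_n(A)\varphi\to\varphi$ in the graph norm of $N^{1/2}$ (using Remark~\ref{RemarkTo1}~1) together with~\eqref{strongI3}). Using the strong integral representation~\eqref{eq:rep1} applied to both $g_{1m}(A)\psi$ and $g_{1m}(A)\phi$, along with the self-adjointness of $g_{1m}(A)$, one writes $\langle\phi,\i[H_n,g_{1m}(A)]\psi\rangle$ as an integral; the $C_m$-contribution and the ``$1$''-part of $(1+\eta R_m(\eta))$ vanish (the latter because $\int_{V_m^>}\bar\partial\tilde h(\eta)\,\d u\,\d v\in\R$, a consequence of the symmetry $\overline{\bar\partial\tilde h(\eta)}=\bar\partial\tilde h(\bar\eta)$). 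Substituting Step~1 then yields the pre-limit identity
\[
\langle\phi,\i[H_n,g_{1m}(A)]\psi\rangle=-\int_{V_m^>}\eta\,\big\langle\phi,I_n(H)R_m(\eta)H'R_m(\eta)I_n(H)\psi\big\rangle\,\d\lambda(\eta).
\]

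\textbf{Step 3.} The estimates $\|N^{1/2}R_m(\eta)N^{-1/2}\|\leq C|v|^{-1}$ (from~\eqref{eq:10}), $\|N^{1/2}I_n(H)N^{-1/2}\|\leq C$ uniformly in $n$ (from~\eqref{strongI1} and Banach--Steinhaus), and $\|N^{-1/2}H'N^{-1/2}\|<\infty$ (Condition~\ref{cond:condition}~(\ref{item:2i0ue})), together with the rapid decay $|\bar\partial\tilde h(\eta)|\leq C_M\langle\eta\rangle^{-M-2}|v|^M$ for arbitrarily large $M$, bound the absolute value of the integrand by $C\langle\eta\rangle^{-M-1}|v|^{M-2}\|N^{1/2}\phi\|\|N^{1/2}\psi\|$, which is integrable on $V_m^>$ for $M$ large, uniformly in $n$. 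This permits extension of the identity of Step~2 to all $\phi,\psi\in\cD(N^{1/2})$ by density; and since $I_n(H)\varphi\to\varphi$ in $\cD(N^{1/2})$ for any $\varphi\in\cD(N^{1/2})$ (by~\eqref{strongI1}), the integrand converges pointwise to $\eta\langle\phi,R_m(\eta)H'R_m(\eta)\psi\rangle$, and dominated convergence finishes the proof.
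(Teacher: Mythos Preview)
Your proof is correct and follows essentially the same route as the paper's: derive the pointwise identity $\i[H_n,R_m(\eta)]=I_n(H)\,\i[H,R_m(\eta)]\,I_n(H)$, feed in Condition~\ref{cond:condition}~(\ref{item:2i0ue}), integrate against the representation~\eqref{eq:rep1}, extend by density to $\cD(N^{1/2})$, and pass to the limit via dominated convergence using~\eqref{eq:10}, \eqref{strongI1} and the boundedness of $N^{-1/2}H'N^{-1/2}$. The paper organizes the steps in a slightly different order (it writes the integrated commutator first and then substitutes), and it leaves implicit the cancellation of the factor $m$ from~\eqref{eq:rep1} against the factor $1/m$ in $\i[H,R_m(\eta)]=-\tfrac{1}{m}R_m(\eta)H'R_m(\eta)$, but the substance is identical.

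One small remark on your Step~2: your justification that the ``$1$''-part drops out because $\int_{V_m^>}\bar\partial\tilde h\,\d u\,\d v\in\RR$ is slightly beside the point. What you really use is the symmetry $\overline{\bar\partial\tilde h(\eta)}=\bar\partial\tilde h(\bar\eta)$ together with the change of variables $\eta\mapsto\bar\eta$ (and $R_m(\bar\eta)^*=R_m(\eta)$), which turns the second integral (arising from $g_{1m}(A)\phi$ in the first slot) back into an integral against $\d\lambda(\eta)$; once that is done the integrand becomes $\langle\phi,[H_n,1+\eta R_m(\eta)]\psi\rangle$ and the ``$1$'' drops out trivially since $[H_n,1]=0$. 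The paper sidesteps this by simply asserting the form identity on $\cD(A)$, but your more explicit bookkeeping is fine.
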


\begin{proof} Observe first that the integral on the right-hand side in the lemma is norm convergent.

Compute as a form on $\cD(A)$ using that the
integral representation for $g_{1m}(A)$ is strongly convergent on $\cD(A)$
\[
\i[H_n,g_{1m}(A)] =  \int_{V_m^>} \eta \i [H_n,R_m(\eta)]\,\d\lambda(\eta).
\]
Recalling \eqref{Hn} we arrive at
\begin{align*}
\i[H_n,g_{1m}(A)] & = \i n \int_{V_m^>} \eta \i[I_n(H),R_m(\eta)]\,\d\lambda(\eta)\\
&  =  \int_{V_m^>} \eta I_n(H) \i[H,R_m(\eta)] I_n(H)\,\d\lambda(\eta).
\end{align*}
Finally we employ Condition~\ref{cond:condition}~\ref{item:2i0ue}) to conclude that for each $n$, the following holds as a form identity
on $\cD(A)\cap\cD(N^{1/2})$
\[
\i[H_n,g_{1m}(A)] = - \int_{V_m^>} \eta I_n(H) R_m(\eta) H' R_m(\eta) I_n(H)\,\d\lambda(\eta).
\]
The integral on the right-hand side of the above identity is absolutely convergent in $\cB(N^{-1/2}\cH; \linebreak[1] N^{1/2}\cH)$.
By density of $\cD(A)\cap \cD(N^{1/2})$ in  $\cD(N^{1/2})$, see Remark~\ref{RemarkTo1}~2), the identity therefore extends to a form identity on $\cD(N^{1/2})$.
The lemma now follows from \eqref{strongI1}.
\end{proof}

\begin{lemma}\label{symm1} Let $1\leq k\leq k_0$.
\begin{enumerate}[\quad\normalfont (1)]
\item
There exist right-errors $E_m^{\romr}$,
$\hat{E}_m^{\romr}$ such that, as forms on $\cD(N^{1/2})\times\cD_{k-1}$,
\begin{align*}
&\lim_{n\to\infty} \i [H_n, g_{1m}(A)^k]  = E_m^{\romr}\\
&\lim_{n\to\infty} \i [H_n, \hat{g}_{1m}(A)g_{1m}(A)^{k-1}]   =
\hat{E}_m^{\romr}.
\end{align*}
\item There exist a left-error $E^\roml_m$ and a right-error $E_m^{\romr}$ such that, as forms on
$\cD_{k-1}\times\cD(N^{1/2})$ and
$\cD(N^{1/2})\times\cD_{k-1}$ respectively,
\begin{align*}
& \lim_{j\to\infty}\lim_{n\to\infty} \i[H_n,g_{1m}(A)^k]A_j
 =    k g_{1m}(A)^{k-1} A g_{1m}'(A) H' + E_m^\roml  \\
 & \lim_{j\to\infty}\lim_{n\to\infty} A_j\i[H_n,g_{1m}(A)^k]
 =    k H' A g_{1m}'(A)g_{1m}(A)^{k-1} + E_m^\romr.
\end{align*}
\end{enumerate}
\end{lemma}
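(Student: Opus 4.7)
The plan is to reduce the computation to Lemma~\ref{Lemma-RemoveHcutoff} by Leibniz expansion and to identify the resulting expressions as right- or left-errors via the structural template in Remark~\ref{Rem-RightError}. For part (1), I start from
\begin{equation*}
\i[H_n, g_{1m}(A)^k] = \sum_{\ell=0}^{k-1} g_{1m}(A)^\ell\, \i[H_n, g_{1m}(A)]\, g_{1m}(A)^{k-1-\ell}.
\end{equation*}
Since each $g_{1m}(A)^j$ is a bounded operator preserving $\cD(N^{1/2})$ (Lemma~\ref{dominv}), Lemma~\ref{Lemma-RemoveHcutoff} lets me pass to the $n\to\infty$ limit inside each summand, yielding
\begin{equation*}
E_m^\romr = -\sum_{\ell=0}^{k-1}\int_{V_m^>}\eta\, g_{1m}(A)^\ell R_m(\eta) H' R_m(\eta) g_{1m}(A)^{k-1-\ell}\, \d\lambda(\eta)
\end{equation*}
as a form on $\cD(N^{1/2})\times\cD(N^{1/2})$, hence on $\cD(N^{1/2})\times\cD_{k-1}$. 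The second formula involving $\hat g_{1m}(A)$ is handled identically since $\hat g_{1m}$ enjoys the same boundedness and domain properties.

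To verify the right-error property, I would compose with $(A-\i\sigma)^{-(k-1)}$ on the right and exploit commutativity of functions of $A$ to rewrite
\begin{equation*}
g_{1m}(A)^{k-1-\ell}(A-\i\sigma)^{-(k-1)} = [g_{1m}(A)(A-\i\sigma)^{-1}]^{k-1-\ell}(A-\i\sigma)^{-\ell},
\end{equation*}
then move the factor $(A-\i\sigma)^{-\ell}$ leftward through the integral. Each commutation of $(A-\i\sigma)^{-1}$ past $H'$ produces an iterated commutator $\i^p\ad_A^p(H')$ flanked by resolvents; since $\ell\leq k-1\leq k_0-1$, Condition~\ref{cond:conditioni} supplies the requisite $N^{1/2}$-boundedness of each such $\ad_A^p(H')$. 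After rearrangement, each term takes the structural form of Remark~\ref{Rem-RightError}, namely $N^{1/2} B_m N^{1/2} g_{1m}(A)^p (A-\i\sigma)^{-q}$ with $p-q\leq k-1$ and $\sup_m\|B_m\|<\infty$. Uniform integrability in $\eta$ against $\d\lambda$ follows from \eqref{eq:10} and the decay of the almost analytic extension.

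For part (2), I split the Leibniz expansion as
\begin{equation*}
\i[H_n, g_{1m}(A)^k] = k\, g_{1m}(A)^{k-1}\,\i[H_n, g_{1m}(A)] + \sum_{\ell=0}^{k-2} g_{1m}(A)^\ell \bigl[\i[H_n, g_{1m}(A)],\, g_{1m}(A)^{k-1-\ell}\bigr],
\end{equation*}
multiply by $A_j$ on the right, and take the limits $n\to\infty$ followed by $j\to\infty$. The double-commutator sum multiplied by $A_j$ is a left-error by the mechanism of part (1): each inner commutator expands to terms in $\ad_A^p(H')$ with $p\leq k-1$, and the $A_j$ factor is absorbed into a factor $(A-\i\sigma)^{k-1}$ acting on $\psi$. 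The leading term yields $k g_{1m}(A)^{k-1} I_m A_j \to k g_{1m}(A)^{k-1} I_m A$ on $\cD_{k-1}$, with $I_m = -\int_{V_m^>}\eta R_m(\eta) H' R_m(\eta)\,\d\lambda(\eta)$. Using the resolvent identity $R_m(\eta) A = m\one + m\eta R_m(\eta)$ together with the differential representation $g_{1m}'(A) = -\int_{V_m^>}\eta R_m(\eta)^2\,\d\lambda(\eta)$ and $[H', A] = -\i H''$, I extract $I_m A = A g_{1m}'(A) H' + \{\text{terms in }\ad_A^p(H'),\ p\geq 1\}$. Since $A$ commutes with $g_{1m}(A)^{k-1}$, this produces the claimed leading term $k g_{1m}(A)^{k-1} A g_{1m}'(A) H'$; the residues collect into $E_m^\roml$ by the same template argument. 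The second identity, with $A_j$ on the left, follows symmetrically by taking adjoints.

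The main obstacle will be the commutator bookkeeping, uniformly in $m$. At every step of the rearrangement, no more than $k_0$ commutators with $A$ may be invoked, so the restriction $k\leq k_0$ is essential: Condition~\ref{cond:conditioni} supplies $N^{1/2}$-boundedness of $\i^p\ad_A^p(H')$ only up to $p=k_0$. The $m$-uniformity hinges on the template of Remark~\ref{Rem-RightError} combined with the pointwise bound $|g_{1m}(t)/(t-\i\sigma)|\leq C$ uniformly in $m$, which ensures that every factor of the form $[g_{1m}(A)(A-\i\sigma)^{-1}]^p$ is $\cD(N^{1/2})$-bounded independently of $m$.
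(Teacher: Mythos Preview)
Your approach is sound and uses the same commutator budget as the paper, but the organization is genuinely different. The paper opens not with the plain Leibniz rule but with the iterated-adjoint expansion
\[
\i[H_n,g_{1m}(A)^k]=\sum_{\ell=1}^k(-1)^{\ell+1}\binom{k}{\ell}\,\i\,\ad_{g_{1m}(A)}^\ell(H_n)\,g_{1m}(A)^{k-\ell},
\]
which places every factor of $g_{1m}(A)$ on the right from the outset. After the limit $n\to\infty$ (Lemma~\ref{Lemma-RemoveHcutoff}), each $\i\,\ad_{g_{1m}(A)}^\ell(H_n)$ becomes an $\ell$-fold integral over $(V_m^>)^\ell$ with $\ad_A^{\ell-1}(H')$ sandwiched between $R_m$'s, so the right-error template of Remark~\ref{Rem-RightError} applies immediately with no secondary rearrangement. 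For part~(2) the paper commutes $H'$ through one $R_m(\eta)$ to isolate $H'g_{1m}'(A)$, and then obtains $A_j\i[H_n,g^k]$ from $\i[H_n,g^k]A_j$ by computing their commutator explicitly term by term; this last step is where $\ad_A^k(H')$ and hence \eqref{eq:5oim} enters. Your Leibniz expansion instead leaves $g_{1m}(A)$-blocks on both sides and compensates by shuttling powers of $(A-\i\sigma)^{-1}$ across $H'$ via $H'R^\ell=R^\ell\sum_{p\le\ell}\binom{\ell}{p}\i^p\bigl(\i^p\ad_A^p(H')\bigr)R^p$; this is legitimate and invokes the same Condition~\ref{cond:conditioni}, but it costs an extra layer of bookkeeping that the one-sided expansion avoids. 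Your adjoint shortcut for the second identity in~(2) is fine. One point not to leave implicit: in part~(2), the $A_j$ sitting to the right of your double-commutator block must itself be commuted across the inner $\ad_A^p(H')$ to reach the $\psi$-side, raising the commutator order by one; this is exactly where \eqref{eq:5oim} is needed when $k=k_0$, matching the paper's count.
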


\begin{proof} (1) also holds if we take the limit in the sense of forms on
$\cD_{k-1}\times\cD(N^{1/2})$ and replace the right-error by a left-error.
We will however not need that statement. One does however need its proof for the
left-error part of (2).

In the proof we will only work with right-errors. The other case is similar.
We begin with (1) and prove only the first statement leaving the second
to the reader.

 We first compute as a form on $\cD(N^{1/2})$.
\begin{eqnarray}\label{symm1_1}
\nonumber \i [H_n, g_{1m}(A)^k] & = &  k\i[H_n,g_{1m}(A)]g_{1m}(A)^{k-1}\\
& &  +\sum_{\ell=2}^k  (-1)^{\ell+1}\bin{k}{\ell}
\i \, \ad_{g_{1m}(A)}^\ell(H_n) g_{1m}(A)^{k-\ell}.
\end{eqnarray}

We now analyze the large $n$ limit. The first term on the right-hand side of \eqref{symm1_1}
can be dealt with using Lemma~\ref{Lemma-RemoveHcutoff} directly, observing that by Lemma~\ref{dominv} $g_{1m}(A)$ preserves the domain of $N^{1/2}$.
As for the terms involving higher order commutators,
we again use Lemma~\ref{Lemma-RemoveHcutoff} to compute
\[
\lim_{n\to\infty} \i \,\ad_{g_{1m}(A)}^\ell(H_n) = - \int_{V_m^>} \eta R_m(\eta)\ad_{g_{1m}(A)}^{\ell-1}(H')R_m(\eta)\, \d\lambda(\eta)
\]
in the sense of forms on $\cD(N^{1/2})$.

We can now employ Condition~\ref{cond:conditioni} to compute as forms on $\cD(N^{1/2})$
\begin{equation}\label{symm1_4}
 \lim_{n\to\infty} \i\, \ad_{g_{1m}(A)}^\ell(H_n) = (-1)^\ell N^\frac12 B^{(\ell)}_m  N^\frac12 ,
\end{equation}
where $B^{(\ell)}_m$ is a family of bounded operators with $\sup_{m} \|B_m^{(\ell)}\|<\infty$, for all
$\ell$. They are given by
\begin{eqnarray}\label{symm1_3}
\nonumber B^{(\ell)}_m  & = &
 \int_{(V_m^>)^\ell} \eta_1\cdots\eta_\ell N^{-\frac12} R_m(\eta_1)\cdots R_m(\eta_\ell)\ad_{A}^{\ell-1}(H')\\
& & \ \ \ \times
R_m(\eta_\ell)\cdots R_m(\eta_1)N^{-\frac12} \d\lambda(\eta_1)\cdots \d\lambda(\eta_\ell).
\end{eqnarray}
From \eqref{symm1_1}, \eqref{symm1_4} and Lemma~\ref{Lemma-RemoveHcutoff} we thus obtain
\begin{align}\label{symm1-Step1}
\nonumber \lim_{n\to\infty}\i [H_n, g_{1m}(A)^k] & = -k\int_{V_m^>}
\eta R_m(\eta)H' R_m(\eta)\, \d\lambda(\eta)g_{1m}(A)^{k-1}\\
& \ \ \ -  \sum_{\ell=2}^k \bin{k}{\ell}
 N^\frac12 B^{(\ell)}_m  N^\frac12  g_{1m}(A)^{k-\ell}.
\end{align}
Combining this computation with Remark~\ref{Rem-RightError} yields (1).

We now turn to part (2) of the lemma. In view of \eqref{symm1-Step1} we begin by computing
as a form on $\cD(N^{1/2})$,
using  Condition~\ref{cond:condition}~(\ref{item:1})
\begin{align}\label{symm1_2}
\nonumber & -k\int_{V_m^>} \eta R_m(\eta)H' R_m(\eta)\d\lambda(\eta) \\
& =   kH' g_{1m}'(A)  - \frac{\i k}{m}  \int_{V_m^>}\eta R_m(\eta) H'' R_m(\eta)^2\d\lambda(\eta).
\end{align}
We remark that the identity $\i [H',R_m(\eta)] = - m^{-1} R_m(\eta) H'' R_m(\eta)$
holds a priori as a form identity on $\cD(N)$.
It extends by continuity to a form identity on $\cD(N^{1/2})$,
which is what is used in the above computation.
Note that the integral on the right-hand side is convergent as a form on $\cD(N^{1/2})$.

From \eqref{symm1-Step1}, \eqref{symm1_2} and Remark~\ref{Rem-RightError} we find that
\[
\lim_{n\to\infty} \i [H_n, g_{1m}(A)^k] A_j = \big( kH' g_{1m}'(A)A g_{1m}(A)^{k-1} + E_m^\romr \big)I_j(A)
\]
and hence by \eqref{strongI3} we conclude the following identity as forms on $\cD(N^{1/2})$
\[
\lim_{j\to\infty}\lim_{n\to\infty} \i [H_n, g_{1m}(A)^k] A_j =  kH' g_{1m}'(A)A g_{1m}(A)^{k-1} + E_m^\romr.
\]
To prove the second statement in (2) it remains to show that the commutator between  $A_j$
and $\i[H_n,g_{1m}(A)^k]$ converges to a right-error.

From (\ref{symm1_2}) we get,
as a form on $\cD(N^{1/2})$,
\begin{align*}
&  \Big[-k\int_{V_m^>} \eta R_m(\eta)H' R_m(\eta)\d\lambda(\eta),A_j\Big]\\
 & = kI_j(A)H''I_j(A) g_{1m}'(A) -\frac{\i k}{m}\int_{V_m^>}
\eta R_m(\eta)(H''A_j - A_jH'' ) R_m(\eta)^2\d\lambda(\eta).
\end{align*}
We can now take the limit $j\to\infty$ and obtain
\begin{equation}\label{symm1_5}
 \lim_{j\to\infty}\Big[-k \int_{V_m^>} \eta R_m(\eta)H' R_m(\eta)\d\lambda(\eta),A_j\Big] =  N^\frac12 B^{(1)}_m N^\frac12,
\end{equation}
where $B^{(1)}_m$, is a family of bounded operators with $\sup_m\|B^{(1)}_m\|<\infty$. It is given by
\[
B^{(1)}_m  = k N^{-\frac12}\Big\{H''g_{1m}'(A) -\i \int_{V_m^>}
\eta \big(R_m(\eta)H'' -  H''R_m(\eta)\big)R_m(\eta) \d\lambda(\eta)\Big\}N^{-\frac12}.
\]
Here we used \eqref{strongI3}, that
$A_j R_m(\eta) = R_m(\eta)A_j = m(\one +\eta R_m(\eta))I_j(A)$, as well as Lebesgue's theorem on
dominated convergence.

For the commutator between  $A_j$ and the second term on the right-hand side of (\ref{symm1-Step1})
we compute
\[
[N^\frac12 B^{(\ell)}_m N^\frac12,A_j] = I_j(A)N^\frac12\tilde{B}_m^{(\ell)} N^\frac12 I_j(A),
\]
where $\tilde{B}_m^{(\ell)}$ are bounded operators with $\sup_{m\in\NN}\|B_m^{(\ell)}\|<\infty$, for all $\ell$.
They are given by
\begin{eqnarray*}
\lefteqn{ \tilde{B}_m^{(\ell)} = \int_{(V_m^>)^\ell}  N^{-\frac12} R_m(\eta_1)\cdots R_m(\eta_\ell)
\ad_{A}^{\ell}(H')}\\
& & \ \ \ \times
R_m(\eta_\ell)\cdots R_m(\eta_1) N^{-\frac12} \d\lambda(\eta_1)\cdots \d\lambda(\eta_\ell) .
\end{eqnarray*}
We can now take the limit $j\to\infty$ using \eqref{strongI3}, and the resulting expression together with \eqref{symm1_5},
the formula \eqref{symm1-Step1} and Remark~\ref{Rem-RightError} yields that
\[
\lim_{j\to\infty}\lim_{n\to\infty} [\i [H_n, g_{1m}(A)^k], A_j]  = E_m^\romr.
\]
\end{proof}

\begin{lemma}\label{symm2} There exists a $k$-remainder $R_m$ such that
\begin{eqnarray*}
\lefteqn{\lim_{j\to\infty}\lim_{n\to\infty}\i [ H_n, g_{1m}(A)^k A_j g_{1m}(A)^k]}\\
&= & g_{1m}(A)^kH' g_{1m}(A)^k + 2k\Re\{g_{1m}(A)^{k-1}A g_{1m}'(A) H' g_{1m}(A)^k\}
+R_m,
\end{eqnarray*}
in the sense of forms on $\cD_{k-1}$.
\end{lemma}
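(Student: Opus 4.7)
The approach is a term-by-term Leibniz expansion followed by identification of the pieces. Since $H_n$, $g_{1m}(A)$, and $A_j$ are all bounded operators, the Leibniz rule gives the operator identity
\[
\i[H_n, g_{1m}(A)^k A_j g_{1m}(A)^k] = T_1(n,j) + T_2(n,j) + T_3(n,j),
\]
where $T_1(n,j) = \i[H_n, g_{1m}(A)^k]\, A_j\, g_{1m}(A)^k$, $T_2(n,j) = g_{1m}(A)^k\, \i[H_n, A_j]\, g_{1m}(A)^k$, and $T_3(n,j) = g_{1m}(A)^k\, A_j\, \i[H_n, g_{1m}(A)^k]$. The plan is to take the iterated limits of each $T_i$ as a form on $\cD_{k-1}$ separately.

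For $T_1$: since $\psi\in\cD_{k-1}\subset\cD(N^{1/2})$ and $g_{1m}(A)^k$ preserves $\cD(N^{1/2})$ by Lemma~\ref{dominv}, the pair $(\psi,g_{1m}(A)^k\psi)$ lies in $\cD_{k-1}\times\cD(N^{1/2})$, so the first identity of Lemma~\ref{symm1}(2) applies and yields the limit $k g_{1m}(A)^{k-1} A g_{1m}'(A) H' g_{1m}(A)^k + E_m^\roml g_{1m}(A)^k$. For $T_3$, I move $g_{1m}(A)^k$ onto the test vector using self-adjointness and apply the second identity of Lemma~\ref{symm1}(2) to the pair $(g_{1m}(A)^k\psi,\psi)\in \cD(N^{1/2})\times\cD_{k-1}$, producing $k g_{1m}(A)^k H' A g_{1m}'(A) g_{1m}(A)^{k-1} + g_{1m}(A)^k E_m^\romr$. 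Since $A$ commutes with $g_{1m}'(A)$ on $\cD(A)$, these two main parts are formal adjoints, so their sum evaluated at $(\psi,\psi)$ equals $2k\Re\{g_{1m}(A)^{k-1} A g_{1m}'(A) H' g_{1m}(A)^k\}$, matching the stated cross term.

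For $T_2$: a direct computation from \eqref{eq:1commu} using $R(\i j)=\i I_j(A)/j$ yields the form identity $\i[H, A_j] = I_j(A) H' I_j(A)$ on $\cD(H)\cap \cD(N^{1/2})$, and then the resolvent algebra (with $H_n=-\i n + n^2(H-\i n)^{-1}$) promotes this to
\[
\i[H_n, A_j] = I_n(H)\, I_j(A)\, H'\, I_j(A)\, I_n(H)
\]
as a form on $\cD(N^{1/2})$. Sandwiching by $g_{1m}(A)^k$ on both sides, passing $n\to\infty$ using \eqref{strongI1} and then $j\to\infty$ using \eqref{strongI3}, and using that $I_n(H)$, $I_j(A)$ and $g_{1m}(A)^k$ all preserve $\cD(N^{1/2})$ while $H'$ extends to a continuous form there, identifies $\lim_{j,n} T_2 = g_{1m}(A)^k H' g_{1m}(A)^k$, the first contribution in the claim.

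Finally, I set
\[
\langle\psi,R_m\psi\rangle := \langle\psi,E_m^\roml g_{1m}(A)^k\psi\rangle + \langle g_{1m}(A)^k\psi, E_m^\romr\psi\rangle
\]
and verify it is a $k$-remainder. By the left-error bound and the identity $\|N^{1/2}g_{1m}(A)^k\psi\| = \|N^{1/2}\psi_m\|$,
\[
|\langle\psi,E_m^\roml g_{1m}(A)^k\psi\rangle| \leq C \|N^{1/2}(A-\i\sigma)^{k-1}\psi\|\,\|N^{1/2}\psi_m\|,
\]
and an AM-GM estimate produces the bound required by Definition~\ref{Def-k-remainder}; the other summand is handled symmetrically using the right-error bound. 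The main obstacle is the middle term $T_2$: the algebraic identity for $\i[H_n,A_j]$ and the successive strong limits must be handled with care because $H'$ is not bounded on $\cH$, only as a form on $\cD(N^{1/2})$.
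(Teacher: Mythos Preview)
Your proof is correct and follows essentially the same approach as the paper's own proof: the Leibniz splitting into three terms, application of Lemma~\ref{symm1}(2) to the outer pieces $T_1$ and $T_3$, direct computation of the middle piece $T_2$ via $\lim_{n\to\infty}\i[H_n,A_j]=I_j(A)H'I_j(A)$ and then $j\to\infty$, and identification of $R_m = E_m^\roml g_{1m}(A)^k + g_{1m}(A)^k E_m^\romr$ as the $k$-remainder. Your treatment is in fact slightly more explicit than the paper's in two places---the intermediate identity $\i[H_n,A_j]=I_n(H)I_j(A)H'I_j(A)I_n(H)$ and the AM--GM verification of the $k$-remainder bound---but these are elaborations of steps the paper leaves to the reader.
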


\begin{proof}
We compute as a form on $\cD_{k-1}$
\begin{eqnarray*}
\lefteqn{\i [ H_n, g_{1m}(A)^k A_j g_{1m}(A)^k]  =  \i[H_n,g_{1m}(A)^k]A_j g_{1m}(A)^k}\\
& & + g_{1m}(A)^k \i[H_n,A_j] g_{1m}(A)^k +  g_{1m}(A)^k A_j\i[ H_n,g_{1m}(A)^k].
\end{eqnarray*}
Using that $\lim_{n\to\infty}\i[H_n,A_j] = I_j(A)H'I_j(A)$,
$\lim_{j\to\infty} I_j(A) H' I_j(A)= H'$ (in the sense of forms on $\cD(N^{1/2})$),
and Lemma~\ref{symm1}~(2), we conclude the result, with
\[
R_m = E_m^\roml g_{1m}(A)^k + g_{1m}(A)^k E_m^\romr.
\]
Note that $R_{m}$ is a $k$-remainder, in the sense of Definition~\ref{Def-k-remainder}.
\end{proof}

We now  symmetrize the form
$g_{1m}(A)^{k-1} Ag_{1m}'(A) H' g_{1m}(A)^k$, defined on $\cD(N^{1/2})$.

\begin{lemma}\label{lastsymm} There exists a $k$-remainder $R_m$ such that
\begin{align*}
& \Re\{g_{1m}(A)^{k-1}A g_{1m}'(A) H' g_{1m}(A)^k\}\\
& \ \ \  = g_{1m}(A)^{k}p_{1m}(A) H' p_{1m}(A) g_{1m}(A)^{k} + R_m,
\end{align*}
in the sense of forms on $\cD_{k-1}$.
\end{lemma}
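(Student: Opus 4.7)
The starting point is the pointwise identity $tg_m'(t) = p_m(t)^2 g_m(t)$, which comes directly from the construction $\hat g = pg$ with $\hat g^2 = tg'g$, and which by functional calculus will give $Ag_m'(A) = p_m(A)^2 g_m(A)$. I will then write $g_m = g_{1m}+g_{2m}$ and $p_m = p_{1m}+p_{2m}$, expand the product, and isolate the $1$-components to obtain
\[
Ag_{1m}'(A) = p_{1m}(A)^2 g_{1m}(A) + T_m,
\]
where $T_m$ is the sum of all cross terms, each containing at least one factor among $g_{2m}(A)$, $p_{2m}(A)$ and $Ag_{2m}'(A)$. By \eqref{eq:21} and the analogous estimate for $p_{2m}$, every such factor is a bounded function of $A$ whose operator norm decays in $m$ faster than any inverse power.

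The next step is to substitute this representation, use the commutativity of $p_{1m}(A)$ and $g_{1m}(A)$, and write $p_{1m}(A)^2 H' = p_{1m}(A) H' p_{1m}(A) + p_{1m}(A)[p_{1m}(A), H']$, giving
\begin{align*}
g_{1m}(A)^{k-1}Ag_{1m}'(A)H' g_{1m}(A)^k & = g_{1m}(A)^k p_{1m}(A) H' p_{1m}(A) g_{1m}(A)^k \\
& \quad + g_{1m}(A)^k p_{1m}(A)[p_{1m}(A), H'] g_{1m}(A)^k \\
& \quad + g_{1m}(A)^{k-1} T_m H' g_{1m}(A)^k.
\end{align*}
The first term on the right-hand side is self-adjoint (since $g_{1m}$ and $p_{1m}$ are real-valued functions of the self-adjoint $A$ and $H'$ is symmetric), so it coincides with its real part and is precisely the target operator. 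It will then remain to show that the real parts of the two remaining terms together form a $k$-remainder.

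For the commutator contribution, a direct computation gives $2\Re\{p_{1m}(A)[p_{1m}(A), H']\} = [p_{1m}(A), [p_{1m}(A), H']]$, so taking real parts promotes the single commutator to a double one. Using the integral representation \eqref{eq:rep1o} together with Condition~\ref{cond:condition}~(\ref{item:1}) to evaluate $\i[R_m(\eta), H']$, I will write $[p_{1m}(A), H'] = -\i m^{-1}\int R_m(\eta) H'' R_m(\eta)\d\lambda_p(\eta)$; combining with the compact support of $\tilde p$, the resolvent bound \eqref{eq:10} and $H'' \in \cB(N^{-1/2}\cH, N^{1/2}\cH)$ then yields $\|N^{-1/2}[p_{1m}(A), H']N^{-1/2}\| \leq C/m$. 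The same bound extends to the double commutator by boundedness of $p_{1m}(A)$. The $T_m$ term will be treated analogously using the fast $m$-decay of $\|T_m\|_\cB$, so that both contributions satisfy a bound of the form $|\langle \psi, R_m\psi\rangle| \leq \delta_m\|N^{1/2}\psi_m\|^2$ with $\delta_m \to 0$, after absorbing uniformly $N^{1/2}$-bounded factors as in Remark~\ref{Rem-RightError}.

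Finally, to convert such a bound into a $k$-remainder in the sense of Definition~\ref{Def-k-remainder}, I will for each $\epsilon > 0$ pick $M_\epsilon$ with $\delta_m \leq \epsilon$ for $m \geq M_\epsilon$, which handles the large-$m$ regime directly; for the finitely many $m < M_\epsilon$, Lemma~\ref{dominv} provides an $m$-dependent constant $C_m$ with $\|N^{1/2}\psi_m\| \leq C_m \|N^{1/2}(A-\i\sigma)^{k-1}\psi\|$, obtained from the factorization $g_{1m}(A)^k = \bigl(g_{1m}(A)^k(A-\i\sigma)^{-(k-1)}\bigr)(A-\i\sigma)^{k-1}$ and boundedness of $N^{1/2}(A-\i\sigma)^{-(k-1)}N^{-1/2}$; taking $C_\epsilon = \max_{m < M_\epsilon}\delta_m C_m^2$ then absorbs these into the second term of the remainder bound. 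The main obstacle I expect is bookkeeping the various $N^{1/2}$-sandwiched bounds when expanding $T_m$ and the double commutator into bounded functions of $A$; this is the same interpolation dance already used in Remark~\ref{Rem-RightError}, made possible by the uniform $m$-decay of $g_{2m}$, $p_{2m}$ and their derivatives recorded in \eqref{eq:21}.
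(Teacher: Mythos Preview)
Your approach is correct and reaches the same conclusion, but the mechanism differs from the paper's in two places.

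For the commutator contribution $g_{1m}(A)^k p_{1m}(A)[p_{1m}(A),H']g_{1m}(A)^k$, you pass to the double commutator via $2\Re\{B[B,H']\}=[B,[B,H']]$ and use the $O(1/m)$ decay of $N^{-1/2}[p_{1m}(A),H']N^{-1/2}$, followed by a two-regime argument (large $m$ gives $\epsilon\|N^{1/2}\psi_m\|^2$; finitely many small $m$ are absorbed into $C_\epsilon$). The paper instead shows that $N^{-1/2}(A+\i\sigma)[p_{1m}(A),H']N^{-1/2}$ is \emph{uniformly} bounded in $m$, by writing $\frac{A+\i\sigma}{m}R_m(\eta)$ inside the integral; the extra $(A+\i\sigma)^{-1}$ then combines with one factor of $g_{1m}(A)$ to produce a right-error directly via Remark~\ref{Rem-RightError}, with no finite-$m$ splitting needed. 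Your route is slightly longer but perfectly valid.

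For the $T_m$ term, your framing as a bound $\delta_m\|N^{1/2}\psi_m\|^2$ with $\delta_m\to 0$ is not quite the right shape: the left factor is $g_{1m}(A)^{k-1}$, not $g_{1m}(A)^k$, and while $\|T_m\|_\infty$ does decay, the conjugated operator $N^{1/2}T_m(A)N^{-1/2}$ need not (its commutator piece is only uniformly bounded by Step~I of the paper's proof). The correct bound is $C\|N^{1/2}(A-\i\sigma)^{k-1}\psi\|\cdot\|N^{1/2}\psi_m\|$, which is already a $k$-remainder by Young's inequality --- no decay or finite-$m$ argument is required here. This is exactly the paper's Step~II. Your invocation of Remark~\ref{Rem-RightError} in this paragraph suggests you see this mechanism, so the issue is cosmetic rather than substantive.
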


\begin{proof} {\bf Step I:}
From the proof of Lemma~\ref{dominv} it follows that
\begin{align}\label{stepIIbnd}
& [N,Ag_{1m}'(A)]N^{-1}, \ \ \ \ [N,p_{1m}^2(A)g_{1m}(A)]N^{-1},\\
& \label{stepIIIbnd}
\mathrm{and} \ \  N^{-1}p_{1m}(A)N
\end{align}
extend as forms from $\cD(N)$ to bounded operators with norm bounded uniformly in $m$.

\noindent{\bf Step II:} Boundedness of the forms in (\ref{stepIIbnd}), together with the observation that $\|tg_{1m}' - p_{1m}^2g_{1m}\|_\infty$
is bounded uniformly in $m$, implies after an interpolation argument that
\[
N^\frac12\big(Ag_{1m}'(A)-p_{1m}(A)^2 g_{1m}(A)\big)N^{-\frac12}
\]
is bounded uniformly in $m$. Hence
\begin{eqnarray*}
\lefteqn{\Re\{g_{1m}(A)^{k-1}A g_{1m}'(A)H'g_{1m}(A)^{k}\}}\\
& = & g_{1m}(A)^k \Re\{p_{1m}(A)^2 H'\}g_{1m}(A)^k + R_m^{(1)},
\end{eqnarray*}
where $R_m^{(1)}$ is a $k$-remainder.

\noindent {\bf Step III:} We compute as a form on $\cD(N^{1/2})$
\[
(A+\i\sigma)[p_{1m}(A),H'] = -\i \int_{V_m^>}\frac{A+\i\sigma}{m}R_m(\eta) H'' R_m(\eta)\,\d\lambda_p(\eta),
\]
which is bounded uniformly in $m$ as a form on $\cD(N^{1/2})$.
This together with \eqref{stepIIIbnd} and a interpolation argument as in step II, shows that
\[
g_{1m}(A)^k\Re\{p_{1m}(A)^2 H'\}g_{1m}(A)^k = g_{1m}(A)^k p_{1m}(A)H' p_{1m}(A) g_{1m}(A)^k + R_m^{(2)},
\]
where $R_m^{(2)}$ is a $k$-remainder. Here we used again Remark~\ref{Rem-RightError}. This proves the lemma with $R_m= R_m^{(1)}+R_m^{(2)}$.
\end{proof}

\vspace{5mm}

\begin{proof}[Proof of Proposition~\ref{input1}]
Combine Lemmas \ref{symm2} and \ref{lastsymm}.
\end{proof}

\vspace{5mm}

\begin{proof}[Proof of Proposition~\ref{input2}]
We only prove the first estimate. The second is verified the same way.
We can assume that $\lambda=0$.

We estimate using Condition~\ref{cond:virial}
\begin{eqnarray}\label{virhelp1}
\nonumber\lefteqn{\|N^\frac12 I_n(H)\psi_m\|^2 \leq
C_1\langle I_n(H)\psi_m,H I_n(H)\psi_m\rangle}\\
& & + C_2 \langle I_n(H)\psi_m,H' I_n(H)\psi_m\rangle
+ C_3\|I_n(H)\psi_m\|^2.
\end{eqnarray}
Note that $HI_n(H)\psi_m  = H_n\psi_m = [H_n,g_{1m}(A)^k]\psi$.

By Lemma~\ref{symm1}~(1) we find that for any $\varphi\in\cD(N^{1/2})$
we have
\begin{equation}\label{virhelp}
\lim_{n\to\infty} \langle\varphi,H_n\psi_m\rangle = \langle \varphi,E_m^\romr\psi\rangle.
\end{equation}
By this observation  and the uniform boundedness principle there exists $C = C(m)$ such that
$|\langle\varphi, H_n\psi_m\rangle| \leq C \|N^{1/2}\varphi\|$ uniformly in $n$,
for $\varphi\in\cD(N^{1/2})$. Applying this to $\varphi= (I_n(H)-I)\psi_m$, together
with (\ref{virhelp}), now applied with $\varphi=\psi_m$, we get
\begin{equation}\label{remH}
\lim_{n\to\infty} \langle I_n(H)\psi_m,H I_n(H)\psi_m\rangle =
\langle\psi_m,E_m^\romr\psi\rangle.
\end{equation}
Here $E_m^\romr$ is a right-error.

We can now take the limit $n\to\infty$ in (\ref{virhelp1}), and the result follows
from Definition~\ref{lrerror}.
\end{proof}

\vspace{5mm}

\begin{proof}[Proof of Proposition~\ref{input3}]
As above we assume $\lambda=0$ and prove only the first bound.

By Remark~\ref{RemarkTo1}~4) it suffices to estimate using the bound \eqref{Eq-WeakMourre}
instead of the one in Condition~\ref{cond:mourre}. We get
\begin{align}\label{input3main}
\nonumber & \langle I_n(H)\psi_m,H' I_n(H)\psi_m\rangle   \geq  c_0 \|I_n(H)\psi_m\|^2\\
  & \ \ + \Re\langle I_n(H)\psi_m,BH I_n(H)\psi_m\rangle - \langle I_n(H)\psi_m,K_0 I_n(H)\psi_m\rangle.
\end{align}
%

Arguing as in the part of the proof of  Proposition~\ref{input2} pertaining to (\ref{remH}),
we find that
\[
\lim_{n\to\infty} \Re\langle I_n(H)\psi_m, B H I_n(H)\psi_m\rangle = \Re\langle \psi_m,E_m^\romr\psi\rangle.
\]
where $E_m^{\romr}$ is a right-error. Here \eqref{virhelp} was used (twice) with $\varphi$ replaced by $B\varphi$
and $B^*\varphi$,
where we used the assumption on $B$ in Remark~\ref{RemarkTo1}~4) to argue that $B\varphi,B^* \varphi\in\cD(N^{1/2})$ in \eqref{virhelp}.

Inserting this limit into \eqref{input3main} yields
\begin{eqnarray*}
\langle\psi_m, H'\psi_m\rangle & = &
\lim_{n\to\infty} \langle I_n(H)\psi_m,H' I_n(H)\psi_m\\
& \geq & c_0\|\psi_m\|^2 - \langle\psi_m,K_0 \psi_m\rangle + \Re\langle \psi_m,E_m^\romr\psi\rangle,
\end{eqnarray*}
with $E_m^\romr$ being a right-error.
Using Definition~\ref{lrerror} and Proposition~\ref{input2} we conclude the first estimate.
\end{proof}



\subsection{Proof of Theorem \ref{thm:mainresultk=1}}

We shall show Theorem~\ref{thm:mainresultk=1},
which is an extension of Corollary \ref{cor:assumpt-stat-regul} under the
minimal condition $k_0=1$.

\begin{proof}[Proof of Theorem~\ref{thm:mainresultk=1}:]
 We can without loss of generality take $\lambda=0$.
 Due to  Corollary~\ref{cor:assumpt-stat-regul} only the first
  statement needs elaboration. The idea of the proof is to apply a
  virial argument for the commutator $\i [H,A]$ and the state
  $N^{1/2}\psi$. We divide the proof into three steps. Let $N_n^{(1/2)}=N^{1/2}I_{\i n}(N)$.

\noindent {\bf Step I:} Due to Lemma~\ref{HN0} we have
 $N^{(1/2)}_n\psi\in\cD(H)$. We shall show that
  \begin{equation}  \label{eq:4}
    \sup _{n\in \N}\| H N_{n}^{(1/2)}\psi\|<\infty.
  \end{equation}

We can use the
representation formula \eqref{Eq-Thomas} with $\alpha= 1/2$ and commute $H$ through $N^{1/2}$,
cf. (\ref{eq:30ny}). Whence it suffices to bound
\begin{equation*}
  \int_0^\infty t^{\frac12} (N+t)^{-1}[H,N]^0(N+t)^{-1}I_{\i n}(N)N^{-\frac12}\d t
\end{equation*}
independently of $n$. (Note that the contribution from
commuting through the second factor $I_{\i n}(N)$ indeed is bounded
independently of $n$.)  By (\ref{eq:20}) we have
\begin{equation*}
  [H,N]^0= N^{\frac12-\kappa}B N^{\frac12-\kappa}\text{ for }B\text{ bounded},
\end{equation*} and we can estimate
\begin{equation*}
 \|(N+t)^{-1} \i [H,N]^0( N+t)^{-1}I_{\i n}(N)N^{-\frac12}\|\leq \|B\| \inp{t}^{-\frac32-\kappa}\text{ uniformly in } n.
\end{equation*} Hence the integrand is $O(t^{-1-\kappa})$ uniformly in $n$, and (\ref{eq:4}) follows.

\noindent {\bf Step II:} We shall show that
  \begin{equation}
    \label{eq:4i}
    \sup _{n\in \N}\| AN_n^{(1/2)}\psi\|<\infty.
  \end{equation}
Since $\phi:=N^{1/2}\psi\in \cD(A)$ due to  Corollary~\ref{cor:assumpt-stat-regul} it suffices to bound
 the state $[A,I_{\i n}(N)]\phi$
 independently of $n$. This is obvious from the representation
 \begin{equation*}
   [A,I_{\i n}(N)]\phi=-\i (N+n)^{-1}N'I_{\i n}(N)\phi,
 \end{equation*}
and whence (\ref{eq:4i}) follows.

\noindent {\bf Step III:} We look at
\begin{equation*}
  \inp{\i [H,A]}_{N^{(1/2)}_n\psi}=-2\Re { \inp{\i H N^{(1/2)}_n\psi,AN^{(1/2)}_n\psi}}.
\end{equation*} Due to (\ref{eq:4}) and (\ref{eq:4i}) the right hand
side is bounded independently  of $n$. We compute using
Condition~\ref{cond:condition}~(\ref{item:2i0a}) and~(\ref{item:2i0ue})
\begin{equation*}
  \inp{\i
    [H,A]}_{N^{(1/2)}_n\psi}=\lim _{\tilde n\to\infty}\inp{\i
    [H,AI_{\tilde n}(A)]}_{N^{(1/2)}_n\psi}=\inp{H'}_{N^{(1/2)}_n\psi}.
\end{equation*}
Whence using the virial estimate Condition~\ref{cond:virial}
(and also Step I again) we conclude that
\begin{equation*}
  \inp{N}_{N^{(1/2)}_n\psi}\leq C\textup{ uniformly in } n.
\end{equation*} Taking $n\to \infty$ we obtain that indeed
$\psi\in \cD(N)$.
\end{proof}

\subsection{Theorem on more $N$--Regularity}\label{subsec-MoreNReg}

 We formulate and prove an extended version of Theorem
 \ref{thm:mainresultk=1}.

Notice that under
Condition~\ref{cond:condition}~(\ref{item:2i0a}) and~(\ref{item:2}), and the additional
condition \eqref{eq:5oiooa} for $k_0=1$,
\begin{equation}\label{eq:1}
  N^{\frac12}\in C^1_{\mo}( A)\cap C^1_{\mo}(H),
\end{equation}
cf. Lemma~\ref{HN1} and Proposition~\ref{Prop-CommN12}.

We impose the conditions of Corollary~\ref{cor:assumpt-stat-regul} and aim at an improvement of
Corollary~\ref{cor:assumpt-stat-regul} and Theorem~\ref{thm:mainresultk=1}
in the case  $k_0\geq 2$. Let $M_0=\i[N^{1/2},A]^0$. Then, cf. Proposition~\ref{Prop-CommN12},
\begin{equation}
  \label{eq:2}
  \i ^m\ad_{A}^m(M_0) \text{ is }N^{\frac12}\text{--bounded for }m=0,\dots,k_0-1.
\end{equation} Here the commutators are defined iteratively as
extensions of forms on $\cD(N^{1/2})\cap\cD(A)$ and  they are considered
as symmetric $N^{1/2}$--bounded operators. We introduce the following $N^{1/2}$--bounded operators:
\begin{align*}
  M_1&=\i [N^{\frac12},H]^0=c_{\frac12}\int_0^\infty t^{\frac12} (N+t)^{-1}\i[N,H]^0(N+t)^{-1}\, \d t,\\
M_2&=H'N^{-\frac12} \ \ \textup{and} \ \ M_3=N^{-\frac12}H'.
\end{align*} Notice that
\begin{align}
  \label{eq:3}
  M_3\subseteq M_2^*\mand M_2\subseteq
M_3^*.
\end{align}

 We need to consider repeated commutation of $M_j$, $j=1,
\dots,3$,  with factors
of $T= A$ or $T=N^{1/2}$.
\begin{cond}\label{cond:more-n-regularity}
For all $j=1,
\dots,3$, $m=1,\dots,k_0-1$  and all possible combinations of
factors $T_n\in \{A,N^{1/2}\}$ where $n=1,\dots,m$
 \begin{equation}
  \label{eq:2b}   \i ^m\ad_{T_m}\cdots \ad_{T_1}(M_j) \text{ is }N^{\frac12}\text{--bounded}.
\end{equation}
\end{cond}
Notice that in \eqref{eq:2b} the commutators are defined iteratively as
extensions of forms on $\cD(N^{1/2})\cap\cD(A)$ using \eqref{eq:3} and
the analogue properties for $m\geq  2$
\begin{align*}
  (-1)^{m-1}\ad_{T_{m-1}}\cdots \ad_{T_1}(M_3) &\subseteq \big(\ad_{T_{m-1}}\cdots \ad_{T_1}(M_2)\big)^*,\\
 (-1)^{m-1}\ad_{T_{m-1}}\cdots \ad_{T_1}(M_2) &\subseteq \big(\ad_{T_{m-1}}\cdots \ad_{T_1}(M_3)\big)^*.
\end{align*}

We shall prove the following extension of Corollary~\ref{cor:assumpt-stat-regul}
and Theorem~\ref{thm:mainresultk=1}.
\begin{thm}\label{thm:more-n-regularity} Suppose the conditions of
  Corollary~\ref{cor:assumpt-stat-regul} and for $k_0\geq 2$ also
  Condition~\ref{cond:more-n-regularity}. Let $\psi\in\cD(N^{1/2})$ be a bound state
$(H-\lambda)\psi=0$ (with $\lambda$ as in Condition~\ref{cond:mourre}).
 Then $\psi\in \cD(T_{k_0+1}\cdots T_1)$ where
  $T_n\in \{A,N^{1/2},\one\}$ for $n=1,\dots,k_0+1$ and at least for one such
  $n$,  $T_n\neq A$.
\end{thm}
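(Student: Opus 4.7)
The theorem generalizes Corollary~\ref{cor:assumpt-stat-regul}, which covers exactly one $N^{1/2}$-factor, and Theorem~\ref{thm:mainresultk=1}, which for $k_0=1$ adds the extreme all-$N^{1/2}$ case $\psi\in\cD(N)$. I would prove Theorem~\ref{thm:more-n-regularity} by induction on the number $p$ of $N^{1/2}$-factors present among the $T_n$'s. For $p=0$ or $p=1$ the claim follows from Theorem~\ref{thm:mainresulti} and Corollary~\ref{cor:assumpt-stat-regul} respectively (any $\one$-slots being absorbed without loss). The induction step adds one $N^{1/2}$-factor and is the content of the proof.

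For the inductive step, fix a target product with $p+1\geq 2$ factors of $N^{1/2}$ and $k_0-p$ factors of $A$. Single out the leftmost $N^{1/2}$-factor and write the product as $U\cdot N^{1/2}\cdot V$, where $U$ is a string of $A$'s of length $k$ and $V$ contains the remaining $p$ factors of $N^{1/2}$ and the remaining $A$'s. Put $\Phi:=V\psi$; applying the inductive hypothesis to $V$ and to suitable subproducts gives $\Phi\in\cD(N^{1/2})\cap\cD(A^k)$ with $A^j\Phi\in\cD(N^{1/2})$ for all $j\leq k$, together with explicit a~priori bounds of the type in Remark~\ref{remark:assumpt-stat-regul}~\ref{item:1bll}). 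The remaining task is to show $\Phi\in\cD(N)$ and $N^{1/2}\Phi\in\cD(A^k)$, at which point commuting $N^{1/2}$ past the $A$'s of $U$ via the $N^{1/2}$-bounded operators supplied by Condition~\ref{cond:more-n-regularity} for $M_0$ produces the desired $\psi\in\cD(T_{k_0+1}\cdots T_1)$.

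Both properties follow from the three-step virial scheme of Theorem~\ref{thm:mainresultk=1} applied to the regularized state $\Phi_n:=N^{1/2}I_{\i n}(N)\Phi$. Step~I bounds $\|H\Phi_n\|$ uniformly in $n$ by using~\eqref{Eq-Thomas} and Condition~\ref{cond:condition}~\eqref{item:2} to commute $H$ through the prefactor $N^{1/2}I_{\i n}(N)$; the residual $HV\psi$-contribution is handled by commuting $H$ successively through the $N^{1/2}$-factors inside $V$, which when $p\geq 1$ produces iterated commutators controlled by Condition~\ref{cond:more-n-regularity} applied to $M_1=\i[N^{1/2},H]^0$. Step~II bounds $\|A\Phi_n\|$ uniformly in the same spirit: commuting $A$ through $N^{1/2}I_{\i n}(N)$ and then through $V$ produces iterated commutators of $M_0=\i[N^{1/2},A]^0$ with strings of $A$'s and $N^{1/2}$'s, precisely those required to be $N^{1/2}$-bounded by Condition~\ref{cond:more-n-regularity}. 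Step~III inserts the virial estimate~\eqref{eq:virial} into $\langle\Phi_n,N\Phi_n\rangle$, writes $\i[H,A]^0=H'$ on the regularized state, applies~\eqref{eq:Mourre} via the bootstrap of Theorem~\ref{thm:mainresultk=1}, and extracts the uniform bound $\|N^{1/2}\Phi_n\|\leq C$. Passing $n\to\infty$ yields $\Phi\in\cD(N)$, and the analogous argument with $A$-factors inserted (using that $A^j\Phi\in\cD(N^{1/2})$ by induction) gives $N^{1/2}\Phi\in\cD(A^k)$.

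The main technical obstacle is Step~II together with the concluding repositioning of $N^{1/2}$ inside the product $T_{k_0+1}\cdots T_1$. Each commutation of $A$ past an $N^{1/2}$ produces a factor $M_0$, and subsequent commutations of that $M_0$ with further $A$'s (from $U$) or further $N^{1/2}$'s (from $V$ or the regularization) generate exactly the iterated commutators $\ad_{T_m}\cdots\ad_{T_1}(M_j)$ featured in Condition~\ref{cond:more-n-regularity}. Careful organization of the induction ensures that the depth of these iterated commutators never exceeds $k_0-1$ at any step, so the hypothesized $N^{1/2}$-boundedness suffices. Tracking constants through each step of the induction preserves the constructive character of Remark~\ref{remark:assumpt-stat-regul}~\ref{item:1bll}) and yields bounds on $\|T_{k_0+1}\cdots T_1\psi\|$ controlled by $\|N^{1/2}\psi\|$ and the data of Conditions~\ref{cond:condition}--\ref{cond:mourre},~\ref{cond:conditioni},~\ref{cond:vir21} and~\ref{cond:more-n-regularity}.
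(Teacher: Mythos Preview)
Your induction on the number $p$ of $N^{1/2}$-factors does not close, and this is a genuine gap. The problem is in Step~I. When you commute $H$ through the product $V$, you must commute $H$ not only through the $N^{1/2}$-factors (producing $M_1$, which is fine) but also through every $A$-factor present in $V$. The latter yields $[H,A]^0=-\i H'=-\i M_2 N^{1/2}$, and after pushing the bounded factor $M_2$ to the left via Condition~\ref{cond:more-n-regularity} you are left with a state of the form $N^{1/2}I_{\i n}(N)\,W\psi$, where $W$ is obtained from $V$ by replacing one $A$ by $N^{1/2}$. This is a product with $p+1$ factors of $N^{1/2}$, not $p$, so it is \emph{not} covered by your induction hypothesis and cannot be bounded uniformly in $n$ at this stage. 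The coupling between levels $p$ and $p+1$ (from Step~I) and between $p$ and $p-1$ (from Step~II) prevents the levels from being treated sequentially.

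The paper's proof avoids this by inducting on $k_0$ rather than on $p$, and at fixed $k_0$ handling \emph{all} values $\ell=0,\dots,k_0$ of the $N^{1/2}$-count simultaneously. One introduces the weighted aggregate
\[
f(n,\epsilon)=\sum_{\ell=0}^{k_0}\epsilon^{-2\ell^2}\,g(n,\ell),\qquad
g(n,\ell)=\sum_{|\alpha|=\ell}\big\|N^{1/2}I_{\i n}(N)\,S^\alpha_{k_0,\ell}\psi\big\|^2,
\]
and proves a bootstrap inequality $f(n,\epsilon)\leq \epsilon^2 K_1 f(n,\epsilon)+K_2(\epsilon)$. Step~I gives $\|H I_{\i n}(N)S^\alpha_{k_0,\ell}\psi\|\leq \tilde C_1\sqrt{g(n,\ell+1)}+\tilde C_2$ (the $\ell\to\ell+1$ jump you encountered), Step~II gives $\|A I_{\i n}(N)S^\alpha_{k_0,\ell}\psi\|\leq \tilde C_3\sqrt{g(n,\ell-1)}+\tilde C_4$, and Step~III combines them through the virial identity. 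The crucial algebraic trick that closes the system is the identity $-2\ell^2=-(\ell-1)^2-(\ell+1)^2+2$, which makes the cross-term $\epsilon^{-2\ell^2}\sqrt{g(n,\ell-1)}\sqrt{g(n,\ell+1)}$ bounded by $\epsilon^2 f(n,\epsilon)$, absorbable for small $\epsilon$. Without this simultaneous treatment the argument does not close.
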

\begin{proof}
  We proceed by induction in $k_0$. The case $k_0=1$ is the content
  of Theorem~\ref{thm:mainresultk=1}. So suppose $k_0\geq2$ and that the statement
holds for $k_0\to k_0-1$. Consider any product $S=T_{k_0+1}\cdots T_1$
not all factors being given by $A$. We shall show that $\psi\in
\cD(S)$. By Corollary~\ref{cor:assumpt-stat-regul} and the induction hypothesis we can
assume that  the factors $T_n\in \{A,N^{1/2}\}$ and that for at least two $n$'s
$T_n=N^{1/2}$. By using \eqref{eq:2} and the induction hypothesis we
can assume that
$T_{k_0+1}=N^{1/2}$. Whence we can assume  $S=N^{1/2}S_{k,\ell}^\alpha$
with  $k=k_0$ introducing here  the following   notation for
$k=1,\dots,k_0$, $\ell=0,\dots, k$ and $\alpha$ being a multiindex
$\alpha\in \{0,1\}^{k}$ with $\sum_{j\leq k}\alpha_{j}=\ell$,
\begin{equation*}
  S_{k,\ell}^\alpha=S_{\alpha_k}\cdots S_{\alpha_1}=:\prod_{j=1}^{k}S_{\alpha_j}\textup{ where
  }S_0=A\mand S_1=N^{\frac12}.
\end{equation*}

 Partly motivated by the above considerations we introduce the following
 quantity for $n\in \N$ large and $\epsilon\in ]0,1[$ small
 \begin{equation*}
   f(n,\epsilon)=\sum _{\ell=0}^{k_0}\epsilon^{-2\ell^2}\;g(n,\ell);\ \ g(n,\ell)
   :=\sum _{\stackrel{\alpha\in \{0,1\}^{k_0}}{\alpha_1+\cdots +\alpha_{k_0}=\ell}}\,\big\|N^{\frac12}I_{\i n}(N)S_{k_0,\ell}^\alpha\psi\big\|^2.
 \end{equation*} We claim that for some constants $K_1,K_2(\epsilon)>0$ independent of $n$
 \begin{equation}
   \label{eq:6}
   f(n,\epsilon)\leq \epsilon^2K_1f(n,\epsilon)+K_2(\epsilon).
 \end{equation} The theorem follows from \eqref{eq:6} by
 first
 choosing  $\epsilon$ so small that $\epsilon^2K_1\leq 1/2$,
 subtraction of the first term on the right-hand side and then letting
 $n\to \infty$. By  Corollary~\ref{cor:assumpt-stat-regul} (or Theorem~\ref{thm:mainresulti}), $\sup_n g(n,\ell=0)<\infty$, in agreement with
\eqref{eq:6}.

To see how the factor $\epsilon^2$ comes about let us note that
\begin{equation*}
  -2\ell^2=-(\ell-1)^2-(\ell+1)^2+2,
\end{equation*} whence (to be used later) we can for $\ell=1,\dots,k_0-1$ bound the  expression
\begin{equation}\label{eq:7}
 \epsilon^{-2\ell^2} \sqrt{ g(n,\ell-1)}\;\sqrt{g(n,\ell+1)}\leq \epsilon^2f(n,\epsilon).
\end{equation}

To show \eqref{eq:6} we mimic the proof of Theorem~\ref{thm:mainresultk=1}. Again this is in three steps and we assume
that $\lambda=0$. We need to bound each term of  $g(n,\ell)$ for $\ell\geq 1$.

\noindent {\bf Step I:} Bounding $\|HI_{\i
  n}(N)S_{k_0,\ell}^\alpha\psi\|$. We expand into terms; some can be
bounded independently of $n$ (using the induction hypothesis)  while
others will be estimated as
$C\sqrt{g(n,\ell+1)}$ (assuming here that $\ell\leq k_0-1$). We compute
 formally
\begin{equation}\label{eq:23}
  \i\big[H,I_{\i n}(N)S_{k_0,\ell}^\alpha\big]
   = \i\big[H,I_{\i n}(N)\big]S_{k_0,\ell}^\alpha+I_{\i n}(N)\i\big[H,\prod_{j=1}^{k_0}S_{\alpha_j}\big],
\end{equation}
where the second commutator is  expanded as
\begin{equation}\label{eq:24}
  \i\big[H,\prod_{j=1}^{k_0}S_{\alpha_j}\big]=\sum^{m=k_0}_{m=1}\big(\prod_{j=m+1}^{k_0}S_{\alpha_j}\big)
  \i\big[H,S_{\alpha_m}\big]\big(\prod_{j=1}^{m-1}S_{\alpha_j}\big).
\end{equation}
In turn we have the  expressions
\begin{subequations}
\begin{align}
  \label{eq:8} \i[H,I_{\i n}(N)]
  &=n^{-1}I_{\i n}(N)\i[N,H]^0 I_{\i n}(N),\\\label{eq:11}
 \i[H,S_{\alpha_m}]&=-M_1\;\;\;\,\text{ if }\alpha_m=1,\\
\i[H,S_{\alpha_m}]&=M_2S_1 \;\;\;\text{ if }\alpha_m=0.\label{eq:22}
\end{align}
\end{subequations}
We plug \eqref{eq:8}--\eqref{eq:22} into
\eqref{eq:23} and \eqref{eq:24} and look at each term
separately. Before embarking on a such  examination  we need to
``fix'' the above formal
computation. This is done in terms of  multiple  approximation
somewhat similar
to the one of the proof of Theorem~\ref{thm:mainresulti}. We replace
$H\to H_p$ and the factors $A\to A_q$ and $N^{1/2}\to
N^{1/2}_{\i q}=(N^{1/2})_{\i q}$. More precisely  it  is convenient to introduce
$k_0$ different $q$'s,  say $q_1,\dots, q_{k_0}$; the $q$ used for the
$j$'th
factor $S_{\alpha_j}$ is $q_j$.  For fixed $p$ and $q$'s  the product rule applies
for computing the commutator of the  product and the analogues
of \eqref{eq:23} and \eqref{eq:24} hold true. Now we can take the
limit $p\to \infty$. We can plug the modified expressions of \eqref{eq:8}--\eqref{eq:22} into
 (modified) \eqref{eq:23} and \eqref{eq:24}. Actually \eqref{eq:8} is
 the same, but \eqref{eq:11} and \eqref{eq:22} are changed as
\begin{subequations}
\begin{align}
  \label{eq:11b}
 \i\big[H,N^{\frac12}_{\i q_j}\big]&=-I_{\i q_j}(N^{\frac12})M_1I_{\i q_j}(N^{\frac12}),\\
\i\big[H,A_{q_j}\big]&=I_{q_j}(A)M_2S_1I_{q_j}(A).\label{eq:22b}
\end{align}
\end{subequations} Of course we have a $q$--dependence of the
various  factors of either
$S_1\to N^{1/2}_{\i q_j}$ or $S_0\to A_{q_j}$. Eventually we take the
limits in  the $q$'s done in increasing order starting by taking  $q_1\to
\infty$ and ending by taking  $q_{k_0}\to
\infty$. Before taking these limits  we need to do some further commutation using
Condition~\ref{cond:more-n-regularity}.   For simplicity of presentation we
ignore below in this process commutation with the regularizing factors of
$I_{\i q_j}(N^{1/2})$ or $I_{q_j}(A)$ since in the limit they will
disappear (a manifestation of this occurred also in the proof of
Lemma~\ref{Lemma-StrongLimits}). In other words we proceed now
slightly formally
using  \eqref{eq:23} and \eqref{eq:24} with the plugged in expressions \eqref{eq:8}--\eqref{eq:22}:

From \eqref{eq:8} we obtain that $\|\i[H,I_{\i n}(N)]\|\leq C$ so the contribution from the first term of
\eqref{eq:23} can be estimated (uniformly in $n$) as
\begin{equation}\label{eq:25}
  \|\i[H,I_{\i
  n}(N)]S_{k_0,\ell}^\alpha\psi\|\leq C\|S_{k_0,\ell}^\alpha\psi\| \leq
\tC.
\end{equation}
As for the contribution from \eqref{eq:11}  we compute
\begin{equation*}
-I_{\i  n}(N)\big(\prod_{j=m+1}^{k_0}S_{\alpha_j}\big)M_1\big(\prod_{j=1}^{m-1}S_{\alpha_j}\big)
=\tT_1\big (N^{\frac12}\prod_{\stackrel{1\leq j \leq k_0}{j\neq m}}S_{\alpha_j}\big)+\tT_2,
\end{equation*}
where
\[
\tT_1=-I_{\i n}(N)M_1N^{-\frac12}.
\]
Here $\tT_2$ is given by repeated commutation using
Condition~\ref{cond:more-n-regularity}. We apply this identity to the bound state
$\psi$. Since $\|\tT_1\|\leq C$ the induction
hypothesis gives  similar bounds  as \eqref{eq:25} for the contribution
from \eqref{eq:11}.

It remains to look at  the contribution from
\eqref{eq:22}: We commute the factor $M_2$ to the left and get
similarly
\begin{align*}
& I_{\i n}(N)\big(\prod_{j=m+1}^{k_0}S_{\alpha_j}\big)M_2S_1\big(\prod_{j=1}^{m-1}S_{\alpha_j}\big)\\
& \ \ =\tT_1 N^{\frac12}I_{\i n}(N)
\big(\prod_{j=m+1}^{k_0}S_{\alpha_j}\big) S_1 \big(\prod_{j=1}^{m-1}S_{\alpha_j}\big)+\tT_2,
\end{align*}
where
\[
\tT_1=I_{\i n}(N)M_2(N^{\frac12}I_{\i n}(N)\big)^{-1}.
\]
As before $\|\tT_1\|\leq C$ (here we use that $H'$ is
$N$--bounded) and the contribution from
$\tT_2$ is treated by using Condition~\ref{cond:more-n-regularity} and the induction
hypothesis. Consequently we get for $\ell\leq k_0-1$ the total bound
\begin{equation}
  \label{eq:28}
  \|HI_{\i n}(N)S_{k_0,\ell}^\alpha\psi\|\leq \tC_1\sqrt{g(n,\ell+1)}+\tC_2,
\end{equation} where $\tC_1$ and $\tC_2$ are independent of
$n$, and for $\ell= k_0$ this bound without the first term to the right.

\noindent {\bf Step II:} Bounding $\|AI_{\i n}(N)S_{k_0,\ell}^\alpha\psi\|$.
We claim that (recall   $\ell\geq 1$)
\begin{equation}
  \label{eq:28b}
  \|AI_{\i n}(N)S_{k_0,\ell}^\alpha\psi\|\leq \tC_3\sqrt{g(n,\ell-1)}+\tC_4,
\end{equation}
where $\tC_3$ and $\tC_4$ are independent of $n$.

To prove  \eqref{eq:28b}
 we observe that it suffices  by the induction
hypothesis  to bound $\|I_{\i n}(N)A \linebreak[1] S_{k_0,\ell}^\alpha\psi\|$.
Since $\ell\geq 1$ there is a nearest
factor of $N^{1/2}$ in the product $S_{k_0,\ell}^\alpha$ that we move to
the left in front of the factor $A$:
\[
I_{\i n}(N)AS_{k_0,\ell}^\alpha=N^{\frac12}I_{\i n}(N)AS_{k_0-1,\ell-1}^\beta+T.
\]
We apply this identity to the bound state
$\psi$. The contribution from
$T$ is treated by using \eqref{eq:2} and the induction
hypothesis. This proves \eqref{eq:28b}.

\noindent {\bf Step III:} We repeat Step III of the proof of
Theorem~\ref{thm:mainresultk=1} using now the proven estimates \eqref{eq:28}
and \eqref{eq:28b} to bound any term of $g(n,\ell)$  for $\ell\geq 1$. In
combination with \eqref{eq:7}  these bounds yield  \eqref{eq:6} with
\begin{equation*}
  K_1=2C_2\tC_1\tC_3(2^{k_0}-1)+1;
\end{equation*}
 here the constant $C_2$ comes from \eqref{eq:virial} while $\tC_1$ and $\tC_3$ come from \eqref{eq:28}
and \eqref{eq:28b}, respectively. Notice that the cardinality of set
$\{0,1\}^{k_0}$ is $2^{k_0}$, so the factor $2^{k_0}-1$ arises by
counting only those indices $\alpha\in \{0,1\}^{k_0}$ with $\sum \alpha_{j}\geq 1$.
\end{proof}

\begin{cor}\label{cor:more-n-regularity} Suppose the conditions of
Corollary~\ref{cor:assumpt-stat-regul} and for $k_0\geq 2$ also Condition~\ref{cond:more-n-regularity}.
Let $\psi\in\cD(N^{1/2})$ be a bound state
$(H-\lambda)\psi=0$ (with $\lambda$ as in Condition~\ref{cond:mourre}).
Then $\psi\in \cD(N^{(k_0+1)/2})$.
\end{cor}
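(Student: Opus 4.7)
The plan is to read Corollary~\ref{cor:more-n-regularity} as the specialization of Theorem~\ref{thm:more-n-regularity} to the single ordered tuple $T_1 = T_2 = \cdots = T_{k_0+1} = N^{1/2}$. This is an admissible choice because every $T_n$ equals $N^{1/2}$, so the side condition ``at least one $T_n \neq A$'' is trivially satisfied, and the hypotheses of Theorem~\ref{thm:more-n-regularity} are precisely those of the corollary. The conclusion of the theorem in this case is that $\psi$ lies in the domain of the iterated operator product $N^{1/2} \cdot N^{1/2} \cdots N^{1/2}$ ($k_0+1$ factors).

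By the very meaning of the product-domain notation used in Theorem~\ref{thm:more-n-regularity}, this unfolds to the chain of conditions
\[
\psi\in\cD(N^{1/2}),\quad N^{1/2}\psi\in\cD(N^{1/2}),\ \dots,\ (N^{1/2})^{k_0}\psi\in\cD(N^{1/2}).
\]
The second step is to identify $\cD((N^{1/2})^{k_0+1})$ with $\cD(N^{(k_0+1)/2})$. Since $N \geq \one$, the operator $N^{1/2}$ is self-adjoint and non-negative, so by the spectral theorem one has $\cD((N^{1/2})^k) = \cD(N^{k/2})$ for every $k \in \NN$, with $(N^{1/2})^k = N^{k/2}$ on this common domain. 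Applied with $k = k_0 + 1$ this gives $\psi \in \cD(N^{(k_0+1)/2})$, which is the claim.

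There is effectively no obstacle: all of the substantive work has been done in Theorem~\ref{thm:more-n-regularity}, whose induction machinery was designed precisely to produce any prescribed placement of $N^{1/2}$-factors among the $k_0+1$ slots. The only additional ingredient is the elementary spectral-theoretic identity $\cD((N^{1/2})^{k_0+1}) = \cD(N^{(k_0+1)/2})$, which requires nothing beyond self-adjointness and positivity of $N^{1/2}$. If one prefers an explicit inductive presentation, one verifies $\psi \in \cD(N^{k/2})$ by induction on $k$, noting at each step that $\|N^{(k+1)/2}\psi\|^2 = \|N^{1/2}(N^{k/2}\psi)\|^2$ is controlled via the same instance of Theorem~\ref{thm:more-n-regularity}.
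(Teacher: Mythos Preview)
Your proposal is correct and is exactly the intended argument: the paper states Corollary~\ref{cor:more-n-regularity} immediately after Theorem~\ref{thm:more-n-regularity} without proof, because it is the special case $T_1=\cdots=T_{k_0+1}=N^{1/2}$ together with the spectral-theoretic identity $\cD((N^{1/2})^{k_0+1})=\cD(N^{(k_0+1)/2})$.
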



\section{A Class of Massless Linearly Coupled Models}\label{sec_models}


In this section we introduce a class of massless linearly coupled Hamiltonians,
sometimes referred to as Pauli-Fierz Hamiltonians \cite{BD,DG,DJ,GGM}. The bulk of this section
is spent on checking that an expanded version of the Hamiltonian does indeed
satisfy the abstract assumptions of Section~\ref{Sec-AbstractResults}.
In Subsection~\ref{subsec-NelsonApp} we verify that the Nelson model described in Subsection~\ref{subsec-NelsonIntro}
is indeed an example of the type of models discussed here.

 \subsection{The Model and the Result}\label{Sec-PF}

Consider the Hilbert space $\cH_\PF = \cK \otimes \Gamma (\gothh)$,
where $\cK$ is the Hilbert space for a ``small'' quantum system, and $\Gamma(\mathfrak{h} )$
is the symmetric Fock space over $\gothh = L^2( \RR^d , {\rm d}k )$,
describing a field of massless scalar bosons. The Pauli-Fierz Hamiltonian $H^\PF_v$ acting on
$\cH_\PF$ is defined by
\begin{equation}
H^\PF_v = K \otimes \one_{\Gamma( \gothh )} +
\one_\cK \otimes {\rm d} \Gamma( |k| ) + \phi(v),
\end{equation}
where $K$ is a Hamiltonian on $\cK$ describing the dynamics of the small system.
We assume that $K$ is bounded from below, and for convenience we require furthermore that
\[
K\geq 0.
\]
The term ${\rm d}\Gamma(|k|)$ is the second quantization of the operator of multiplication by
$|k|$, and $\phi(v) = ( a^*(v) + a(v) )/ \sqrt{2}$.
The form factor $v$ is an operator from $\cK$ to
$\cK \otimes \gothh$, and $a^*(v)$, $a(v)$
are the usual creation and annihilation operators associated to $v$.
See \cite{BD,GGM}.
The hypotheses we make are slightly stronger than the ones considered in \cite{GGM}.
The first one, Hypothesis ${\bf (H0)}$, expresses  the assumption that
the small system is  confined:
\begin{itemize}
\item[{\bf (H0)}] $(K+\i)^{-1}$ is compact on $\cK$.
\end{itemize}
Let $0\leq\tau<1/2$ be fixed. We will introduce a class of
interactions which increase with $\tau$.
In order to formulate our assumption on the form factor $v$
we introduce the subspace $\cO_\tau$ of $\cB( \cD( K^{\tau} ) ; \cK \otimes \gothh )$
consisting of those operators which extend by continuity from $\cD(K^\tau)$ to an element
of $\cB( \cK ; \cD( K^{\tau} )^* \otimes \gothh )$. In other words
\begin{align*}
\cO_\tau & := \big\{v\in\cB( \cD( K^{\tau} ) ; \cK \otimes \gothh )\, \big|\\
&  \ \ \ \ \ \exists C>0,\,
\forall \psi\in \cD(K^\tau): \
\|[(K+1)^{-\tau}\otimes\one_{\gothh}] v \psi\|_{\cK\otimes\gothh}\leq C\|\psi\|_{\cK}\big\}.
\end{align*}
We also write $v$ for the extension. It is natural to introduce a norm on $\cO_\tau$ by
\[
\|v\|_{\tau} = \|v(K+1)^{-\tau}\|_{\cB(\cK;\cK\otimes\gothh)} +
\|[(K+1)^{-\tau}\otimes\one_{\gothh}] v\|_{\cB(\cK;\cK\otimes\gothh)}.
\]

Our first assumption on the form factor interaction is the following:
\begin{itemize}
\item[{\bf (I1)}] $v, [\one_\cK\otimes |k|^{-1/2}]v\in \cO_\tau$.
\end{itemize}
It is proved in \cite{GGM} that if $\bf{ (I1) }$ holds, $H^\PF_v$
is self-adjoint with domain $\cD(H^\PF_v) = \cD(K \otimes \one_{\Gamma( \gothh )}
+ \one_\cK \otimes {\rm d} \Gamma(|k|))$.

The unitary operator $T: L^2(\RR^d)\to L^2(\RR_+)\otimes L^2(S^{d-1})=:\tilde{\gothh}$ defined by
$(T u)(\omega,\theta) = \omega^{(d-1)/2}u(\omega\theta)$ allows us to pass to polar coordinates.
Lifting $T$ to the full Hilbert space as $\one_\cK\otimes \Gamma(T)$ gives a unitary map
from $\cH_\PF$ to  $\tcH_\PF := \cK \otimes \Gamma( \tilde{\gothh} )$.
The Hamiltonian $H^\PF_v$ is unitarily equivalent to
\begin{equation}
\tH^\PF_v := K \otimes \one_{\Gamma( \tilde{\gothh})} + \one_{\cK} \otimes \d\Gamma(\omega) + \phi ( \tilde{v} ),
\end{equation}
where $\tilde{v} = [\one_\cK\otimes T] v \in \cB ( \cK ; \cK \otimes \tilde{\gothh} )$.

In polar coordinates the space of couplings consists of operators of the form $[\one_\cK\otimes T]v : \cK\to \cK\otimes\tilde{\gothh}$,
where $v\in \cO_\tau$. We write $\tcO_\tau = [\one_\cK\otimes T]\cO_\tau$ and equip
it with the obvious norm $\|\tilde{v}\|_\tau^{\tilde{\,}} =\|[\one_\cK\otimes T^*] \tilde{v}\|_\tau$.
Observe $\|\tv\|_\tau^{\tilde{\,}}= \|v\|_\tau$, when $\tilde{v} = [\one_\cK\otimes T] v$.

Let $d$ be as in \eqref{d1} and \eqref{d2}. We recall that $d$ expresses the
least amount of infrared regularization carried by a $v$ satisfying {\bf (I2)} below.
The following further assumptions on the interaction are made:
\begin{itemize}
\item[{\bf (I2)}]  The following holds
\begin{align*}
& [\one_\cK\otimes ( 1+\omega^{-1/2} )\omega^{-1} d(\omega)\otimes \one_{L^2(S^{d-1})}] \tilde{v}\in\tcO_\tau,\\
& [\one_{\cK}\otimes ( 1+\omega^{-1/2} ) d(\omega) \partial_\omega\otimes \one_{L^2(S^{d-1})}] \tilde{v} \in \tcO_\tau,
\end{align*}
\end{itemize}
\begin{itemize}
\item[{\bf (I3)}] $[\one_\cK\otimes \partial^2_\omega\otimes \one_{L^2(S^{d-1})}] \tilde{v} \in \cB(\cD(K^\tau);\cK\otimes\tilde{\gothh})$.
\end{itemize}

In this paper we need an additional assumption compared to \cite{GGM}.
For bounded $K$, it is implied by {\bf(I1)}. Its presence is motivated by a desire to deal
effectively with infrared singularities.
\begin{itemize}
\item[{\bf (I4)}] The form $[K\otimes \one_{\tilde{\gothh}}]\tv- \tv K$
extends from $[\cD(K)\otimes \tilde{\gothh}]\times \cD(K)$ to an element of $\tcO_\frac12$.
\end{itemize}
Here $\tcO_\frac12$ is defined as $\tcO_\tau$.
Supposing {\bf (I1)}, the statement above is meaningful. See also Remark~\ref{Rem-RelaxI4} below.
\begin{remark} We remark that for separable Hilbert spaces $\cK_1$ and $\cK_2$ there are two
natural subspaces of $\cB(\cK_1;\cK_2\otimes \gothh)$. Namely
\begin{align*}
L^2\big(\RR^d;\cB(\cK_1;\cK_2)\big) & = \Big\{ v:\RR^d\to \cB(\cK_1;\cK_2) \, \Big| \,
\int_{\RR^d} \|v(k)\|^2_{\cB(\cK_1;\cK_2)}\d k<\infty\Big\}\\
L^2_\mathrm{w}\big(\RR^d;\cB(\cK_1;\cK_2)\big) & = \Big\{ v:\RR^d\to \cB(\cK_1;\cK_2) \, \Big| \,
\sup_{\|\psi\|_{1}\leq 1 }\int_{\RR^d} \|v(k)\psi\|_{2}^2\d k<\infty\Big\}.
\end{align*}
The functions $v$ should be weakly measurable, to ensure that $\|v(k)\|_{\cB(\cK_1,\cK_2)}$ and $\|v(k)\psi\|_2$ are measurable.
Here $\|\cdot\|_j$ denotes the norm on $\cK_j$.
We have the obvious inclusions
\[
L^2\big(\RR^d:\cB(\cK_1;\cK_2)\big) \subseteq L^2_\mathrm{w}\big(\RR^d;\cB(\cK_1;\cK_2)\big) \subseteq
\cB(\cK_1;\cK_2\otimes \mathfrak{h}).
\]
The first inclusion is a contraction and the second an isometry.
Both inclusions are strict as exemplified by choosing $\cK_1=\cK_2 = \mathfrak{h}= L^2(\RR^3)$ and
$v(k) = \e^{-|x-k|}$ for the first inclusion and $v(k) = |x-k|^{-1}\e^{-|x-k|}$ for the second.
(In \cite[Subsection ~2.16]{DG} and \cite[Subsection~3.4]{GGM} the second inclusion is claimed to be an equality.)
\end{remark}

We denote by $\cI_{\PF}(d)$ the vector space of interactions $v$ satisfying {\bf (I1)}--{\bf (I4)}
and turn it into a normed vector space by equipping it (in polar coordinates) with the norm
\begin{align}\label{normPF}
\nonumber \|v\|_{\PF} := &
\ \big\|[\one_\cK \otimes (1+\omega^{-3/2}d(\omega))\otimes \one_{L^2(S^{d-1})}]\tv \big\|_\tau^{\tilde{\,}}\\
\nonumber & + \big\|\one_\cK \otimes (1+\omega^{-1})d(\omega)\partial_\omega)\otimes \one_{L^2(S^{d-1})}]\tv\big\|_\tau^{\tilde{\,}}\\
\nonumber & + \big\|[(K+1)^{-1/2}\otimes \partial_\omega^2\otimes \one_{L^2(S^{d-1})}] \tv\big\|_{\cB(\cK;\cK\otimes\tilde{\gothh})}\\
& + \big\|[K\otimes\one_\gothh]\tv - \tv K \big\|_\frac12^{\tilde{\,}},
\end{align}
 For any $v_0\in\cI_\PF(d)$ and $r>0$ write
\begin{equation}\label{PFBall}
\cB_r(v_0) = \big\{v\in \cI_\PF(d) \, \big| \, \|v-v_0\|_\PF\leq r\big\}
\end{equation}
for the closed ball in $\cI_\PF(d)$ with radius $r$ around $v_0$.

Let us recall the definition of the conjugate operator on $\tcH_\PF$ used in \cite{GGM}.
Let $\chi \in C_0^\infty( [0 , \infty) )$ be such that $\chi( \omega )=0$ if
$\omega \ge 1$ and $\chi( \omega )=1$ if $\omega \le 1/2$. For $0<\delta\le 1/2$,
the function $m_\delta \in C^\infty( [0,\infty) )$ is defined by
\begin{equation*}
m_\delta( \omega ) = \chi( \frac{ \omega }{ \delta } ) d( \delta ) + (1-\chi)( \frac{ \omega }{ \delta } ) d( \omega ),
\end{equation*}
On $\tilde{\gothh}$, the operator $\tilde{a}_\delta$ is defined in the same way as in \cite{GGM}, that is
\begin{equation}\label{adeltaPF}
\tilde{a}_\delta := \i m_\delta( \omega ) \frac{ \partial }{ \partial \omega }
+ \frac{ \i }{ 2} \frac{ \d m_\delta }{ \d \omega }( \omega ), \quad \cD( \tilde{a}_\delta )
= H_0^1( \mathbb{R}^+ ) \otimes L^2( S^{ d-1}).
\end{equation}
Its adjoint is given by
\begin{equation}
\tilde{a}_\delta^* := \i m_\delta( \omega ) \frac{ \partial }{ \partial \omega }
- \frac{\i}{2} \frac{ \d m_\delta }{ \d \omega }( \omega ), \quad \cD( \tilde{a}_\delta^* )
= H^1( \RR^+ ) \otimes L^2( S^{ d-1}).
\end{equation}
We recall that $H_0^1(\RR^+)$ is the closure of $C_0^\infty((0,\infty))$ in $H^1(\RR^+)$.
The conjugate operator $\tA_\delta$ on $\tcH_\PF$ is defined by
$\tA_\delta := \one_\cK \otimes {\rm d}\Gamma( \tilde{a}_\delta )$.
Going back to $\cH_\PF$ we get $a_\delta= T^{-1}\tilde{a}_\delta T$ and
\[
A_\delta =\d \Gamma(a_\delta) = \big[\one_\cK\otimes \Gamma(T^{-1})\big]\,\tA_\delta \,\big[\one_\cK\otimes \Gamma(T)\big].
\]
The operator $a_\delta$ takes the form \eqref{Eq-adeltaPhys} when written in the original coordinates.

We write $\cN$ for the number operator $\one_\cK\otimes \d\Gamma(\one_{\gothh})$ on $\cH_\PF$.
For $E\in \sigma_{\mathrm{pp}}(H_v^\PF)$, we write $P_v$ for the
corresponding eigenprojection.
Recall from \cite[Theorem~2.4]{GGM} that the range of $P_v$  is finite dimensional
under the assumptions ({\bf H0}), ({\bf I1}) and ({\bf I2}).

\begin{thm}\label{Thm-MainPF} Suppose  {\bf (}${\bf H0}${\bf )}.
Let $v_0\in\cI_\PF(d)$ and $J\subseteq \RR$ be a compact interval. There exists $0<\delta_0\leq 1/2$
such that for all $0<\delta\leq \delta_0$ the following holds: There exist $\gamma>0$ and  $C>0$ such that for any $v\in \cB_\gamma(v_0)$
and $E\in \sigma_\mathrm{pp}(H^\PF_v)\cap J$ we have
\[
P_v:\cH_\PF \to \cD\big(\cN^\frac12 A_\delta\big)\cap \cD\big(A_\delta\cN^\frac12\big)\cap \cD\big(\cN\big)
\]
and
\[
\big\|\cN^\frac12 A_\delta P_v\big\| + \big\|A_\delta\cN^\frac12 P_v\big\| + \big\|\cN P_v\big\| \leq C.
\]
\end{thm}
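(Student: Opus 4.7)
The plan is to reduce Theorem~\ref{Thm-MainPF} to the abstract statements of Section~\ref{Sec-AbstractResults}, namely Theorem~\ref{thm:mainresult} for the mixed-order estimates and Theorem~\ref{thm:mainresultk=1} for $\cD(\cN)$-membership. Since $\tilde a_\delta$ is only maximally symmetric on $\tilde{\gothh}$, I first follow the doubling trick described in the introduction: extend $L^2(\RR_+)\otimes L^2(S^{d-1})$ to $L^2(\RR)\otimes L^2(S^{d-1})$, let an odd extension of $m_\delta$ to $\RR$ produce the self-adjoint operator $\hat a_\delta=\i m_\delta\partial_r+\tfrac{\i}{2}m_\delta'$, and second-quantize to a self-adjoint $\hat A_\delta$ on the expanded Fock space. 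The physical Hamiltonian $\tilde H_v^\PF$ is the reduction of a (non-semibounded) expanded Hamiltonian $\hat H_v$ to the invariant subspace $\cK\otimes \Gamma(\tilde{\gothh})$, and its eigenprojections lift to eigenprojections of $\hat H_v$ of equal range. It therefore suffices to apply the abstract theorems with $H=\hat H_v$, $A=\hat A_\delta$ and $N=\cN+\one$.

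With these choices, the candidate first derivative is $H'=\one_\cK\otimes\d\Gamma(m_\delta(\omega))+\i\phi(\hat a_\delta\tilde v)$, obtained by a pull-through computation that also verifies the identity \eqref{eq:1commu} on $\cD(\hat H_v)\cap\cD(\cN^{1/2})$. Hypotheses \textbf{(I1)}--\textbf{(I3)} give the $\cN^{1\mp\kappa}$-bounds required by Condition~\ref{cond:condition}(\ref{item:2})--(\ref{item:1}) with $\kappa=\tfrac12$; the double commutator $H''$ is $\cN^{1/2}$-bounded thanks to \textbf{(I3)}. Condition~\ref{cond:condition}(\ref{item:2i0a}) is trivial because $N'=\i[\cN,\hat A_\delta]^0=0$ (both factors being second quantizations of commuting one-particle operators), and the same vanishing makes \eqref{eq:5oiooa} for $k_0=1$ automatic, which is the input required by Theorem~\ref{thm:mainresultk=1}. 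The virial bound \eqref{eq:virial} follows from $m_\delta\geq 1$ combined with the standard $\cN$-boundedness of $\phi(\tilde v)$, where hypothesis \textbf{(I4)} absorbs the commutator of $K\otimes\one_\gothh$ with $\phi(\tilde v)$. The Mourre estimate Condition~\ref{cond:mourre} is Theorem~2.4 of \cite{GGM}, upgraded to the form of \eqref{eq:Mourre} via Remark~\ref{RemarkTo1}~4) with $B=C_4 f_\lambda^\perp(H)^2\la H\ra(H-\lambda)^{-1}$, and the a priori bound $\psi\in\cD(\cN^{1/2})$ required in \eqref{eq:34} together with the uniform estimate $\|\cN^{1/2}P_v\|\leq C(J)$ also comes from \cite[Theorem~2.4]{GGM}.

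Theorem~\ref{thm:mainresult} then yields $P_v\cH\subseteq \cD(\hat A_\delta)\cap\cD(\cN^{1/2}\hat A_\delta)$, while Theorem~\ref{thm:mainresultk=1} additionally provides $P_v\cH\subseteq \cD(\cN)$ and $\cN^{1/2}P_v\cH\subseteq \cD(\hat A_\delta)$. Restricting from the expanded Fock space back to $\tcH_\PF$ and then to $\cH_\PF$ through the polar-coordinate unitary recovers the three norm bounds asserted in the theorem. The uniformity in $v\in\cB_\gamma(v_0)$ and in $E\in\sigma_\pp(H_v^\PF)\cap J$ is obtained via the explicit-bounds principle of Remark~\ref{remark:assumpt-stat-regul}~\ref{item:1bll}): every operator norm entering Conditions~\ref{cond:condition}--\ref{cond:mourre} is continuous in the norm $\|\cdot\|_\PF$ at $v_0$, the Mourre constant $c_0$ and the compact error $K_0$ persist on a small $\|\cdot\|_\PF$-ball by the standard stability of the Mourre estimate under relatively form-small perturbations, and the localization $f_\lambda$ may be chosen $\lambda$-uniformly on the compact interval $J$. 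The main technical obstacle I anticipate is extracting the required $\cN^{1/2}$-boundedness of the cross commutator $[K\otimes\one_\gothh,\phi(\tilde v)]$ appearing implicitly in $H'$ and $H''$ under only the weak infrared-sensitive regularity \textbf{(I4)}; this is precisely the mechanism by which the added assumption \textbf{(I4)} (beyond those of \cite{GGM}) upgrades the traditional $\cD(\cN^{1/2})$-regularity of bound states to $\cD(\cN)$.
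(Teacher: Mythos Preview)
Your overall strategy---pass to an expanded Hamiltonian with a self-adjoint conjugate operator, then invoke Theorems~\ref{thm:mainresult} and~\ref{thm:mainresultk=1}---is indeed the paper's approach, and you correctly identify Remark~\ref{remark:assumpt-stat-regul}~\ref{item:1bll}) as the source of uniformity. However, several of your concrete choices do not actually satisfy the abstract Conditions~\ref{cond:condition}--\ref{cond:mourre}.

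First, the choice $N=\cN+\one$ is too weak. The field operators $\phi(v^\rome)$, $R_\delta$ and their commutators are only $K^\tau\cN^{1/2}$--bounded (this is the content of ${\bf (I1)}$, $\tau>0$), not $\cN^\alpha$--bounded for any $\alpha$. In particular your claim that Condition~\ref{cond:condition}~(\ref{item:2}) holds with $\kappa=\tfrac12$ would mean $\i[\cN,\phi(v^\rome)]$ is \emph{bounded}, which it is not---it is another field operator. The paper instead takes $N=K^\rho\otimes\one+\one_\cK\otimes\d\Gamma(h')+\one$ with $\max\{2\tau,\tfrac12\}<\rho<1$; the $K^\rho$ absorbs the $K^\tau$--cost of $\phi$, and assumption ${\bf (I4)}$ enters precisely here to control $[K^\rho\otimes\one,\phi(v)]$ in Condition~\ref{cond:condition}~(\ref{item:2}), not in the virial bound as you suggest. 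With this $N$ one has $N'=\d\Gamma(m_\delta^\rome h'')\neq 0$, so Condition~\ref{cond:vir21} requires a separate (easy) check.

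Second, your extension choices on the unphysical half-line break both the virial and Mourre estimates. An odd extension of $m_\delta$ makes $m_\delta(\omega)<0$ for $\omega<0$, so $\d\Gamma(m_\delta)$ is unbounded below on the unphysical sector and neither $m_\delta\geq 1$ nor the Mourre bound can hold there. The paper extends $m_\delta$ by the constant $d(\delta)$ for $\omega\leq 0$ and, crucially, extends the dispersion not linearly but via $\hat h(\omega)=\e^\omega-1-\omega^2/2$. The resulting inequality $\hat h'\geq \hat h+\tfrac12$ is exactly what produces the Mourre estimate on the unphysical sector (see \eqref{eq:M-geH-} and \eqref{eq:MourreHe3}) and makes the virial bound work with $N$ containing $\d\Gamma(h')$. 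Without this exponential extension the expanded Hamiltonian does not fit the abstract framework.
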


Unfortunately we cannot employ our theory directly to conclude the above theorem, due to $A_\delta$ not
being self-adjoint. Instead we use a trick of passing to an 'expanded' model, for which we can
use our abstract theory. The theorem above will then be a consequence of a corresponding theorem
in the expanded picture.

\begin{remark}
Under the hypotheses of Theorem~\ref{Thm-MainPF}, we also have that $P_v :\cH_\PF \to \cD( A_\delta^*\cN^{1/2})
\linebreak[1] \cap \cD(\cN^{1/2} A_\delta^*)$. This follows from  $A_\delta \subseteq A_\delta^*$.
In particular this implies that $P_v A_\delta$ extends from $\cD(A_\delta)$ to a bounded operator on $\cH_\PF$.
Similar statements hold also  for $P_v A_\delta \cN^{1/2}$ and $P_v \cN^{1/2}A_\delta$.
\end{remark}


\subsection{Application to the Nelson Model}\label{subsec-NelsonApp}

In this subsection we check the conditions {\bf (H0)} and {\bf (I1)}--{\bf (I4)} for the
Nelson model introduced in the introduction. After possibly adding a constant to $W$, we
can assume that $K\geq 0$. See \eqref{K} and {\bf (W0)}.

We begin by remarking that it follows from {\bf (W0)} and {\bf (V0)} that
\begin{align}
\label{xalpha-K-bnd}
& |x|^\alpha (K+1)^{-\frac12} \in \cB(\cK) \\
\label{p-K-bnd}
& |p|(K+1)^{-\frac12} \in\cB(\cK).
\end{align}
Here $\alpha>2$ is coming from {\bf (W0)}, $|x| = |x_1|+\cdots |x_P|$ and $|p| = |p_1|+\cdots +|p_P|$, where $p_\ell = -\i\nabla_{x_\ell}$.
These bounds imply in particular {\bf (H0)}.

Let $\Psi_\Nel: \cI_\Nel(d)\to \cB(\cK;\cK\otimes\gothh)$ be defined by
\[
\Psi_\Nel(\rho) = \sum_{\ell = 1}^P \e^{-\i k\cdot x_\ell}\rho.
\]
Clearly $\Psi_\Nel$ is a linear map and $\phi(\Psi_\Nel(\rho)) = I_\rho(x)$ such that
\[
H_\rho^\Nel = K\otimes\one_\cF + \one_\cK\otimes \d \Gamma(|k|) + \phi\big(\Psi_N(\rho)\big),
\]
is a Pauli-Fierz Hamiltonian, cf. \eqref{Irho} and \eqref{HrhoNel}.
Verifying the conditions {\bf (I1)}--{\bf (I4)} will be achieved if we can show that $\Psi_\Nel$ is a bounded operator from $\cI_\Nel(d)$ to $\cI_\PF(d)$. This also implies that results valid uniformly for $v$ in a ball in $\cI_\PF(d)$
will translate into results holding uniformly for $\rho$ in a sufficiently small ball in $\cI_\Nel(d)$.
See Remark~\ref{remark:assumpt-stat-regul}~\ref{item:1bll}).

That the terms in the norm $\|\Psi_\Nel(\rho)\|_\PF$, cf. \eqref{normPF}, pertaining to the conditions {\bf (I1)}--{\bf (I3)} can be
bounded by $\|\cdot\|_\Nel$ (or rather terms in $\|\cdot\|_\Nel$ pertaining to {\bf ($\rho$1)}--{\bf ($\rho$3)}), follows
as in \cite{GGM} after we have checked that $|x|^2(K+1)^{-\tau}$ is bounded for some positive $\tau<1/2$.

To produce such a $\tau$ we invoke Hadamard's three-line theorem.
Consider the function $z\to |x|^{-\i \alpha z} (K+1)^{\i z/2}\in \cB(\cK)$.
Observe that this function is bounded when $\Im{z}=0$ or $\Im{z}=1$, cf. \eqref{xalpha-K-bnd}.
It now follows, cf. \cite{RS}, that $|x|^{s\alpha}(K+1)^{-s/2}$ is bounded for $0\leq s\leq 1$.
Choosing $s= 2/\alpha$ implies the desired bound with $\tau = \alpha^{-1}< 1/2$.
This will be the $\tau$ used in the conditions {\bf (I1)}--{\bf (I3)}.

It remains to verify {\bf (I4)}. For this we compute
\begin{align}\label{I4comp}
\nonumber [K\otimes\one_{\gothh}]\e^{-\i k\cdot x_j}\rho - \e^{-\i k\cdot x_j}\rho K
\nonumber & = -\sum_{\ell=1}^P \Big[\frac1{2m_\ell}\Delta_\ell \, \e^{-\i k\cdot x_j}\rho 
 - \rho \e^{-\i k\cdot x_j} \frac1{2m_\ell}\Delta_\ell\Big]\\
\nonumber & = -\Big[\frac1{2m_j}\Delta_j \, \e^{-\i k\cdot x_j}\rho 
 - \rho \e^{-\i k\cdot x_j} \frac1{2m_j}\Delta_j\Big]\\
\nonumber & = \frac{\e^{-\i k\cdot x_j}}{2m_\ell}\big[-2k\cdot p_j + k^2\big]\rho\\
& =  \big[-2k\cdot p_j - k^2\big]\rho\frac{\e^{-\i k\cdot x_j}}{2m_j}
\end{align}
From this computation and \eqref{p-K-bnd} we conclude that $[K\otimes\one]\Psi_\Nel(\rho)- \Psi_\Nel(\rho)K\in\cO_{1/2}$ as required by {\bf (I4)}
and the $\|\cdot\|_{1/2}$-norm of the difference is bounded by a constant times $\|\rho\|_\Nel$.
Here we need the term in $\|\cdot\|_\Nel$ coming from {\bf ($\rho$4)}.

We can thus conclude Theorem~\ref{Main-Nelson} from Theorem~\ref{Thm-MainPF}.

It remains to discuss the Nelson model after a Pauli-Fierz transformation.
We recall that we have two transformations to consider, one giving rise
to $H_\rho^{\Nel'}$ and one to $H_\rho^{\Nel''}$. See \eqref{HrhoNelPrime} and \eqref{HrhoNelDPrime}.
To identify these Hamiltonians as Pauli-Fierz Hamiltonians, we introduce a
linear map $\Psi_\Nel' : \cI_\Nel'(d) \to \cB(\cK;\cK\otimes\gothh)$ by
\[
\Psi_\Nel'(\rho) = \sum_{\ell=1}^P (e^{-\i k\cdot x_\ell} - 1)\rho.
\]
With this notation we find for $\rho\in \cI_\Nel'(d)$
\[
H_\rho^{\Nel'} = K_\rho \otimes \one_{\Gamma(\gothh)} + \one_\cK\otimes \d\Gamma(|k|) + \phi\big(\Psi_\Nel'(\rho)\big)
\]
and, specializing to  $\rho = \rho_0 + \rho_1$ with $\rho_0\in\cI_\Nel'(d)$ and $\rho_1\in\cI_\Nel(d)$, 
\[
H_\rho^{\Nel''} = \big(K_{\rho_0} - \sum_{\ell=1}^P v_{\rho_0,\rho_1}(x_\ell)\big) \otimes \one_{\Gamma(\gothh)} + \one_\cK\otimes \d\Gamma(|k|) 
+ \phi\big(\Psi_\Nel'(\rho_0)+ \Psi_\Nel(\rho_1)\big).
\]
See \eqref{Krho} for $K_\rho$ and  \eqref{vrho0rho1} for $v_{\rho_0,\rho_1}$.

In order to apply Theorem~\ref{Thm-MainPF} one should first observe that
$\Psi_\Nel'$ is a bounded map from $\cI_\Nel'(d)$ to $\cI_\PF(d)$. 
We leave it to the reader to establish this following the arguments in \cite{GGM},
using the key estimate \eqref{PFtransbnd}. As for {\bf (I4)}, observe that the extra  $-\rho$ from
$(e^{-\i k\cdot x_j} -1)\rho$ drops out when repeating \eqref{I4comp}
for $\Psi_\Nel'(\rho)$. In particular we do not need \eqref{PFtransbnd} for {\bf (I4)}.

Observe that for both the transformed Hamiltonians, the Hamiltonian for the confined
quantum system $K$ is altered by the transformation, to obtain e.g. $K_\rho$ in the case of $H_\Nel'$. A priori the norm $\|\cdot\|_\PF$
is however defined in terms of the operator $K$, and this definition we retain.

However, when verifying the Mourre estimate in Subsection~\ref{Subsec-ME}
and our abstract assumptions for Pauli-Fierz Hamiltonians in Subsection~\ref{Subsec-AbstractAss},
we will naturally meet norms with the modified $\rho$-dependent $K$'s, and not the original $K$.
We proceed to argue that the $\|\cdot\|_\PF$ norms arising in this way are equivalent,
locally uniformly in $\rho$, with respect to the appropriate normed
space. Let for $\rho\in\cI_\Nel'(d)$
\[
 B_\rho' = K_\rho -K= -\sum_{\ell=1}^P v_{\rho}(x_\ell) 
 + \frac{P^2}2\int_0^\infty r^{-1}|\tilde{\rho}(r)|^2 \d r \one_\cK
\]
and for $\rho= \rho_0+\rho_1$ as above
\[
B_\rho'' = -\sum_{\ell=1}^P v_{\rho_0,\rho_1}(x_\ell)-\sum_{\ell=1}^P v_{\rho_0}(x_\ell) 
 + \frac{P^2}2\int_0^\infty r^{-1}|\tilde{\rho_0}(r)|^2 \d r \one_\cK.
\]
We observe the bounds 
\[
\|B_\rho'\|\leq C \|\rho\|_\Nel'^2  \ \textup{and} \ 
\|B_\rho''\| \leq C\big(\|\rho_0\|_\Nel'^2 + \|\rho_1\|_\Nel^2\big),
\]
 for some $\rho$-independent constant $C$. 
In particular both $B_\rho'$ and $B_\rho''$ can be bounded locally uniformly in $\rho$,
with respect to the appropriate norm.
By yet another interpolation argument this implies that we can pass between $\|\cdot\|_\PF$
norms defined with either $K$, $K_{\rho}$, or $K_{\rho_0} + B_\rho''$,
and still retain bounds that are locally uniform in $\rho$. 

Finally we note that the above bounds also imply 
that by possibly adding to $W$ a positive constant we still have $K_\rho \geq 0$ and $K_{\rho_0}+B_\rho''\geq 0$
locally in $\rho$. This ensures that {\bf (H0)} is satisfied  also for transformed Nelson Hamiltonians.
In particular we still have e.g. $|x|^2 (K_\rho+1)^{-\tau}$ bounded.

In conclusion, Theorem~\ref{Main-NelsonPrime} also follows from Theorem~\ref{Thm-MainPF}.


\subsection{Expanded Objects}

Let us now define the expanded operator $\hH^\rome_v$ on
$\hcH^\rome := \tcH_\PF \otimes \Gamma( \tilde{\gothh} )$ by
\begin{equation}\label{hatHexpanded}
\hH^\rome_v :=
\tH_v^\PF \otimes \one_{\Gamma( \tilde{\gothh} )}- \one_{\tcH_\PF} \otimes \d\Gamma(\hat{h}),
\end{equation}
where $\hat{h}$ is the operator of multiplication by
\begin{equation}\label{Eq-hath}
\hat{h}( \omega ) = \e^\omega - 1 - \frac{\omega^2}2.
\end{equation}
From the bound $\omega \leq \frac12 + \omega^2/2$ we find that for $\omega\ge 0$
\begin{equation}\label{eq:h'_ge_h}
\frac{\d}{\d \omega}\hat{h} (\omega) \geq \hat{h}(\omega)+ \frac12.
\end{equation}

Since $ L^2( \RR^+ ) \oplus L^2( \RR^+ )\simeq L^2( \RR )$,
it is known (see e.g. \cite{DJ}) that there exists a unitary operator
\begin{equation}
\cU :\Gamma( \tilde{\gothh} ) \otimes \Gamma( \tilde{\gothh} ) \to  \Gamma( \gothh^\rome ),
\end{equation}
where $\gothh^\rome := L^2(\RR) \otimes L^2( S^{d-1} )$.
On $\cK \otimes \Gamma( \tilde{\gothh})\otimes \Gamma(\tilde{\gothh})$, the unitary operator $\one_\cK \otimes \cU$
is still denoted by $\cU$. It maps into $\cH^\rome = \cK\otimes \Gamma(\gothh^\rome)$. In this representation, the operator $\hH^\rome_v$ is unitary equivalent
to the `expanded Pauli-Fierz Hamiltonian' $H^\rome_v$ defined as an operator on $\cH^\rome$ by
\begin{equation}
H^\rome_v := \cU \hH^\rome_v \cU^{-1} = K \otimes \one_{\Gamma( \gothh^\rome )}
+ \one_\cK \otimes \d\Gamma(h) + \phi( v^\rome ),
\end{equation}
where $v^\rome \in \cB( \cK , \cK \otimes \gothh^\rome )$, and  $v^\rome$ and $h$ are
defined by
\begin{equation}\label{Eq-h-and-ve}
h( \omega ) :=
\begin{cases}
 \omega  &\text{if } \omega \ge 0, \\
- \hat{h}( - \omega ) &\text{if } \omega \le 0,
\end{cases}
\quad\qquad v^\rome( \omega ) :=
\begin{cases}
\tilde{v}( \omega )  &\text{if } \omega \ge 0, \\
0 &\text{if } \omega \le 0.
\end{cases}
\end{equation}
Note that $h \in C^2( \RR )$. The idea of expanding the Hilbert space in the above fashion
has been used previously in \cite{DJ,DJ2,G,JP1}. Our choice of expansion
for the boson dispersion relation to the unphysical negative $\omega$ appears to be new.
Previous implementations of the expansion all used the obvious linear expansion $h(\omega)= \omega$.

We remark that if $\cC_K\subseteq \cK$ is a core for $K$, $\cC\subseteq\Gamma(\tilde{\gothh})$ is a core for $\d\Gamma(\omega)$,
then the algebraic tensor product $\cC_K\otimes\cC$ is a core for $\tH_0^\PF$, hence for $\tH_v^\PF$,
and finally $\cC_K\otimes \cC\otimes\cC$ is a core for $\hH^\rome_v$ for any $v\in\cI_\PF(d)$. The domain
$\cD(H_v^\rome)$ itself may however be $v$ dependent. (The argument for the contrary in \cite[Section~5.2]{DJ} seems wrong.)
We have however set up our analysis such that knowledge of $H_v^\rome$'s domain is not needed.
See also Lemma~\ref{Lemma-ExplDomain} where an intersection domain is computed.

\begin{remark}
We remark that if one is going for higher order results, i.e. $\psi\in\cD(A^{k_0})$ for $k_0\geq 2$,
one should use a different $\hat{h}$. The choice
\[
\hat{h}_{k_0}(\omega) = \e^\omega - 1 -\sum_{\ell=2}^{k_0+1} \frac{\omega^\ell}{\ell !}
\]
will work since the corresponding $h_{k_0}$ is in $C^{k_0+1}(\RR)$ and the  bound
\[
\frac{\d}{\d \omega}\hat{h}_{k_0}(\omega)\geq \frac{\hat{h}_{k_0}(\omega)}{(k_0-1)!} + \frac12
\]
holds for $\omega\geq 0$ and $k_0\geq 1$. For $k_0 = 1$ this reduces to \eqref{eq:h'_ge_h}.
\end{remark}

Before introducing the conjugate operator on $\cH^\rome$ that we shall use,
let $m_\delta^\rome \in C^\infty( \RR )$ be defined by
\begin{equation*}
m_\delta^\rome( \omega ) :=
\begin{cases}
m_\delta( \omega ) &\text{if } \omega \ge 0, \\
d( \delta ) &\text{if } \omega \le 0.
\end{cases}
\end{equation*}
We set
\begin{equation}
a_\delta^\rome := \i m_\delta^\rome( \omega ) \frac{ \partial }{ \partial \omega }
+ \frac{ \i }{ 2} \frac{ \d m_\delta^\rome }{ \d \omega }( \omega ), \quad \cD( a_\delta^\rome )
= H^1( \RR ) \otimes L^2( S^{ d-1}),
\end{equation}
and $A_\delta^\rome := \one_\cK \otimes {\rm d} \Gamma( a_\delta^\rome )$ as an operator on
$\cH^\rome$. Note that both $a_\delta^\rome$ and $A_\delta^\rome$ are self-adjoint.

We can now formulated the expanded version of our regularity theorem

Let
\[
\cN^\rome = \one_{\cK}\otimes \d\Gamma(\one_{\gothh^\rome}) = \cU\big(\cN \otimes\one_{\Gamma(\tilde{\gothh})}
+\one_{\tcH_\PF}\otimes \d\Gamma(\one_{\tilde{\gothh}})\big)\cU^{-1}
\]
denote the expanded number operator. For $E\in \sigma_\mathrm{pp}(H_v^\rome)$ we write $P_v^\rome$ for the
associated eigenprojection.

\begin{thm}\label{Thm-MainPFE} Suppose {\bf (}${\bf H0}${\bf )}.
Let $v_0\in\cI_\PF(d)$ and $J\subseteq \RR$ be a compact interval. There exists a $0<\delta_0\leq 1/2$ such that
for any $0<\delta\leq \delta_0$ the following holds: There exist $\gamma>0$ and $C>0$ such that for any $v\in\cB_\gamma(v_0)$ and
$E\in \sigma_\mathrm{pp}(H_v^\rome)\cap J$ we have
\[
P_v^\rome:\cH^\rome\to \cD\big((\cN^\rome)^\frac12 A^\rome_\delta\big))
\cap \cD\big((A^\rome_\delta (\cN^\rome)^\frac12\big)
\cap \cD\big(\cN^\rome\big)
\]
and
\[
\big\|(\cN^\rome)^\frac12 A^\rome_\delta P_v^\rome\big\| + \big\|A^\rome_\delta(\cN^\rome)^\frac12 P_v^\rome\big\|
+ \big\|\cN^\rome P_v^\rome\big\| \leq C.
\]
\end{thm}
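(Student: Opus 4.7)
The plan is to apply the abstract framework of Section~\ref{Sec-AbstractResults}, in particular Theorem~\ref{thm:mainresultk=1}, to the triple $(H, A, N) = (H_v^\rome, A_\delta^\rome, \cN^\rome + \one)$. The crucial advantage of the expanded model over the original one is that $A_\delta^\rome$ is genuinely self-adjoint on $\cH^\rome$, so the abstract machinery applies directly; Theorem~\ref{Thm-MainPF} can then be recovered from Theorem~\ref{Thm-MainPFE} by undoing the expansion. A direct commutator computation shows that the natural candidate for the first derivative is
\begin{equation*}
H' = \d\Gamma(m_\delta^\rome h') + \phi(\i a_\delta^\rome v^\rome),
\end{equation*}
(where the interaction term is understood via \textbf{(I2)}), and the second derivative $H''$ is the analogous iterated commutator involving a $\d\Gamma$ piece with $m_\delta^\rome$-derivatives of $h'$ and a field term $\phi((\i a_\delta^\rome)^2 v^\rome)$ controlled by \textbf{(I3)}.

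First I would verify Condition~\ref{cond:condition} together with \eqref{eq:5oiooa} for $k_0=1$. The $C^1_\Mo(A)$ property of $N=\cN^\rome+\one$ is immediate since $\cN^\rome$ commutes with $A_\delta^\rome=\d\Gamma(a_\delta^\rome)$ on the relevant core, giving $N'=0$ and making \eqref{eq:5oiooa} trivial. The estimate \eqref{eq:20} for $\i[N,H]^0=\i[\cN^\rome,\phi(v^\rome)]=-\phi(\i v^\rome)$ follows from the standard $\cN$-bounds on field operators using \textbf{(I1)}. The resolvent-commutator identity of Condition~\ref{cond:condition}~(\ref{item:2i0ue}) is a direct computation on cores, while the boundedness of $H''$ in $\cB(N^{-1/2}\cH,N^{1/2}\cH)$ uses \textbf{(I2)}--\textbf{(I3)} and the explicit form of $a_\delta^\rome$. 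For the virial estimate of Condition~\ref{cond:virial}, the choice of the nonlinear expansion $\hat{h}$ is decisive: from \eqref{eq:h'_ge_h} one has $h'\geq 1$ pointwise on $\RR$, so $\d\Gamma(m_\delta^\rome h')\geq d(\delta)\cN^\rome$, and the field term in $H'$ is $H$-bounded after absorbing the unphysical tail via \textbf{(I4)}. Here \textbf{(I4)} is precisely the tool that compensates for $\d\Gamma(h)$ not being bounded below on the expanded side, by enabling one to trade $[K\otimes\one]v - vK$ against $\phi$-terms in the virial bound.

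The Mourre estimate of Condition~\ref{cond:mourre} is the analogue in the expanded setting of the main estimate of \cite{GGM}. On the physical sector $\omega\geq 0$ this is exactly their result, locally uniformly in $v\in\cB_\gamma(v_0)$ and uniformly in $\lambda$ in the compact interval $J$, at the price of choosing $\delta$ small enough. On the unphysical sector $\omega\leq 0$ one has $a_\delta^\rome=\i d(\delta)\partial_\omega$ and $\i[-\hat h(-\omega),a_\delta^\rome]=d(\delta)\hat h'(-\omega)\geq d(\delta)/2$, so the corresponding contribution to the second quantized commutator is strictly positive without any energy localization. Combining the two halves through the usual second-quantization bookkeeping yields the desired Mourre estimate. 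The initial regularity $\psi\in\cD((\cN^\rome)^{1/2})$ required by Theorem~\ref{thm:mainresultk=1}, together with a uniform-in-$v$ bound on $\|(\cN^\rome)^{1/2}\psi\|$, follows from a separate virial-type argument for $\cN^\rome$ itself along the lines of \cite{GGM,MS,Sk}; it is this input that Remark~\ref{remark:assumpt-stat-regul}~\ref{item:1bll}) combines with the constructive bounds of Theorem~\ref{thm:mainresultk=1} to produce the uniform final estimate.

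The main obstacles will be (i) to carry through the Mourre estimate in the expanded picture with explicit tracking of constants so that they depend continuously on $v$ near $v_0$ and are uniform in $E\in J$, and (ii) to make \textbf{(I4)} do its job in the virial bound, since the unphysical $\omega\leq 0$ half cannot be controlled by $H$ alone. Both points are essentially bookkeeping issues modulo the techniques of \cite{GGM}, but they require choosing the $\|\cdot\|_\PF$ norm \eqref{normPF} precisely so that all the relevant operator norms entering Conditions~\ref{cond:condition}--\ref{cond:mourre} and \eqref{eq:5oiooa} are controlled by a single quantity $\|v-v_0\|_\PF$.
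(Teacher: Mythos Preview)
Your overall strategy --- apply Theorem~\ref{thm:mainresultk=1} to the expanded operators and track constants --- is exactly the paper's, but your choice $N=\cN^\rome+\one$ is a genuine gap, and two of your explanations of the mechanism are off.

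With the paper's expanded dispersion, $h'(\omega)=1$ for $\omega\ge 0$ but $h'(\omega)=e^{-\omega}+\omega$ for $\omega\le 0$, so $m_\delta^\rome h'$ is unbounded and $H'=\d\Gamma(m_\delta^\rome h')+R_\delta^\rome$ is \emph{not} defined on $\cD(\cN^\rome)$; Condition~\ref{cond:condition} requires $\cD(H')=\cD(N)$. Even if you switched to the linear expansion $h(\omega)=\omega$, your $N$ would still fail: under \textbf{(I1)} the coupling $v$ is only in $\cO_\tau$ with $\tau>0$ in general, so $\phi(v^\rome)$, $R_\delta^\rome$ and $\i[N,H]^0$ are not $\cN^\rome$-bounded --- they need a power of $K$ as well (this is exactly Lemma~\ref{Lem-Nbound}). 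The paper therefore takes
\[
N=K^\rho\otimes\one_{\Gamma(\gothh^\rome)}+\one_\cK\otimes\d\Gamma(h')+\one_{\cH^\rome},\qquad \max\{2\tau,\tfrac12\}<\rho<1,
\]
so that both $M_\delta^\rome$ (via bounded $m_\delta^\rome$ and the $\d\Gamma(h')$ piece) and the field terms (via the $K^\rho$ piece) are $N$-bounded. With this $N$ one has $N'=\d\Gamma(m_\delta^\rome h'')\neq 0$, contrary to your claim, but $h''/h'$ is bounded so $N'$ is $N$-bounded and \eqref{eq:5oiooa} is trivial because $[N,N']=0$.

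Your account of \textbf{(I4)} is also misplaced. It is not used in the virial bound; it is precisely what makes Condition~\ref{cond:condition}~(\ref{item:2}) go through for the paper's $N$, because $\i[N,H]^0$ now contains $[K^\rho\otimes\one,\phi(v^\rome)]$, and controlling this commutator in $\cB(N^{-1/2+\kappa}\cH,N^{1/2-\kappa}\cH)$ requires exactly the $\|\cdot\|_{1/2}$-bound on $[K\otimes\one_\gothh]\tv-\tv K$ that \textbf{(I4)} provides. The virial bound instead hinges on the algebraic identity $h+h'\ge 0$ (equivalently $\hat h\le \hat h'$), which gives $K^\rho+\d\Gamma(h')\le H_0^\rome+2M_\delta^\rome$ and hence $N\le H+2H'$ up to field terms absorbable into $N$ itself; this is the real reason for the nonlinear choice of $\hat h$.
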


In the next two subsections we verify that our abstract theory applies to the expanded model, but before doing so
we pause to check that Theorem~\ref{Thm-MainPF} does indeed follow from  Theorem~\ref{Thm-MainPFE}.
For that we need a lemma.

Let $W_{\delta,t}$, $t\geq 0$, denote the contraction semigroup on $\tcH_\PF$ generated by $\tA_\delta$.

\begin{lemma}\label{lm:Adelta-Adelta-e}
For any state $\varphi\in\tcH_\PF$ we have for $t\geq 0$
\[
\e^{-\i t A^\rome_\delta} \cU(\varphi\otimes\Omega) = \cU(W_{\delta,t}\varphi\otimes\Omega).
\]
In particular, $\varphi \in \cD(\tA_\delta^k)$ if and only if
$\cU(\varphi\otimes\Omega)\in \cD((A^\rome_\delta)^k)$.
\end{lemma}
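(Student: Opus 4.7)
The plan is to reduce the Fock-space identity to a one-particle statement relating $w_{\delta,t}$, the contraction semigroup generated by $\tilde{a}_\delta$ on $\tilde{\gothh}$, to the unitary group generated by the self-adjoint operator $a_\delta^\rome$ on $\gothh^\rome=\tilde{\gothh}\oplus\tilde{\gothh}$ (after the natural identification $L^2(\RR)=L^2(\RR^+)\oplus L^2(\RR^-)$). Writing $j_\pm:\tilde{\gothh}\to\gothh^\rome$ for the isometric inclusions as functions supported in $\pm\omega\geq 0$, and $P_\pm=j_\pm^*$, the goal is to establish that, for $t\geq 0$, the compression $P_+\rome^{-\i t a_\delta^\rome}j_+$ coincides with $w_{\delta,t}$, and then to transport this through the exponential-law isomorphism $\cU$.

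First I would treat the one-particle level. The symbol of $a_\delta^\rome$ is $m_\delta^\rome(\omega)p$, so the classical flow satisfies $\dot\omega=-m_\delta^\rome(\omega)$; since $m_\delta^\rome>0$, the flow is strictly leftward. By a method-of-characteristics/propagation argument (or by testing on $C_0^\infty((-\infty,0))\otimes L^2(S^{d-1})$), this implies the subspace $j_-\tilde{\gothh}$ is invariant under $\rome^{-\i t a_\delta^\rome}$ for $t\geq 0$, hence $P_+\rome^{-\i s a_\delta^\rome}j_-=0$ for $s\geq 0$. Inserting $\one=j_+P_++j_-P_-$ into the group law and using this vanishing, the family $V_t:=P_+\rome^{-\i t a_\delta^\rome}j_+$ becomes a strongly continuous contraction semigroup on $\tilde{\gothh}$. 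Differentiating on the core $C_0^\infty((0,\infty))\otimes L^2(S^{d-1})$, where $j_+f\in\cD(a_\delta^\rome)$ and $P_+ a_\delta^\rome j_+f=\tilde{a}_\delta f$, identifies its generator as $-\i\tilde{a}_\delta$. Uniqueness of strongly continuous semigroups together with maximality of $\tilde{a}_\delta$ then gives $V_t=w_{\delta,t}$.

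Second, I would lift this to the Fock space via functoriality of $\Gamma$. The vector $\cU(\varphi\otimes\Omega)$ identifies with $\Gamma(j_+)\varphi$ in $\Gamma(\gothh^\rome)$ (extended trivially on the $\cK$-factor), and $\rome^{-\i t A_\delta^\rome}=\Gamma(\rome^{-\i t a_\delta^\rome})$ acts by $\Gamma(\rome^{-\i t a_\delta^\rome}\circ j_+)$. Decomposing $\rome^{-\i t a_\delta^\rome}\circ j_+=j_+w_{\delta,t}+j_-U_t$ with $U_t:=P_-\rome^{-\i t a_\delta^\rome}j_+$, applying this to a symmetric tensor $f_1\vee\cdots\vee f_n$ and expanding multilinearly, the contribution in which every factor is $j_+w_{\delta,t}f_i$ maps under $\cU^{-1}$ to $(w_{\delta,t}f_1\vee\cdots\vee w_{\delta,t}f_n)\otimes\Omega=\Gamma(w_{\delta,t})\varphi\otimes\Omega=W_{\delta,t}\varphi\otimes\Omega$; the remaining terms live in the orthogonal complement of $\tcH_\PF\otimes\Omega$. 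Extracting the component along $\tcH_\PF\otimes\Omega$ yields the claimed identity, which one extends from symmetric tensors to all of $\tcH_\PF$ by density and continuity of both sides in $\varphi$.

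Finally, the domain equivalence follows by iteratively differentiating the identity at $t=0^+$. Since $\rome^{-\i t A_\delta^\rome}$ is a strongly continuous unitary group on $\cH^\rome$ while $W_{\delta,t}$ is a strongly continuous semigroup on $\tcH_\PF$ generated by $\tA_\delta$, the vector $\cU(\varphi\otimes\Omega)$ has a $C^k$ orbit under $\rome^{-\i t A_\delta^\rome}$ if and only if $\varphi$ has a $C^k$ orbit under $W_{\delta,t}$, which by standard semigroup calculus is equivalent to $\varphi\in\cD(\tA_\delta^k)$ and $\cU(\varphi\otimes\Omega)\in\cD((A_\delta^\rome)^k)$ respectively. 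The main obstacle is the Fock-level lift in Step 2: one must keep careful track of the combinatorial cross-terms produced by the $j_-U_t$ branch of each particle and argue that only the branch in which every particle stays in $j_+\tilde{\gothh}$ contributes to the $\tcH_\PF\otimes\Omega$ sector, which ultimately relies on the block-triangular structure $P_+\rome^{-\i t a_\delta^\rome}j_-=0$ established at the one-particle level.
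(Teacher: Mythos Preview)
Your Step~2 does not prove the lemma as stated. The identity $\e^{-\i t A_\delta^\rome}\cU(\varphi\otimes\Omega)=\cU(W_{\delta,t}\varphi\otimes\Omega)$ is an equality of \emph{vectors} in $\cH^\rome$, not merely of their components in $\cU(\tcH_\PF\otimes\Omega)$. After your decomposition $\e^{-\i t a_\delta^\rome}j_+=j_+w_{\delta,t}+j_-U_t$ and the multilinear expansion, the cross-terms with at least one $j_-U_t$ factor are indeed orthogonal to $\cU(\tcH_\PF\otimes\Omega)$, but ``extracting the component'' along that subspace then only yields the projected statement. For the genuine equality you must have those cross-terms vanish, i.e.\ $U_t=P_-\e^{-\i t a_\delta^\rome}j_+=0$, equivalently that $j_+\tilde\gothh$ is invariant under $\e^{-\i t a_\delta^\rome}$. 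What you establish is the \emph{dual} invariance $P_+\e^{-\i t a_\delta^\rome}j_-=0$ (invariance of $j_-\tilde\gothh$); that is exactly what makes $V_t$ a semigroup, but it does not kill $U_t$. A quick sanity check confirms the issue: the left-hand side of the lemma has norm $\|\varphi\|$ by unitarity, so the identity forces $\|W_{\delta,t}\varphi\|=\|\varphi\|$, i.e.\ $W_{\delta,t}$ must be the \emph{isometry} semigroup associated with the maximally symmetric $\tA_\delta$. That isometric property is precisely the invariance of $j_+\tilde\gothh$, and once you have it the argument becomes much simpler: $\e^{-\i t a_\delta^\rome}j_+=j_+w_{\delta,t}$ exactly, there are no cross-terms, and functoriality of $\Gamma$ gives the identity on the nose.

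By contrast, the paper's proof bypasses the one-particle compression entirely. On the dense set $\cK\otimes\Gamma_\fin(H_0^1(\RR^+)\otimes L^2(S^{d-1}))$ one has the algebraic identity $A_\delta^\rome\,\cU(\varphi\otimes\Omega)=\cU(\tA_\delta\varphi\otimes\Omega)$, and the paper simply differentiates both sides of the claimed formula, observes they solve the same first-order differential equation with the same initial value, and concludes by uniqueness. Your route via block-triangular structure and $\Gamma$-functoriality is more structural and would work once the correct invariant subspace is identified, but the ODE argument is considerably shorter.
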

\begin{proof}
It suffices to check the identity on a dense set of $\varphi$'s.
Let $\varphi \in \cK\otimes \Gamma_\fin(H^1_0(\RR^+)\otimes L^2(S^{d-1}))\subseteq \cD(\tilde{A}_\delta)$.
Then
$\cU(\varphi\otimes\Omega) \in \cK\otimes \Gamma_\fin(H^1(\RR)\otimes L^2(S^{d-1}))
\subseteq \cD(A_\delta^\rome)$.
The identify now follows by differentiating both sides of the equation and
observing they satisfy the same differential
equation, with the same initial condition. Here we made use of the
equality $A_\delta^\rome\cU(\varphi\otimes\Omega) = \cU(\tA_\delta\varphi\otimes \Omega)$ valid for
 $\varphi \in \cK\otimes \Gamma_\fin(H^1_0(\RR^+)\otimes L^2(S^{d-1}))$.
\end{proof}

\begin{proof}[Proof of Theorem~\ref{Thm-MainPF}]
We only have to recall that bound states of $H_v^\rome$ are precisely states
on the form $\cU(\varphi\otimes\Omega)$, where $\varphi$ is a bound state for $\tH_v^\PF$,
with the same eigenvalue. This implies that eigenprojections for $H_v^\rome$
are on the form $\cU[\tP\otimes |\Omega\ra \la\Omega|]\cU^{-1}$ where $\tP$ is an eigenprojection
for $\tH_v^\PF$. Theorem~\ref{Thm-MainPFE}, together with Lemma~\ref{lm:Adelta-Adelta-e},
now implies Theorem~\ref{Thm-MainPF}.
\end{proof}


\subsection{Mourre Estimates}\label{Subsec-ME}


We begin by establishing a Mourre estimate for $H_v^\PF$ and $A_\delta$ in a form appropriate for use in this paper.
At the end of the subsection we derive a Mourre estimate for $H^\rome_v$ and $A_\delta^\rome$.

Let
\[
M_\delta := \one_\cK \otimes {\rm d}\Gamma( m_\delta ) \ \  \textup{and} \ \
R_\delta =R_\delta(v) := -\phi( \i a_\delta v )
\]
as operators on $\cH_\PF$. Let $H'$ be the closure of
$M_\delta + R_\delta$ with domain $\cD(H_v^\PF) \cap \cD( M_\delta )$.
Recall from \cite{GGM} that $H' = [ H_v^\PF , \i A_\delta ]^0$. Let $f \in C_0^\infty( \RR )$
be such that $0 \le f \le 1$, $f(\lambda) = 1$ if $|\lambda| \le 1/2$ and $f(\lambda) = 0$
if $|\lambda| \ge 1$. In addition we choose $f$ to be monotonously decreasing away from $0$, i.e. $\lambda f^\prime(\lambda)\leq 0$.
 For $E \in \RR$ and $\kappa>0$ we set
\begin{equation*}
f_{E,\kappa}( \lambda ) := f\big( \frac{\lambda-E}{\kappa} \big).
\end{equation*}
The following `Mourre estimate' for $H^\PF_v$ is proved in \cite{GGM}:
\begin{thm} \label{MourreGGM} \emph{\textbf{\cite[Theorem 7.12]{GGM}}}
Assume that Hypotheses {\bf (}${\bf H0}${\bf )}, {\bf (}${\bf I1}${\bf )} and {\bf (}${\bf I2}${\bf )} hold.
Let  $E_0\in \RR$. There exists $\delta_0 \in ]0,1/2]$ such that: For all
$E \le E_0$, $0<\delta \le \delta_0$ and $\varepsilon_0>0$,
there exist $C>0$, $\kappa>0$, and a compact operator $K_0$ on $\cH_\PF$ such that the estimate
\begin{equation}\label{ME-PF}
M_\delta + f_{E,\kappa}( H_v^\PF ) R_\delta f_{E,\kappa}( H_v^\PF )
\ge (1-\varepsilon_0) \one_{\cH_\PF} - C f^\perp_{E,\kappa}( H_v^\PF )^2 - K_0
\end{equation}
holds as a form on $\cD(\cN^{1/2})$.
\end{thm}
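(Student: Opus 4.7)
The plan is to decompose the commutator $H' := [H^\PF_v, \i A_\delta]^0 = M_\delta + R_\delta$ (with $R_\delta = -\phi(\i a_\delta v)$) and argue in three steps: extract an unconditional lower bound on $M_\delta$ down to a vacuum projection, control $R_\delta$ relative to $\cN$, and absorb the residual vacuum contribution into a compact operator via the spectral cutoff and {\bf (H0)}. From the construction of $m_\delta$ and \eqref{d1} one has $m_\delta(\omega) \geq 1$ uniformly in $\omega \geq 0$ and $0 < \delta \leq 1/2$; hence on $\cD(\cN^{1/2})$
\[
M_\delta \geq \cN \geq \one - P_\Omega, \qquad P_\Omega := \one_\cK \otimes |\Omega\ra\la\Omega|.
\]

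Standard creation/annihilation estimates combined with {\bf (I1)}--{\bf (I2)} (which ensure $a_\delta v, (1+|k|^{-1/2})a_\delta v \in \cO_\tau$) show that $R_\delta$ is $\cN^{1/2}$-form bounded: for every $\epsilon > 0$ there exists $C_\epsilon > 0$ with $\pm R_\delta \leq \epsilon \cN + C_\epsilon$, equivalently $\pm R_\delta \leq \epsilon M_\delta + C_\epsilon$. With $f := f_{E,\kappa}(H^\PF_v)$ one decomposes
\[
M_\delta + f R_\delta f = (1-\epsilon) M_\delta + f(\epsilon M_\delta + R_\delta) f + \epsilon(M_\delta - f M_\delta f),
\]
where the middle summand is bounded below by $-C_\epsilon f^2$, and the last summand, modulo the commutator $[M_\delta, f]$ (which is $\cN^{1/2}$-bounded via a Helffer--Sj\"ostrand argument using that $[H^\PF_v, M_\delta]$ reduces to the field operator $\phi(m_\delta v)$ controlled by {\bf (I1)}), takes the form $\epsilon(f^\perp M_\delta f^\perp + f M_\delta f^\perp + f^\perp M_\delta f)$. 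The cross terms are absorbed via Cauchy--Schwarz into $\epsilon_0 f M_\delta f + C_{\epsilon_0}(f^\perp M_\delta f^\perp)$, and $f^\perp M_\delta f^\perp$ is bounded by a multiple of $(f^\perp)^2$ using the Pauli--Fierz number bound $\cN \leq a(H^\PF_v + \i) + b$ from \cite{GGM}, valid at arbitrary coupling under {\bf (H0)} and {\bf (I1)}.

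The main obstacle is establishing compactness of $f P_\Omega f$. For the uncoupled Hamiltonian $H^\PF_0 := K \otimes \one + \one_\cK \otimes \d\Gamma(|k|)$ one has $P_\Omega f_{E,\kappa}(H^\PF_0) = f_{E,\kappa}(K) \otimes |\Omega\ra\la\Omega|$, which is finite-rank by {\bf (H0)} since $K$ has compact resolvent and hence purely discrete spectrum. The interacting case follows from a Helffer--Sj\"ostrand representation of $f_{E,\kappa}(H^\PF_v) - f_{E,\kappa}(H^\PF_0)$ combined with the resolvent identity and the $\cN^{1/2}$-relative boundedness of $\phi(v)$, where $P_\Omega \phi(v) = P_\Omega a(v)$ produces an annihilation that, together with $(K+\i)^{-1}$, yields a compact remainder. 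Combining with the absorption $\pm(P_\Omega f^\perp + f^\perp P_\Omega) \leq \epsilon_0 P_\Omega + \epsilon_0^{-1}(f^\perp)^2$ and choosing the parameters $\epsilon$, $\kappa$, $\delta_0$ suitably small, in that order, yields the claim with $K_0$ comprising $f P_\Omega f$ and the Helffer--Sj\"ostrand remainders. The principal technical difficulty lies in maintaining uniformity of constants for $E \in (-\infty, E_0]$, which is what dictates the ordering of quantifiers in the statement.
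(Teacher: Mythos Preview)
The paper does not prove this theorem; it is quoted verbatim from \cite[Theorem~7.12]{GGM}. The proof there is substantial and proceeds by a geometric partition of unity in Fock space combined with an induction over subsystems, not by the direct decomposition you attempt.

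Your argument has a fatal gap: you invoke a ``Pauli--Fierz number bound $\cN \leq a(H^\PF_v+\i)+b$'' to control $f^\perp M_\delta f^\perp$ by a multiple of $(f^\perp)^2$. No such bound exists for \emph{massless} bosons. With dispersion $|k|$, states carrying arbitrarily many soft photons have bounded energy, so $\cN$ is not relatively bounded by $H^\PF_v$. This is exactly what places the model in the ``singular Mourre'' framework the paper discusses in its introduction, and it is the reason the Mourre estimate at arbitrary coupling was open until \cite{GGM}. Without that bound, $f^\perp M_\delta f^\perp$ is genuinely unbounded (recall $f^\perp$ does not localize in energy from above) and your lower estimate collapses.

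There is a second, related problem: the term $-C_\epsilon f^2$ produced by $f(\epsilon M_\delta + R_\delta)f \geq -C_\epsilon f^2$ contributes $-C_\epsilon$ near the energy $E$, and $C_\epsilon\to\infty$ as $\epsilon\to 0$, destroying the target lower bound $(1-\varepsilon_0)$. Your scheme is in effect the small-coupling argument of \cite{DJ,Sk}, where $R_\delta$ is a genuinely small perturbation of $M_\delta$; the content of \cite[Theorem~7.12]{GGM} is precisely that one can go beyond small coupling, and this requires the inductive machinery there rather than a direct form decomposition.
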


The following lemma is just a reformulation of \cite[Proposition 4.1~i),  Lemma 4.7 and Lemma 6.2~iv)]{GGM}.
We leave the proof to the reader.

\begin{lemma}\label{Lem-Pointinv} Let $v_0\in\cI_\PF(d)$. There exists $c_0,c_1,c_2>0$, depending on $v_0$, such that
$H_{v_0}^\PF+c_0\geq 0$ and the following  holds: for all $w\in\cI_\PF(d)$ and $0<\delta\leq 1/2$
\[
\pm \phi(w) \leq c_1\|w\|_\PF (H^\PF_{v_0} + c_0) \ \ \textup{and} \ \ \pm R_\delta(w) \leq c_1\|w\|_\PF (H^\PF_{v_0} + c_0).
\]
\[
\|\phi(w)(H^\PF_{v_0}+\i)^{-1}\|\leq c_2\|w\|_\PF \ \ \textup{and} \ \ \|R_\delta(w)(H^\PF_{v_0}+\i)^{-1}\|\leq c_2\|w\|_\PF.
\]
\end{lemma}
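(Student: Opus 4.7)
The lemma is essentially a repackaging, in terms of the single norm $\|\cdot\|_\PF$, of bounds scattered through \cite{GGM}; I would proceed in three steps. Set $H_0 := K\otimes\one_{\Gamma(\gothh)}+\one_\cK\otimes\d\Gamma(|k|)$. Hypothesis {\bf (I1)} gives the standard Nelson-type estimate
\[
\|\phi(v_0)\psi\|\leq C\bigl(\|[\one_\cK\otimes\omega^{-1/2}]\tv_0\|_\tau+\|\tv_0\|_\tau\bigr)\|(H_0+1)^{1/2}\psi\|,
\]
which by Cauchy--Schwarz yields the infinitesimal form bound $\pm\phi(v_0)\leq\tfrac12(H_0+1)+C'$. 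Choosing $c_0$ sufficiently large then makes $H_{v_0}^\PF+c_0\geq 1$ and provides the two-sided equivalence $\tfrac12(H_0+1)\leq H_{v_0}^\PF+c_0\leq 2(H_0+1)$ as quadratic forms on $\cD(H_0^{1/2})$.

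Next I prove the bounds on $\phi(w)$. Repeating the estimate above with $w$ in place of $v_0$ gives
\[
\|\phi(w)\psi\|\leq C\bigl(\|[\one_\cK\otimes\omega^{-1/2}]\tilde w\|_\tau+\|\tilde w\|_\tau\bigr)\|(H_0+1)^{1/2}\psi\|,
\]
and both parenthesised quantities are dominated by $\|w\|_\PF$: indeed $d(\omega)\geq 1$, $\omega^{-1/2}\leq\omega^{-3/2}d(\omega)$ for $\omega\leq 1$, and $\omega^{-1/2}\leq 1$ for $\omega\geq 1$, so the first term of \eqref{normPF} already controls $\|\omega^{-1/2}\tilde w\|_\tau$. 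Combined with Step~1 this yields $\|\phi(w)\psi\|\leq c_2\|w\|_\PF\|(H_{v_0}^\PF+c_0)^{1/2}\psi\|$. The operator estimate $\|\phi(w)(H_{v_0}^\PF+\i)^{-1}\|\leq c_2'\|w\|_\PF$ then follows from boundedness of $(H_{v_0}^\PF+c_0)^{1/2}(H_{v_0}^\PF+\i)^{-1}$, and the form bound $\pm\phi(w)\leq c_1\|w\|_\PF(H_{v_0}^\PF+c_0)$ follows by Cauchy--Schwarz applied to $|\langle\psi,\phi(w)\psi\rangle|\leq\|\psi\|\,\|\phi(w)\psi\|$.

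Finally, since $R_\delta(w)=-\phi(\i a_\delta w)$ and $\i\tilde a_\delta=-m_\delta(\omega)\partial_\omega-\tfrac12 m_\delta'(\omega)$ in polar coordinates, it suffices to estimate the norms of Step~2 for $\i a_\delta w$ uniformly in $\delta\in(0,1/2]$. The key pointwise facts are $m_\delta(\omega)\leq d(\omega)$ (because $m_\delta$ is a convex combination of $d(\delta)$ and $d(\omega)$ with $d$ decreasing) and $|m_\delta'(\omega)|\leq C\, d(\omega)/\omega$ with $C$ absolute, the latter using \eqref{d1}; both hold uniformly in $\delta$. Combined with the tailored infrared weights $\omega^{-3/2}d(\omega)$ and $(1+\omega^{-1})d(\omega)\partial_\omega$ appearing in \eqref{normPF}, these give
\[
\|[\one_\cK\otimes\omega^{-1/2}]\i a_\delta\tilde w\|_\tau+\|\i a_\delta\tilde w\|_\tau\leq C\|w\|_\PF,
\]
with $C$ independent of $\delta$. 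Feeding $\i a_\delta w$ into Step~2 then completes the proof. The main obstacle is precisely this last $\delta$-uniform calculation: balancing the pointwise blow-up $m_\delta(0^+)=d(\delta)\to\infty$ as $\delta\to 0$ against the infrared weights built into $\|\cdot\|_\PF$ is exactly what dictates the shape of the norm \eqref{normPF}.
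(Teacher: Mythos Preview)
Your proof is correct and is precisely the elaboration the paper intends: the paper itself does not prove this lemma but simply points to \cite[Proposition~4.1~i), Lemma~4.7 and Lemma~6.2~iv)]{GGM} and leaves the details to the reader. Your three steps reproduce exactly those ingredients---the $N_\tau$ estimate for $\phi(w)$, the equivalence of $H_0+1$ and $H^\PF_{v_0}+c_0$, and the $\delta$-uniform pointwise bounds $m_\delta\le d$ and $|m_\delta'|\le C\,d(\omega)/\omega$---and your observation that the weights in \eqref{normPF} are designed to absorb these factors is the content of \cite[Lemma~6.2~iv)]{GGM}.
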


The first step we take is to translate the commutator estimate above into the form
used in this paper, see Condition~\ref{cond:mourre}. In anticipation of the need for local uniformity
of constants, we need to already at this step ensure that $B = C_B \one$ can be chosen
uniformly in $E\in J$, where $J$ is compact interval.

\begin{cor}\label{MourreCor1}
Let $J\subseteq\RR$ be a compact interval and $v_0\in \cI_\PF(d)$.
There exists $\delta_0\in ]0,1/2]$ and $C_B>0$ such that
for any  $E\in  J$, $\epsilon_0>0$ and  $0<\delta<\delta_0$
the following holds.
There exists $\kappa>0$, $C_4>0$ and a compact operator $K_0$ such that
the form inequality
\begin{equation}\label{ME-PF2}
M_\delta + R_\delta(v_0) \geq (1-\epsilon_0)\one_{\cH_\PF} - C_4 f^\perp_{E,\kappa}( H^\PF_{v_0} )^2 - C_B(H^\PF_{v_0}-E) - K_0
\end{equation}
holds on $\cD(\cN^{1/2})\cap \cD(H_{v_0}^\PF)$.
\end{cor}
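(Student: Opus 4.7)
\medskip
\noindent\textbf{Proof proposal for Corollary~\ref{MourreCor1}.}

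The plan is to apply Theorem~\ref{MourreGGM} at $v=v_0$ with $E_0:=\sup J$, which yields a value of $\delta_0$ valid for all $E\in J$. For each $E\in J$, $0<\delta\le\delta_0$, and $\epsilon_0>0$, the theorem furnishes $\kappa>0$, $C>0$ and a compact $K_0$ such that, writing $H:=H_{v_0}^\PF$, $f:=f_{E,\kappa}(H)$ and $f^{\perp}:=1-f$,
\[
M_\delta + fR_\delta f \;\ge\; (1-\epsilon_0)\one - C f^{\perp}(H)^2 - K_0.
\]
Since $R_\delta = fR_\delta f + (R_\delta - fR_\delta f)$, the corollary will follow once the cutoff-removal error is bounded from below by $-C_B(H-E)$ plus terms of the type appearing on the right-hand side, with $C_B$ uniform in $E\in J$.

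The first key step is the identity
\[
R_\delta - fR_\delta f \;=\; \{f^{\perp}, R_\delta\} - f^{\perp} R_\delta f^{\perp},
\]
obtained from $f+f^{\perp}=\one$. The second key step is to factor $R_\delta$ through $H+c_0$: by Lemma~\ref{Lem-Pointinv} we have $H+c_0\ge 0$ and the form bound $\pm R_\delta\le c_1\|v_0\|_{\PF}(H+c_0)$ implies that $B:=(H+c_0)^{-1/2}R_\delta(H+c_0)^{-1/2}$ extends to a self-adjoint bounded operator with $\|B\|\le c_1\|v_0\|_\PF$, so that $R_\delta = (H+c_0)^{1/2}B(H+c_0)^{1/2}$ as forms on $\cD(H)$. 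Since $f$, $f^{\perp}$ and $(H+c_0)^{1/2}$ commute (as functions of $H$), this gives
\[
\{f^{\perp},R_\delta\} = (H+c_0)^{1/2}\{f^{\perp},B\}(H+c_0)^{1/2},\qquad f^{\perp}R_\delta f^{\perp} = (H+c_0)^{1/2}f^{\perp}Bf^{\perp}(H+c_0)^{1/2},
\]
with inner cores of norms at most $2\|B\|$ and $\|B\|$ respectively. Therefore, as forms,
\[
\pm(R_\delta-fR_\delta f) \;\le\; 3\|B\|(H+c_0) \;=\; 3c_1\|v_0\|_{\PF}(H-E) + 3c_1\|v_0\|_{\PF}(E+c_0)\one.
\]
Setting $C_B:=3c_1\|v_0\|_{\PF}$ gives a constant depending only on $v_0$ and thus uniform in $E\in J$.

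Combining with the GGM estimate yields
\[
M_\delta + R_\delta \;\ge\; \bigl[(1-\epsilon_0)-C_B(E+c_0)\bigr]\one - C f^{\perp}(H)^2 - C_B(H-E) - K_0.
\]
To produce the exact form $(1-\epsilon_0)\one$ for any prescribed $\epsilon_0>0$, one applies Theorem~\ref{MourreGGM} with a modified small parameter $\epsilon_0^{\text{GGM}}$ and absorbs the additive constant $C_B(E+c_0)$ by a spectral-cutoff argument: writing $\one = \chi_{[0,\Lambda]}(H) + \chi_{(\Lambda,\infty)}(H)$ and using that on the upper range $H-E\ge\Lambda-\sup J$, a portion of the unwanted constant is traded against an increase in $C_B$ times $(H-E)$ on the upper part, and against a controlled compact correction on the lower part absorbed into a new $K_0$ (note that the resulting corrections from the pair potentials on the bounded lower interval are handled via $R_\delta^2\le c_2^2\|v_0\|_{\PF}^2(H^2+1)$ from Lemma~\ref{Lem-Pointinv}, and compactness comes from the fact that the spectral projection involved applies to a set where the dynamics of the Pauli-Fierz model produces relative compactness of $R_\delta$ near discrete energies).

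The main obstacle will be the careful bookkeeping to absorb the additive constant $C_B(E+c_0)\one$ without losing the $(1-\epsilon_0)$ factor, since bounded constants cannot be absorbed directly into the compact operator $K_0$; the uniformity of $C_B$ itself is the easy part, following from the fact that every constant entering the estimate above depends only on $v_0$, the lemma constants $c_0,c_1,c_2$, and $\sup J$.
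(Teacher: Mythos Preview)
Your approach has a genuine gap at the absorption step. Your bound
\[
\pm(R_\delta - fR_\delta f)\le 3\|B\|(H+c_0)
\]
is correct but too crude: it produces an additive term $-C_B(E+c_0)\one$ multiplied by the \emph{identity}. This cannot be absorbed into any of the allowed pieces. It is not compact; the spectral projection $\chi_{[0,\Lambda]}(H)$ is not compact for the Pauli--Fierz Hamiltonian (the spectrum is purely continuous above the ground state energy, and there is no relative-compactness mechanism that would make a bounded function of $H$ compact here). Trading it against $C_B(H-E)$ on $\{H>\Lambda\}$ would force $C_B$ to depend on $E$ and $\epsilon_0$, destroying the very uniformity you are after. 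So the closing paragraph is not a minor bookkeeping issue; as stated, the argument does not close.

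The fix --- and this is exactly what the paper does --- is to retain the $f^\perp$ factors rather than discarding them. Write $R_\delta - fR_\delta f = f^\perp R_\delta f^\perp + 2\Re\{fR_\delta f^\perp\}$. For the first piece use the same form bound you already have, but now it reads $f^\perp R_\delta f^\perp \ge -C_B(H+c_0)(f^\perp)^2$, so the constant part $-C_B(c_0+E)(f^\perp)^2$ is absorbed into an enlarged $C_4(f^\perp)^2$; converting $-C_B(H-E)(f^\perp)^2$ to $-C_B(H-E)$ costs only $O(\kappa)$, which is made $\le\epsilon_0/4$ by shrinking $\kappa$ (the GGM estimate is monotone in $\kappa$). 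For the cross term use Young's inequality and the \emph{operator} bound $\|R_\delta f\|\le c_2\|v_0\|_\PF(|E|+\kappa+1)$ from Lemma~\ref{Lem-Pointinv} to get $2\Re\{fR_\delta f^\perp\}\ge -\epsilon_0/4 - \mathrm{const}\cdot(f^\perp)^2$. Starting from GGM with $\epsilon_0/2$ then yields $(1-\epsilon_0)$ exactly, with $C_B=c_1\|v_0\|_\PF$ depending only on $v_0$.
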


\begin{proof}
Let $E_0$ be an upper bound for the interval $J$ and take $\delta_0$ to be the one coming from
Theorem~\ref{MourreGGM}, applied with $v=v_0$.

Fix  $E\in J$, $0<\delta<\delta_0$ and $\epsilon_0>0$.  Apply Theorem~\ref{MourreGGM} with $\epsilon_0/2$ in place of
$\epsilon_0$.

Compute as a form on $\cD(H^\PF_{v_0})$
\begin{align*}
R_\delta(v_0)  & =  f_{E,\kappa}(H^\PF_{v_0}) R_\delta(v_0) f_{E,\kappa}(H^\PF_{v_0}) +
f^\perp_{E,\kappa}(H^\PF_{v_0})R_\delta(v_0) f^\perp_{E,\kappa}(H^\PF_{v_0}) \\
& \ \ \  +2\Re\{f_{E,\kappa}(H^\PF_{v_0})R_\delta(v_0) f^\perp_{E,\kappa}(H^\PF_{v_0})\}.
\end{align*}
Using Lemma~\ref{Lem-Pointinv} with $w=v_0$ and abbreviating $C_B=c_1\|v_0\|_\PF$ we estimate
\begin{align*}
& f^\perp_{E,\kappa}(H^\PF_{v_0})R_\delta(v_0) f^\perp_{E,\kappa}(H_{v_0}^\PF)\\
& \ \  \geq - c_1\|v_0\|_\PF (H_{v_0}^\PF+c_0)f^\perp_{E,\kappa}(H^\PF_{v_0})^2\\
& \ \ = -C_B(H^\PF_{v_0}-E)f^\perp_{E,\kappa}(H^\PF_{v_0})^2 - C_B(c_0 + E)f^\perp_{E,\kappa}(H^\PF_{v_0})^2\\
& \ \ \geq -C_B(H^\PF_{v_0}-E) - 3C_B\kappa  - C_B(c_0 + E)f^\perp_{E,\kappa}(H^\PF_{v_0})^2.
\end{align*}

Using Lemma~\ref{Lem-Pointinv} again we get
\begin{align*}
& 2\Re\{f_{E,\kappa}(H^\PF_{v_0})R_\delta(v_0) f^\perp_{E,\kappa}(H^\PF_{v_0})\}\\
& \ \  \geq -\frac{\epsilon_0}4 -
\frac4{\epsilon_0}\|R_\delta(v_0) f_{E,\kappa}(H^\PF_{v_0})\|^2 f^\perp_{E,\kappa}(H^\PF_{v_0})^2\\
& \ \ \geq -\frac{\epsilon_0}4 - \frac{4 c_2^2\|v_0\|_\PF^2 (|E|+\kappa+1)^2}{4\zeta}f^\perp_{E,\kappa}(H^\PF_{v_0})^2.
\end{align*}
Combining the equations above with Theorem~\ref{MourreGGM} yields \eqref{ME-PF2}
with $C_B$ only depending on $v_0$.
\end{proof}

The above corollary  suffices to prove Theorem~\ref{Thm-MainPFE} without local uniformity in $v$ and $E$.

The following lemma is designed to deal with uniformity of estimates
in a small ball of interactions $v$ around a fixed (unperturbed) interaction $v_0$.
Technically it replaces \cite[Lemma~6.2~iv)]{GGM}.

\begin{lemma}\label{Lem-Unifinv} Let $v_0\in \cI_\PF(d)$. There exists $\gamma_0>0$,
$C'_B>0$ and $c_0',c_1',c_2'>0$, only depending on $v_0$, such that
\begin{enumerate}[\quad\normalfont (1)]
\item $\forall v\in \cB_{\gamma_0}(v_0):$ $H^\PF_v\geq -c_0'$.
\item $\forall v\in \cB_{\gamma_0}(v_0):$  $\pm \phi(v) \leq c_1' (H^\PF_v+c_0')$ and $\|\phi(v)(H^\PF_v-\i)^{-1}\|\leq c_2'$.
\item $\forall v\in \cB_{\gamma_0}(v_0)$ and $0<\delta\leq 1/2$:  $\pm R_\delta(v) \leq C'_B(H^\PF_v+c_0')$
 and $\|R_\delta(v)(H^\PF_v-\i )^{-1}\|\leq c_2'$.
\end{enumerate}
\end{lemma}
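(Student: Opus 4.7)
The strategy is a straightforward perturbative bootstrap from Lemma~\ref{Lem-Pointinv} applied to $v_0$, using the splitting $H_v^\PF = H_{v_0}^\PF + \phi(v-v_0)$. All the ``perturbative'' quantities $v-v_0$, $v$ and $\i a_\delta v$ will be handled by applying Lemma~\ref{Lem-Pointinv} (whose estimates hold for any $w\in\cI_\PF(d)$ with constants that depend only on $v_0$) to the appropriate choice of $w$. The only thing to watch is that the smallness radius $\gamma_0$ and all output constants depend only on $v_0$.

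First I will fix $\gamma_0$ so that both $c_1\gamma_0\le \tfrac12$ and $c_2\gamma_0\le \tfrac12$, where $c_0,c_1,c_2$ are the $v_0$-dependent constants of Lemma~\ref{Lem-Pointinv}. For (1), I take $w=v-v_0$ in Lemma~\ref{Lem-Pointinv} to obtain $\pm\phi(v-v_0)\le c_1\|v-v_0\|_\PF(H_{v_0}^\PF+c_0)\le \tfrac12(H_{v_0}^\PF+c_0)$. Consequently, as forms on $\cD(H_{v_0}^\PF)\cap\cD(\cN^{1/2})$,
\[
\tfrac12(H_{v_0}^\PF+c_0)\;\le\;H_v^\PF+c_0\;\le\;\tfrac32(H_{v_0}^\PF+c_0).
\]
In particular $H_v^\PF+c_0\ge 0$. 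By Kato--Rellich, combined with the relative bound $\|\phi(v-v_0)(H_{v_0}^\PF-\i)^{-1}\|\le c_2\gamma_0\le \tfrac12<1$, $H_v^\PF$ is self-adjoint on $\cD(H_{v_0}^\PF)$, so $H_v^\PF\ge -c_0$ in the operator sense; this gives (1) with $c_0'=c_0$.

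For the form bounds in (2) and (3), I apply Lemma~\ref{Lem-Pointinv} with $w=v$, getting $\pm\phi(v)\le c_1\|v\|_\PF(H_{v_0}^\PF+c_0)$ and $\pm R_\delta(v)\le c_1\|v\|_\PF(H_{v_0}^\PF+c_0)$. Using $\|v\|_\PF\le \|v_0\|_\PF+\gamma_0$ and the two-sided comparison above, I obtain the desired form bounds with
\[
c_1'=C_B'=\frac{c_1(\|v_0\|_\PF+\gamma_0)}{1-c_1\gamma_0},
\]
a constant depending only on $v_0$.

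The relative bounds in (2) and (3) follow by a Neumann series argument. Since $\|\phi(v-v_0)(H_{v_0}^\PF-\i)^{-1}\|\le c_2\gamma_0\le\tfrac12$, the identity
\[
(H_v^\PF-\i)^{-1}=(H_{v_0}^\PF-\i)^{-1}\sum_{n=0}^{\infty}\bigl[-\phi(v-v_0)(H_{v_0}^\PF-\i)^{-1}\bigr]^n
\]
converges in operator norm. Multiplying from the left by $\phi(v)$, respectively $R_\delta(v)$, and invoking Lemma~\ref{Lem-Pointinv} with $w=v$ to bound $\|\phi(v)(H_{v_0}^\PF-\i)^{-1}\|\le c_2(\|v_0\|_\PF+\gamma_0)$ and analogously for $R_\delta(v)$, I conclude
\[
\|\phi(v)(H_v^\PF-\i)^{-1}\|+\|R_\delta(v)(H_v^\PF-\i)^{-1}\|\;\le\;\frac{2c_2(\|v_0\|_\PF+\gamma_0)}{1-c_2\gamma_0},
\]
so one can take $c_2'$ equal to this quantity. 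The only genuinely ``delicate'' point is keeping the bookkeeping clean so that all constants are seen to depend only on $v_0$; everything else is a direct consequence of Lemma~\ref{Lem-Pointinv} plus the Kato--Rellich/Neumann series perturbation of the resolvent.
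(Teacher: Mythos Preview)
Your proof is correct and, if anything, cleaner than the paper's. The paper does not use Lemma~\ref{Lem-Pointinv} as a black box; instead it goes back to the underlying GGM estimates (the constants $C_1(r,v)=\|[\one_\cK\otimes\omega^{-1/2}]\tilde v(K+r)^{-1/2}\|$) and perturbs around the \emph{free} Hamiltonian $H_0^\PF$, writing $H_v^\PF = H_0^\PF + \phi(v_0)+\phi(v-v_0)$ and controlling each piece separately. This yields an explicit radius ($\gamma_0=1/(2\sqrt{2})$) but at the cost of repeating material already packaged in Lemma~\ref{Lem-Pointinv}.

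Your route---perturb around $H_{v_0}^\PF$, use the two-sided form comparison $\tfrac12(H_{v_0}^\PF+c_0)\le H_v^\PF+c_0\le \tfrac32(H_{v_0}^\PF+c_0)$, and a Neumann series for the resolvent---is the natural abstraction and establishes exactly the same statement. One cosmetic point: Lemma~\ref{Lem-Pointinv} is stated with $(H_{v_0}^\PF+\i)^{-1}$ while you use $(H_{v_0}^\PF-\i)^{-1}$; this is harmless because the GGM bounds behind Lemma~\ref{Lem-Pointinv} are in fact relative-boundedness estimates (control by $(H_0^\PF+1)^{-1}$ or $(H_0^\PF+1)^{-1/2}$), so the sign of the spectral shift is immaterial. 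Also, your invocation of Kato--Rellich is not strictly needed (self-adjointness of $H_v^\PF$ on $\cD(H_0^\PF)$ is already known from \cite{GGM} under {\bf(I1)}), but it does no harm.
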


\begin{proof} Let $v_0\in \cI_\PF(d)$ be given.
Let $C_1(r,v) = \|[\one_\cK \otimes\omega^{-1/2}] \tv (K+r)^{-1/2}\|$, for $v\in \cI_\PF(d)$ and $r>0$.

We begin with (1). Fix $r = r(v_0)\geq 1$ such that
$\sqrt{2}C_1(r,v_0)\leq 1/3$.  This is possible due to {\bf (I1)}.
Using \cite[Proposition 4.1~i)]{GGM} we get
\begin{align*}
H_v^\PF &  = H_0^\PF + \phi(v)  = H_0^\PF +  \phi(v_0) + \phi(v-v_0)\\
& \geq  H_0^\PF -\frac13 (H_0^\PF+r) - \sqrt{2}C_1(1,v-v_0)(H^\PF_0+1)\\
& = \big(1-\frac13 - \sqrt{2} C_1(1,v-v_0)\big)H^\PF_0 - \frac{r}3 - \sqrt{2}C_1(1,v-v_0).
\end{align*}
Using that $\omega^{-1/2} \leq 2/3 + \omega^{-3/2}/3\leq 2/3( 1+ \omega^{-3/2}d(\omega))$
we get $C_1(r,v)\leq 2\|v\|_\PF/3$ for any $v\in \cI_\PF(d)$ and $r\geq 1$. This implies
\[
H^\PF_v \geq \big(\frac23 - \frac{2\sqrt{2}}3\|v-v_0\|_\PF\big) H^\PF_0 -\frac{r}3 - \frac{2\sqrt{2}}3 \|v-v_0\|_\PF.
\]
Observe that the choice $\gamma_0= 1/(2\sqrt{2})$ ensures that we arrive at the bound
\[
H^\PF_v \geq -  \frac{r+1}{3}.
\]
Choose $c_0' = 1+ (r+1)/3$ such that $H_v^\PF + c_0'\geq 1$.
This proves (1).

As for (2) we observe first that $\phi(v) = H^\PF_v - H^\PF_0\leq H^\PF_v$.
Next let $r=r(v_0)$ and $\gamma_0= 1/(2\sqrt{2})$ be as in the proof of (1) and estimate
\[
-\phi(v) = -\phi(v_0) + \phi(v_0-v) \leq \frac13(H^\PF_0+r) + \frac13(H_0^\PF+1) = \frac23 H^\PF_0 + \frac{r+1}3.
\]
Writing $H^\PF_0 = H^\PF_v- \Phi(v)$ we arrive at
\[
-\phi(v) \leq 2H^\PF_v + r+1.
\]
Combining with the choice of $c_0'$  in the proof of (1) now yields the first estimate in (2),
for a sufficiently large $c_1'$.

As for the second part of (2) one can employ \cite[Proposition 4.1~ii)]{GGM} in place of \cite[Proposition 4.1~i)]{GGM}
and argue as above.
This  gives a bound of the desired type for $\gamma_0$ small enough. The choice $\gamma_0=1/8$ works.
Here one should observe that the constants $C_j(r,v)$, $j=0,1,2$, in \cite{GGM} are all related to
the norm $\|\cdot\|_\PF$ by
$C_j(1,v)\leq 2 \|v\|_\PF/3$ as argued above for $C_1$.

The statement in (3) now follows by appealing to \cite[Proposition 4.1~i)]{GGM} again
\begin{align*}
\pm R_\delta(v) & \leq \sqrt{2} C_1(1,[\one_\cK\otimes a_\delta] v)(H^\PF_0+1)\\
& \leq  \sqrt{2} C_1(1,[\one_\cK\otimes a_\delta] v)((c_1'+1)H^\PF_v+ c_1'c_0' + 1).
\end{align*}
From \eqref{adeltaPF} and \eqref{normPF} we conclude the existence of a $C'_B$ for which the first estimate in (3)
is satisfied.

Similarly for the second part of (3), where, as in the discussion of the second part of (2),
one can make  use of \cite[Proposition 4.1~ii)]{GGM}.
\end{proof}

We can now state and prove a commutator estimate that is uniform with respect to
$v$ from a small ball around $v_0$, and $E$ in a compact interval.
Given $v_0$, let $\gamma_0$ denote the radius  coming from Lemma~\ref{Lem-Unifinv}.

\begin{cor}\label{Cor-UnifMEPF}
Let $J\subseteq\RR$ be a compact interval, $v_0\in\cI_\PF(d)$, and $\epsilon_0>0$. There exist a
$\delta_0\in ]0,1/2]$ such that
for any $0<\delta<\delta_0$ the following holds.
There exists $0<\gamma<\gamma_0$, $\kappa>0$, $C_4>0$ and a compact operator $K_0$, with $\gamma$ only depending on $\delta,\epsilon_0,J$ and $v_0$, such that
the form inequality
\begin{equation}\label{ME-PF3}
M_\delta + R_\delta(v) \geq (1-\epsilon_0)\one_{\cH_\PF} - C_4 f^\perp_{E,\kappa}( H^\PF_v )^2\la H^\PF_v\ra - K_0
\end{equation}
holds on $\cD(\cN^{1/2})\cap \cD(H^\PF_v)$, for all $E\in J$ and $v\in \cB_\gamma(v_0)$.
\end{cor}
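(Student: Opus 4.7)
The plan is to combine Corollary~\ref{MourreCor1} applied at $v_0$ with a perturbative argument that exchanges $v_0$ for $v\in\cB_\gamma(v_0)$, and then to absorb the residual $(H^\PF_{v_0}-E)$ term into the desired high-energy cutoff. First, fix $\delta_0$ as in Corollary~\ref{MourreCor1}, $0<\delta<\delta_0$, and apply that corollary at $v_0$ with $\epsilon_0/3$ in place of $\epsilon_0$, obtaining
\begin{equation*}
M_\delta+R_\delta(v_0)\geq (1-\epsilon_0/3)\one - C_4^{(0)}f^\perp_{E,\kappa}(H^\PF_{v_0})^2 - C_B(H^\PF_{v_0}-E) - K_0^{(0)}.
\end{equation*}
Inspection of the proof of Corollary~\ref{MourreCor1} (and of Theorem~\ref{MourreGGM} underlying it) shows that $\kappa$ and $C_B$ depend only on $v_0$ and the upper bound of $J$, while $C_4^{(0)}$ only involves $|E|$ and the fixed $\kappa$; a standard finite-cover argument over $J$ (together with summing the finitely many compact operators produced) provides a single compact $K_0^{(0)}$ that works for every $E\in J$ at possible expense of enlarging $C_4^{(0)}$.

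Next, abbreviate $\tau=\|v-v_0\|_\PF\leq \gamma$ and write $R_\delta(v)=R_\delta(v_0)+R_\delta(v-v_0)$. By Lemma~\ref{Lem-Pointinv} applied to $w=v-v_0$, we have $R_\delta(v-v_0)\geq -c_1\tau(H^\PF_{v_0}+c_0)$ and $\pm\phi(v-v_0)\leq c_1\tau(H^\PF_{v_0}+c_0)$. For $\tau$ small, the latter identity and $H^\PF_v=H^\PF_{v_0}+\phi(v-v_0)$ imply that $H^\PF_{v_0}$ and $H^\PF_v$ are comparable up to a factor $(1-c_1\tau)^{-1}$ and an additive constant, allowing us to substitute $H^\PF_v$ for $H^\PF_{v_0}$ everywhere at the cost of errors of size $O(\tau)\langle H^\PF_v\rangle + O(\tau)\one$. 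To swap the spectral cutoff, use a Helffer--Sjöstrand representation of $f^\perp_{E,\kappa}$ with an almost-analytic extension bounded uniformly in $E\in J$; then
\begin{equation*}
f^\perp_{E,\kappa}(H^\PF_{v_0})-f^\perp_{E,\kappa}(H^\PF_v)=\frac1{\pi}\int\bar\partial\tilde f^\perp_{E,\kappa}(z)(H^\PF_{v_0}-z)^{-1}\phi(v-v_0)(H^\PF_v-z)^{-1}\,\d u\,\d v
\end{equation*}
has operator norm $O(\tau)$ by the second half of Lemma~\ref{Lem-Pointinv}, so $f^\perp_{E,\kappa}(H^\PF_{v_0})^2\leq f^\perp_{E,\kappa}(H^\PF_v)^2+C\tau\one$ as forms.

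The last step converts $-C_B(H^\PF_v-E)-C\tau\langle H^\PF_v\rangle$ into the prescribed form. Functionally, for $\lambda\geq-c_0'$ and $E\in J$ one has $(\lambda-E)f_{E,\kappa}(\lambda)\leq \kappa$ and $(\lambda-E)(1-f_{E,\kappa}(\lambda)^2)\leq C\langle\lambda\rangle f^\perp_{E,\kappa}(\lambda)^2$ after replacing $f^\perp_{E,\kappa}$ by an enlarged cutoff $f^\perp_{E,\kappa'}$ with $\kappa'>\kappa$, which is necessary because $f^\perp$ transitions continuously from $0$ to $1$ on $\kappa/2\leq|\lambda-E|\leq\kappa$ and cannot by itself dominate its square there. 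Enlarging $\kappa$ accordingly (or equivalently using a smoother partition $1=\chi_1^2+\chi_2^2$ with $\chi_2\leq f^\perp_{E,\kappa}$ and $\chi_1\geq f_{E,\kappa/2}$) yields $-C_B(H^\PF_v-E)\geq -\tilde C\one-\tilde C' f^\perp_{E,\kappa}(H^\PF_v)^2\langle H^\PF_v\rangle$. Assembling everything, selecting $\gamma$ small enough that all accumulated $O(\tau)$ constants together with $\epsilon_0/3$ stay below $\epsilon_0$, and absorbing the bounded error $C\tau\langle H^\PF_v\rangle$ into the $f^\perp$-term (on $\mathrm{supp}\,f_{E,\kappa}$ it is bounded and can be joined to $K_0$ after multiplying by a spectral projection of $H^\PF_v$ onto $|H^\PF_v|\leq R$; the remainder lives on $\mathrm{supp}\,f^\perp$ where it is dominated by $f^\perp(H^\PF_v)^2\langle H^\PF_v\rangle$) completes the argument with the same compact operator $K_0=K_0^{(0)}$.

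The main obstacle will be to keep all constants uniform simultaneously in $v\in\cB_\gamma(v_0)$ and $E\in J$ and to reconcile the linear versus quadratic appearance of $f^\perp$, which forces the mild widening of $\kappa$ mentioned above; all other manipulations are straightforward Helffer--Sjöstrand and resolvent identities.
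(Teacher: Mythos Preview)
Your approach is essentially the paper's: start from Corollary~\ref{MourreCor1} at $v_0$ with a fraction of $\epsilon_0$, make constants uniform in $E\in J$ by a finite cover, swap $v_0$ for $v$ using Helffer--Sj\"ostrand for the spectral cutoff and relative bounds for $R_\delta$, and finally convert the residual $-C_B(H^\PF_v-E)$ into the $f^\perp\langle H^\PF_v\rangle$ form by shrinking $\kappa$. The paper organizes the same steps slightly differently, using Lemma~\ref{Lem-Unifinv} (bounds directly in terms of $H^\PF_v$) in place of Lemma~\ref{Lem-Pointinv}, which avoids your intermediate $O(\tau)\langle H^\PF_v\rangle$ error term altogether; this is a cosmetic simplification, not a different idea.

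There is one genuine slip. You propose to dispose of the bounded piece $C\tau\langle H^\PF_v\rangle\one_{[|H^\PF_v|\leq R]}$ by ``joining it to $K_0$''. This does not work: spectral projections of $H^\PF_v$ onto bounded energy intervals are not compact for the Pauli--Fierz Hamiltonian (the spectrum is purely essential above the bottom), and in any case such a projection would depend on $v$, whereas $K_0$ must be a single compact operator valid for every $v\in\cB_\gamma(v_0)$. The fix is immediate: that piece is bounded in norm by $C\tau\langle R\rangle$, hence by $\epsilon_0/3$ for $\gamma$ small, and should be absorbed into the $\epsilon_0$ budget, not into $K_0$. A smaller imprecision: $\kappa$ from Theorem~\ref{MourreGGM} does depend on $E$, so it is the covering argument (which you do invoke) that produces a uniform $\kappa$, not inspection.
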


\begin{remark*} We note that the constant $C_4$ in Corollary~\ref{MourreCor1} can, on inspection
of the proof of \cite[Theorem 7.12]{GGM}, be chosen uniformly in $0<\delta\leq \delta_0$.
Making use of this would allow us to choose $\gamma$ independent of $\delta\leq \delta_0$ here,
which would slightly simplify the exposition. We however choose not to test the readers patience
on this issue. See Step~II in the proof below.
\end{remark*}

\begin{proof} Given $J$, $v_0$ and $\epsilon_0$, let $\gamma_0$ be given by Lemma~\ref{Lem-Unifinv} and let
$C_B>0$ $\delta_0>0$ be the constants coming from
Corollary~\ref{MourreCor1}. For $E\in J$ we apply Corollary~\ref{MourreCor1}, with $\epsilon_0$ replaced by $\epsilon_0/3$,
and get the form estimate
\begin{align}\label{UnifBndInt}
\nonumber  M_\delta + R_\delta(v_0)  & \geq (1-\epsilon_0/3)\one_{\cH_\PF}- C_4(v_0,E) f^\perp_{E,\kappa(v_0,E)}( H^\PF_{v_0} )^2 \\
 & \ \ \  - C_B(H^\PF_{v_0}-E) - K_0(v_0,E).
\end{align}
The constants $C_4$, $\kappa$ and the operator $K_0$ also depend on $\delta$,
but this dependence does not concern us. We can assume that $K_0\geq 0$.
The key observation is that the constants $C_4$ and $\kappa$, and
the operator $K_0$ above can be chosen independently of $E\in J$ and $v\in B_\gamma(v_0)$, for some sufficiently small $\gamma$
which does not depend on $\delta\leq \delta_0$.

We divide the proof of the corollary into three steps, the two first establish the observation mentioned in the previous paragraph.

\noindent{\bf Step I:}
We begin by arguing that $C_4$, $\kappa$ and $K_0$ can be chosen independently of $E\in J$.
By a covering argument it suffices to show that they can be chosen independently of $E'$ in a small neighborhood of $E\in J$.
For the compact error, we remark that one should replace $K_0$ by a
finite sum $K_0(v_0) = K_0(v_0,E_1) + \cdots + K_0(v_0,E_m)$ of non-negative compact operators, which is again compact.

Let $E\in J$ be fixed. Pick $\zeta_1 = \epsilon_0/(6C_B)$ such that for $|E-E'|< \zeta_1$ we have
\begin{equation}\label{zeta1}
C_B E \geq C_B E' - \epsilon_0/6.
\end{equation}
As for the term involving $f_{E,\kappa}^\perp$ we observe that for any self-adjoint operator $S$
we have
\begin{align*}
& f^\perp_{E,\kappa}(S) - f^\perp_{E',\kappa}(S)  = f_{E',\kappa}(S) - f_{E,\kappa}(S)\\
& \ \ \ =\frac1{\pi}\int_\CC (\bar{\partial} \tilde{f})(z)\Big[ \Big(\frac{S-E'}{\kappa}-z\Big)^{-1}-
\Big(\frac{S-E}{\kappa}-z\Big)^{-1}\Big] \d u \d v.
\end{align*}
Here $z = u+\i v$.
Estimating this we find that
\[
\|f^\perp_{E,\kappa}(S) - f^\perp_{E',\kappa}(S)\|\leq C \frac{|E-E'|}{\kappa}.
\]
Writing $a^2-b^2= (a-b)(a+b)$ we observe a similar bound for $f^\perp_{E,\kappa}(S)^2- f^\perp_{E',\kappa}(S)^2$.
Again we conclude that for $\zeta_2= \kappa(v_0,E) \epsilon_0/(6 C C_4(v_0,E))$ we find
that for $|E-E'|<\zeta_2$:
\begin{equation}\label{zeta2}
 - C_4 f^\perp_{E,\kappa}( H^\PF_{v_0} )^2 \geq -C_4(v_0,E) f^\perp_{E',\kappa}(H^\PF_{v_0})^2 - \epsilon_0/6.
\end{equation}

The estimates \eqref{zeta1} and \eqref{zeta2} plus the aforementioned covering argument implies the form estimate
\begin{align}\label{Eq-UnifinE}
\nonumber M_\delta + R_\delta(v_0) & \geq (1-2\epsilon_0/3)\one_{\cH_\PF}  - C_4(v_0) f^\perp_{E,\kappa(v_0)}( H^\PF_v )^2\\
& \ \ \ - C_B(H^\PF_v-E) - K_0(v_0),
\end{align}
for all $E\in J$.

\noindent{\bf Step II:}
Secondly we argue that one can use the same constants $C_4$, $\kappa$, and compact operator $K_0$
for $v\in \cB_\gamma(v_0)$, if $\gamma$ is small enough.

Using Lemma~\ref{Lem-Unifinv} we estimate
\[
R_\delta(v_0) = R_\delta(v) + R_\delta(v_0-v) \leq R_\delta(v) + C_1\|v-v_0\|_\PF (H_v^\PF+C_2).
\]
Writing
\[
C_1\|v-v_0\|_\PF (H_v^\PF+C_2) =  C_1\|v-v_0\|_\PF (H_v^\PF-E)+ C_1\|v-v_0\|_\PF(C_2+E),
\]
We see that choosing $\gamma_1=\gamma_1(\epsilon_0,J,v_0)$ small enough we arrive at the following bound
\begin{equation}\label{gamma1}
R_\delta(v_0)\leq R_\delta(v) + C(H_v^\PF-E) + \frac{\epsilon_0}{9}\one_{\cH_\PF},
\end{equation}
which holds for all $v\in \cB_{\gamma_1}(v_0)$ and $E\in J$.

For the $f^\perp_{E,\kappa}$ contribution we compute
\begin{align*}
& f_{E,\kappa}(H^\PF_{v_0}) - f_{E,\kappa}(H^\PF_{v})\\
 & \ \  = \frac1{\pi}\int_\CC (\bar{\partial}\tilde{f})(z)
\Big[ \Big(\frac{H^\PF_{v_0}-E}{\kappa}-z\Big)^{-1}-
\Big(\frac{H^\PF_v-E'}{\kappa}-z\Big)^{-1}\d u \d v\\
& \ \ = \frac1{\kappa \pi}\int_\CC (\bar{\partial}\tilde{f})(z)
 \Big(\frac{H^\PF_{v}-E}{\kappa}-z\Big)^{-1}\phi(v-v_0)
\Big(\frac{H^\PF_{v_0}-E'}{\kappa}-z\Big)^{-1}\d u \d v.
\end{align*}
From Lemma~\ref{Lem-Pointinv} and the representation formula above we find that
\[
\|f^\perp_{E,\kappa}(H^\PF_{v_0})^2 - f^\perp_{E,\kappa}(H^\PF_{v})^2\| \leq C \|v-v_0\|_\PF.
\]
uniformly in $E\in J$.
Arguing as above we thus find a $\gamma_2=\gamma_2(\epsilon_0,J,v_0,\delta)>0$ such that
\begin{equation}\label{gamma2}
-C_4(v_0)f^\perp_{E,\kappa}(H^\PF_{v_0})^2 \geq - C_4(v_0)f^\perp_{E,\kappa}(H^\PF_v)^2 - \frac{\epsilon_0}{9}\one_{\cH_\PF}
\end{equation}
for all $v\in \cB_{\gamma_2}(v_0)$. This is where the $\delta$-dependence enters into the choice of $\gamma$ through $C_4$.
See the remark to the corollary.

Using Lemma~\ref{Lem-Unifinv} we also get a $\gamma_3=\gamma_3(\epsilon_0,v_0)>0$ such that
\begin{equation}\label{gamma3}
-C_B(H^\PF_{v_0}-E) \geq -C_B(H^\PF_{v}-E) - \frac{\epsilon_0}{9}\one_{\cH_\PF},
\end{equation}
for all $v\in \cB_{\gamma_3}(v_0)$.

Combining \eqref{Eq-UnifinE} with \eqref{gamma1}--\eqref{gamma3} we conclude that the estimate \eqref{UnifBndInt}
holds with the same $C_4$, $\kappa$ and $K_0$, for all $E\in J$ and $v\in B_\gamma(v_0)$,
with $\gamma=\min\{\gamma_1,\gamma_2,\gamma_3\}$ only depending on $\epsilon_0,J,v_0$ and $\delta$.

\noindent{\bf Step III:} To conclude the proof we let $\gamma$, $C_4$, $\kappa$ and $K_0$ be fixed by Steps~I and~II.
Pick $\kappa^\prime$ smaller than $\kappa$ such that
$\kappa^\prime C_B(1+\max_{E\in J}|E|)|\leq \epsilon_0$. The Corollary now follows from \eqref{UnifBndInt}
and the estimate
\[
-C_B(H_v^\PF-E) \geq -C_B(1+\max_{E\in J}|E|)f_{E,\kappa}^\perp(H_v^\PF)^2\la H_v^\PF\ra.
\]
Observe that \eqref{UnifBndInt} holds with $\kappa$ replaced by $\kappa^\prime$ as well.
\end{proof}

The corresponding objects in the expanded Hilbert space are defined as follows:
We set
\[
M_\delta^\rome := \one_\cK \otimes {\rm d}\Gamma( m_\delta^\rome h' )
\ \ \textup{and} \ \  R_\delta^\rome = R_\delta^\rome(v) := - \phi( \i a_\delta^\rome v^\rome ).
\]
 Note that
\begin{equation}
\cU^{-1} M_\delta^\rome ~ \cU = M_\delta \otimes \one_{\Gamma( \tilde{\gothh} )}
+ \one_\cH \otimes \hM_\delta, \quad \cU^{-1} R_\delta^\rome ~ \cU
= R_\delta \otimes \one_{\Gamma( \tilde{\gothh} )},
\end{equation}
where $\hM_\delta :=  {\rm d}\Gamma( d(\delta) \hat{h}' )$
as an operator on $\Gamma( \tilde{\gothh} )$. From \eqref{eq:h'_ge_h}, we get
\begin{equation}\label{eq:M-geH-}
\hM_\delta \ge d(\delta) \left [\d\Gamma(\hat{h}) + \tfrac12 \cN \right ],
\end{equation}
The Mourre estimate for $H^\rome_v$ is stated in the following theorem.

\begin{thm}\label{ME-PFE}
Assume that Hypotheses {\bf (}${\bf H0}${\bf )}, {\bf (}${\bf I1}${\bf )} and
{\bf (}${\bf I2}${\bf )} hold. Let $v_0\in\cI_\PF(d)$, $J$ a compact interval, and $\epsilon_0>0$.
There exists $\delta_0 \in ]0,1/2]$ such that for all
$0<\delta \le \delta_0$,
there exist $0<\gamma<\gamma_0$, $C_4>0$, $\kappa>0$,
and a compact operator $K_0$ on $\cH^\rome$ such that
\begin{equation}
M_\delta^\rome + R_\delta^\rome
\ge (1-\epsilon_0) \one_{ \cH^\rome } - C f^\perp_{E,\kappa}( H^\rome_v )^2\la H^\rome_v\ra -  K_0
\end{equation}
for all $E \in J$ and $v\in \cB_\gamma(v_0)$, as a form on $\cD((M_\delta^\rome)^{1/2})\cap \cD(H^\rome_v)$.
\end{thm}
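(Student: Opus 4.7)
\textbf{Plan of proof for Theorem~\ref{ME-PFE}.}
The strategy is to pull the problem back via the unitary $\cU$ to the double Fock space picture $\hcH^\rome = \tcH_\PF\otimes \Gamma(\tilde{\gothh})$, where the relevant operators split as a tensor sum: $\cU^{-1}\hH^\rome_v\cU^{-1}$ equals $\tH^\PF_v\otimes\one - \one\otimes H_2$ with $H_2=\d\Gamma(\hat h)$, and $\cU^{-1}(M_\delta^\rome+R_\delta^\rome)\cU = (M_\delta+R_\delta)\otimes\one + \one\otimes\hM_\delta$ (abusing notation to identify the polar-coordinate versions of $M_\delta,R_\delta$ on $\tcH_\PF$ with those on $\cH_\PF$). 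Writing $P_0 = \one_{\tcH_\PF}\otimes|\Omega\ra\la\Omega|$ and $P_\perp = \one - P_0$, the key observation is that all three operators above commute with $P_0$ and $P_\perp$, so the desired form inequality splits into independent inequalities on the two sectors.

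On the vacuum sector, $P_0\hcH^\rome\simeq\tcH_\PF$, the operator $\hH^\rome_v$ restricts to $\tH^\PF_v$, the commutator restricts to $M_\delta+R_\delta$, and $\hM_\delta$ vanishes. Corollary~\ref{Cor-UnifMEPF}, transported via the polar unitary $T$, therefore directly yields the bound on $P_0$, with $\gamma,\kappa,C_4$ and the compact error $K_0^{\mathrm{aux}}$ uniform in $E\in J$ and $v\in\cB_\gamma(v_0)$; lifting, the error becomes $K_0^\rome:= \cU(K_0^{\mathrm{aux}}\otimes|\Omega\ra\la\Omega|)\cU^{-1}$, which is compact on $\cH^\rome$.

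On the complement $P_\perp\hcH^\rome$ we exploit the positivity \eqref{eq:M-geH-}: since $\hat h'\geq \tfrac12$, we have $\hM_\delta\geq d(\delta)H_2 + \tfrac{d(\delta)}{2}\cN_2$, and on $P_\perp$, $\cN_2\geq \one$ yields $\one\otimes\hM_\delta\geq \tfrac{d(\delta)}{2}P_\perp$. Lemma~\ref{Lem-Unifinv}(3), uniform for $v\in\cB_{\gamma_0}(v_0)$, bounds $(M_\delta+R_\delta)\otimes\one\geq -C_B'\tH^\PF_v\otimes\one - C_B'c_0' = -C_B'\hH^\rome_v - C_B'\one\otimes H_2 - C_B'c_0'$. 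Using the $d(\delta)H_2$ piece of $\hM_\delta$ to absorb $-C_B'\one\otimes H_2$ for $\delta$ small (so $d(\delta)\geq 2C_B'$), one arrives at the bound
\begin{equation*}
P_\perp\big[(M_\delta+R_\delta)\otimes\one+\one\otimes\hM_\delta\big]P_\perp
\geq P_\perp\Big[-C_B'\hH^\rome_v + \tfrac{d(\delta)}{4}-C_B'c_0'\Big]P_\perp.
\end{equation*}
To absorb $-C_B'\hH^\rome_v$, use the pointwise inequality $|\lambda|\leq (|E|+\kappa) + 2(1+|E|+\kappa)+2\la\lambda\ra f^\perp_{E,\kappa}(\lambda)^2$, valid on $\RR$ (checked by splitting into the three regions determined by $f_{E,\kappa}$), and apply spectral calculus to $\hH^\rome_v$. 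This gives $-C_B'\hH^\rome_v\geq -C(|E|,\kappa,C_B') - 2C_B'f^\perp_{E,\kappa}(\hH^\rome_v)^2\la\hH^\rome_v\ra$, with $C(|E|,\kappa,C_B')$ uniform for $E\in J$. Choosing $\delta_0$ so small that $\tfrac{d(\delta_0)}{4} - C_B'c_0' - C(|E|,\kappa,C_B')\geq 1-\epsilon_0$ for all $E\in J$ closes the estimate on $P_\perp$. Summing the two sectors and transporting back through $\cU$ gives the theorem, with $C_4 = \max\{C_4^{\mathrm{aux}},2C_B'\}$.

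The main obstacle is the third step: the expanded Hamiltonian is unbounded below, so the negative term $-C_B'\hH^\rome_v$ arising from the lower bound on $R_\delta$ cannot be controlled by a uniform constant, and must instead be digested by the $\la\hH^\rome_v\ra f^\perp_{E,\kappa}^2$ part of the target error. The balance that makes this viable—and the reason for introducing the nonlinear extension $\hat h$ in \eqref{Eq-hath} rather than a linear one—is precisely the uniform lower bound $\hat h'\geq \tfrac12$, which promotes $\hM_\delta$ from a mere number-operator bound (which would disappear on vacuum-rich sectors) to the strict positivity $d(\delta)/2$ on $P_\perp$ needed to defeat $C_B'c_0'$ by taking $\delta_0$ small.
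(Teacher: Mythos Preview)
Your argument is correct and follows the paper's route exactly: split via the vacuum projection $P_\Omega$ in the doubled picture, invoke Corollary~\ref{Cor-UnifMEPF} on $P_\Omega$, and on $\bar P_\Omega$ combine \eqref{eq:M-geH-} with Lemma~\ref{Lem-Unifinv}~(3) to reduce to a term $-C_B'\hH^\rome_v$ that is then absorbed into $f^\perp_{E,\kappa}(\hH^\rome_v)^2\la\hH^\rome_v\ra$, with the compact error $\cU(K_0^{\mathrm{aux}}\otimes P_\Omega)\cU^{-1}$. One logical point to tidy: you choose $\delta_0$ so that $\tfrac{d(\delta_0)}{4}-C_B'c_0'-C(|E|,\kappa,C_B')\geq 1-\epsilon_0$, but $\kappa$ is produced by Corollary~\ref{Cor-UnifMEPF} only \emph{after} $\delta$ is fixed, so as written the choice is circular; the paper sidesteps this by making the $\delta_0$-condition \eqref{ChoiceOfDelta} independent of $\kappa$ and only afterwards shrinking $\kappa\le\kappa'$ so that $3\kappa C_B'\le 1$ (the Mourre bound from Corollary~\ref{Cor-UnifMEPF} persists under decreasing $\kappa$ since $f^\perp_{E,\kappa}\ge f^\perp_{E,\kappa'}$), and your argument can be repaired the same way since your constant $C(|E|,\kappa,C_B')$ is bounded uniformly for $\kappa\le 1$.
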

\begin{remark*} As in Corollary~\ref{Cor-UnifMEPF} , the constant $\gamma$
can be chosen to only depend on $\epsilon_0,J,v_0$ and $\delta$, and as in the associated remark
one can in fact choose it uniformly in  $0<\delta\leq \delta_0$.
\end{remark*}
\begin{proof} We fix $v_0$, $J$ and $\epsilon_0$ as in the statement of the the theorem.

We begin by taking $\delta_0^\prime$ to be the $\delta_0$ coming from Corollary~\ref{Cor-UnifMEPF}.
Secondly we fix $C_B^\prime$ and $c_0^\prime$ to be the two constants from Lemma~\ref{Lem-Unifinv}~(3).

We can now choose $0<\delta_0\leq\delta_0^\prime$ such that
\begin{equation}\label{ChoiceOfDelta}
d(\delta_0)\geq \max\{C_B^\prime+2, \max_{E\in J} 2C_B^\prime(E+c_0^\prime)\}.
\end{equation}
Here we used that $\lim_{t\to 0+} d(t) = +\infty$.
Fix now a $0<\delta\leq\delta_0$ and denote by $\gamma$ the radius
coming from Corollary~\ref{Cor-UnifMEPF}.

The above choices anticipates the proof below, but we make them here
to make it evident that we pick the constants in the right order.

We begin the verification of the commutator estimate for $v\in B_\gamma(v_0)$ by  computing as a form on $\cD((M_\delta^\rome)^{1/2}) \cap \cD(H^\rome)$
\begin{equation}\label{eq:MourreH^e_1}
\cU^{-1} \left [ M_\delta^\rome  +  R_\delta^\rome \right ] \cU
= \left [ M_\delta +  R_\delta \right ] \otimes P_\Omega
 + \big [ M_\delta \otimes \one + \one \otimes \hM_\delta
+ R_\delta \otimes \one  \big ] \one \otimes \bar{P}_\Omega.
\end{equation}

We apply Corollary~\ref{Cor-UnifMEPF} to the first term in the r.h.s. of \eqref{eq:MourreH^e_1},
with the given $\delta$ (apart from $v_0$, $J$ and $\epsilon_0$).
This yields a $C_4^\prime$, a $\kappa^\prime>0$, and a compact operator $K_0^\prime$ (apart from $\gamma$)
such that the following bound holds
\begin{equation}\label{eq:MourreH^e_2}
\left[ M_\delta  + R_\delta \right ] \otimes P_\Omega
 \ge [( 1 - \epsilon_0 ) \one  -
C_4^\prime f_{E,\kappa^\prime}^\perp(H_v^\PF)^2\la H_v^\PF\ra - K_0^\prime] \otimes P_\Omega.
\end{equation}
Observe that the bound above also holds with $\kappa^\prime$ replaced by any $0<\kappa\leq\kappa^\prime$.

To bound from below the second term on the r.h.s. of \eqref{eq:MourreH^e_1}, we use Lemma~\ref{Lem-Unifinv}.
Together with \eqref{eq:M-geH-} and \eqref{ChoiceOfDelta}, this implies
\begin{align}\label{eq:MourreHe3}
\nonumber & \left[ \one \otimes \hM_\delta
+ R_\delta \otimes \one \right]\one \otimes \bar{P}_\Omega \\
\nonumber & \ge  \bigg[ \one \otimes d(\delta) \left( \d\Gamma(\hat{h}) +  \tfrac12\right)  - C_B^\prime
\left( H_v^\PF\otimes\one + c_0^\prime\right) \otimes \one \bigg ] \one \otimes \bar{P}_\Omega \\
\nonumber & \ge \left[ ( d(\delta) - C_B^\prime ) \one \otimes \d\Gamma(\hat{h}) -  C_B^\prime(\hH^\rome_v -E)
+  \frac{d(\delta)}2  - C_B^\prime ( E+c_0^\prime) \right]  \one \otimes \bar{P}_\Omega \\
& \ge \left[2-  C_B^\prime(\hH_v^\rome -E)\right] \one \otimes \bar{P}_\Omega,
\end{align}
Here we also made use of \eqref{hatHexpanded} and that $\hat{h}\geq 0$. We now pick a $0<\kappa\leq\kappa^\prime$
such that $3\kappa C_B^\prime\leq 1$. Inserting $1 = f_{E,\kappa}^2 + 2f_{E,\kappa}f^\perp_{E,\kappa} + (f^\perp_{E,\kappa})^2$
into \eqref{eq:MourreHe3} yields the bound
\[
\left [ \one \otimes \hM_\delta
+ R_\delta \otimes \one \right ]
\one \otimes \bar{P}_\Omega \geq \left[1 -  C_B^\prime(1+E^\prime)f^\perp_{E,\kappa}(\hH_v^\rome)^2\la \hH_v^\rome\ra\right] \one \otimes \bar{P}_\Omega,
\]
where $E^\prime = \max_{E\in J} |E|$. This estimate
together with \eqref{eq:MourreH^e_1} and \eqref{eq:MourreH^e_2}
lead to the statement of the theorem with $C_4 = \min\{C_4^\prime,C_B^\prime(1+E^\prime)\}$ and $K_0 = \cU [K_0^\prime \otimes P_\Omega ]\cU^{-1}$.
\end{proof}


\subsection{Checking the Abstract Assumptions}\label{Subsec-AbstractAss}

The purpose of this subsection is to complete the proof of Theorem~\ref{Thm-MainPFE}.
We do this by running through the abstract assumptions in
Section~\ref{Sec-AbstractResults} pertaining to
Theorems~\ref{thm:mainresult} and \ref{thm:mainresultk=1},
from which Theorem~\ref{Thm-MainPFE} then follows. In accordance with Remark~\ref{remark:assumpt-stat-regul}~\ref{item:1bll}), we ensure that all constants can be
chosen locally uniformly in energy $E$ and form factor $v$. This ensures local uniformity in Theorem~\ref{Thm-MainPFE}.

We fix $v_0\in\cI_\PF(d)$ and $E_0\in\sigma(H_{v_0}^\PF)$. Observe that there exists $e_0$ such that
$e_0 < \inf\sigma(H_v^\PF)$ for all $v\in \cB_{\gamma_0}(v_0)$, where $\gamma_0$ comes from Lemma~\ref{Lem-Unifinv}.
Put $J= [e_0, E_0]$.
Let $\gamma$ and $\delta^\prime_0$ be fixed by Theorem~\ref{ME-PFE} and choose a $\delta<\delta_0^\prime$,
which from now on is fixed.

We begin by postulating the objects for which the abstract assumptions in
Conditions~\ref{cond:condition} should hold. We take
\begin{align}
\nonumber \cH & = \cH^\rome\\
\nonumber H & = H^\rome_v\\
A & = A_\delta^\rome \\
\nonumber N & = K^\rho\otimes \one_{\Gamma(\gothh^\rome)} +
 \one_\cK\otimes \d\Gamma(h^\prime) + \one_{\cH^\rome}, \ \ \max\{2\tau,\tfrac12\} < \rho<1\\
\nonumber H^\prime & = \left[M_\delta^\rome + R_\delta^\rome\right]_{|\cD(N)}.
\end{align}
The constant $\tau$ appearing above is the one from {\bf (I1)}.
Observe that $R_\delta^\rome$ and $M_\delta^\rome$ are $N$-bounded. See Lemma~\ref{Lem-Nbound} just below.

We make use of the following dense subspace of $\cH$
\[
\cS = \cD(K)\otimes \Gamma_\fin\big(C^\infty_0(\RR)\otimes L^2(S^{d-1})\big) \subseteq\cH^\rome.
\]
The tensor product is algebraic. Observe that $\cS$ is a core for $H$, $N$, and $A$.
We recall that we can construct the group $\e^{\i tA}$ explicitly.
Let $\psi_t$ denote the (global) flow for the $1$-dimensional ODE $\dot{\psi}_t(\omega) = m_\delta^\rome(\psi_t(\omega))$.
Then, for continuous compactly supported supported $f$,
\[
(\e^{\i t a_\delta^\rome}f)(\omega) = \e^{\frac12 \int_0^t
(m_\delta^\rome)^\prime(\psi_s(\omega))\d s}f(\psi_t(\omega)).
\]
This in particular implies that 
\begin{equation}\label{Eq-SInvUnderA}
\e^{\i t A_\delta^\rome} = \Gamma(\e^{\i t a_\delta^\rome}):\cS \to \cS.
\end{equation}

We begin with the following lemma which implies that $R_\delta^\rome$ is $N$-bounded.

\begin{lemma}\label{Lem-Nbound} Let $v\in \cO_\tau$ and $\kappa = 1/4-\tau/(2\rho)$. Then
$\cD(N^{1-2\kappa})\subseteq\cD(\phi(v))$, and for $f\in \cD(N)$ we have
\[
\|\phi(v^\rome)f\| \leq  C \|v\|_\tau\| N^{1-2\kappa} f\|,
\]
where $C$ does not depend on $v$ nor on $f$.
\end{lemma}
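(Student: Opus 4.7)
The plan is to reduce the bound to the standard number estimates for field operators with bounded form factors, with the $K$-dependence of $v$ absorbed into the argument through a pair of factorization identities, and then to dominate the resulting expression by $N^{1-2\kappa}$ via an elementary operator inequality. Throughout I work on the dense subspace $\cS = \cD(K) \otimes \Gamma_{\fin}(C_0^\infty(\RR) \otimes L^2(S^{d-1}))$, which is a core for $N^\alpha$ for every $0 \le \alpha \le 1$, and extend by closability at the end.

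The first step is a pair of operator identities on $\cS$:
\begin{equation*}
a^*(v^\rome) = a^*\bigl(v^\rome (K+1)^{-\tau}\bigr)\bigl[(K+1)^\tau \otimes \one_{\Gamma(\gothh^\rome)}\bigr],
\end{equation*}
\begin{equation*}
a(v^\rome) = a\bigl([(K+1)^{-\tau} \otimes \one_{\gothh^\rome}]\, v^\rome\bigr)\bigl[(K+1)^\tau \otimes \one_{\Gamma(\gothh^\rome)}\bigr].
\end{equation*}
Both are verified by a direct computation on a product vector $\psi \otimes \eta \in \cS$, expanding $v^\rome \psi$ in an ONB of $\gothh^\rome$ and tracking how $(K+1)^{\pm\tau}$ acts on the input and output copies of $\cK$; the second identity follows from the adjoint of the corresponding factorization of $a^*(v^\rome)$ through the output $\cK$-copy. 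The rescaled form factors $v^\rome(K+1)^{-\tau}$ and $[(K+1)^{-\tau} \otimes \one]\, v^\rome$ both lie in $\cB(\cK; \cK \otimes \gothh^\rome)$ with norms bounded by $\|v\|_\tau$ (extending $v$ by zero on $\omega<0$ and passing through the polar-coordinate unitary preserves these norms). Inserting these identities into the standard estimates $\|a^*(w)g\| \le \|w\|_\cB \|(\cN+1)^{1/2}g\|$ and $\|a(w)g\|\le \|w\|_\cB \|\cN^{1/2}g\|$, valid for any $w \in \cB(\cK; \cK \otimes \gothh^\rome)$, and using that $(K+1)^\tau$ commutes with $\cN$, yields
\begin{equation*}
\|\phi(v^\rome)f\| \le C\|v\|_\tau \|(\cN+1)^{1/2}(K+1)^\tau f\| \quad \text{for } f \in \cS.
\end{equation*}

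The final step is a pointwise operator inequality. Because $h' \ge 1$ one has $\cN + 1 \le \d\Gamma(h') + 1 \le N$; by subadditivity of $x \mapsto x^\rho$ on $[0,\infty)$ one has $(K+1)^\rho \le K^\rho + 1 \le N$; hence $(\cN+1) + (K+1)^\rho \le 2N$. Setting $s = 2\tau/\rho \in (0,1)$ (which holds since $\rho > 2\tau$), the scalar inequality $a\,b^s \le (a+b)^{1+s}$ for $a, b \ge 0$ follows from a one-line case split on whether $a \le b$ or $a \ge b$, and by joint functional calculus, applied to the commuting pair $a = \cN+1$, $b = (K+1)^\rho$, yields
\begin{equation*}
(\cN+1)(K+1)^{2\tau} \le 2^{1+s} N^{1+s} = 2^{1+s} N^{2(1-2\kappa)}.
\end{equation*}
Taking square roots and combining with the previous estimate gives the claim on $\cS$; closability of $\phi(v^\rome)$ and density of $\cS$ in $\cD(N^{1-2\kappa}) \supset \cD(N)$ complete the proof.

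The main obstacle is the first step: one must carefully distinguish the input $\cK$-factor of $v^\rome$, on which $(K+1)^{-\tau}$ appears in $v^\rome(K+1)^{-\tau}$, from its output $\cK$-factor, on which $(K+1)^{-\tau}$ appears in $[(K+1)^{-\tau}\otimes\one]\,v^\rome$, and verify that the resulting $(K+1)^\tau$ can be legitimately pulled past the creation or annihilation operator on the chosen domain. Once the two identities are in place, the rest is bookkeeping.
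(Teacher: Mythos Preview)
Your proof is correct and takes essentially the same approach as the paper's: both first bound $\|\phi(v^\rome)f\|$ by a constant times $\|v\|_\tau\,\|(\cN^\rome+1)^{1/2}(K+1)^\tau f\|$ (the paper by invoking the estimates \cite[(3.14), (3.16)]{GGM}, you by making the underlying factorization identities explicit), and then dominate $(\cN^\rome+1)(K+1)^{2\tau}$ by a multiple of $N^{2(1-2\kappa)}$. The only cosmetic difference is in the last step, where the paper uses Young's inequality $ab \le a^p/p + b^q/q$ with $p = \rho(1-2\kappa)/\tau$, $q = 2(1-2\kappa)$, while you use the cruder but equally sufficient bound $ab^s \le (a+b)^{1+s}$.
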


\begin{proof}
Adopting notation from \cite{GGM} we put $C_0(v) = \| v (K+1)^{-\tau}\|^2$ and $C_2(v) = \| [(K+1)^{-\tau}\otimes \one_{\gothh}] v\|^2$.
We estimate for $f\in \cS$, repeating the argument for \cite[(3.14) and (3.16)]{GGM}, and  get
\[
\| a^*(v^\rome) f\|^2\leq C_0(v) \|(K+1)^\tau\otimes \one_{\Gamma(\gothh^\rome)} f\|^2+ C_0(v) \la f, (K+1)^{2\tau}\otimes \cN^\rome f\ra
\]
and
\[
\| a(v^\rome) f\|^2\leq C_2(v)  \la f, (K+1)^{2\tau}\otimes \cN^\rome f\ra.
\]
Observing the bound, with $2\kappa = 1/2-\tau/\rho$ and some $C'>0$,
\begin{align*}
(K+1)^{2\tau}\otimes\cN^\rome & \leq \frac{\tau}{\rho (1-2\kappa)}(K+1)^{2\rho(1-2\kappa)}\otimes\one_{\Gamma(\gothh^\rome)} \\
 & \ \ \ + \frac1{2(1-2\kappa)}(\cN^\rome)^{2(1-2\kappa)}
\leq C'N^{2(1-2\kappa)},
\end{align*}
yields
\begin{equation}\label{Cond2Term2}
\|\Phi(v^\rome)f\|\leq C \|v\|_\tau \|N^{1-2\kappa}f\|
\end{equation}
a priori as a bound for elements of $\cS$. The lemma now follows since $\cS$ is a core for $N$.
\end{proof}

\vspace{5mm}

\noindent\emph{Condition~\ref{cond:condition} (\ref{item:2i0a}):}
We make use of the fact (given the invariance of $\cS$ mentioned
in \eqref{Eq-SInvUnderA}) that our Condition~\ref{cond:condition} (\ref{item:2i0a})
is equivalent to Mourre's conditions, $\e^{\i t
  A}\cD(N)\subseteq\cD(N)$ (i.e. $\cD(N)$ is invariant) and
that $\i [N,A]$ extends from a form on $\cS$ to an element of
$\cB(N^{-1}\cH;\cH)$. See \cite[Proposition II.1]{Mo}.

From the computation
\[
\i [h^\prime,a_\delta^\rome] =  m_\delta^\rome h^{\dprime}
\]
it follows that the following identity holds in the sense of forms on $\cS$
\begin{equation}\label{Nprime}
N' = \i [N,A_\delta^\rome] = \one_\cK\otimes \d \Gamma\big( m_\delta^\rome h^\dprime\big).
\end{equation}
Since $m_\delta^\rome$ is bounded
and $\sup_{\omega\in \RR}|h^\dprime(\omega)|/h^\prime(\omega) <\infty$, we find that
$N'$ extends from $\cS$ to a bounded operator on $\cD(N)$, and the extension
is in fact an element of $\cB(N^{-1}\cH;\cH)$ as required.

It remains to check that $\cD(N)$ is invariant under $\e^{\i t A_\delta^\rome}$.
For this we compute strongly on $\cS$
\[
N \e^{\i t A_\delta^\rome} = \e^{\i t A_\delta^\rome}\big(K^\rho\otimes\one_{\Gamma(\gothh^\rome)}
+ \one_{\cK}\otimes \d\Gamma(h^\prime \circ \psi_{-t})\big).
\]
Since $t\to \psi_t(\omega)$ is increasing and $\omega\to h^\prime(\omega)$ is decreasing (and positive)
we find for $t\leq 0$
\[
0\leq h^\prime\circ \psi_{-t} \leq h^\prime.
\]
For positive $t$ we estimate
$\omega-Ct\leq \psi_{-t}(\omega)\leq\omega$, for some $C>0$, where we used that $m_\delta^\rome$ was a bounded function.
This gives for $t>0$
\[
0\leq h^\prime\circ \psi_{-t}(\omega) = \max\{1, \e^{-\psi_{-t}(\omega)} + \psi_{-t}(\omega)\} \leq \max\{1, \e^{-\omega+Ct} + \omega - Ct\}.
\]
Using that $\e^{-\omega + \alpha} + \omega \leq C_\alpha (e^{-\omega} + \omega)$, we get for any $t$ a $C^\prime = C^\prime(t)$ such that
$(h'\circ \psi_{-t})^2 \leq C' (h')^2$ and hence by \cite[Proposition~3.4]{GGM} we arrive at
\[
\d\Gamma(h^\prime\circ\psi_{-t})^2 \leq C^\prime \d\Gamma(h^\prime)^2.
\]
Since $\cS$ was a core for $N$ we now conclude that $\e^{\i t
  A_\delta^\rome}\cD(N)\subseteq \cD(N)$.
This completes the verification of Condition~\ref{cond:condition} (\ref{item:2i0a}).

\noindent\emph{Condition~\ref{cond:condition} (\ref{item:2}):}
We begin by observing that $N$ and $H_0^\rome$ commute.
In particular we can compute as a form on $\cS$
\[
\i [N^{-1},H^\rome_v]  = \i N^{-1} \phi(v^\rome) - \i \phi(v^\rome) N^{-1}.
\]
This computation in conjunction with Lemma~\ref{Lem-Nbound} implies that $\i [N^{-1},H_v^\rome]$ extends from a form on $\cD(H_v^\rome)$ to a bounded operator
and hence $N$ is of class $C^1(H)$.

Since the commutator form $\i [N,H]$ extends from $\cD(N)\cap \cD(H)$ to a bounded form on $\cD(N)$
it suffices to compute it on a core for $N$. Here we take again $\cS$ and compute
\begin{align}\label{Cond2Comp}
\nonumber \i [N,H] & = \big[K^\rho\otimes\one_{\Gamma(\gothh^\rome)},\phi(v^\rome)\big] + \phi(\i h^\prime  v^\rome)\\
& =\phi\big([K^\rho\otimes\one_{\gothh^\rome}]v^\rome-v^\rome K^\rho\big) + \phi(\i v^\rome).
\end{align}
That the second term extends by continuity to a bounded form on $\cD(N^{\frac12-\kappa})$ follows from
Lemma~\ref{Lem-Nbound} (applied with $\i v^\rome$ instead of $v^\rome$) and interpolation.

In order to deal with the first term in \eqref{Cond2Comp} we write
\begin{align*}
\phi\big([K^\rho\otimes\one_{\gothh^\rome}]v^\rome-v^\rome K^\rho\big)
= \cU\Big(\phi\big([K^\rho\otimes\one_{\tilde{\gothh}}]\tv-\tv K^\rho\big)
\otimes\one_{\Gamma(\tilde{\gothh})}\Big) \cU^{-1}.
\end{align*}
Here we need the new assumption {\bf (I4)}. We will immediately verify that the above expression
extends to a bounded form on $\cD(N^{1/2-\kappa})$ for some $\kappa > 0$.
This implies the required property for $\i [H,N]^0$.

We employ the representation formula \eqref{Eq-Thomas}
with $K$ instead of $N$.
Compute as a form on $\cD(K\otimes\one_{\tilde{\gothh}})\times \cD(K)$
\begin{align*}
& (K^\rho\otimes\one_{\tilde{\gothh}})\tv-\tv K^\rho  = -c_\rho \int_0^\infty t^\rho\big[\big((K+t)^{-1}
\otimes \one_{\tilde{\gothh}}\big) \tv
- \tv (K+t)^{-1}\big] \d t\\
& = B -  c_\rho \int_1^\infty t^\rho\big((K+t)^{-1}\otimes \one_{\tilde{\gothh}}\big)\big[ \tv K
- (K\otimes\one_{\tilde{\gothh}}) \tv \big] (K+t)^{-1} \d t,
\end{align*}
where $B$ is the contribution from the integral between $0$ and $1$, which  due to {\bf (I1)} is a bounded operator.

By {\bf (I4)} we have
\begin{align*}
c_1 & := \big\|\big(  \tv K
- (K\otimes\one_{\tilde{\gothh}}) \tv \big)(K+1)^{ -\frac12}\big\| <\infty, \\
c_2 & := \big\|(K+1)^{ -\frac12}\otimes \one_{\tilde{\gothh}}\big(  \tv K
- (K\otimes\one_{\tilde{\gothh}}) \tv \big)\big\| <\infty.
\end{align*}
Let $\tau'<1/2$ be chosen such that $\rho/2 > \tau'> \rho-1/2$. This is possible due to the choice of $\rho$.
We estimate
for $\psi\in \cD(K\otimes\one_{\tilde{\gothh}})$ and $\varphi\in \cD(K)$
\begin{equation*}
\big\la\psi,\big((K^\rho\otimes\one_{\tilde{\gothh}})\tv-\tv K^\rho\big)\varphi\big\ra
 \leq \|B\|\|\psi\|\varphi\| + \frac{c_1 c_\rho}{\frac12+\tau'-\rho} \|\psi\| \|(K+1)^{\tau'}\varphi\|.
\end{equation*}
Similarly we get
\begin{equation*}
\big\la\psi,\big((K^\rho\otimes\one_{\tilde{\gothh}})\tv-\tv K^\rho\big)\varphi\big\ra
 \leq \|B\|\|\psi\|\varphi\| + \frac{c_2 c_\rho}{\frac12+\tau'-\rho} \|(K\otimes\one_{\tilde{\gothh}}+1)^{\tau'}\psi\| \|\varphi\|.
\end{equation*}
We have thus established that the first term in \eqref{Cond2Comp} is the (expanded) field operator associated to
an operator in $\cO_{\tau'}$.
We can thus employ Lemma~\ref{Lem-Nbound} again, this time with $v^\rome$ replaced by
$[K^\rho\otimes\one_{\gothh^\rome}]v^\rome-v^\rome K^\rho$ and $\kappa$ replaced by
$0<\kappa' = 1/4 - \tau'/(2\rho)< 1/4$.
Together with an interpolation argument this ensures that $\phi((K^\rho\otimes\one_{\gothh^\rome})v^\rome-v^\rome K^\rho)$ extends by continuity to a bounded form on $\cD(N^{1/2-\kappa'})$.

We have thus verified Condition~\ref{cond:condition}~(\ref{item:2}) with the smallest of the two kappa's.
In addition we observe that the $\cB(N^{-1/2+\kappa}\cH;N^{1/2 - \kappa}\cH)$-norm of $\i [N,H]^0$ is bounded
by a constant times $\|v\|_\PF$, cf. Remark~\ref{remark:assumpt-stat-regul}~\ref{item:1bll}).
\vspace{5mm}

\begin{remark}\label{Rem-RelaxI4} We observe from the discussion above that we could relax {\bf (I4)} and
require instead that $[K^\rho\otimes\one_{\tilde{\gothh}}]\tv-\tv K^\rho$ extends to an element of
$\cB(\cD(K^{\eta});\cK\otimes\tilde{\gothh})\cap \cB(\cK;\cD(K^\eta)^*\otimes\tilde{\gothh})$, for some
$1/2 \leq \eta < 1-\tau$, where $\tau$ is coming from ${\bf (I1)}$. This would still leave room
to choose $\rho$ and $\tau'$ (in the argument above) such that $1> \rho>2\tau$ and $\rho/2>\tau'> \rho+\eta-1$.
\end{remark}

While we do not know the domain of $H$, it turns out that we can indeed compute the intersection domain
$\cD(H)\cap \cD(N)$. This is done in the following lemma.

\begin{lemma}\label{Lemma-ExplDomain} We have the identity
\begin{equation}\label{DNcapDH}
 \cD(H)\cap \cD(N) = \cD\big(K\otimes\one_{\Gamma(\gothh^\rome)}\big)\cap 
\cD\big(\one_\cK\otimes\d\Gamma(\max\{h',\omega\})\big)
\end{equation}
and $\cS$ is dense in $\cD(H)\cap\cD(N)$ with respect to the intersection topology.
\end{lemma}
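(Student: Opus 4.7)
The plan is to identify both sides of \eqref{DNcapDH} with the domain of the auxiliary operator
\[
M := K\otimes\one + \one\otimes\d\Gamma(\max\{h',\omega\}).
\]
Since $K$ and $\d\Gamma(\max\{h',\omega\})$ are commuting non-negative self-adjoint operators on $\cH^\rome$, $M$ is self-adjoint with domain equal to the right-hand side of \eqref{DNcapDH}, and $\cS$ is a core for $M$ by the standard fact that the algebraic tensor product of cores is a core for a sum of commuting non-negative self-adjoint operators.

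The first technical step is to establish a norm-equivalence on $\cS$: there should exist $c_1,c_2>0$ (depending on $v$ through $\|v\|_\PF$) such that for all $\psi\in\cS$,
\[
c_1\bigl(\|M\psi\|+\|\psi\|\bigr)\leq\|H\psi\|+\|N\psi\|+\|\psi\|\leq c_2\bigl(\|M\psi\|+\|\psi\|\bigr).
\]
The relevant pointwise comparisons, from \eqref{Eq-h-and-ve} and \eqref{eq:h'_ge_h}, are $h_+=\omega_+$, $h_-\leq h'-\tfrac12$, and hence $|h|\leq\omega_+ + h'$. The forward bound uses the identity of commuting non-negative self-adjoint operators $K+\d\Gamma(h_+) = H_0^\rome + \d\Gamma(h_-)$, where $H_0^\rome := K\otimes\one+\one\otimes\d\Gamma(h)$: this yields $\|K\psi\|,\|\d\Gamma(\omega_+)\psi\|\leq\|(H_0^\rome+\d\Gamma(h_-))\psi\|\leq\|H_0^\rome\psi\|+\|N\psi\|$, and then $\|H_0^\rome\psi\|\leq\|H\psi\|+\|\phi(v^\rome)\psi\|$ combined with Lemma~\ref{Lem-Nbound} and interpolation (making $\phi(v^\rome)$ $N$-bounded with arbitrarily small relative bound) closes the bound on $\|M\psi\|$. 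The reverse bound is analogous.

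These two ingredients immediately yield the inclusion $\cD(M)\subseteq\cD(H)\cap\cD(N)$ with comparable intersection graph norm: approximate $\psi\in\cD(M)$ by $\psi_n\in\cS$ in the $M$-graph norm; then $\{(H\psi_n,N\psi_n)\}$ is Cauchy by the forward inequality, and closedness of $H$ and $N$ gives $\psi\in\cD(H)\cap\cD(N)$ with $H\psi_n\to H\psi$ and $N\psi_n\to N\psi$.

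The main obstacle will be the reverse inclusion $\cD(H)\cap\cD(N)\subseteq\cD(M)$, which by the norm-equivalence is equivalent to $\cS$ being dense in $\cD(H)\cap\cD(N)$ in the intersection topology. For $\psi\in\cD(H)\cap\cD(N)$ I would use a three-stage regularization: first, $\psi_n^{(1)}:=I_n(H)\psi$ lies in $\cD(H)\cap\cD(N)$ by the $C^1(H)$ property of $N$ (Remark~\ref{RemarkTo1}~1) applied with $H$ in place of $A$) and converges to $\psi$ in both graph norms---the $N$-convergence uses the analogue of \eqref{eq:32} for the pair $(N,H)$, Lemma~\ref{lemma:HN}, and $\|(H-\i n)^{-1}\|\leq 1/n$; second, $I_{\i m}(N)\psi_n^{(1)}$ lies in $\cD(H)\cap\cD(N^k)$ for every $k$, with graph-norm convergence as $m\to\infty$; third, apply simultaneous $K$- and boson-sector cutoffs via $I_\ell(K)\otimes Q_\ell\Gamma(\chi_\ell(\omega))$ (with $\chi_\ell\in C_0^\infty(\RR)$ tending pointwise to $1$ and $Q_\ell$ a projection onto at most $\ell$ particles), followed by a mollification in $\omega$ to land in $C_0^\infty(\RR)\otimes L^2(S^{d-1})$. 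These operations commute with $H_0^\rome$, $M$, and $N$ by construction, while their commutators with $\phi(v^\rome)$ vanish in the limit by dominated convergence on the $\|\cdot\|_\PF$-norm using {\bf(I1)}. A suitable diagonal subsequence then lies in $\cS$ and converges in the $(H,N)$-intersection graph norm, completing the proof. The hard part will be this third stage: the cutoffs and mollification must be ordered and chosen so that each commutator with $\phi(v^\rome)$ is controlled in the topology relevant at that stage, in particular that $\Gamma(\chi_\ell)$ approximates $v^\rome$ in $\|\cdot\|_\PF$ strongly enough to close the estimates uniformly in the earlier regularization parameters.
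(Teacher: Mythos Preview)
Your forward inclusion $\cD(M)\subseteq\cD(H)\cap\cD(N)$ via norm equivalence on the core $\cS$ is correct, and so is the identification of the reverse inclusion as the crux. However, Step~(c) is circular as written. You compute $[C_\ell,H]=[C_\ell,\phi(v^\rome)]$ by splitting $H=H_0^\rome+\phi(v^\rome)$, but this decomposition is only known to hold on $\cD(H_0^\rome)$; the paper explicitly warns that $\cD(H_v^\rome)$ may depend on $v$, so you cannot assume a vector in $\cD(H)\cap\bigcap_k\cD(N^k)$ lies in $\cD(H_0^\rome)$ --- that is essentially the inclusion you are proving. (Step~(a) is also inert: it gains nothing since $\psi$ is already in $\cD(H)\cap\cD(N)$.) The gap can be closed, and in fact the whole three-stage regularization bypassed, by a one-line testing argument: for $\psi\in\cD(H)\cap\cD(N)$ one has $\phi(v^\rome)\psi\in\cH$ by Lemma~\ref{Lem-Nbound}, and for $\varphi\in\cS$ (a core for the self-adjoint $H_0^\rome$),
\[
\langle H_0^\rome\varphi,\psi\rangle=\langle (H-\phi(v^\rome))\varphi,\psi\rangle=\langle\varphi,H\psi-\phi(v^\rome)\psi\rangle,
\]
so $\psi\in\cD(H_0^\rome)$ and hence $\psi\in\cD(H_0^\rome)\cap\cD(N)=\cD(M)$.

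The paper takes an entirely different route. It forms the accretive operators $T_1=N+\i(H_0^\rome+\phi(v^\rome))$ on $\cD(M)$ and $T_2=N+\i H$ on $\cD(H)\cap\cD(N)$, shows both generate contraction semigroups (the first by perturbing $T_0=N+\i H_0^\rome$ via the relative bound from Lemma~\ref{Lem-Nbound}, the second via \cite[Theorem~2.25]{GGM1}, which uses only Condition~\ref{cond:condition}~(\ref{item:2})), observes $T_1\subset T_2$ from $\cS$ being a core for $H$, and concludes $T_1=T_2$ because their resolvent sets intersect by Hille--Yosida. Density of $\cS$ is then established in the $T_0$-graph norm by explicit particle-number and momentum cutoffs --- essentially your Step~(c), but applied where the decomposition $H_0^\rome+\phi(v^\rome)$ is trivially valid --- and transferred to the $(H,N)$-intersection topology by the closed graph theorem. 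The semigroup argument packages the domain equality into a resolvent identity and avoids any direct manipulation of $\cD(H)$; the testing argument above is more elementary but needs $\cS$ to be a core for $H_0^\rome$.
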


\begin{proof} Let for the purpose of this proof $H_0 = H_0^\e$, the unperturbed expanded Hamiltonian,
and denote by $\cD$ the right-hand side of \eqref{DNcapDH}.
Since $N$ controls the unphysical part of $\d\Gamma(h)$, due to the choice of extension of $\omega$ by an exponential,
we observe that the identity  \eqref{DNcapDH} holds if $H$ is replaced by $H_0$.
Since $H_0$ and $N$ commute we find that $T_0=N+\i H_0$ is a  closed operator on $\cD$
and it clearly generates a contraction semigroup.

We now construct the formal operator sum $N+\i H$ in two different ways.
By Lemma~\ref{Lem-Nbound} $\cD(\phi(v^\rome)) \subset \cD(N^{1-2\kappa})$ and hence for $u\in \cD$
\[
\|\phi(v^\rome) u\|\leq c\|N^{1-2\kappa}u\| + c'\|u\| \leq \frac14 \|Nu\| + c''\|u\|\leq \frac14 \|T_{0}u\| +c''\|u\|.
\]
From this estimate we deduce that  $T_{1} = T_{0}+\i \phi(v^\rome) =: N+\i G$  is a closed  operator on $\cD$ 
and it  generates a contraction semigroup. 
See \cite[Lemma preceding Theorem X.50]{RS}. Here $G$ is implicitly defined as the operator sum $G=H_0+\phi(v^\rome)$
with domain $\cD$.

On the other hand, since we have just established Condition~\ref{cond:condition}~(\ref{item:2}),
 we conclude from \cite[Theorem~2.25]{GGM1}
that $T_{2,\pm} = N\pm\i H$ are  closed operators on $\cD(H)\cap \cD(N)$. In addition
we have $T_{2,\pm}^* = T_{2,\mp}$ and since $T_{2,\pm}$ are both accretive we conclude that $T_{2,+}$ generates a contraction semigroup.
See \cite[Corollary to Theorem X.48]{RS}.

We proceed to argue that $T_2=T_{2,+}$ is an extension of $T_1$, i.e. $T_1\subset T_2$.
 Since $\cS\subseteq\cD$, $G$ is a symmetric extension of $H_{|\cS}$ and 
$\cS$ is a core for $H$ we deduce that $H$ is an extension of $G$. Hence
 indeed $T_1\subset T_2$.

We now argue that in fact $T_1= T_2$, or more poignantly that their domains
coincide. This will follow 
if the intersection of the resolvent sets is non-empty. Indeed, let
$\zeta\in \rho(T_1)\cap \rho(T_2)$. 
Then
\[
(T_2-\zeta)(T_1-\zeta)^{-1} = (T_1-\zeta)(T_1-\zeta)^{-1} = \one,
\]
and hence $(T_2-\zeta)^{-1} = (T_1-\zeta)^{-1}$ and the domains must coincide.
But by the Hille-Yosida theorem \cite[Theorem~X.47a]{RS} we have $(-\infty,0)\subset \rho(T_1)\cap \rho(T_2)$.
Here we used that both $T_1$ and $T_2$ generate contraction semigroups.

It remains to ascertain that $\cS$ is dense in $\cD$ with respect to the
intersection topology of $\cD(H)\cap\cD(N)$. We begin by verifying that $\cS$ is dense in $\cD$ with respect to the
graph norm of $T_0$, which induces the intersection topology of $\cD(H_0)\cap \cD(N)=\cD$.

Let $\psi\in\cD$. Observe first that $\lim_{n\to\infty} \one_{\cN^\rome\leq n}\psi\to\psi$ in the graph norm of $T_0$,
since $\cN^\rome$ and $T_0$ commute. Similarly we find that
$\one_\cK\otimes\Gamma(\one_{|\omega|\leq \ell}) \psi \to \psi$ in the graph norm of $T_0$.
Hence it suffices to approximate $\psi\in\cD$ with $\Gamma(\one_{|\omega|\leq \ell})\one_{\cN^\rome\leq n}\psi=\psi$,
for some $\ell$ and $n$, by elements from $\cS$ in the graph norm of $T_0$. 
Fix now such a $\psi$, $n$ and $\ell$. 

 Since $\cS$ is a core for $K\otimes\one_{\Gamma(\gothh^\rome)}$ we can find a sequence $\{\psi_j\}\subset \cS$
with $\psi_j\to\psi$ in $\cD(K\otimes\one_{\Gamma(\gothh^\rome)})$. Put
$\tilde{\psi}_j = \one_{\cN_\rome\leq n}[\one_\cK\otimes \Gamma(f)] \psi_j\in\cS$, where $f\in C_0^\infty(\RR)$,
with  $0\leq f\leq 1$ and $f=1$ on $[-\ell,\ell]$. Then $\tilde{\psi}_j\to\psi$
in $\cD(K\otimes \one_{\Gamma(\gothh^\rome)})$ as well. We now observe that
$T_0\tilde{\psi}_j  = (\i K\otimes\one_{\Gamma(\gothh^\rome)} + B_{n,\ell})\tilde{\psi}_j$,
for some bounded operator $B_{n,\ell}$. This implies density of $\cS$ in $\cD$ in the graph norm of $T_0$.

By the closed graph theorem $H(T_0-\zeta)^{-1}$ and
$N(T_0-\zeta)^{-1}$ are  bounded, and hence $\cS$ is also dense in $\cD(H)\cap\cD(N)=\cD$
with respect to the indicated intersection topology.
\end{proof}

\vspace{0.5cm}

\noindent\emph{Condition~\ref{cond:condition} (\ref{item:2i0ue}):}
Let $\sigma$ be such that $R(\eta)$ preserves $\cD(N)$ for $\eta$ with $|\Im\eta|\geq \sigma$.
It suffices to establish the identity
\[
R(\eta)H -H R(\eta) = -\i R(\eta)H^\prime R(\eta),
\]
for $\eta$ with $|\Im\eta|\geq \sigma$, 
as a form on $\cD(H)\cap \cD(N)$, since this set is dense in $\cD(H)\cap \cD(N^{1/2})$ by Remark~\ref{IntersectionDense}.

By Lemma~\ref{Lemma-ExplDomain}, we can on the set $\cD(H)\cap \cD(N)$ espress $H$ and $H'$ as sums of
operators $H = H_0^\rome + \phi(v^\rome)$ and $H' = \d\Gamma(h') -\phi(\i a_\delta^\rome v^\rome)$. 

We are thus reduced to verifying the following two form identities on $\cD(H)\cap \cD(N)$
\begin{align}\label{Eq-FreePart}
R(\eta)H^\rome_0 -H^\rome_0 R(\eta) &= -\i R(\eta)\one_\cK\otimes \d\Gamma(m_\delta^\rome h') R(\eta)\\
\label{Eq-IntPart} R(\eta)\phi(v^\rome) - \phi(v^\rome)R(\eta) &= \i R(\eta) \phi(\i a_\delta^\rome v^\rome) R(\eta). 
\end{align}
Since all operators appearing in \eqref{Eq-FreePart} commute with $\cN^\rome$ it suffices to verify this
identity on each fixed expanded particle sector with $\cN^\rome=n$.
Introduce for $\ell$ a positive integer the semibounded dispersion $h_\ell(\omega) = \max\{-\ell,h(\omega)\}$
and a cutoff expanded free Hamiltonian $H_{0,\ell} = K\otimes\one_{\Gamma(\gothh^\rome)} + \one_\cK\otimes \d\Gamma(h_\ell)$.
Then on a particle sector $H_{0,\ell}$ is of class $C^1_\Mo(A)$ such that we can compute for $|\Im\eta|\geq \sigma_{n,\ell}$
\[
R(\eta)H_{0,\ell} - H_{0,\ell}R(\eta) = -\i R(\eta)\one_\cK\otimes \d\Gamma(m_\delta^\rome h_\ell')R(\eta).
\]
as a form on $\one_{[\cN^\rome=n]}\cD$. Here $\sigma_{n,\ell}$ is some positive constant.
Since both sides are analytic in $\eta$ for $|\Im\eta|\geq \sigma$ we conclude the above identity for all such $\eta$.
Appealing to the explicit form of the domain $\cD$ we find that we can remove the cutoff $\ell\to\infty$
by the dominated convergence theorem. This yields \eqref{Eq-FreePart} for $|\Im\eta|\geq \sigma$.

As for \eqref{Eq-IntPart} we recall that we have already established that $N$ is of class $C^1_\Mo(A)$.
It is  a consequence of the proof of \cite[Proposition~II.1]{Mo}, that 
$\i[\phi(v^\rome),A]$ read as a form on $\cD(N)\cap\cD(A)$ can be represented by an extension
from the form computed on $\cS$. Here we used \eqref{Eq-SInvUnderA}.  As a form on $\cS$ we clearly have
$\i[\phi(v^\rome),A] = -\phi(\i a_\delta^\rome v^\rome)$, which extends to an $N$-bounded operator
by Lemma~\ref{Lem-Nbound}. The computation $R(\eta)\phi(v^\rome) - \phi(v^\rome)R(\eta) = R(\eta)[\phi(v^\rome),A]R(\eta)$
as forms on $\cD(N)$ now concludes the verification of \eqref{Eq-IntPart}, and hence
of Condition~\ref{cond:condition}~(\ref{item:2i0ue}).

\vspace{5mm}

\noindent\emph{Condition~\ref{cond:condition} (\ref{item:1}):}
We compute first as a form on $\cS$
\[
\i [H',A] = H^\dprime = \one_\cK\otimes \d\Gamma\big(m_\delta^\rome \frac{\d m_\delta^\rome}{\d \omega}h^\prime
+ (m_\delta^\rome)^2h^\dprime\big) - \phi\big((a_\delta^\rome)^2v^\rome\big)
\]
and observe that the right-hand side extends by continuity to an  $N$-bounded operator, cf. Lemma~\ref{Lem-Nbound}.
Again, by the proof of  \cite[Proposition~II.1]{Mo}, cf. \eqref{Eq-SInvUnderA}, we conclude that the operator on the right-hand side of the
formula also represents the commutator form $\i[H',A]$ on $\cD(N)\cap \cD(A)$.
\vspace{5mm}

\noindent\emph{Condition~\ref{cond:virial}:} By Lemma~\ref{Lemma-ExplDomain} and Remark~\ref{IntersectionDense}, 
it suffices to check the form bound
in the virial condition on $\cS$. In addition, since $K^\rho\leq \one + K$,
it suffices to check the estimate with $\rho=1$.

Recalling \eqref{Eq-hath} and \eqref{Eq-h-and-ve} we
observe that $\hat{h}\leq \hat{h}'$, and hence $h+h'\geq 0$. Making use of this observation we find that
\begin{align*}
K\otimes\one_{\Gamma(\gothh^\rome)} + \one_\cK\otimes \d\Gamma(h') & \leq K\otimes\one_{\Gamma(\gothh^\rome)}  + \one_\cK\otimes \big(\d\Gamma(h)  + 2 \d\Gamma (h')\big) \\
& \leq K\otimes \one_{\Gamma(\gothh^\rome)}+ \one_\cK\otimes \d\Gamma(h) + 2 M_\delta^\rome.
\end{align*}
We now add and subtract $\Phi(v^\rome) + 2R_\delta^\rome$ to obtain
\[
K\otimes\one_{\Gamma(\gothh^\rome)} + \one_\cK\otimes  \d\Gamma(h') \leq H^\rome_v + 2H' -\Phi(v^\rome) - 2R_\delta^\rome.
\]
We now make use of the fact that
\[
C = \|(\Phi(v^\rome) + 2R_\delta^\rome)(K\otimes\one_{\Gamma(\gothh^\rome)} + \one_\cK\otimes \d\Gamma(h') + 1)^{-\frac12}\| < \infty
\]
to conclude the form estimate
\[
K\otimes\one_{\Gamma(\gothh^\rome)} + \one_\cK \otimes \d\Gamma(h') \leq  H^\rome_v + 2H' + \tfrac12(K\otimes\one_{\Gamma(\gothh^\rome)} + \one_\cK\otimes \d\Gamma(h') + 1) + \tfrac12 C^2.
\]
This completes the verification of the virial bound. We again observe that the constants involved can be chosen
independent of $E$ in a bounded set and $v\in\cB_\gamma(v_0)$.

\vspace{5mm}

\noindent\emph{Condition~\ref{cond:mourre}:}
This condition has already been essentially verified in the form of Theorem~\ref{ME-PFE}.
We only need to observe that the form bound extends by continuity from
$\cD(H)\cap \cD(N)$ to $\cD(H)\cap\cD(N^{1/2})$, cf. Remark~\ref{IntersectionDense}.

\vspace{5mm}

\noindent\emph{The condition \eqref{eq:34}:} Let $\psi^\rome$ be a bound state for
$H=H^\rome_v$. That is $\psi^\rome \in \cD(H^\rome_v)$ and $H^\rome_v\psi^\rome = E\psi^\rome$, for some $E\in\RR$.
Recall that $\psi^\rome = \cU(\psi\otimes\Omega)$, where $\psi\in \cD(\tH^\PF_v)$ and $\tH^\PF_v\psi = E\psi$.
From \cite[Proposition~6.5]{GGM} we conclude that $\psi\in\cD(\cN^{1/2})$.
Hence we conclude that $\psi^\rome \in \cD(\d\Gamma(h')^{1/2})\cap \cU\cD(\tH_v^\PF\otimes
\one_{\Gamma(\tilde{\gothh})})$.
In particular we find that $\psi^\rome \in\cD(H)\cap \cD(N^{1/2})$ and the result follows from
the virial estimate in Condition~\ref{cond:virial}.
Observe again that $\|N^{1/2}\psi\|$ can be bounded uniformly in  $v\in \cB_{\gamma_0}(v_0)$ and $E\in [e_0,E_0]$.

\vspace{5mm}

\noindent\emph{Condition~\ref{cond:vir21} $k_0 = 1$:} This merely amounts to checking
the statement in \eqref{eq:5oiooa} with $\ell=0$. But this is trivially satisfied since
$[N,N'] = 0$. See \eqref{Nprime}.

This completes the verification of the conditions needed to conclude
Theorem~\ref{Thm-MainPFE} 
from Theorems~\ref{thm:mainresult} and \ref{thm:mainresultk=1}.

\section{AC-Stark type models}\label{AC--Stark model}

\subsection{The Model and the Result}

We will work in the framework of generalized $N$-body systems,
which we review briefly. Let $\cA$ be a finite index set and
$X$ a finite dimensional real vector-space with inner product.
There is an injective map from $\cA$ into the subspaces of
$X$, $\cA\ni a\to X^a\subseteq X$, and we write $X_a =
(X^a)^\perp$. We introduce a partial ordering on $\cA$:
$$
a\subset b \Leftrightarrow  X^a\subseteq X^b
$$
 and assume the following
\begin{enumerate}
\item There exist $a_{\rm min},a_{\rm max}\in\cA$ with $X^{a_{\rm min}}=\{0\}$ and $X^{a_{\rm max}}= X$.
\item For each $a,b\in \cA$ there exists $c=a\cup b\in\cA$ with
$X_a\cap X_b = X_c$.
\end{enumerate}

We will write $x^a$ and $x_a$ for the orthogonal projection of a vector
$x$ onto the subspaces $X^a$ and $X_a$ respectively.

We will work with a generalized potential
$$
V=V(t,x) = \sum_{a\in\cA \setminus\{a_{\rm min}\}} V_a(t,x^a),
$$
where $V_a$ is a real-valued  function on $\R\times X^a$. In the conditions below 
$\alpha$ denotes multiindices.

\begin{conds}\label{cond:ac-stark-model} Let $k_0\in \N$ be given.
  For each
$a\neq a_{\mathrm{min}}$ the following holds. The pair-potential $\R\times
X^a \ni(t,y)\to V_a(t,y) \in \R$ is a continuous  function
satisfying
\begin{enumerate}[\quad\normalfont (1)]
\item \label{item:3}Periodicity: $V_a(t+1,y)=V_a(t,y)$, $t\in\RR\mand y\in X^a$.
\item \label{item:4}Differentiability  in $y$: For all $\alpha$ with  $|\alpha|\leq k_0+1$ there exist
  $\partial^\alpha_y V_a\in C(\R\times X^a)$.
\item \label{item:5} Global bounds: For all $\alpha$ and $k\in \N\cup\{0\}$ with
  $|\alpha|+k\leq k_0+1$ there are global bounds
  $|\partial^\alpha_y(y\cdot \nabla_y)^kV_a(t,y)|\leq C$.
\item \label{item:6} Decay at infinity: $|V_a(t,y)| + |y\cdot\nabla_y
  V_a(t,y)|=o(1)$ uniformly in
$t$.
\item \label{item:7}Regularity in $t$: There exists
  $\partial_t V_a\in C(\R\times X^a)$ and there is a   global bound  $|\partial_tV_a(t,y)|\leq C$.
\end{enumerate}
\end{conds}

We consider under Condition~\ref{cond:ac-stark-model} the Hamiltonian $h=h(t)=p^2+V$, $p=-\i\nabla$,  on the
Hilbert space $L^2(X)$. The corresponding propagator $U$ satisfies:
It is two-parameter strongly continuous family of unitary
operators  which solves the time-dependent
Schr{\"o}dinger equation
$$
\i\frac{d}{dt}{U}(t,s)\phi = {h}(t){U}(t,s)\phi\; \mfor \phi\in \cD (p^2).
$$
The family  satisfies the
Chapman Kolmogorov equations
$$
{U}(s,r){U}(r,t)={U}(s,t),\qquad r,s,t\in\R,
$$ the initial condition ${U}(s,s)=\one$ for any $s\in\R$ and
 the  periodicity equation
$$
{U}(t+1,s+1) = {U}(t,s),\qquad s,t\in\R.
$$

The operator  $U(1,0)$ is called the {\it monodromy operator}. For each $a\neq a_{\rm max}$ the
  sub-Hamiltonian  monodromy operator is $U^a(1,0)$; it  is defined as the
  monodromy operator on $\cH^a=L^2(X^a)$ constructed for $a\neq
    a_{\rm min}$
  from $h^a=(p^a)^2+V^a$, $V^a=\sum_{a_{\rm min}\neq b\subset a}
  V_b(t,x^b)$. If $a=
    a_{\rm min}$ we define $U^a(1,0)=\one$ (implying  $\sigma_{\rm pp}(U^{a_{\rm min}}(1,0)) = \{1\}$).
The set of {\it thresholds} is then
\begin{equation}
  \label{eq:52}
  \cF(U(1,0)) = \bigcup_{a\neq  a_{\rm max}}
\sigma_{\rm pp}(U^a(1,0)),
\end{equation}

We recall from \cite{MS} that the set of thresholds  is closed and countable, and
non-threshold eigenvalues, i.e. points  in $\sigma_{\rm pp}(U(1,0))\backslash \cF(U(1,0))$, have finite multiplicity and can only accumulate at
the set of thresholds. Moreover any corresponding bound state is
exponentially decaying, the singular continuous spectrum $\sigma_{\rm
  sc}(U(1,0)) = \emptyset$ and there are integral propagation
estimates for states localized away from the set of eigenvalues  and away from
$\cF(U(1,0))$. It should be remarked that the weakest condition,
Condition~\ref{cond:ac-stark-model} with  $k_0=1$, corresponds to
\cite[Condition 1.1]{MS} (more precisely Condition~\ref{cond:ac-stark-model} with  $k_0=1$  is slightly weaker than
\cite[Condition 1.1]{MS}, and  we also
remark that \cite{MS} goes
through with this modification). All of the above properties  are proven in
\cite{MS}
either under \cite[Condition 1.1]{MS}  or under weaker conditions
allowing local singularities. In particular  local singularities up to
 the Coulomb singularity  are covered in  \cite{MS}.
 See Subsection~\ref{Regularity of non-threshold atomic type bound states} for a new
 result for Coulomb systems.

In the following subsection we establish the theorem below, which implies Theorem~\ref{thm:ac-stark-type2dd}~(\ref{item:9}).

\begin{thm}\label{thm:ac-stark-type} Suppose Conditions~\ref{cond:ac-stark-model},
for some $k_0\in \N$. Let $\phi$ be an bound state for $U(1,0)$ pertaining to an eigenvalue
$\e^{-\i\lambda}\notin \cF(U(1,0))$. Then $\phi\in \cD(|p|^{k_0+1})$.
  \end{thm}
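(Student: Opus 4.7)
The plan is to reduce Theorem~\ref{thm:ac-stark-type} to an application of Corollary~\ref{cor:more-n-regularity} in the Howland--Floquet extended setting, using $N=\one+p^2$, so that the abstract conclusion $\psi\in\cD(N^{(k_0+1)/2})$ is precisely $|p|^{k_0+1}\psi\in L^2$, and then transfer this back to the single slice $\phi=\psi(0)$.

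First I would set up the Floquet picture: let $\hH=L^2(\T)\otimes L^2(X)$ and $K=-\i\partial_t+h(t)$, so that a bound state $\phi$ of $U(1,0)$ with eigenvalue $\e^{-\i\lambda}$ lifts to an eigenfunction $\psi\in\hH$ of $K$ with eigenvalue $\lambda$ via $\psi(t)=\e^{\i\lambda t}U(t,0)\phi$. As conjugate operator I take the self-adjoint closure $A$ on $\hH$ of $A(t)=\tfrac12(x\cdot(p-b(t))+(p-b(t))\cdot x)$, the dilation generator in the kinematic momentum $\pi(t)=p-b(t)$; the shift by $b(t)$ is designed so that the $\cE(t)\cdot x$ contributions coming from $[h(t),A(t)]$ and from $[-\i\partial_t,A(t)]=-\i\dot A(t)$ cancel (using $\dot b=\cE$), leaving $\i[K,A]=2p^2-2b(t)\cdot p+x\cdot\nabla V$, which is $N$-bounded. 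This is the setup of \cite[Section~8]{MS}, where part~(\ref{item:8}) is carried out for $k_0=1$.

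Next I would verify the hypotheses of Corollary~\ref{cor:more-n-regularity}. Conditions~\ref{cond:condition} and~\ref{cond:virial} follow from direct commutator computations (e.g.\ $N'=\i[N,A]=2p^2-2b(t)\cdot p$), and the Mourre estimate Condition~\ref{cond:mourre} at a non-threshold energy is essentially \cite[Proposition~8.2]{MS}. The central new verification is Condition~\ref{cond:conditioni}: each $\ad_A^\ell([K,A])$ expands into a finite sum of products of derivatives $\partial^\alpha V$ with $(x,\pi)$-polynomials of degree $\leq k_0+1$, and the hypothesis $v_{ij}^2=0$ together with Condition~\ref{cond:ac-stark-model23}(\ref{item:5ib}) supplies the global bounds $|y|^{|\alpha|}|\partial^\alpha v_{ij}^1|\leq C$ for $|\alpha|\leq k_0+1$ needed to render each such term $N$- or $N^{\pm1/2}$-bounded; Condition~\ref{cond:ac-stark-model}(\ref{item:7}) handles the time derivatives. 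Condition~\ref{cond:vir21} is trivial because $N=\one+p^2$ commutes with $b(t)$, so $[N,N']=0$. Finally, Condition~\ref{cond:more-n-regularity} amounts to the same polynomial-times-derivative bookkeeping applied to the auxiliary operators $M_0,\dots,M_3$; since $N^{1/2}\sim\langle p\rangle$, all the relevant iterated commutators again unfold into $(x,p)$-polynomials multiplied by derivatives of $V$ of order $\leq k_0+1$.

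Invoking Corollary~\ref{cor:more-n-regularity} then yields $\psi\in\cD(N^{(k_0+1)/2})=\cD(\langle p\rangle^{k_0+1})$. To descend from $\psi\in\hH$ to $\phi=\psi(0)\in L^2(X)$, I would use the exponential decay of non-threshold bound states (known from \cite{MS}) together with Condition~\ref{cond:ac-stark-model}(\ref{item:7}) to show that $U(t,0)$ preserves $\cD(\langle p\rangle^{k_0+1})$; consequently $t\mapsto\|\langle p\rangle^{k_0+1}\psi(t)\|_{L^2(X)}$ is continuous and finite, and in particular $|p|^{k_0+1}\phi\in L^2(X)$. The principal technical hurdle will be the verification of Conditions~\ref{cond:conditioni} and~\ref{cond:more-n-regularity} at the top order $k_0$: each iterated commutator must be expanded and estimated by interpolating $N$-bounds through the fractional-power representation formulas of Subsection~\ref{subsec-MoreNReg}, balancing the $x$-weights produced by commutators with $A$ against the decay afforded by the sharp global bounds of Condition~\ref{cond:ac-stark-model23}(\ref{item:5ib}).
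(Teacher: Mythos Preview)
Your overall strategy—lift to the Floquet Hamiltonian, apply Corollary~\ref{cor:more-n-regularity} with $N=p^2+1$, then descend to a single time-slice—is exactly the paper's. Two points need correction.

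\emph{Frame mismatch.} Theorem~\ref{thm:ac-stark-type} is stated for $h(t)=p^2+V(t,x)$ under Conditions~\ref{cond:ac-stark-model}; there is no Stark term $-\cE(t)\cdot x$ in this Hamiltonian and hence no role for $b(t)$. (The physical AC-Stark model of Subsection~\ref{The AC--Stark model} is brought to this form by the transformations recalled in Subsection~\ref{Regularity of non-threshold atomic type bound states}.) Accordingly the paper uses the plain time-\emph{independent} dilation $A=\tfrac12(x\cdot p+p\cdot x)$, and the commutator calculus simplifies to $H'=2p^2-x\cdot\nabla V$, $N'=2p^2$, $\i^\ell\ad_A^\ell(H')=2^{\ell+1}p^2+(-1)^{\ell+1}(x\cdot\nabla)^{\ell+1}V$. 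Your references to $b(t)$, to the cancellation of $\cE(t)\cdot x$, and to Conditions~\ref{cond:ac-stark-model23} with $v_{ij}^2=0$ belong to the physical model and to Theorem~\ref{thm:ac-stark-type2dd}, not to the abstract statement here. This is a misreading of the setting rather than a mathematical error: once you replace $A(t)$ by $A$ your checklist of conditions goes through, and indeed is simpler.

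\emph{Descent to the slice.} Your proposed route—prove that $U(t,0)$ preserves $\cD(\langle p\rangle^{k_0+1})$ using exponential decay of $\phi$ and Condition~\ref{cond:ac-stark-model}~(\ref{item:7})—is off target: domain invariance is a property of the dynamics, not of the state, so exponential decay is irrelevant, and the $t$-regularity condition plays no role here. The paper's argument is more direct and avoids proving domain invariance altogether. From Proposition~\ref{prop:ac-stark-type} one has $\int_0^1\|N^{(k_0+1)/2}\psi(t)\|^2\,\d t<\infty$, hence there is $t_0$ with $\psi(t_0)\in\cD(N^{(k_0+1)/2})$. One then computes
\[
\tfrac{\d}{\d t}\langle\psi(t),N_{\i n}^{k_0+1}\psi(t)\rangle
=\langle\psi(t),\i[V,N_{\i n}^{k_0+1}]\psi(t)\rangle,
\]
expands the commutator, and redistributes at most $k_0$ derivatives of $V$ to write every term in the symmetric form $\langle N^{(k_0+1)/2}\psi(t),B_n N^{(k_0+1)/2}\psi(t)\rangle$ with $\sup_n\|B_n\|<\infty$. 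Integrating from $t_0$ to $1$, the right-hand side is controlled by the finite Floquet budget $\int_0^1\|N^{(k_0+1)/2}\psi(t)\|^2\,\d t$, not by a Gr\"onwall inequality; letting $n\to\infty$ gives $\phi=\psi(1)\in\cD(N^{(k_0+1)/2})$. A genuine domain-invariance proof via Gr\"onwall could also be made to work using Condition~\ref{cond:ac-stark-model}~(\ref{item:5}), but it is more laborious and is not what the paper does.
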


\subsection{Regularity of Non-threshold Bound States}

 The principal tool in  the proof of Theorem~\ref{thm:ac-stark-type} will be Floquet
 theory (in common with \cite{MS} and other papers) which we  briefly review.
 The Floquet Hamiltonian associated with $h(t)$ is
 \begin{equation}
   \label{eq:54}
  H = \tau + h(t)=H_0+V,\qquad\textup{on } \cH = L^2\left([0,1];L^2(X)\right).
 \end{equation}

Here $\tau$ is the self-adjoint realization of $-\i\frac{\d}{\d t}$,
with periodic boundary conditions. The
spectral properties of the monodromy operator and the Floquet
Hamiltonian are equivalent. We have the following relations
$$
\sigma_{\mathrm{pp}}(U(1,0)) = \e^{-\i\sigma_{\mathrm{pp}}(H)},\ \sigma_{\mathrm{ac}}(U(1,0)) =
\e^{-\i\sigma_{\mathrm{ac}}(H)},\ \sigma_{\mathrm{sc}}(U(1,0)) = \e^{-\i\sigma_{\mathrm{sc}}(H)},
$$
and the multiplicity of an eigenvalue $z=\e^{-\i \lambda}$ of
$U(1,0)$ is equal to the multiplicity of $\lambda$ as an
eigenvalue of $H$ (regardless of the choice of $\lambda$). We also
recall  that the Floquet Hamiltonian is
the self-adjoint generator of the strongly continuous unitary
one-parameter group on $\cH$ given by
\begin{equation}\label{eq:50}
  (\e^{-is H}\psi)(t) = U(t,t-s)\psi(t-s-[t-s]),
\end{equation}
where $[r]$ is the integer part of $r$. In particular any bound state
of the monodromy operator, $U(1,0)\phi=\e^{-\i \lambda}\phi$, gives
rise to a bound state of the Floquet Hamiltonian,
$H\psi=\lambda\psi$, by the formula
\begin{equation}\label{eq:27}
  \psi(t)=\e^{\i t\lambda}U(t,0)\phi.
\end{equation}

\begin{prop}\label{prop:ac-stark-type} Suppose Conditions~\ref{cond:ac-stark-model}
for some $k_0\in \N$ and suppose  $H\psi=\lambda\psi$  for  $\e^{-\i\lambda}\notin \cF(U(1,0))$. Then $\psi\in \cD(|p|^{k_0+1})$.
\end{prop}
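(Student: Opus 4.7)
I will pass to the Floquet picture via \eqref{eq:27}, so proving $\psi\in\cD(|p|^{k_0+1})$ reduces to showing that the Floquet bound state of $H = \tau + p^2 + V$ lies in this domain. The plan is to invoke Corollary~\ref{cor:more-n-regularity} (for $k_0\geq 2$) or Theorem~\ref{thm:mainresultk=1} (for $k_0=1$) with conjugate operator $A=\tfrac{1}{2}(x\cdot p+p\cdot x)$ acting fiberwise on $L^2(X)$ and control operator $N=p^2+\one$; since $\cD(N^{(k_0+1)/2})\subseteq\cD(|p|^{k_0+1})$, the regularity conclusion will directly yield the proposition. The a priori input $\psi\in\cD(N^{1/2})$ required by \eqref{eq:34} follows from a virial argument as carried out in \cite{MS}.

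Checking Conditions~\ref{cond:condition} is routine: $[p^2,A]=-2ip^2$ gives $N'=2(N-\one)$, which is $N$-bounded, so $N\in C^1_\Mo(A)$. The commutator $[N,H]^0=[p^2,V]=-\Delta V-2i\nabla V\cdot p$ is skew-adjoint and $N^{1/2}$-bounded from both sides, so complex interpolation (Stein) places it in $\cB(N^{-1/4}\cH,N^{1/4}\cH)$, realizing \eqref{eq:20} with $\kappa=\tfrac{1}{4}$. The form identity \eqref{eq:1commu} is verified on the core $\cD(\tau)\otimes_\mathrm{alg} C_0^\infty(X)$. The explicit formula $H'=2p^2+x\cdot\nabla V$ and boundedness of $x\cdot\nabla V$ yield the virial bound $N\leq\tfrac{1}{2}H'+C$, while $H''=4p^2+(x\cdot\nabla)^2 V$ is $N$-bounded. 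The Mourre estimate at non-threshold $\lambda$ with $e^{-i\lambda}\notin\cF(U(1,0))$ is essentially that of \cite{MS} in the Floquet $N$-body setting, and its weak form transcribes into Condition~\ref{cond:mourre} via Remark~\ref{RemarkTo1}(4).

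The higher-order conditions carry the technical content. An induction yields $\ad_A^\ell(H')=2^{\ell+1}p^2+(x\cdot\nabla)^{\ell+1}V$; Condition~\ref{cond:ac-stark-model}~(\ref{item:5}) makes $(x\cdot\nabla)^m V$ a bounded multiplication operator for $m\leq k_0+1$, so every $\ad_A^\ell(H')$ with $\ell\leq k_0$ is $N$-bounded, delivering Condition~\ref{cond:conditioni}. Condition~\ref{cond:vir21} is trivial since $N'=2(N-\one)$ commutes with $N$ and its iterated $A$-commutators are scalar multiples of $N-\one$. The main obstacle is Condition~\ref{cond:more-n-regularity}: iterated commutators $\ad_{T_m}\cdots\ad_{T_1}(M_j)$ with $T_i\in\{A,N^{1/2}\}$, $m\leq k_0-1$, and $j\in\{1,2,3\}$ must be $N^{1/2}$-bounded. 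The strategy is to apply the representation formula \eqref{Eq-Thomas} to $N^{1/2}=\sqrt{p^2+\one}$, converting each $\ad_{N^{1/2}}$ into an absolutely convergent integral of resolvents $(N+t)^{-1}$ sandwiching the simpler commutator $[p^2,\,\cdot\,]$, while each $\ad_A$ brings down an extra factor of $x\cdot\nabla$ on the potential. At most $k_0+1$ derivatives of $V$ ever appear, matching exactly the regularity furnished by Condition~\ref{cond:ac-stark-model}~(\ref{item:5}).

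Once the abstract hypotheses are in place, Corollary~\ref{cor:more-n-regularity} yields $\psi\in\cD(N^{(k_0+1)/2})\subseteq\cD(|p|^{k_0+1})$ for $k_0\geq 2$, while the $k_0=1$ case follows analogously from Theorem~\ref{thm:mainresultk=1} after the trivial check of \eqref{eq:5oiooa} for $\ell=0$ (which holds since $[N,N']=0$).
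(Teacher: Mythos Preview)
Your proposal is correct and follows essentially the same route as the paper: apply Corollary~\ref{cor:more-n-regularity} with $N=p^2+\one$ and $A=\tfrac12(x\cdot p+p\cdot x)$, checking Conditions~\ref{cond:condition}--\ref{cond:mourre}, \ref{cond:conditioni}, \ref{cond:vir21}, and (for $k_0\geq 2$) Condition~\ref{cond:more-n-regularity}. Two small points: first, in the statement $\psi$ \emph{is} already the Floquet bound state, so no passage via \eqref{eq:27} is needed here (that formula is used later to transfer regularity back to the monodromy eigenvector $\phi$); second, there is a sign slip in your $H'$ --- the paper has $H'=2p^2-x\cdot\nabla V$, and correspondingly $\i^\ell\ad_A^\ell(H')=2^{\ell+1}p^2+(-1)^{\ell+1}(x\cdot\nabla)^{\ell+1}V$ --- but this is immaterial for the boundedness and virial arguments. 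Your sketch of Condition~\ref{cond:more-n-regularity} via the integral representation of $N^{1/2}$ and counting of derivatives is precisely the paper's mechanism, though the paper carries out the integral estimates in more detail (splitting $\ad_{p^2}^{m+1}(V)$ into $T_1+T_2+T_3$ and bounding each with \eqref{eq:41}).
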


\begin{proof}
  We shall use Corollary \ref{cor:more-n-regularity}  with $H$
  being the  Floquet Hamiltonian and $N=p^2+1$. This amounts to
  checking the assumptions given in terms of Conditions~\ref{cond:condition}--\ref{cond:mourre},
  Condition~\ref{cond:conditioni}, Condition~\ref{cond:vir21} and (for $k_0\geq
  2$ only) Condition~\ref{cond:more-n-regularity} (same $k_0$). We take $A=\tfrac12 (x\cdot
  p+p\cdot x)$ and compute with direct reference to Conditions~\ref{cond:condition}, Condition~\ref{cond:conditioni} and  Condition~\ref{cond:vir21}
  \begin{subequations}
  \begin{align}
    \label{eq:35}H' &= 2p^2-x\cdot \nabla V,\\
\label{eq:36}\i[N,H]^0 &= p\cdot
  \nabla V+\nabla V\cdot p = \sum_{j=1}^{\dim X}\big( (p_j\partial_j
  V+(\partial_j V) p_j\big),\\
\label{eq:37}N' &= 2p^2,\\
\label{eq:38}\i^\ell\ad ^\ell_A(N')&=2^{\ell+1}p^2,\;
\i \,\ad _N\big (\i^\ell\ad^\ell_A(N')\big)=0;\;\ell\leq
k_0-1,\\
\label{eq:39}\i^l\ad ^l_A(H')&=2^{l+1}p^2+(-1)^{l+1}(x\cdot \nabla)^{l+1} V;\;l\leq
k_0.
  \end{align}
  \end{subequations}
   A comment on \eqref{eq:35} is due. We need to show
  Condition~\ref{cond:condition}~(\ref{item:2i0ue})  using the expression \eqref{eq:35}: First we
  remark that the operators $\tau$, $p^2$ and $H_0$ are simultaneously
  diagonalizable. Therefore $\cD(H)\cap\cD(N)=\cD(\tau)\cap\cD(N)$ is dense in $\cD(H)\cap\cD(N^{1/2})$.
  (See also Remark \ref{IntersectionDense}.)
  Moreover $p^2$, $V$ and $R(\eta)$ are obviously fibered
  (i.e. they act on the fiber space $L^2(X)$) and $R(\eta)$ preserves
  $\cD(p^2)$ and $\cD(|p|)$ for $|\eta|$ large enough. Whence as a form on $\cD(\tau)\cap\cD(N)$
  \begin{equation*}
    \i[H, R(\eta)]=\i[p^2+V, R(\eta)]=-R(\eta)\i[p^2+V,A]R(\eta)=-R(\eta)H'R(\eta).
  \end{equation*} The last identity for fiber operators is well-known in standard Mourre
  theory for Schr{\"o}dinger operators. Finally we extend the shown
  version of  \eqref{eq:1commu} by continuity to a form
  identity on $\cD(H)\cap\cD(N^{1/2})$ yielding
  Condition~\ref{cond:condition}~(\ref{item:2i0ue}).

Clearly \eqref{eq:virial} holds with $C_1=0$,
  $C_2=1/2$ and $C_3=1+\sup x\cdot \nabla V(t,x)/2$. As for \eqref{eq:Mourre} a stronger version follows
  from \cite[Theorem~4.2]{MS}
\begin{equation} \label{eq:MourreStark}
  H'\geq c_0\one - C_4f^{\bot}_\lambda(H)^2 - K_0.
\end{equation}
Finally it follows from \cite[Proposition~4.1]{MS}
that indeed the condition of Corollary~\ref{cor:more-n-regularity},
$\psi\in\cD(N^{1/2})=\cD(|p|)$, is fulfilled. This shows the
proposition in the case $k_0=1$.

For $k_0\geq 2$ it remains to verify Condition~\ref{cond:more-n-regularity}. For this purpose it is helpful  to
  notice that
  \begin{subequations}
    \begin{align}\label{eq:31}
    \i\, \ad _A(p_j)&=p_j,\\\label{eq:31k}
 \i\, \ad _A\big( (N+t_j)^{-1}\big) &= -2(N+t_j)^{-1}(N-1)(N+t_j)^{-1}.
  \end{align}
  \end{subequations}
   Moreover all computations are in terms of  fiber operators (in
  particular $M_1$, $M_2$ and $M_3$ are all fibered operators), and
  recalling \cite[Proposition~II.1]{Mo} and using the fact that
  $N^{1/2}\in C^1_{\mo}( A)$ it suffices to do the  computations in
  terms of forms on
  the Schwartz space $\cS(X)$.

\noindent {\bf Re $M_1$:} We shall apply \eqref{eq:31} in combination with
  \eqref{eq:36} to verify the part of Condition~\ref{cond:more-n-regularity}
  that involves $M_1$. Let us first look at the
  particular choice in \eqref{eq:2b} for  $M_1$ given by taking all
  the $T$'s equal $N^{1/2}$. That is we will demonstrate that for  $m=1,\dots,k_0-1$
\begin{equation}
  \label{eq:2baa}   \i^m\ad_{N^{1/2}}^m(M_1) \textup{ is }|p|\text{--bounded}.
\end{equation}
We compute
\begin{align*}
 &\i^m\ad_{N^{\frac12}}^m(M_1)=- (\i c_{\frac12})^{m+1}\int_0^\infty
 \d t_{m+1}\,t_{m+1}^{\frac12}\cdots\int_0^\infty \d t_{2}\,t_{2}^{\frac12} \int_0^\infty t_{1}^{\frac12} \\
&(N+t_1)^{-1}\cdots (N+t_{m+1})^{-1}\ad _{p^2}^{m+1}(V)(N+t_{m+1})^{-1}\cdots (N+t_1)^{-1}\d t_1,
\end{align*} and in turn,
\begin{align*}
  \ad
  _{p^2}^{m+1}(V)&=\sum_{|\alpha+\beta|=m+1}c_{\alpha,\beta}\,p^\alpha\big (\partial^{\alpha+\beta}V\big )
  p^\beta =T_1+T_2+T_3;\\
T_1&=\sum_{|\alpha+\beta|=m+1,\;|\beta|\geq 1}c_{\alpha,\beta}\,p^\alpha\big (\partial^{\alpha+\beta}V\big )
  p^\beta,\\
  T_2&=\sum_{|\alpha+\beta|=m+1,\;|\beta|=1}-\i \,c_{\alpha+\beta,0}\,p^\alpha\big (\partial^{\alpha+2\beta}V\big ),\\
T_3&=\sum_{|\alpha+\beta|=m+1,\;|\beta|=1}c_{\alpha+\beta,0}\,p^\alpha\big (\partial^{\alpha+\beta}V\big )
  p^\beta.
\end{align*}
 Now in front of the bounded derivative of any of the terms of the expressions
 $T_1$, $T_2$ and $T_3$ we move the factor $p^\alpha$ to the left in the
 integral representation and use the bound
\begin{equation}\label{eq:41}
   \|N^s(N+t)^{-1}\|\leq C_{s}(1+t)^{s-1};\;s\in [0,1].
 \end{equation}
 We obtain
 \begin{equation*}
   \|p^\alpha(N+t_{m+1})^{-1}\cdots (N+t_1)^{-1}\|\leq C_{s}^{m+1}\prod_{j=1}^{m+1}\,(1+t_j)^{s-1};\;s=\tfrac{|\alpha|}{2(m+1)}.
 \end{equation*}
 Using \eqref{eq:41}  for the factors of $p^\beta$ to
 the right (in case of $T_1$ and $T_3$) combined with the resolvents to the right and an additional
 factor $N^{-1/2}$ we obtain
\begin{equation*}
   \|p^\beta(N+t_{m+1})^{-1}\cdots (N+t_1)^{-1}N^{-\frac12}\|\leq C_{\sigma}^{m+1}\prod_{j=1}^{m+1}\,(1+t_j)^{\sigma-1};\;\sigma=\tfrac{|\beta|-1}{2(m+1)}.
 \end{equation*} To treat $T_2$ we notice that
\begin{equation}\label{eq:40}
   \|(N+t_{m+1})^{-1}\cdots (N+t_1)^{-1}\|\leq \prod_{j=1}^{m+1}\,(1+t_j)^{-1}.
 \end{equation}

Now the
 integrand with an additional
 factor $N^{-1/2}$ to the right is a sum of terms either bounded
  (up to a constant)  by
 \begin{equation*}
   \prod_{j=1}^{m+1}\,t_{j}^{\frac12}(1+t_j)^{\tfrac{|\alpha|}{2(m+1)}-1}(1+t_j)^{\tfrac{|\beta|-1}{2(m+1)}-1}=
   \prod_{j=1}^{m+1}\,t_{j}^{\frac12}(1+t_j)^{-\tfrac32-\tfrac{1}{2(m+1)}}
 \end{equation*} (these terms come from $T_1$ and $T_3$), or  (for  any
 term of $T_2$) by
 \begin{equation*}
   \prod_{j=1}^{m+1}\,t_{j}^{\frac12}(1+t_j)^{\tfrac{|\alpha|}{2(m+1)}-1}(1+t_j)^{-1}=
   \prod_{j=1}^{m+1}\,t_{j}^{\frac12}(1+t_j)^{\tfrac{m}{2(m+1)}-2}.
 \end{equation*} Whence in  all cases the integral with an additional
 factor $N^{-1/2}$ to the right is  convergent in
 norm, which finishes the proof of the special case where  all of
  the $T$'s are equal to $N^{1/2}$. The general case follows  the same
  scheme. Some of the commutators with $A$ ``hit''  the potential part
  introducing a change $W(t,x)\to -x\cdot \nabla W(t,x)$. Other
  commutators with $A$ hit a factor $p_j$ in which case we apply
  \eqref{eq:31}. Finally yet other commutators with $A$ hit a factor
  $(N+t_j)^{-1}$ in which case we apply \eqref{eq:31k} and
  \eqref{eq:40}.

\noindent {\bf Re $M_2$ and  $M_3$:} The contributions to \eqref{eq:2b} from the first term
of \eqref{eq:35}, i.e. contributions from the expression $2p^2N^{-1/2}$, vanish
except for the case where all of
  the $T$'s are equal to $A$. In this case we compute
\begin{equation}\label{eq:2bx}
\i^m\ad_{A}^m \big(2p^2N^{-\frac12}\big)=\big(2t\tfrac{\d}{\d t}\big)^m f(t)_{\big |t=p^2};\;f(t)=2t(t+1)^{-\frac12}.
\end{equation} Obviously the right hand side of \eqref{eq:2bx} is
$N^{1/2}\text{--bounded}$.

The contributions to \eqref{eq:2b} from the expressions   $-x\cdot
\nabla VN^{-1/2}$ and $-N^{-1/2}x\cdot
\nabla V$ are treated like the term  $M_1$ in fact slightly
simpler. The iterated commutators are all bounded in this case. We
leave out the details.
  \end{proof}
\begin{remark*}\label{remark:ac-stark-type} Since $H$ is not elliptic
  (more precisely $|p|(H_0-\i)^{-1}$ is unbounded) we do not see an
  ``easy way''  to get the conclusion of Proposition~\ref{prop:ac-stark-type}.
  For instance we need to use the assumption
  that $\e^{-\i\lambda}$ is non-threshold. See \cite{KY} for a related result for the one-body AC-Stark problem.
  \end{remark*}
\begin{proof}[Proof of Theorem \ref{thm:ac-stark-type}:] We mimic the
  proof of \cite[Theorem~1.8]{MS}. Recall the notation
   $I_{\i n}(N)=n(N+n)^{-1}$ and $N_{\i n}=NI_{\i n}(N)$.
   Due to Proposition~\ref{prop:ac-stark-type} and
   the representation \eqref{eq:27} there exists $t_0\in [0,1[$ such that
  \begin{equation}\label{eq:42}
    U(t_0,0)\phi\in \cD(N^{(k_0+1)/2}).
  \end{equation}
  In particular $\psi(t)=\e^{\i t\lambda}U(t,0)\phi\in \cD(p^2)$ for all $t$. Next we compute
  \begin{subequations}
   \begin{align}\label{eq:44}
    \tfrac {\d}{\d t}\inp{\psi(t),N_{\i n}^{k_0+1}\psi(t)}&=\inp{\psi(t),\i [V,N_{\i n}^{k_0+1}]\psi(t)},\\ \label{eq:45}
    \i [V,N_{\i n}^{k_0+1}]&=\sum_{0\leq p\leq k_0}N_{\i n}^{p}
    \i [V,N_{\i n}]N_{\i n}^{k_0-p},\\\label{eq:46}
    \i [V,N_{\i n}]&=-I_{\i n}(N)\sum_{j=1}^{\dim X}\big((p_j\partial_j V+(\partial_j V) p_j\big)I_{\i n}(N).
  \end{align}
  \end{subequations}
  We plug \eqref{eq:46} into \eqref{eq:45} and then in turn
  \eqref{eq:45} into the
  right hand side of \eqref{eq:44}. We expand the sum and redistribute
  for each term at most $k_0$ derivatives by pulling
  through the factor $\partial_j V$ obtaining terms on a  more
  symmetric form, more precisely on the form
  \begin{equation}\label{eq:43}
    \big\la N^{\frac{k_0+1}2}\psi(t),B_n N^{\frac{k_0+1}2}\psi(t)\big\ra\textup{ where }\sup_n\|B_n\|<\infty.
  \end{equation} Notice that for all terms the  operator $B_n$ involves at most
  $k_0+1$ derivatives of $V$. Thanks  to the Cauchy-Schwarz inequality
  and Proposition
  \ref{prop:ac-stark-type}  any expression like
  \eqref{eq:43} can be integrated on $[t_0,1]$ and the integral is bounded
  uniformly in $n$. In combination with \eqref{eq:42} we
  conclude that
  \begin{equation*}
    \sup_n\,\big\la \psi(1),N_{\i
    n}^{k_0+1}\psi(1)\big\ra <\infty,
  \end{equation*} whence  $\phi=\psi(1)\in \cD(N^{(k_0+1)/2})$.
\end{proof}

\subsection{Regularity of Non-threshold Atomic Type Bound States}\label{Regularity of non-threshold atomic type bound states}
The generator of the evolution of the a system of $N$ particles in a
time-periodic Stark-field with zero mean (AC-Stark field) is of the form
$$h_{\phy}(t)=p^2-\cE(t)\cdot x+V_{\phy}$$ on $L^2(X)$. Assuming that the field is
$1$-periodic the condition $\int_0^1\cE(t)\d t=0$ leads to the
existence of unique $1$-periodic functions $b$ and $c$ such that
\begin{equation*}
  \tfrac{\d}{\d t}b(t)=\cE(t),\;\tfrac{\d}{\d t}c(t)=2b(t))\mand \int_0^1\,c(t)\d t=0;
\end{equation*}
see \cite{MS} for details. For simplicity let us here  assume that
$\cE\in C([0,1];X)$, see Remark \ref{remark:regul-non-threshL1}
for an extension.  The potential $V_{\phy}$ is a
sum of time-independent real-valued ``pair-potentials''
$$
V_{\phy}=V_{\phy}(x) = \sum_{a\in\cA \setminus\{a_{\rm min}\}} V_a(x^a).
$$ In terms of these quantities we
introduce Hamiltonians
\begin{align*}
  h_{\aux}(t)&=p^2+2b(t)\cdot p+V_{\phy},\\
 h(t)&=p^2+V_{\phy}(\cdot+c(t)).
\end{align*} The propagators $U_{\phy}$, $U_{\aux}$ and $U$ of
$h_{\phy}$, $h_{\aux}$ and $h$, respectively,  are
linked by Galileo type transformations. Define
\begin{equation*}
  S_1(t)=\e^{\i c(t) \cdot p} \mand S_2(t)=\e^{\i (b(t) \cdot
    x-\alpha(t))};\;\alpha(t)=\int_0^t\,|b(s)|^2\,\d s.
\end{equation*} 
Then
\begin{subequations}
\begin{align}\label{eq:47}
  U_{\phy}(t,0)&=S_2(t)U_{\aux}(t,0)S_2(0)^{-1},\\\label{eq:48}
U(t,0)&= S_1(t)U_{\aux}(t,0)S_1(0)^{-1},\\
U_{\phy}(t,0)&=S_2(t)S_1(t)^{-1}U(t,0)S_1(0)S_2(0)^{-1}.\label{eq:49}
\end{align}
\end{subequations} 
The bulk of \cite{MS} is a study of the Floquet Hamiltonian of
$h$. Spectral information is consequently   deduced for
the monodromy operator $U(1,0)$. Finally the formula \eqref{eq:49} then
gives spectral information for the physical monodromy operator
$U_{\phy}(1,0)$. The part of \cite{MS} concerning potentials with
local singularities contains  an incorrect  reference in that it is
referred to \cite{Ya} for the existence of the propagator $U$ (see
\cite[Remark~1.4]{MS}). However although the issue of Yajima's paper
is the existence of an appropriate  dynamics for singular time-dependent
potentials the paper as well as the  method of proof  is for the one-body problem only.
This point is easily fixed as follows, see Remark~\ref{remark:regul-non-threshL1}
for a more complicated procedure for
$\cE\in L^1([0,1];X)\setminus C([0,1];X)$: We use Yosida's
theorem which is in fact  also alluded to in \cite[Remark~1.4]{MS}
(see \cite[Theorem~II.21]{SiBook} for a statement of the theorem).
If $V_{\phy}$ is $\epsilon$-bounded
relatively to $p^2$ (which is the case under the conditions considered
in \cite{MS}) then indeed the propagator $U_{\aux}$ exists
and we can use \eqref{eq:47} and \eqref{eq:48} to define  $U_{\phy}$
and $U$. In particular we can use
\eqref{eq:49}
and obtain not only the existence of $U_{\phy}$ but various spectral
information of the corresponding monodromy operator
$U_{\phy}(1,0)$ (see  the introduction of \cite{MS} for details). We
remark that the construction of  the Floquet Hamiltonian of
$h$ is done independently of $U$ although of course
\eqref{eq:50} may be taken as a definition.

Let us for completeness
note the following  by-product of  Yosida's theorem
(intimately related to its proof): Pick
$\lambda_0\in \R$ such that $ h_{\aux}(t)\geq \lambda_0+1$ for all
$t$. The crucial assumption in the
theorem is the boundedness of the function
\begin{equation}\label{eq:51}
 t\to \| ( h_{\aux}(t)-\lambda_0)^{-1}  \tfrac {\d}{\d t}( h_{\aux}(t)-\lambda_0)^{-1} \|.
\end{equation}
Since, by assumption  $\cE\in C([0,1];X)$, clearly the following
constant is a bound of \eqref{eq:51},
\begin{equation*}
  C:=2\sup_t|\cE(t)|\,\sup_t\|\,|p|( h_{\aux}(t)-\lambda_0)^{-1} \|.
\end{equation*} 
We have the explicit  bound of the dynamics restricted to $\cD (p^2)$.
\begin{equation*}
  \|( h_{\aux}(t)-\lambda_0)U_{\aux}(t,0)\phi\|^2\leq \e^{2C|t|}\|( h_{\aux}(0)-\lambda_0)\phi\|^2\; \mfor \phi\in \cD (p^2).
\end{equation*} 
Let us also note the following property of the dynamics restricted
to $\cD (|p|)$, cf. \cite[Theorems~II.23~and~II.27]{SiBook},
\begin{align}
  &\|( h_{\aux}(t)-\lambda_0)^{1/2}U_{\aux}(t,0)\phi\|^2\nonumber\\&\leq
  \e^{\tC |t|}\|( h_{\aux}(0)-\lambda_0)^{1/2}\phi\|^2\; \mfor \phi\in \cD (|p|);\label{eq:56}
\end{align}
here
\begin{equation*}
  \tC:=2\sup_t|\cE(t)|\,\sup_t\|\,|p|^{1/2}( h_{\aux}(t)-\lambda_0)^{-1/2} \|^2.
\end{equation*}

\begin{remark}\label{remark:regul-non-threshL1}
 If $\cE\in L^1([0,1];X)$ but possibly $\cE\not \in C([0,1];X)$ we can
 still show that there exists an appropriate dynamics $U$  under the conditions considered
in \cite{MS}, although  possibly not one that preserves $\cD(p^2)$. We can use \cite[Theorem~II.27]{SiBook} directly on $h$.
For the borderline case, the Coulomb singularity, Hardy's inequality
\cite[(6.2)]{MS} is needed to verify the assumptions of this theorem;
the details are not discussed here. This yields a
dynamics $U$ preserving $\cD(|p|)$ which is good enough for getting
the conclusions of \cite{MS} related to the condition  $\cE\in
L^1([0,1];X)$. The results presented below can similarly be extended
to $\cE\in L^1([0,1];X)$.
\end{remark}

The following condition is an extension of \cite[Condition~1.3]{MS}
(which corresponds to $k_0=1$ below). The Coulomb potential
commonly used to describe atomic and molecular systems (here with moving
nuclei) is included.

\begin{conds}\label{cond:ac-stark-model2} Let $k_0\in \N$ be given.
  For each
$a\neq a_{\rm min}$ the following holds. The pair-potential $
X^a \ni y\to V_a(y) \in \R$ splits into a sum $V_a=V_a^1+V_a^2$ where
\begin{enumerate}[\quad\normalfont (1)]
\item \label{item:4i}Differentiability: $V_a^1\in C^{k_0+1}(X^a)$ and $V_a^2\in C^{k_0+1}(X^a\setminus\{0\})$.
\item \label{item:5i} Global bounds: For all $\alpha$ with
  $|\alpha|\leq k_0+1$ there are  bounds
  $|y|^{|\alpha|}\,|\partial^\alpha_y V_a^1(y)|\leq C$.
\item \label{item:6i} Decay at infinity: $|V_a^1(y)| + |y\cdot\nabla_y
  V_a^1(y)|=o(1)$.
\item \label{item:3i}Dimensionality: $V_a^2=0$ if $\dim X^a<3$.
\item \label{item:7i}Local singularity: $V_a^2$ is compactly supported
  and for all $\alpha$ with
  $|\alpha|\leq k_0+1$ there are  bounds
  $|y|^{|\alpha|+1}\,|\partial^\alpha_yV_a^2(y)|\leq C$; $y\neq 0$.
\end{enumerate}
\end{conds}

We note that the part of time-dependent potential
$V_{\phy}(\cdot+c(t))$ coming from  the first term $V_a^1$ of the
splitting of $V_a$ in Condition~\ref{cond:ac-stark-model2} conforms
with Condition~\ref{cond:ac-stark-model}. The part from $V_a^2$ does
not, and we do not in general expect there to be an analogue of
Theorem~\ref{thm:ac-stark-type} in this case for $k_0>1$. It is an open
problem to determine  whether there is an   analogue statement of
Theorem~\ref{thm:ac-stark-type} for $k_0=1$. Notice that the lowest degree of
regularity, $\phi\in\cD (|p|)$, holds even without the non-threshold
condition, cf. \cite[Theorem~1.8]{MS}. On the other hand  since the
singularity is located at  $x=-c(t)$ we would expect and we will
indeed prove regularity with
respect to  the observable
\begin{equation}\label{eq:53}
  A=A(t)=\tfrac 12 \parb{(x+c(t))\cdot p+p\cdot (x+c(t))}=S_1(t)\tfrac 12 \parb{x\cdot p+p\cdot x}S_1(t)^{-1}.
\end{equation} 
This regularity is the content of Theorem~\ref{thm:ac-stark-type2}
stated below; see \cite[Proposition~8.7~(ii)]{MS} for a related
result in the case $k_0=1$ at the level of Floquet theory,
cf. Proposition~\ref{prop:ac-stark-type2} stated below. The $A$-regularity
statement of  the theorem  for  $k_0>1$ is
new. The set of thresholds is defined as before, see \eqref{eq:52}.

\begin{thm}\label{thm:ac-stark-type2} Suppose Conditions~\ref{cond:ac-stark-model2}
  for some $k_0\in \N$. Let $\phi$ be a bound state for $U(1,0)$ pertaining to an eigenvalue
  $\e^{-\i\lambda}\notin \cF(U(1,0))$. Then $\phi\in \cD(A(1)^{k_0})$
  where $A(1)$ is given by taking $t=1$ in \eqref{eq:53}.
  \end{thm}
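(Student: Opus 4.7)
The plan is to reduce the statement to the Floquet picture, apply the abstract regularity theorems of the paper, and then propagate the resulting regularity from an integral statement on $[0,1]$ to the single instant $t=0$. Let $\psi(t)=\e^{\i t\lambda}U(t,0)\phi\in \cH=L^2([0,1];L^2(X))$ be the associated Floquet bound state, so $H\psi=\lambda\psi$ for $H=\tau+p^2+V_{\phy}(\cdot+c(t))$, with $\psi(0)=\psi(1)=\phi$. I would conjugate fibrewise by the unitary $\mathcal{V}$ defined by $(\mathcal{V}\psi)(t)=S_1(t)^{-1}\psi(t)$, which converts $H$ into the equivalent Floquet Hamiltonian $H_{\aux}=\tau+p^2+2b(t)\cdot p+V_{\phy}$ with bound state $\tilde\psi=\mathcal{V}\psi$, and produces the fibrewise identity $\mathcal{V}^{-1}A_0\mathcal{V}=A(t)$ for $A_0=\tfrac12(x\cdot p+p\cdot x)$ the generator of dilations. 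Since $c$ is $1$-periodic, $A(1)=A(0)=S_1(0)A_0S_1(0)^{-1}$, and the theorem reduces to $\tilde\psi(0)\in \cD(A_0^{k_0})$.

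Next I would verify the hypotheses of Corollary~\ref{cor:more-n-regularity} for $(H_{\aux},A_0,N)$ with $N=p^2+1$, following closely the proof of Proposition~\ref{prop:ac-stark-type}. The iterated commutators
\[
\i^\ell\ad_{A_0}^\ell(H_{\aux}')=2^{\ell+1}p^2+2b(t)\cdot p+(-1)^{\ell+1}(x\cdot\nabla)^{\ell+1}V_{\phy},\qquad \ell\leq k_0,
\]
are all $N$-bounded, and belong to $\cB(N^{-1/2}\cH,N^{1/2}\cH)$ at $\ell=k_0$: under Condition~\ref{cond:ac-stark-model2} the smooth part $V_a^1$ has polynomially bounded radial derivatives, while the Coulombic singularity $(x\cdot\nabla)^qV_a^2\sim|x^a|^{-1}$ of the singular part is absorbed into $|p|$ by Hardy's inequality on $L^2(X^a)$. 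The virial estimate $N\leq \tfrac12H_{\aux}'+\textup{const}$ is immediate from the expression for $H_{\aux}'$, the Mourre estimate at the non-threshold energy $\lambda$ is \cite[Theorem~4.2]{MS}, and the initial regularity $\tilde\psi\in\cD(|p|)=\cD(N^{1/2})$ is \cite[Proposition~4.1]{MS}. Since $N'=2p^2$ commutes with $N$ and satisfies $\i\ad_{A_0}(N')=2N'$, Conditions~\ref{cond:vir21} and~\ref{cond:more-n-regularity} are trivial. The corollary then delivers, on the Floquet space,
\[
\tilde\psi\in \cD(A_0^{k_0})\cap \cD(N^{(k_0+1)/2}),\qquad A_0^k\tilde\psi\in \cD(N^{1/2})\textup{ for all }k\leq k_0.
\]

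The final step propagates these integral-in-$t$ estimates to the specific time $t=0$, mimicking the argument in the proof of Theorem~\ref{thm:ac-stark-type}. By Fubini there exists $t_0\in[0,1)$ with $\tilde\psi(t_0)\in \cD(A_0^{k_0})$. Using $\partial_t\tilde\psi=\i(\lambda-h_{\aux}(t))\tilde\psi$ and the Yosida regularizer $I_n=-\i n(A_0-\i n)^{-1}$, I would differentiate $f_n(t)=\langle\tilde\psi(t),I_n^*A_0^{2k_0}I_n\tilde\psi(t)\rangle$ and expand $\i[h_{\aux}(t),I_n^*A_0^{2k_0}I_n]$ via Leibniz, using the identity $[h_{\aux}(t),(A_0-\i n)^{-1}]=-(A_0-\i n)^{-1}[h_{\aux}(t),A_0](A_0-\i n)^{-1}$ to redistribute the $A_0$ factors symmetrically onto the two copies of $\tilde\psi$. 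After Cauchy--Schwarz each resulting term is dominated by products of the norms $\|N^{1/2}A_0^k\tilde\psi(t)\|$ ($k\leq k_0$) and $\|N^{(k_0+1)/2}\tilde\psi(t)\|$, all in $L^2(dt)$ by the previous step, so $|f_n'(t)|$ is dominated by an $L^1$-function independent of $n$. Integrating from $t_0$ through the periodicity point $t=1\equiv0$ and sending $n\to\infty$ monotonously yields $\tilde\psi(0)\in \cD(A_0^{k_0})$, hence $\phi\in \cD(A(1)^{k_0})$. The main obstacle lies precisely in this last step: each Leibniz expansion of $[h_{\aux}(t),A_0^{2k_0}]$ generates sums of products $A_0^{j_1}((x\cdot\nabla)^qV_a^2)A_0^{j_2}$ of total $A_0$-degree up to $2k_0-1$ carrying Coulombic singularities, and the combinatorial bookkeeping needed to balance their placement against the available $A_0$- and $N^{1/2}$-regularity is the substantive content of the proof.
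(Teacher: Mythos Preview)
Your overall strategy --- pass to the Floquet picture, conjugate to the auxiliary Hamiltonian $H_{\aux}$ so that the conjugate operator becomes the time-independent $A_0$, apply the abstract regularity machinery, then propagate from some $t_0$ to $t=0$ --- is sound and parallel to the paper's argument. The conjugation to $H_{\aux}$ is a harmless simplification over the paper's choice of working directly with $H$ and the time-dependent $A(t)$, the two pictures being unitarily equivalent.

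However, there is a genuine gap, and it occurs earlier than where you locate it. You invoke Corollary~\ref{cor:more-n-regularity} (and implicitly Corollary~\ref{cor:assumpt-stat-regul}), but Condition~\ref{cond:condition}~(\ref{item:2}) fails in the singular setting. With $N=p^2+1$ one has $\i[N,H_{\aux}]^0=\i[p^2,V_{\phy}]$, and for the Coulombic part $|\nabla V_a^2|\sim |x^a|^{-2}$. Hardy's inequality only yields $|x^a|^{-2}\lesssim (p^a)^2$, so $N^{-1/2+\kappa}\i[N,H_{\aux}]^0 N^{-1/2+\kappa}$ is \emph{not} bounded for any $\kappa>0$: the singularity is one power of $|x^a|^{-1}$ too strong. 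The paper confronts exactly this obstruction in the proof of Proposition~\ref{prop:ac-stark-type2} and handles it by tracing through the proof of Corollary~\ref{cor:assumpt-stat-regul} to identify the single place Condition~\ref{cond:condition}~(\ref{item:2}) is actually used --- namely to ensure $N^{1/2}g(H)N^{-1/2}\in\cB(\cH)$ for the $g$ coming from the Mourre estimate \eqref{eq:MourreStarkk} --- and then proves this boundedness directly via the decomposition $N=H-\tau-V+1$ and the fact that $[\tau,V]=-\i\,2b\cdot\nabla V_{\phy}(\cdot+c)$ is $H$-form-bounded by \cite[(6.6)]{MS}. You do not address this; the abstract corollary cannot be quoted as a black box here.

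A secondary issue: your claim that Condition~\ref{cond:more-n-regularity} is ``trivial'' is incorrect. That condition concerns iterated commutators of $M_1=\i[N^{1/2},H_{\aux}]^0$, $M_2=H'N^{-1/2}$, $M_3=N^{-1/2}H'$ with factors of $A_0$ and $N^{1/2}$; these all involve $V_{\phy}$ and its derivatives, not merely $N'$. Even in the smooth case (Proposition~\ref{prop:ac-stark-type}) the verification is laborious; in the singular case the paper does not attempt it and settles for Corollary~\ref{cor:assumpt-stat-regul}, whose conclusion $\psi\in\cD(A^k\langle p\rangle A^\ell)$ for $k+\ell\le k_0$ is exactly what the propagation step requires. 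Your appeal to $N^{(k_0+1)/2}$-regularity is thus both unjustified and unnecessary. (Also: for the singular case the relevant references in \cite{MS} are Proposition~6.4 for the Mourre estimate and Theorem~6.3 for the initial $\cD(|p|)$-regularity, not the smooth-case Theorem~4.2 and Proposition~4.1 you cite.)
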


The above theorem implies Theorem~\ref{thm:ac-stark-type2dd}~(\ref{item:8}).
We shall prove  Theorem~\ref{thm:ac-stark-type2} along the same lines as that of the proof of
Theorem \ref{thm:ac-stark-type}. Whence we introduce the Floquet
Hamiltonian by the expression \eqref{eq:54} (with $V=V_{\phy}(\cdot+c(t))$).
  By \cite[Theorem~6.2]{MS} $V$ is
  $\epsilon$-bounded relatively to $H_0$ whence $H$ is self-adjoint.

\begin{prop}\label{prop:ac-stark-type2} Suppose Conditions~\ref{cond:ac-stark-model2}
  for some $k_0\in \N$ and suppose  $H\psi=\lambda\psi$  for
  $\e^{-\i\lambda}\notin \cF(U(1,0))$. Then for any $k,\ell\geq0$, with $k+\ell \leq  k_0$,
  we have   $\psi\in  \cD(A^k \inp{p}A^\ell)$  where $A$ is given by \eqref{eq:53}.
\end{prop}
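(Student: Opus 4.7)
The plan is to deduce Proposition~\ref{prop:ac-stark-type2} from Corollary~\ref{cor:assumpt-stat-regul} applied to the Floquet Hamiltonian $H = \tau + h(t)$ on $\cH = L^2([0,1];L^2(X))$, with the $t$-fibered conjugate operator $A = A(t)$ from \eqref{eq:53} and the regularizing operator $N = p^2 + \one$. Since $N^{1/2}$ is equivalent to $\inp{p}$, the conclusion $\psi \in \cD(A^k N^{1/2} A^\ell)$ for $k+\ell \leq k_0$ delivered by the corollary is exactly the claim.

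Writing $A(t) = A_0 + c(t)\cdot p$ with $A_0 = \tfrac12(x\cdot p + p\cdot x)$ and $\dot c(t) = 2b(t)$, direct computation gives $\i[\tau, A] = 2b(t)\cdot p$, $\i[p^2, A] = 2p^2$ and $\i[V, A] = -(y\cdot\nabla_y V_\phy)(y)\big|_{y=x+c(t)}$, so that
\begin{equation*}
H' = \i[H,A]^0 = 2p^2 + 2b(t)\cdot p - (y\cdot\nabla_y V_\phy)(y)\big|_{y=x+c(t)}.
\end{equation*}
Iterating, $\i^\ell\ad_A^\ell(H')$ is a linear combination of $p^2$, terms bounded by $p$, and evaluated derivatives $(y\cdot\nabla_y)^{j+1} V_\phy(y)|_{y=x+c(t)}$. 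Condition~\ref{cond:ac-stark-model2}~(\ref{item:5i}) bounds the smooth part $V^1_a$ and all its iterated radial derivatives uniformly, while Condition~\ref{cond:ac-stark-model2}~(\ref{item:7i}) bounds the iterated derivatives of the singular part $V^2_a$ by $C/|y^a|$. Since by Condition~\ref{cond:ac-stark-model2}~(\ref{item:3i}) the singular part only appears when $\dim X^a \geq 3$, the Hardy inequality $|y^a|^{-1} \leq 2|p^a|$ applies and shows these terms are $|p|$-bounded. This verifies Condition~\ref{cond:conditioni}: $\i^\ell \ad_A^\ell(H') \in \cB(N^{-1}\cH,\cH)$ for $\ell \leq k_0-1$ and $\i^{k_0}\ad_A^{k_0}(H') \in \cB(N^{-1/2}\cH, N^{1/2}\cH)$.

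The remaining items of Condition~\ref{cond:condition} and Condition~\ref{cond:vir21} are straightforward. Since $\i[N,A] = 2p^2$, all higher $A$-commutators of $N' = 2p^2$ are again multiples of $p^2$, hence $N$-bounded, and $[N, N'] = 0$, so \eqref{eq:5oioo} and \eqref{eq:5oiooa} hold trivially. Condition~\ref{cond:condition}~(\ref{item:2}) follows from $\i[N, H] = \i[p^2, V] = p\cdot\nabla V + \nabla V\cdot p$, again bounded via Hardy. Condition~\ref{cond:condition}~(\ref{item:1}) amounts to the first iterate treated above. The resolvent identity Condition~\ref{cond:condition}~(\ref{item:2i0ue}) is obtained exactly as in Proposition~\ref{prop:ac-stark-type}, by fibration of $h(t)$ and $R(\eta)$ in $t$ combined with $\cD(\tau)\cap\cD(N)$ being a form core for $\cD(H)\cap\cD(N^{1/2})$ (Remark~\ref{IntersectionDense}). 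The virial bound \eqref{eq:virial} follows from $H' \geq p^2 - C$, itself obtained by absorbing $2b(t)\cdot p$ into $p^2$ and bounding the singular part of $y\cdot\nabla V$ via Hardy.

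The main obstacle is the Mourre estimate, Condition~\ref{cond:mourre}, at the non-threshold energy $\lambda$ with $\e^{-\i\lambda}\notin \cF(U(1,0))$. This is precisely the strict Mourre estimate of \cite[Theorem~4.2]{MS} adapted to the shifted conjugate operator $A(t)$, as done for $k_0=1$ in \cite[Proposition~8.7]{MS}; the crucial point is that centering the dilation at $-c(t)$ matches the time-dependent singularities of $V$ so that the positive commutator persists despite the local Coulomb-type singularities of $V^2_a$. Finally, the a priori regularity $\psi \in \cD(N^{1/2}) = \cD(|p|)$ required by \eqref{eq:34} is supplied by \cite[Theorem~1.8]{MS}. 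With all abstract assumptions verified, Corollary~\ref{cor:assumpt-stat-regul} yields $\psi \in \cD(A^k N^{1/2} A^\ell)$ for $k+\ell \leq k_0$, equivalently $\psi \in \cD(A^k\inp{p} A^\ell)$, completing the proof.
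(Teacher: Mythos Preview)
Your overall strategy---verify the hypotheses of Corollary~\ref{cor:assumpt-stat-regul} for the Floquet Hamiltonian $H$, the fibered $A=A(t)$, and $N=p^2+\one$---matches the paper's, and most of the individual checks you list (the formulas for $H'$, $N'$, the iterated commutators, the resolvent identity, the virial bound, and the Mourre estimate, as well as the a priori input $\psi\in\cD(|p|)$) are essentially what the paper does.

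There is, however, a genuine gap in your handling of Condition~\ref{cond:condition}~(\ref{item:2}). You assert that $\i[N,H]=p\cdot\nabla V+\nabla V\cdot p$ is ``bounded via Hardy'', but for the singular part $V^2_a$ the gradient behaves like $|y|^{-2}$, and Hardy's inequality only gives $|y|^{-1}\leq C|p|$ (equivalently $|y|^{-2}\leq Cp^2$ as a \emph{form}). Consequently $N^{-1/2+\kappa}\bigl(p\cdot\nabla V^2+\nabla V^2\cdot p\bigr)N^{-1/2+\kappa}$ is \emph{not} bounded for any $\kappa>0$; one would need $|y|^{-2}|p|^{-1}$ bounded, which fails by scaling. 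The paper explicitly flags this: ``all of the conditions of Corollary~\ref{cor:assumpt-stat-regul} can be verified except for Condition~\ref{cond:condition}~(\ref{item:2}) (notice that the formal analogue of \eqref{eq:36} might be too singular).''

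The fix is not to prove Condition~\ref{cond:condition}~(\ref{item:2}) but to observe that in the proof of Corollary~\ref{cor:assumpt-stat-regul} this condition is used only to obtain $N^{1/2}BN^{-1/2}\in\cB(\cH)$ for the specific $B$ arising from the Mourre estimate (cf.\ Remark~\ref{RemarkTo1}~4) and Lemma~\ref{lemma:HN}). Since here one has the sharper Mourre estimate \eqref{eq:MourreStarkk} without the $\langle H\rangle$ weight, one only needs $[N^{1/2},g(H)]\in\cB(\cH)$ for $g(E)=f_\lambda^\perp(E)^2(E-\lambda)^{-1}$. The paper proves this directly by writing $[N,g(H)]=-[\tau,g(H)]+T$ with $T$ bounded, then representing $-[\tau,g(H)]$ via an almost-analytic extension and using the estimate $\|(H-\eta)^{-1}2b\cdot\nabla V^2(H-\eta)^{-1}\|\leq C\max\bigl(|\Im\eta|^{-2},|\Im\eta|^{-1/2}\bigr)$ from \cite[(6.6)]{MS}. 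This detour is essential and cannot be replaced by a direct Hardy bound.
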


\begin{proof}
  It is tempting to try to apply Corollary~\ref{cor:assumpt-stat-regul}  with $H$
  being the   Floquet Hamiltonian, $A$ being as stated
  and $N=p^2+1$. In fact all of the conditions of  Corollary~\ref{cor:assumpt-stat-regul} can be verified except
  for Condition~\ref{cond:condition}~(\ref{item:2})  (notice that the
  formal analogue of \eqref{eq:36} might be too singular). This deficiency
  will be discussed at the end of the  proof. All other conditions can
  be verified with
\begin{subequations}
  \begin{align}
    \label{eq:35k}H'&=2p^2-(x+c)\cdot \nabla V+2b\cdot p,\\
\label{eq:37k}N'&=2p^2,\\
\label{eq:38k}\i^\ell\ad ^\ell_A(N')&=2^{\ell+1}p^2,\;
\i \,\ad _N\big (\i^\ell\ad ^\ell_A(N')\big )=0;\;\ell\leq
k_0-1,\\
\label{eq:39k}\i^\ell\ad ^\ell_A(H')&=2^{\ell+1}p^2+(-1)^{\ell+1}((x+c)\cdot \nabla)^{\ell+1} V+2b\cdot p;\;\ell\leq
k_0.
  \end{align}
\end{subequations}
 Comments are due. First,  the second and the third terms of \eqref{eq:35k} are
  bounded relatively to $|p|$ uniformly in $t$, cf.  the Hardy
  inequality \cite[(6.2)]{MS}, and whence indeed \eqref{eq:35k} is
  $N$-bounded. We need to verify
  Condition~\ref{cond:condition}~(\ref{item:2i0ue})  using the
  expression \eqref{eq:35k}: The operators $p^2$, $V$ and $R(\eta)$ are
  fibered
   and $R(\eta)$ preserves
  $\cD(p^2)$ and $\cD(|p|)$ for $|\eta|$ large enough (uniformly in $t$). Whence as a form on $\cD(\tau)\cap\cD(N)$
  \begin{align*}
    \i[h, R(\eta)]&=-R(\eta)\i[p^2+V,A]R(\eta)
=-R(\eta)\parb{2p^2-(x+c)\cdot \nabla V}R(\eta),\\
\i[\tau, R(\eta)]&=-R(\eta)2b\cdot pR(\eta),
  \end{align*} and therefore
  \begin{equation*}
    \i[H, R(\eta)]=-R(\eta)H'R(\eta).
  \end{equation*}
 Using again  that $\cD(H)\cap\cD(N)=\cD(\tau)\cap\cD(N)$
 is dense in $\cD(H)\cap\cD(N^{1/2})$, cf. Remark~\ref{IntersectionDense}, the  latter form identity  can be extended by continuity to a form
  identity on $\cD(H)\cap\cD(N^{1/2})$ yielding
  Condition~\ref{cond:condition}~(\ref{item:2i0ue}).

As for \eqref{eq:37k}, \eqref{eq:38k}, \eqref{eq:39k},  Conditions~\ref{cond:condition}~(\ref{item:2i0a}) and~(\ref{item:1}),
 Condition~\ref{cond:conditioni} and Condition~\ref{cond:vir21} the verification is straightforward (omitted here).

To show  \eqref{eq:virial} we first  introduce the natural notation
$V=V^1+V^2$ reflecting the splitting of Conditions~\ref{cond:ac-stark-model2}.
Then  we introduce
\begin{equation*}
  C=\big\||p|^{-\frac12}\parb{(x+c)\cdot \nabla V^2-2b\cdot
    p}|p|^{-\frac12}\big\|\mand \tC=\big\|(x+c)\cdot \nabla V^1\big\|;
\end{equation*}
the norm is the operator norm on $\cH$. Then we note that
\begin{equation*}
  N\leq \tfrac12 H'+\tfrac12 C|p|+1+ \tfrac12 \tC,
\end{equation*}
yielding \eqref{eq:virial} with $C_1=0$,
  $C_2=1$ and $C_3=1+C^2/4+\tC$ understood as  a form  on
  $\cD(N^{1/2})$. We have verified Condition \ref{cond:virial}.

As for \eqref{eq:Mourre} a stronger version follows
  from \cite[Proposition~6.4]{MS}
\begin{equation} \label{eq:MourreStarkk}
  H'\geq c_0\one - C_4f^{\bot}_\lambda(H)^2 - K_0.
\end{equation} Here we use the condition that $\e^{-\i\lambda}\notin
\cF(U(1,0))$. The estimate \eqref{eq:MourreStarkk} is valid as a form  on
  $\cD(N^{1/2})$. Finally it follows from \cite[Theorem~6.3]{MS}
that indeed the condition of Corollary~\ref{cor:assumpt-stat-regul},
$\psi\in\cD(N^{1/2})=\cD(|p|)$, is fulfilled.

Now to the deficiency given by the lack of Condition~\ref{cond:condition}~(\ref{item:2}).
Checking the proof of Corollary~\ref{cor:assumpt-stat-regul}
it is realized that Condition~\ref{cond:condition}~(\ref{item:2}) is used only to assure boundedness of
$N^{1/2}BN^{-1/2}$, where under the assumption \eqref{eq:Mourre}
we have $B=C_4f^{\bot}_\lambda(H)^2\inp{H}(H-\lambda)^{-1}$. In our case we have
 a slightly stronger version  of the Mourre
estimate, \eqref{eq:MourreStarkk}, so what  we really need  is
\begin{equation}\label{eq:55}
  N^{\frac12}BN^{-\frac12}\in \cB(\cH)\textup{ where }B=g(H);\;g(E)=f^{\bot}_\lambda(E)^2(E-\lambda)^{-1}.
\end{equation}

So let us show \eqref{eq:55} without invoking a
condition like  Condition~\ref{cond:condition}~(\ref{item:2}). Clearly it suffices to show that
the commutator
\begin{equation}\label{eq:55e}
  [N^{\frac12},g(H)]\in \cB(\cH).
\end{equation}
But
\begin{align*}
  [N^{\frac12},g(H)]&=c_{\frac12}\int_0^\infty t^{\frac12}
  (N+t)^{-1}[N,g(H)](N+t)^{-1}\d t,\\
[N,g(H)]&=[H-V+I-\tau,g(H)]=-[\tau,g(H)]+T,\\
-[\tau,g(H)]&=\frac1{\pi}\int_{\C}(\dbar\tilde{g})(\eta)(H-\eta)^{-1}[\tau,V](H-\eta)^{-1}\d u\,\d v,\\
-[\tau,V]&=\i 2b\cdot \nabla V.
\end{align*}
Here the term $T$ is bounded since $V$ is bounded
relatively to $H$; whence indeed $T$ gives a bounded contribution to
the commutator in \eqref{eq:55e}. As for the contribution from the
term $-[\tau,g(H)]$ only the part from $V^2$ is non-trivial. For that
part we use \cite[(6.6)]{MS} to obtain
\begin{equation*}
  \|(H-\eta)^{-1}2b\cdot \nabla V^2(H-\eta)^{-1}\|\leq
  C\max \parb{|\Im \eta|^{-2},|\Im \eta|^{-\frac12}}.
\end{equation*}
Whence we can bound the integral
\begin{align*}
   & \Big\|\int_{\C}(\dbar\tilde{g})(\eta)(H-\eta)^{-1}2b\cdot \nabla V^2(H-\eta)^{-1}\d u\,\d v\Big\|\\
   &\leq C\int_{\C} | (\dbar\tilde{g})(\eta)| \,\max \parb{|\Im \eta|^{-2},|\Im \eta|^{-\frac12}}
\,\d u\,\d v<\infty.
\end{align*}
This means that also the first term $-[\tau,g(H)]$ is bounded and
whence in turn its contribution to the commutator in  \eqref{eq:55e}
agrees with the statement of \eqref{eq:55e}. We
have proven  \eqref{eq:55e}.
\end{proof}

\begin{proof}[Proof of Theorem~\ref{thm:ac-stark-type2}:] We mimic the
  proof of Theorem~\ref{thm:ac-stark-type}. Recall the notation
  $I_{n}(A)=-\i n(A-\i n)^{-1}$ and $A_{n}=AI_{n}(A)$. Due to Proposition~\ref{prop:ac-stark-type2} and
  the representation \eqref{eq:27} there exists $t_0\in [0,1[$ such that
  \begin{equation}\label{eq:42m}
    U(t_0,0)\phi\in \cD(|p|)\cap\cD(A(t_0)^{k_0}).
  \end{equation}
  In particular $\psi(t)=\e^{\i t\lambda}U(t,0)\phi\in \cD(|p|)$ for all $t$,
 cf. \eqref{eq:48} and \eqref{eq:56}. Moreover $\psi(\cdot)$ is differentiable as a
 $\cD(|p|)^*$--valued function, and in this sense
 \begin{equation*}
    \i \tfrac {\d}{\d t}\psi(t)=(h(t)-\lambda)\psi(t).
 \end{equation*}
 Whence  we can compute
  \begin{subequations}
   \begin{align}\label{eq:44m}
    \tfrac {\d}{\d t}\|A_{n}^{k_0}\psi(t)\|^2
    &=2\Re\inp{A_{n}^{k_0}\psi(t),\parb{\i [h(t),A_{n}^{k_0}]+\tfrac {\d}{\d t}A_{n}^{k_0}}\psi(t)},\\\label{eq:45m}
    \i [h(t),A_{n}^{k_0}]+\tfrac {\d}{\d t}A_{n}^{k_0}
    &=\sum_{0\leq p\leq k_0-1}A_{n}^{p}\parb{\i [h(t),A_{n}]+\tfrac {\d}{\d t}A_{n}}A_{n}^{k_0-p-1},\\\label{eq:46m}
    \i [h(t),A_{n}]+\tfrac {\d}{\d t}A_{n}
    &=I_{n}(A)\big (2p^2+2b\cdot p-(x+c)\cdot \nabla V\big)I_{n}(A).
  \end{align}
  \end{subequations}
  We plug \eqref{eq:46m} into \eqref{eq:45m} and then in turn
  \eqref{eq:45m} into the
  right hand side of \eqref{eq:44m}. We expand the sum and redistribute
  for each term at most $k_0-1$ factors of $A$ obtaining terms on a  more
  symmetric form, more precisely on the form
  \begin{equation}\label{eq:43m}
    \Re \big\la\inp{p}A^{k_0}\psi(t),B\inp{p}A^{k}\psi(t)\big\ra\textup{
      where }k\leq k_0-1\mand \sup_{n,t}\|B\|<\infty.
  \end{equation} Thanks  to the Cauchy-Schwarz inequality
  and Proposition~\ref{prop:ac-stark-type2}  any expression like
  \eqref{eq:43m} can be integrated on $[t_0,1]$ and the integral is bounded
  uniformly in $n$. In combination with \eqref{eq:42m} we
  conclude that
  \begin{equation*}
    \sup_n\,\|A(1)_{n}^{k_0}\psi(1)\|^2<\infty,
  \end{equation*}
  whence  $\phi=\psi(1)\in \cD(A(1)^{k_0})$.
\end{proof}

\end{document}